\title{Continuous rational functions\\ are deterministic regular}
\titlerunning{Continuous rational functions are deterministic regular}
\author{Olivier Carton}{Université Paris Cité, CNRS, IRIF, F-75013, Paris, France}%
{carton@irif.fr}{https://orcid.org/0000-0002-2728-6534}
{}
\author{Gaëtan Douéneau-Tabot}{Université Paris Cité, CNRS, IRIF, F-75013, Paris, France
\and Direction générale de l'armement - Ingénierie des projets, Paris, France
}{doueneau@irif.fr}{}{}
\authorrunning{O. Carton and G. Douéneau-Tabot}
\keywords{infinite words, rational functions, determinization, continuity,
streaming string transducers, two-way transducers}
\setlist[itemize]{noitemsep, topsep=0pt}
\setlist[description]{noitemsep, topsep=0pt}
\setlist[enumerate]{noitemsep, topsep=0pt}
\newtheorem{sublemma}[theorem]{\bfseries Sublemma}
\newcommand{\mb}[1]{\mathbb{#1}}
\newcommand{\mc}[1]{\mathcal{#1}}
\newcommand{\mf}[1]{\mathfrak{#1}}
\newcommand{\movi}{\varepsilon}
\newcommand{\vide}{\varnothing}
\newcommand{\defined}{\coloneqq}
\renewcommand{\phi}{\varphi}
\renewcommand{\le}{\leqslant}
\renewcommand{\ge}{\geqslant}
\renewcommand{\epsilon}{\operatorname{\normalfont \textcolor{green}{%
\textsf{FORBIDDEN}}}}
\newcommand{\Nat}{\mb{N}}
\newcommand{\pref}{\sqsubseteq}
\newcommand{\prefneq}{\sqsubset}
\renewcommand{\subsetneq}{\subset}
\renewcommand{\supsetneq}{\supset}
\newcommand{\parfonc}{\rightharpoonup}
\newcommand{\fonc}{\rightarrow}
\newcommand{\trans}{\mc{T}}
\newcommand{\auto}{\mc{A}}
\newcommand{\becomes}{\leftarrow}
\newcommand{\ntrans}{\mc{N}}
\newcommand{\strans}{\mc{S}}
\newcommand{\lmove}{\triangleleft}
\newcommand{\rmove}{\triangleright}
\newcommand{{\lmark}}{\vdash}
\newcommand{\runs}[1]{\mathchoice{\xrightarrow{#1}}{\xrightarrow{\smash{\lower1pt\hbox{$\scriptstyle #1$}}}}{\xrightarrow{#1}}{\xrightarrow{#1}}}
\newcommand{\oNT}{$1$-$\operatorname{\normalfont \textsf{nT}}${}}
\newcommand{\tDT}{$2$-$\operatorname{\normalfont \textsf{dT}}${}}
\newcommand{\DSST}{$\operatorname{\normalfont \textsf{dSST}}${}}
\newcommand{\NSST}{$\operatorname{\normalfont \textsf{nSST}}${}}
\newcommand{\Dom}[1]{\operatorname{\normalfont \textsf{Dom}}(#1)}
\newcommand{\out}{\operatorname{\mf{out}}}
\newcommand{\outi}[1]{\operatorname{{\mf{out}}}_{#1}}
\newcommand{\exinf}{\operatorname{\normalfont \textsf{normalize}}}
\newcommand{\replace}{\operatorname{\normalfont \textsf{replace}}}
\newcommand{\double}{\operatorname{\normalfont \textsf{double}}}
\newcommand{\new}[1]{\overline{#1}}
\newcommand{\Regs}{\mf{R}}
\newcommand{\Reggs}{\mf{R}'}
\newcommand{\reg}{\mf{r}}
\newcommand{\regg}{\mf{s}}
\newcommand{\reggg}{\mf{t}}
\newcommand{\pc}{\prec}
\newcommand{\four}{2}
\newcommand{\Chi}{\raisebox{2pt}{\mbox{{\large $\chi$}}}}
\newcommand{\ctime}{\operatorname{\normalfont \textsf{time}}}
\newcommand{\en}[1]{\operatorname{\normalfont \textsf{end}}_{#1}}
\newcommand{\tree}[1]{\operatorname{\normalfont \textsf{tree}}(#1)}
\newcommand{\lag}{\operatorname{\normalfont \textsf{lag}}}
\newcommand{\lagm}{\operatorname{\normalfont \textsf{max-lag}}}
\newcommand{\Jf}{\operatorname{\normalfont \textsf{J}}}
\newcommand{\Cf}{\operatorname{\normalfont \textsf{C}}}
\newcommand{\last}{\operatorname{\normalfont \textsf{last}}}
\newcommand{\cov}{\operatorname{\normalfont \textsf{cover}}}
\newcommand{\used}{\operatorname{\normalfont \textsf{used}}}
\newcommand{\Good}{\operatorname{\normalfont \textsf{Good}}}
\newcommand{\first}{\operatorname{\normalfont \textsf{first}}}
\newcommand{\nnext}{\operatorname{\normalfont \textsf{next}}}
\newcommand{\Qubot}{Q_{\bot}}
\newcommand{\tnorm}[1]{\text{\normalfont {#1}}}
\newcommand{\aval}{\operatorname{\normalfont \textbf{substitute}}}
\newcommand{\simu}{\operatorname{\normalfont \textbf{simulation}}}
\newcommand{\copies}{\operatorname{\normalfont \textsf{copies}}}
\newcommand{\Comp}{\operatorname{\normalfont \textsf{Comp}}}
\newcommand{\Parts}{\operatorname{\normalfont \textsf{Parts}}}
\newcommand{\pre}[2]{\operatorname{\normalfont \textsf{pre}}_{#1}^{#2}}
\newcommand{\nb}[1]{\operatorname{\normalfont \textsf{nb}}_{#1}}
\newcommand{\val}[2]{\operatorname{\normalfont \textsf{prod}}_{#1}^{#2}}
\newcommand{\com}[2]{\operatorname{\normalfont \textsf{com}}_{#1}^{#2}}
\newcommand{\adv}[2]{\operatorname{\normalfont \textsf{adv}}_{#1}^{#2}}
\newcommand{\advm}[2]{\operatorname{\normalfont \textsf{max-adv}}_{#1}^{#2}}
\newcommand{\subst}[2]{\mc{S}_{#1}^{#2}}
\newcommand{\Hi}{\mf{H}}
\newcommand{\bound}{|Q|^{|Q|}}
\newcommand{\Bound}{\Omega}
\newcommand{\myBlue}{blue!90}
\newcommand{\tred}[1]{\textcolor{\myBlue}{#1}}
\newcommand{\crod}[1]{\mathop{\cdot} {  #1}}
\newcommand{\push}[2]{#2\crod{#1}}
\newcommand{\cro}[1]{ \tred{\bm{\llbracket} #1\bm{\rrbracket}}}
\newcommand{\old}[1]{%
}
\newcommand{\cor}[1]{
#1}
\begin{document}
\maketitle

\begin{abstract} 
A word-to-word function is rational if it can be realized by
a non-deterministic one-way transducer. Over finite words,
it is a classical result that any rational function is regular,
i.e. it can be computed by a deterministic two-way transducer,
or equivalently, by a deterministic streaming string transducer
(a one-way automaton which manipulates string registers).

This result no longer holds for infinite words,
since a non-deterministic one-way transducer can
guess, and check along its run, properties such as infinitely many
occurrences of some pattern, which is impossible
for a deterministic machine.  In this paper, we identify the class of
rational functions over infinite words which are also computable by a
deterministic two-way transducer. It coincides with the
class of  rational functions which are continuous,
and this property can thus be decided.
This solves an open question raised in a previous paper of
Dave~et~al.
  \end{abstract}



\section{Introduction}

Transducers are finite-state machines obtained by adding outputs
to finite automata. They are very useful in a lot of areas like coding, computer
arithmetic, language processing or program analysis, and more
generally in data stream processing.
In this paper, we study transducers which
compute partial functions. They are either deterministic,
or non-deterministic but unambiguous (they have
at most one accepting run on a given input).

Over finite words, a deterministic two-way transducer (\tDT{})
consists of a deterministic
two-way automaton which can produce outputs.
Such machines realize the class of 
\emph{regular functions}, which is often considered
as one of the functional counterparts of regular languages.
It coincides with the class of functions definable
by monadic second-order transductions~\cite{engelfriet2001mso},
or copyless deterministic streaming string transducers (\DSST),
which is a model of one-way automata
manipulating string registers~\cite{alur2010expressiveness}.
On the other hand, the model
of non-deterministic one-way transducers (\oNT{})
describe the well-known class of \emph{rational functions}.
It is well known that any rational function is regular,
but the converse does not hold.

\subparagraph*{Infinite words.} The class of \emph{regular functions
over infinite words} was defined in~\cite{alur2012regular}
using monadic second-order transductions. It coincides with the class
of functions realized by \tDT{} with $\omega$-regular lookahead,
or by copyless \DSST{} with some Müller conditions.
However, the use of $\omega$-regular lookaheads
(or Müller conditions for \DSST) is necessary to
capture the expressive power of monadic second-order logic
on infinite words, in order to check properties such as infinitely many
occurrences of some pattern. Similarly, the model of \oNT{}
with Büchi acceptance conditions defines the subclass
of \emph{rational functions over infinite words}.

Even if regular and rational functions give very 
natural frameworks for specification (due to their connections
with logic), not all these functions can effectively
be computed by a deterministic machine without lookaheads.
It turns out that the regular functions which can
be computed by a deterministic Turing machine
(doing an infinite computation on its infinite input)
are exactly those which are continuous for the 
Cantor topology~\cite{dave2020synthesis}.
Furthermore continuity can be decided, which
 has been known for rational functions since~\cite{prieur2001decide}.

The authors of~\cite{dave2020synthesis} conjecture
that any continuous rational (or even regular) function 
can in fact be computed by a \tDT{} (without lookahead), instead
of a Turing machine. A partial answer was obtained
in~\cite{filiot2021synthesizing}, whose results imply that
\tDT{} can be built for a subclass of rational functions
defined by \oNT{} where some forms of non-determinism
are prohibited. Their proof is based on game-theoretic techniques.

\subparagraph*{Contributions.} This paper shows that any
continuous rational function over infinite words can be extended
to a function which is computable by a \tDT{}
(without lookaheads). Since the converse also holds,
this result completely characterizes rational functions
which can be computed by \tDT{}s, up to an extension
of the domain. Furthermore, this property
is decidable and our construction of a \tDT{} is effective.

This result is tight, in the sense
that two-way moves cannot be avoided. Indeed, one-way 
deterministic transducers (describing the class of \emph{sequential functions})
cannot realize all  continuous rational functions, even when only considering
 total functions (contrary to what happens for the subclass of
 rational functions studied in \cite{filiot2021synthesizing}).

In order to establish this theorem, we first study the expressive
power of  \tDT{} over infinite words. We introduce the class of
\emph{deterministic regular functions} as the class of functions computed by \tDT{}
(as opposed to the regular functions, which are not entirely  deterministic
since they use lookaheads to guess the future).
Following the aforementioned equivalences between two-way and register
transducers, we prove that deterministic rational functions are exactly the functions
which are realized by copyless \DSST{} (without Müller conditions).
Hence our problem is reduced to showing that any continuous rational function
can be realized by a copyless \DSST{}. Building a copyless
\DSST{} is also relevant  for practical applications, since it corresponds to
a streaming algorithm over infinite strings.

\begin{figure}[h!]

    	\begin{center}
	\hspace*{-0.4cm}
        \begin{tikzpicture}{scale=0.9}
        
            \fill[fill = blue!20]  (0,0.2) ellipse (3.8cm and 1.5cm);
            \def\detreg{(-1.2,0.1) ellipse (2.4cm and 1.1cm)};
            \fill[fill = red!50]\detreg ;
            \def\rational{(1.2,0.1) ellipse (2.4cm and 1.1cm)};
            \fill[fill = blue!40]  \rational;
            
            \begin{scope}
       	    \clip \detreg;
 	       \fill[purple!40] \rational;
 	   \end{scope}
	   
            \fill[fill = purple!20] (0,-0.3) ellipse (0.85cm and 0.35cm);

            \draw(0,1.4) node  {\scalebox{1}{$\substack{\textsf{REGULAR}}$}};

            \draw(0,-0.25) node  {\scalebox{1}{$\substack{\textsf{SEQUENTIAL}}$}};
            
          \draw(-2.3,0.2) node  {\scalebox{1}{$\substack{\textsf{DETERMINISTIC}\\ \textsf{REGULAR}}$}};

            \draw(1.7,0.9) node  {\scalebox{1}{$\substack{\textsf{RATIONAL}}$}};
            \draw(0,0.55) node  {\scalebox{1}{$\substack{\textsf{RATIONAL}%
            \\  \textsf{\& CONTINUOUS}}$}};

             \draw[darkgray,dotted,very thick] (2,0.6) -- (4.05,1.3);
             \fill[darkgray] {(2,0.6) circle (0.05)};
             \node[right] at (4.1,1.3) {\normalsize{\textcolor{darkgray}{$%
             \substack{ \tnorm{Normalization in base}~2 \tnorm{.}\\
             x\in \{0,1\}^\omega~\mapsto~x \tnorm{ if } |x|_0= \infty \\
             u01^\omega~\mapsto~ u10^\omega \tnorm{ for }u \in \{0,1\}^*}$}}};

             \draw[darkgray,dotted,very thick] (0.8,0.2) -- (4.05,0.2);
             \fill[darkgray] {(0.8,0.2) circle (0.05)};
             \node[right] at (4.1,0.2) {\normalsize{\textcolor{darkgray}{$%
             \substack{0^{n_1} {a_1} 0^{n_2} {a_2} \cdots~\mapsto~%
             ({a_1})^{n_1} ({a_2})^{n_2} \cdots \\
            \tnorm{with } n_i \in \Nat ~\text{ and }~a_i \in \{1,2\}}$}}};

             \draw[darkgray,dotted,very thick] (0.3,-0.5) -- (4.05,-0.9);
             \fill[darkgray] {(0.3,-0.5) circle (0.05)};
             \node[right] at (4.1,-0.9) {\normalsize {\textcolor{darkgray}{$%
             \substack{\tnorm{Division by } 3 \tnorm{ in base } 2}$}}};

             
        \end{tikzpicture}
        \end{center}
        
    \caption{\label{fig:conclu} Classes of partial functions over infinite words studied in this paper}
\end{figure}
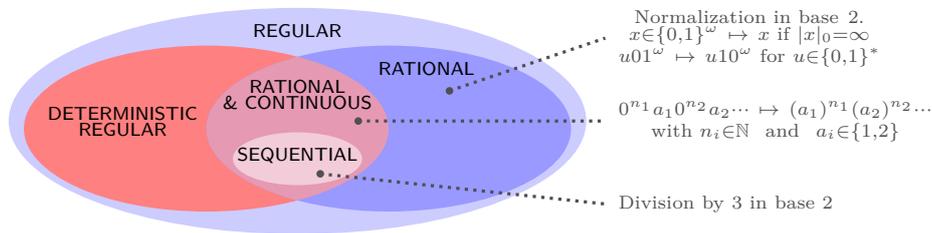

Then we introduce various new concepts in order
to transform a  \oNT{} computing a continuous function
into a \DSST{}. This determinization procedure is rather involved.
The main difficulty is that even if the \oNT{}
 is unambiguous, it might not check its guesses
after reading only a finite number of letters. In other words,
a given input can label several infinite runs, even if only one
of them is accepting. However, a deterministic machine
can never determine which run is the accepting one,
since it requires to check whether a property occurs infinitely often.
This intuition motivates our key definition of \emph{compatible sets}
among the states of a \oNT{}. Such sets are the sets of states
which have a ``common infinite future''.
The restriction of \oNT{} considered in~\cite{filiot2021synthesizing}
leads to compatible sets which are always singletons
(hence their condition defines a natural special case).
We show that when the function computed by the \oNT{} is continuous,
the outputs produced along finite runs which end
in a compatible set enjoy several combinatorial properties.

We finally describe how to build a \DSST{} which realizes
the continuous function given by a \oNT{}. Its construction is
is rather complex, and it crucially relies on the aforementioned
properties of compatible sets. These sets are manipulated by
the \DSST{} in an original tree-like fashion.
To the knowledge of the authors, this construction of this \DSST{}
is completely new (in particular, it is not based on the constructions of~\cite{dave2020synthesis} 
nor of~\cite{filiot2021synthesizing}).

\subparagraph*{Outline.}
We recall in Section~\ref{sec:rational}
the definitions of rational functions and one-way transducers.
In Section~\ref{sec:det-reg}, we present the new class
of deterministic regular functions and give 
the various transducer models which capture it.
Our main result which relates continuous rational
and deterministic regular functions is given in Section~\ref{sec:main}.
The proof is sketched in sections~\ref{sec:main}~and~\ref{sec:invariants}.

\section{Rational functions}

\label{sec:rational}

Letters $A,B$ denote alphabets, i.e. finite
sets of letters. The set $A^*$ (resp. $A^+$, $A^\omega$)
denotes the set of finite words (resp. non-empty finite words, infinite words)
over the alphabet $A$. If $u \in A^* \cup A^\omega$,
we let $|u| \in \Nat \cup \{\infty\}$ be its length. For $a \in A$,
$|u|_a$ denotes the number of $a$ in $u$.
For $1 \le i \le |u|$, $u[i] \in A$ is the $i$-th letter of  $u$.
If $1 \le i \le j \le |u|$, $u[i{:}j]$ stands for $u[i] u[i{+}1] \cdots$ until $j$.
We write $u[i{:}]$  for $u[i{:}|u|]$.
If $j > |u|$ we let $u[i{:}j] \defined u[i{:}|u|]$.
If $j < i$ we let $u[i{:}j] \defined \movi$ .
We write $u \pref v$ (resp. $u \prefneq v$)
when $u$ is a (resp. strict) prefix of $v$.
Given two words $u, v$, we let
$u \land v$ be their longest common prefix.
We say that $u,v$ are \emph{mutual prefixes} if $u \pref v$
or $v \pref u$. In this case we let $u \lor v$ be the longest
of them. A function $f$ between two sets $S,T$ is denoted
by $f: S \fonc T$. If $f$ is a \emph{partial function} (i.e. possibly with
non-total domain), it is denoted $f: S \parfonc T$.
Its domain is denoted $\Dom{f}$.

\begin{definition} A \emph{one-way non-deterministic transducer} \emph{(\oNT)}
\mbox{$\trans  = (A,B,Q, I, F, \Delta, \lambda)$ is:}
\begin{itemize}
\item a  finite input (respectively output) alphabet $A$ (respectively $B$);
\item a finite set of states $Q$  with $I \subseteq Q$ initial 
and $F \subseteq Q$ final;
\item a transition relation $\Delta  \subseteq Q \times A \times Q$;
\item an output function $\lambda: \Delta \rightarrow B^*$ (defined
for each transition).
\end{itemize}
\end{definition}
We write $q\runs{a | \alpha} q'$ whenever $(q,a,q') \in \Delta$
and $\lambda(q,a,q') = \alpha$. A \emph{run} labelled by some
$x \in A^* \cup A^\omega$ is a sequence of consecutive transitions
$\rho \defined q_0\runs{x[1] | \alpha_1}
q_1 \runs{x[2] | \alpha_2} q_2 \cdots $. The \emph{output} of $\rho$
is the word $\alpha_1 \alpha_2 \cdots \in A^* \cup A^\omega$.
If $x \in A^\omega$, we also write $q_0 \runs{x | \alpha_1 \alpha_2 \cdots} \infty$
to denote an infinite run starting in $q_0$. The
run $\rho$ is \emph{initial} if $q_0 \in I$,
\emph{final} if $x \in A^\omega$ and $q_i \in F$ infinitely 
often (Büchi condition), and \emph{accepting}
if both initial and final.
$\trans$ computes the \emph{relation} $\{(x,y):y \in B^\omega \text{ is output along
an accepting run on } x\}$. It is \emph{functional} 
if this relation is a (partial) function.
In this case, $\trans$ can be transformed
in an equivalent \emph{unambiguous} \oNT{}
(a transducer which has at most one
accepting run on each $x \in A^\omega$)~\cite[Corollary~3]{choffrut1999uniformization}.
A function $f:A^\omega \parfonc B^\omega$  is said to be \emph{rational}
if it can be computed by a (unambiguous) \oNT{}.

\begin{example}\label{ex:rational}
In Figure~\ref{fig:ex-rational}, we describe
\oNT s which compute the following functions:
\begin{itemize}
\item  $\exinf: \{0,1\}^\omega \parfonc \{0,1\}^\omega$ mapping
$x \mapsto x$ if $|x|_{0} = \infty$ and $u01^\omega \mapsto  u10^\omega$ if $u \in \{0,1\}^*$;

\item  $\replace: \{0,1,2\}^\omega \parfonc \{1,2\}^\omega$
with $\Dom{\replace} = \{x: |x|_{1} = \infty \tnorm{ or } |x|_{2} = \infty\}$
and mapping $ 0^{n_1} {a_1} 0^{n_2} {a_2} \cdots  \mapsto
{a_1}^{n_1+1} {a_2}^{n_2+1} \cdots$ if $a_i \in \{1,2\}$, $n_i \in \Nat$;

\item  $\double: \{0,1,2\}^\omega \fonc \{0, 1,2\}^\omega$ mapping
$0^{n_1} {a_1} 0^{n_2} {a_2}\cdots \mapsto
0^{a_1  n_1} {a_1} 0^{a_2  n_2} a_2 \cdots $ and \linebreak
$0^{n_1} {a_1} \cdots 0^{n_m} {a_m} 0^\omega \mapsto
0^{a_1  n_1} {a_1} \cdots 0^{a_m  n_m} {a_m} 0^\omega $
(if finitely many $1$ or $2$).
\end{itemize}
\end{example}

\begin{figure}[h!]
\centering
	\begin{subfigure}[b]{0.32\textwidth}
	\centering
		\begin{tikzpicture}{scale=0.8}
			\node [state, initial, initial text=, accepting, inner sep=3pt, minimum size=0pt]
			(q0) at (0,0) {\small $\substack{q_0}$};
			\node[state, accepting ,  inner sep=3pt, minimum size=0pt ] at (2,0.5) (q1) {$\substack{q_1}$};
			\node[state,  initial,initial right, initial text=, inner sep=3pt, minimum size=0pt ]%
			at (2,-0.5) (q2) {$\substack{q_2}$};
			\draw[->] (q0) edge[bend left = 30] node[above]{$\substack{0 | 1}$} (q1);
			\draw[->] (q0) edge[bend left = 7] node[above]{$\substack{0 | 0}$} (q2)
			        	       (q2) edge[bend left = 30] node[below]{$\substack{1 | 1}$} (q0);
			\draw[->] (q1) edge[loop above] node{$\substack{1|0}$} (q1);		
			\draw[->]      (q2) edge[loop below] node{$\substack{1|1}$} (q2);
			\draw[->]      (q0) edge[loop above] node{$\substack{0|0}$} (q0);
		\end{tikzpicture}
	\caption{\oNT{} computing $\exinf$}
	\end{subfigure}
	\begin{subfigure}[b]{0.32\textwidth}
	\centering
		\begin{tikzpicture}{scale=0.8}
			\node [state, initial, initial text=, accepting, inner sep=3pt, minimum size=0pt]
			(q0) at (0,0) {\small $\substack{q_0}$};
			\node[state,  inner sep=3pt, minimum size=0pt ] at (2,0.5) (q1) {$\substack{q_1}$};
			\node[state,  inner sep=3pt, minimum size=0pt ] at (2,-0.5) (q2) {$\substack{q_2}$};
			\draw[->] (q0) edge[bend left = 30] node[above]{$\substack{0 | 1}$} (q1)
			        	       (q1) edge[bend left = 7] node[above]{$\substack{1 | 1}$} (q0);
			\draw[->] (q0) edge[bend left = 7] node[below]{$\substack{0 | 2}$} (q2)
			        	       (q2) edge[bend left = 30] node[below]{$\substack{2 | 2}$} (q0);
			\draw[->] (q1) edge[loop above] node{$\substack{0|1}$} (q1);		
			\draw[->]      (q2) edge[loop below] node{$\substack{0|2}$} (q2);
			\draw[->]      (q0) edge[loop above] node{$\substack{1|1 \\ 2 |2}$} (q0);
		\end{tikzpicture}
	\caption{\label{fig:replace}\oNT{} computing $\replace$}
	\end{subfigure}
	\begin{subfigure}[b]{0.32\textwidth}
	\centering
		\begin{tikzpicture}{scale=0.8}
			\node [state, initial, initial text=, accepting, inner sep=3pt, minimum size=0pt]
			(q0) at (0,0) {\small $\substack{q_0}$};
			\node[state,accepting,  inner sep=3pt, minimum size=0pt ] at (2,0.5) (q1) {$\substack{q_1}$};
			\node[state,  inner sep=3pt, minimum size=0pt ] at (2,-0.5) (q2) {$\substack{q_2}$};
			\draw[->] (q0) edge[bend left = 30] node[above]{$\substack{0 | 0}$} (q1)
			        	       (q1) edge[bend left = 7] node[above]{$\substack{1 | 1}$} (q0);
			\draw[->] (q0) edge[bend left = 7] node[below]{$\substack{0 | 00}$} (q2)
			        	       (q2) edge[bend left = 30] node[below]{$\substack{2 | 2}$} (q0);
			\draw[->] (q1) edge[loop above] node{$\substack{0|0}$} (q1);		
			\draw[->]      (q2) edge[loop below] node{$\substack{0|00}$} (q2);
			\draw[->]      (q0) edge[loop above] node{$\substack{1|1 \\ 2 |2}$} (q0);
		\end{tikzpicture}
	\caption{\label{fig:double}\oNT{} computing $\double$}
	\end{subfigure}
\caption{\label{fig:ex-rational} Unambiguous, clean and trim \oNT s
computing the functions of Example~\ref{ex:rational}.}
\end{figure}
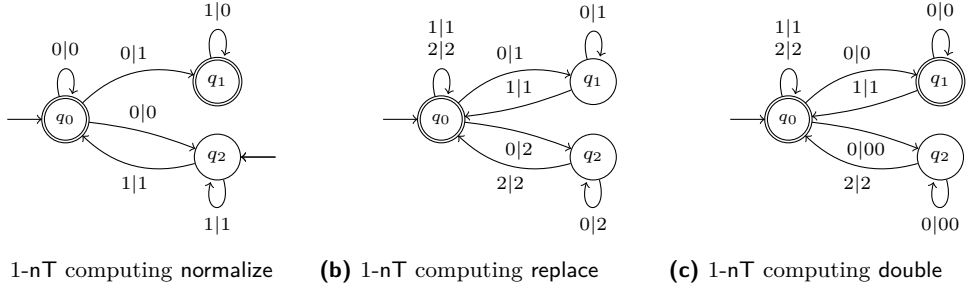%

\begin{remark} \label{rem:not-sequential}
The functions mentioned in Example~\ref{ex:rational}
are not \emph{sequential}, i.e. they cannot be computed
by deterministic one-way transducers (i.e. deterministic \oNT{}s).
\end{remark}

A \oNT{} is \emph{trim} if any state is both accessible
and co-accessible, or equivalently if it occurs in some accepting run.
It is \emph{clean} if the production along
any accepting run is infinite.

\begin{lemma} \label{lem:infinite-accepting}
A trim \oNT{} is clean if and only if for all $q \in F$,
the existence of a cycle $q\runs{u | \alpha} q $ for 
$u \in A^+$ implies $\alpha \neq \movi$.
Given an unambiguous \oNT, one can build an equivalent unambiguous,
\mbox{clean and trim \oNT{}.}
\end{lemma}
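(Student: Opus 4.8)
The statement has two parts: a characterization of cleanness for trim \oNT{}s, and an effective normalization procedure. I would treat them separately.

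For the characterization, fix a trim \oNT{} $\trans$. The ``only if'' direction is the easy one: suppose $\trans$ is clean but there is a state $q \in F$ with a cycle $q\runs{u|\movi}q$ for some $u \in A^+$. Since $\trans$ is trim, $q$ is both accessible and co-accessible, so there is a finite initial run $q_0 \runs{v|\beta} q$ with $q_0 \in I$, and a run from $q$ that visits $F$ infinitely often — in fact we can take the run $q \runs{u|\movi} q \runs{u|\movi} q \cdots$ labelled by $u^\omega$, which visits $q \in F$ infinitely often and is therefore final. Concatenating, $q_0 \runs{vu^\omega | \beta} \infty$ is an accepting run whose output is the finite word $\beta$, contradicting cleanness. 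For the ``if'' direction, assume every final state whose accessible cycles all produce nonempty output, and take any accepting run $\rho = q_0 \runs{x|\alpha}\infty$. Since $\rho$ is final, some $q \in F$ occurs infinitely often along $\rho$, say at positions $i_1 < i_2 < \cdots$; between consecutive occurrences $\rho$ traverses a cycle $q \runs{x[i_k{+}1{:}i_{k+1}]|\gamma_k} q$ on a nonempty word, and this cycle is accessible (reached from $q_0$) so by hypothesis $\gamma_k \neq \movi$, i.e.\ $|\gamma_k| \ge 1$. Hence the output $\alpha$ contains infinitely many nonempty blocks $\gamma_k$ and is infinite, so $\trans$ is clean.

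For the normalization, start from an unambiguous \oNT{} $\trans$ computing $f$. Trimming is routine: remove every state that is not accessible or not co-accessible; this changes neither the set of accepting runs nor unambiguity, since removed states occur in no accepting run. The real work is making the result clean while staying trim and unambiguous. Using the characterization, the obstruction is a final state $q$ sitting on an $\movi$-producing accessible cycle. The plan is to detect, via the transition structure, the set $Z$ of states $q$ such that $q$ lies on a cycle whose total output is $\movi$ (equivalently, $q$ belongs to a strongly connected component of the subgraph of transitions carrying empty output, with at least one edge); note that along any accepting run, once the run enters such a behaviour it could loop there forever, but since the run \emph{is} accepting and $f$ is a function into $B^\omega$, it must eventually leave and produce infinitely much. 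I would then redefine the Büchi acceptance set to $F' \defined F \setminus Z$ — this does not change which infinite runs are final, because a final run visits $F$ infinitely often and, by the argument above, must also visit $F \setminus Z$ infinitely often (each maximal $\movi$-cycle segment is exited in finitely many steps into a state producing output, and the pigeonhole plus finiteness of $Q$ forces infinitely many visits to states outside $Z$ along the run — here one may need to also relocate the Büchi marking onto a transition that produces nonempty output rather than a state, which is harmless up to adding a bounded-memory component). The resulting automaton has the same accepting runs, hence computes $f$ and is still unambiguous and trim (re-trim once more if $F'$-removal disconnected anything), and now every final state is off every accessible $\movi$-cycle, so by the characterization it is clean.

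The main obstacle I anticipate is the ``if'' direction argument and, relatedly, making the Büchi-set surgery actually preserve the set of \emph{final} runs: it is not literally true that every accepting run visits $F \setminus Z$ infinitely often unless one is careful about what $Z$ is — a run can visit $q \in Z$ infinitely often while still, between those visits, producing output via transitions that leave and re-enter the $\movi$-component. The clean way around this is to define $Z$ as the set of states $q$ for which there is a cycle $q \runs{u|\movi} q$ with $u \in A^+$ \emph{using only states of $\trans$} (no restriction to an SCC of the empty-output subgraph), detect it by a straightforward fixpoint/reachability computation in the product of $\trans$ with itself tracking ``has the output been empty so far'', and then argue that keeping the Büchi condition on a transition producing nonempty output captures exactly the clean final runs; combined with unambiguity this is enough. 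I would present the characterization first and lean on it to keep the normalization proof short, deferring the most delicate bookkeeping about relocating the acceptance condition to transitions to a brief explicit construction.
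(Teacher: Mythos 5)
Your characterization of cleanness (both directions) is correct and is exactly the pumping argument the paper leaves implicit, so that part is fine. The genuine gap is in the normalization. Setting $F' \defined F \setminus Z$, where $Z$ is the set of states carrying a cycle with empty output, does not preserve the computed function: membership in $Z$ only says that \emph{some} cycle at the state is unproductive, and an accepting run with infinite output may visit only states of $Z$, in which case the modified automaton rejects its input outright. Concretely, take $Q = I = F = \{q\}$ with the two loops $q \runs{a|\movi} q$ and $q \runs{b|1} q$: this \oNT{} is trim, deterministic (hence unambiguous) and not clean, it computes the function defined on $\{x : |x|_b = \infty\}$, yet $Z = \{q\}$, so $F' = \varnothing$ and the surgery yields the empty function. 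Your own hedge does not repair this: redefining $Z$ as the states admitting a cycle $q \runs{u|\movi} q$ with $u \in A^+$ gives the same set, the claimed pigeonhole argument that every final run visits $F \setminus Z$ infinitely often is false (the run above never leaves $Z$), and ``keeping the Büchi condition on a transition producing nonempty output'' drops the requirement that the run visit $F$ infinitely often, so it can turn non-final runs of $\trans$ with infinite output into accepting ones, breaking equivalence and possibly unambiguity.

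The missing idea is that one must enforce the \emph{conjunction} ``$F$ is visited infinitely often and nonempty output is produced infinitely often'' by a mode (product) construction rather than by shrinking $F$. This is what the paper does: duplicate the state space into $Q \times \{0,1\}$, with initial and final states in copy $1$ ($I \times \{1\}$, $F \times \{1\}$); from $(q,1)$ with $q \notin F$ stay in copy $1$, from $(q,1)$ with $q \in F$ move to copy $0$, and from copy $0$ return to copy $1$ exactly on transitions whose output is nonempty (staying in copy $0$ on empty-output transitions). A run of the new transducer is then final iff the underlying run of $\trans$ is final \emph{and} produces infinite output, so the computed function is unchanged (the semantics only retains infinite outputs), unambiguity is preserved since the tagging of a run is deterministic, and every cycle through a state of $F \times \{1\}$ must pass through copy $0$ and back, hence produces nonempty output; by your characterization the result is clean, and trimming is performed afterwards. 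Your sketch gestures towards ``adding a bounded-memory component,'' but never arrives at this construction, and the concrete recipe you do give is unsound.
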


\section{Deterministic regular functions}

\label{sec:det-reg}

We now introduce the new class of  \emph{deterministic regular} functions,
which are computed by  \emph{deterministic two-way transducers}.
Contrary to \oNT s, such machines
cannot test $\omega$-regular properties of their input.
Hence they describe continuous (and computable) functions.

\begin{definition}[] \label{def:two-way}

	A \emph{deterministic two-way transducer} \emph{(\tDT)}
	$\trans = (A,B,Q,q_0,\delta,\lambda)$ is:
	\item 
	\begin{itemize}
	\item an input alphabet $A$ and an output alphabet $B$;
	\item a finite set of states $Q$
	with an initial state $q_0 \in Q$;
	\item a transition function $\delta: Q \times (A \uplus \{{\lmark}\})
	\parfonc Q \times \{\lmove, \rmove\}$;
	\item an output function $\lambda: Q \times (A \uplus \{{\lmark}\}) \parfonc B^*$
	with same domain as $\delta$.

	\end{itemize}

\end{definition}
If the input is $x \in A^\omega$, then $\trans$
is given as input the word ${\lmark} x$.
The symbol ${\lmark}$ is used to mark the beginning
of the input. We denote by $x[0] \defined{} \lmark$.
A \emph{configuration} over  ${\lmark} x$ is a tuple
$(q,i)$ where $q \in Q$ is the current state and
$i \ge 0$ is the current position of the reading head.
The \emph{transition relation} $\rightarrow$ is defined as follows.
Given a configuration $(q,i)$, let $(q',\star):= \delta(q,w[i])$.
Then $(q, i) \rightarrow (q', i')$ whenever either $\star = \lmove$
and  $i' = i-1$ (move left), or $\star = \rmove$ and $i' = i+1$ (move right).
A \emph{run} is a (finite or infinite) sequence of
configurations $(q_1,i_1) \rightarrow (q_2,i_2) \rightarrow \cdots$.
An \emph{accepting} run is an infinite run which starts in $(q_0,0)$
and such that $i_n \rightarrow \infty$ when $n \rightarrow \infty$
(otherwise the transducer repeats the same loop).

The partial function $f: A^\omega \parfonc B^\omega$ computed by
$\trans$ is defined as follows. Let $x \in A^\omega$ be such
that  there exists a (unique) accepting run
$(q^x_0,i^x_0) \rightarrow (q^x_1,i^x_1) \rightarrow \cdots$
labelled by $x$. Let $y \defined \prod_{j=1}^{\infty} \lambda(q^x_j, w[i^x_j]) \in B^* \cup B^\omega$
be the concatenation of the outputs produced along this run.
If $y \in B^\omega$, we define $f(x) \defined y$.
Otherwise $f(x)$ is undefined.

\begin{example} The function $\replace$ from Example~\ref{ex:rational}
can be computed by \tDT{}. For each $i \ge 1$, this \tDT{} crosses the block $0^{n_i}$
to determines ${a_i}$, and then crosses the block once more
and outputs ${a_i}^{n_i+1}$. The function $\double$ can be computed
using similar ideas. However, an important difference is that
the \tDT{} must output  the block $0^{n_i}$ when it crosses it for the first time, in order
to ensure that the production over $0^\omega$ is $0^\omega$.
\end{example}
There exists deterministic regular functions
which are not rational, for instance the function
which reverses (mirror image) a prefix of its input.

Over finite words, it is known that two-way transducers
are equivalent to copyless streaming string transducers
\cite{alur2010expressiveness}.
Over infinite words, a similar equivalence holds between
two-way transducers with lookahead and
copyless streaming string transducers with Müller output conditions~\cite{alur2012regular}.
These models define the class of \emph{regular functions}
over infinite words. However, lookaheads enable
two-way transducers to check $\omega$-regular properties
of their input (and thus non-computable behaviors).
Hence our deterministic regular functions form a strict subclass
of these regular functions over infinite words.

We now introduce a model of streaming string transducer
to describe deterministic regular functions, in the spirit of the
aforementioned results.
In our setting, it consists of a one-way deterministic
automaton with a finite set $\Regs$ of registers that store words
from $B^*$. We use a distinguished register $\out$ to
store the output produced when reading an infinite word.
The registers are modified using \emph{substitutions},
i.e. mappings $\Regs \rightarrow (B \uplus \Regs)^*$.  We denote by
$\subst{\Regs}{B}$ the set of these substitutions.  They can be extended
morphically from $(B \uplus \Regs)^*$ to $ (B \uplus \Regs)^*$ by preserving the
elements of $B$.  They can be composed (see~Example~\ref{ex:subst}).

\begin{example} \label{ex:subst} Let $\Regs = \{\reg,\regg\}$ and $B = \{b\}$.
Consider $\sigma_1:= \reg \mapsto b, \regg \mapsto b\reg \regg b$
and $s_2:= \reg \mapsto \reg b, \regg \mapsto \reg \regg$,
then $ \sigma_1 \circ \sigma_2 (\reg) = s_1(\reg b) = bb$ and $ \sigma_1 \circ \sigma_2 (\regg)
= \sigma_1(\reg \regg)
= bb\reg \regg b$.
\end{example}

\begin{definition} A \emph{deterministic streaming string transducer}
\emph{(\DSST)} is:
\begin{itemize}
\item a finite input (resp. output) alphabet $A$ (resp. $B$);
\item a finite set of states $Q$ with $q_0 \in Q$ initial;
\item a transition function $\delta: Q \times A \parfonc Q$;
\item a finite set of registers $\Regs$ with a distinguished output register $\out \in \Regs$;
\item an update function $\lambda: Q \times A \parfonc \subst{\Regs}{B}$
such that for all $(q,a) \in \Dom{\lambda} = \Dom{\delta}$:
\begin{itemize}
\item $\lambda(q,a)(\out) = \out \cdots$;
\item there is no other occurence of $\out$ in $\{ \lambda(q,a)(\reg): \reg \in \Regs\}$.
\end{itemize}
\end{itemize}
We denote it $\trans = (A,B, Q,q_0, \delta, \Regs, \out, \lambda)$.
\end{definition}
This machine defines a function $f: A^* \parfonc B^*$ as follows.
For $i \ge 0$ let $q^x_i:= \delta(q_0,x[1{:}i])$ (when defined).
For $i \ge 1$, we let $\lambda^x_i \defined \lambda(q^x_{i-1},x[i])$ (when defined)
and $\lambda^x_0(\reg) = \movi$ for all $\reg \in \Regs$.
For $i \ge 0$, define the substitution $\cro{\cdot}^x_i \defined \lambda^x_0 \circ \cdots
\circ \lambda^x_i$.
By construction we get $\cro{\out}^x_i \pref \cro{\out}^x_{i+1}$
(when defined).
If $\cro{\out}^x_i$ is defined for all $i \ge 0$
and $| \cro{\out}^x_i| \rightarrow + \infty$,
we let $f(x) \defined  \bigvee_i \cro{\out}^x_i$
(it denotes the unique infinite word $y$
such that $\cro{\out}^x_i \pref y$ for all $i \ge 0$).
Otherwise $f(x)$ is undefined.

We say that a substitution $\sigma \in \subst{\Regs}{B}$
is \emph{copyless} (resp. $K$-bounded) if for all $\reg \in \Regs$,
$\reg$ occurs at most once in $\{ \sigma(\regg): \regg \in \Regs\}$
(resp. for all $\reg, \regg \in \Regs$, $\reg$ occurs at most
$K$ times in $\sigma(\regg)$). 

\begin{definition}[Copy restrictions]
We say that a \DSST{}  $\trans = (A,B, Q,q_0, \delta, \Regs, \out, \lambda)$ is
\emph{copyless} (resp. \emph{$K$-bounded}) if for all $x \in A^\omega$
and $i \le j$ such that $\lambda^x_i \circ \cdots \circ \lambda^x_j$
is defined, this substitution is copyless (resp. $K$-bounded).
\end{definition}

\begin{example}
The function $\replace$ from Example~\ref{ex:rational}
can be computed by a copyless \DSST{}. For all $i \ge 1$, it crosses the block $0^{n_i}$
and computes $1^{n_i}$ and $2^{n_i}$ in two registers.
Once it sees ${a_i}$ it adds in $\out$ the register storing  ${a_i}^{n_i}$.
The function $\double$ can be computed
using similar ideas. However, an important difference is that
the \DSST{} must directly output the block $0^{n_i}$ while crossing it,
in order to ensure that the production over $0^\omega$ is $0^\omega$.
\end{example}
The proof of the next result is quite involved,
but it is largely inspired by the techniques used for regular functions
over finite or infinite words (see e.g.~\cite{dartois2016aperiodic,doueneau2020register}).

\begin{theorem} \label{theo:2dt-dsst} The following machines compute
the same class of functions $A^\omega \parfonc B^\omega$:
\begin{enumerate}
\item deterministic two-way  transducers \emph{(\tDT{})};
\item $K$-bounded deterministic streaming string transducers \emph{(}$K$-bounded \DSST\emph{)};
\item copyless deterministic streaming string transducers \emph{(}copyless \DSST\emph{)}.
\end{enumerate}
Furthermore, all the conversions are effective.
\end{theorem}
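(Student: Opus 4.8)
The plan is to prove Theorem~\ref{theo:2dt-dsst} by establishing the implications $(1) \Rightarrow (3) \Rightarrow (2) \Rightarrow (1)$; the direction $(3) \Rightarrow (2)$ is immediate since a copyless substitution is in particular $1$-bounded, so the real work lies in the other two conversions.

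For $(2) \Rightarrow (1)$, I would show that a $K$-bounded \DSST{} can be simulated by a \tDT{}. The idea mirrors the classical finite-word argument: the two-way machine recomputes, on demand, the content of whichever register the \DSST{} would have produced. Concretely, at the moment the \DSST{} reaches the end of its (infinite) input behaviour, the output register $\out$ has been built up by a sequence of substitutions; each letter ultimately emitted into $\out$ comes from some register at some earlier time, and tracing the ``origin'' of output positions backwards defines a bounded-width tree of register-copy dependencies because of $K$-boundedness. The \tDT{} walks back and forth over the input reconstructing these register contents in the correct left-to-right order, using its finite control to remember which sub-register it is currently expanding and carrying the \DSST{} state information as an annotation (which can itself be precomputed by a one-way pass stored implicitly via repeated traversal, or one may first argue that the state sequence is regular and fold it into the finite control). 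One must check that the reading head still tends to infinity — this holds because $\out$ is required to grow unboundedly and each ``chunk'' newly appended to $\out$ at input position $i$ is produced after only a bounded amount of back-and-forth localized around position $i$.

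For $(1) \Rightarrow (3)$, I would simulate a \tDT{} by a copyless \DSST{}. This is the crossing-sequence construction adapted to infinite inputs: reading the input left to right, the \DSST{} maintains, for each state $q$, a register holding the output the two-way machine would produce on the portion of a run that enters the current prefix from the right in state $q$ and eventually exits to the right (together with the bookkeeping of which state it exits in), i.e. the behaviour function of the prefix read so far. On reading one more letter, these register contents are recombined by a substitution that concatenates pieces according to how the two-way transitions at the new position glue left-to-right and right-to-left passes; a Shepherdson/Hopcroft–Ullman style argument shows each register is used at most once in the update, giving copylessness. The distinguished register $\out$ accumulates exactly the prefix of output already ``committed'' (the part of the run that will never return left of the current head), and the constraint $\lambda(q,a)(\out)=\out\cdots$ with no other occurrence of $\out$ is met because committed output is only appended, never duplicated or reread. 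One then verifies that $|\cro{\out}^x_i| \to \infty$ exactly when the \tDT{} run is accepting with infinite output, so the two functions agree.

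The main obstacle I expect is handling the infinite-word subtleties that do not arise over finite words: in the $(1)\Rightarrow(3)$ direction one must argue carefully that ``committed'' output is well-defined — a position of output becomes permanent precisely once the head moves right of its producing configuration forever, and one needs that this information can be detected locally (it can, because whether a configuration on the current prefix is ever revisited from the right depends only on the crossing behaviour, which is finite-state). Dually, in $(2)\Rightarrow(1)$ one must ensure the reconstructing two-way run is genuinely accepting (head $\to\infty$) rather than getting trapped re-expanding registers forever near a fixed position; $K$-boundedness plus the growth of $\out$ is exactly what rules this out. Since the excerpt already flags that this proof, though involved, follows the standard regular-function techniques of~\cite{alur2010expressiveness,dartois2016aperiodic,doueneau2020register}, I would present the constructions in that lineage and concentrate the novel care on these convergence-at-infinity points.
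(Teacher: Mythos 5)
Your cycle $(1)\Rightarrow(3)\Rightarrow(2)\Rightarrow(1)$ places the weight differently from the paper, and both of your non-trivial arrows have a genuine gap. For $(1)\Rightarrow(3)$, the claim that the Shepherdson-style crossing-sequence construction is copyless is false: for a deterministic \tDT{}, the runs $\rho^x_i(q)$ and $\rho^x_i(q')$ starting in distinct states at the same position can merge, after which the register storing the output of the shared sub-run must be used in the updates of both $\outi{q}$ and $\outi{q'}$. The paper makes exactly this point and therefore only claims that the construction is $1$-bounded in the run-composition sense (one of the two copies is never used in $\out$ along the actual run), not copyless; obtaining copylessness directly would need extra machinery (e.g.\ sharing along the merge forest) that your sketch does not supply. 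For $(2)\Rightarrow(1)$, the on-demand re-evaluation of registers by a two-way machine hinges on recovering, with finite memory and no stack, ``which call you are returning to'' after expanding a register $\reg$ at position $i{-}1$: in the paper this is possible precisely because the \DSST{} is copyless, so $\reg$ occurs in exactly one image $\lambda^x_i(\regg)$ and at a unique index. With a merely $K$-bounded machine, $\reg$ may occur in several images and several times, the recursion depth is unbounded, and the bounded control cannot remember the path of choices; your ``bounded-width dependency tree'' observation does not resolve this, so the direct simulation fails as described.

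Consequently you also omit the conversion that carries most of the difficulty in the paper, namely $K$-bounded \DSST{} $\Rightarrow$ copyless \DSST{}. Over finite words this step is usually done with non-determinism or lookahead (guessing how many copies of each register will eventually reach $\out$), but over infinite words such features are exactly what must be avoided, since they add expressive power; the paper instead maintains a bounded forest of ``decompositions'' that deterministically tracks all consistent guesses of the copy-counts $\copies^x_i$ and merges levels to keep it bounded. The paper's arrangement of the cycle --- \tDT{} $\Rightarrow$ $1$-bounded \DSST{} (crossing sequences), $K$-bounded $\Rightarrow$ copyless (decomposition forest), copyless $\Rightarrow$ \tDT{} (stack-free recursion using copylessness) --- is chosen precisely so that each arrow is provable; to repair your proposal you would either have to reproduce that $K$-bounded-to-copyless construction inside your $(1)\Rightarrow(3)$ step, or find a new argument for evaluating a non-copyless \DSST{} with a two-way head, neither of which is sketched.
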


\begin{remark} Even if this result is a variant of existing
results over finite or infinite words, it requires a proof on its own.
Indeed, the authors are not aware of a direct proof which would
enable to deduce it from the existing similar results.
\end{remark}

Let us now describe the domains
of deterministic regular functions.
We say that a language is \emph{Büchi deterministic}
if  it is accepted by a deterministic Büchi automaton~\cite{perrin2004}.

\begin{proposition} \label{prop:dom-reg} If $f$ is deterministic
regular, then $\Dom{f}$ is Büchi deterministic.
\end{proposition}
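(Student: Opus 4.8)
The plan is to exploit the equivalence given by Theorem~\ref{theo:2dt-dsst} and work with a copyless \DSST{} $\trans = (A,B,Q,q_0,\delta,\Regs,\out,\lambda)$ computing $f$. The key observation is that a copyless \DSST{} is, in particular, a deterministic one-way automaton: its transition structure $(Q,q_0,\delta)$ reads the input left-to-right and never branches. So the only thing missing to get a deterministic Büchi automaton recognizing $\Dom{f}$ is a suitable Büchi acceptance condition on top of this transition structure (possibly after refining the state space by a finite amount of extra information).

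First I would recall what it means for $x \in A^\omega$ to lie in $\Dom{f}$: by the definition of the function computed by a \DSST{}, we need (i) that $q^x_i := \delta(q_0,x[1{:}i])$ is defined for all $i \ge 0$, i.e. the run never gets stuck, and (ii) that $|\cro{\out}^x_i| \to +\infty$, i.e. the output register content grows unboundedly. Condition (i) is a safety condition and is handled by the (partial) transition function itself — an infinite run exists iff the deterministic run is defined at every step. Condition (ii) is where the real work lies: I need to detect, with a Büchi condition, whether the accumulated output eventually stabilizes or keeps growing.

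The main idea for condition (ii) is to track, in an enriched finite state space, whether the ``productive'' part of $\out$ actually increases. Concretely, fix a position $i$ and look at the substitution $\cro{\cdot}^x_i = \lambda^x_0 \circ \cdots \circ \lambda^x_i \in \subst{\Regs}{B}$. Since $\trans$ is copyless, this composed substitution is copyless, so each register occurs at most once across $\{\cro{\cdot}^x_i(\regg):\regg\in\Regs\}$. Write $\cro{\out}^x_i = w_0 \reg_1 w_1 \reg_2 \cdots \reg_m w_m$ (with $w_j\in B^*$, $\reg_j\in\Regs$) for its ``shape'': I only need to remember, as a bounded piece of state, the sequence of register names $\reg_1\cdots\reg_m$ appearing in the current value of $\out$ (call this the \emph{skeleton} of $\out$ at step $i$), together with for each register whether its own current value is empty or nonempty. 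This is finitely much information and it evolves deterministically with each transition $\lambda^x_{i+1}$, since the update applied to $\out$ is of the form $\out \mapsto \out\cdots$ and the new skeleton is obtained by substituting into the old skeleton according to $\lambda^x_{i+1}$ restricted to the relevant registers. The claim is then: $|\cro{\out}^x_i|\to\infty$ iff along the run infinitely many steps are ``growing'', where a step $i\to i+1$ is growing if the update $\lambda^x_{i+1}$ appends a nonempty constant letter directly into $\out$, or writes into $\out$ a register whose skeleton contains a register that was nonempty (more carefully: the length of $\cro{\out}$ strictly increases at infinitely many steps iff it is unbounded, since it is nondecreasing and a bounded nondecreasing sequence of nonnegative integers can only strictly increase finitely often). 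So the Büchi condition is simply: visit a ``growing'' transition infinitely often. I would make this precise by taking as the new automaton the product of $(Q,q_0,\delta)$ with the finite monoid-like information (skeleton of $\out$ plus emptiness flags of all registers), mark a transition as accepting exactly when it strictly lengthens the abstractly-tracked value of $\out$, and check that a run is accepting in this Büchi automaton iff both (i) and (ii) hold, iff $x\in\Dom{f}$.

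\textbf{Main obstacle.} The delicate point is making the bounded abstraction of $\out$ genuinely finite and genuinely deterministic-for-length-purposes: a register $\reg_j$ in the skeleton of $\out$ may currently be empty but become nonempty much later, and at that moment the length of $\out$ jumps even though no ``growing transition'' occurred at that step. So I must argue that tracking, for \emph{every} register, the emptiness flag and the skeleton of \emph{that register's own value} (again a bounded word over $\Regs$ by copylessness, of length $<|\Regs|$) suffices, and that the resulting finite information correctly certifies $|\cro{\out}^x_i|\to\infty$. I expect the cleanest way is: $|\cro{\out}^x_i|\to\infty$ iff there exist infinitely many $i$ at which a fresh \emph{constant} letter of $B$ is pushed somewhere that will eventually be concatenated (transitively, via the skeletons) into $\out$; since whether a register ``flows into $\out$'' is itself bounded information computable from the skeletons, this is an $\omega$-regular (indeed Büchi-expressible) condition on the run, and because the run is deterministic the whole thing is a deterministic Büchi automaton. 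Verifying that this abstract accounting matches the true length — including the case where finitely many constants are pushed and $\out$ nevertheless stays finite, versus the case where $\out$ already holds a register that keeps growing — is the part that needs a careful induction on the composed substitutions, using copylessness to prevent the skeletons from blowing up.
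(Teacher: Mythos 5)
Your construction is essentially the paper's. The paper also takes a copyless \DSST{} computing $f$ (obtained via Theorem~\ref{theo:2dt-dsst}), follows its states deterministically (blocking when the \DSST{} blocks), additionally records for each $\reg \in \Regs$ the single bit ``is $\cro{\reg}^x_i = \movi$?'', and accepts (Büchi) exactly when a non-empty word is appended to $\out$; since $i \mapsto |\cro{\out}^x_i|$ is nondecreasing, it tends to infinity iff such growing steps occur infinitely often. So the second paragraph of your proposal is already a complete proof, and the per-register emptiness bits are all the extra finite state you need.

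The ``main obstacle'' you then describe, and the machinery you add for it, rest on a misreading of the semantics. In this paper $\cro{\cdot}^x_i$ includes $\lambda^x_0$, which maps every register to $\movi$; hence $\cro{\out}^x_i$ and every $\cro{\reg}^x_i$ are concrete words of $B^*$, there is no ``skeleton'' of register names sitting inside $\out$, and the length of $\out$ can never jump retroactively when some register later becomes nonempty: the word appended at step $i{+}1$ is $\cro{\cdot}^x_i(\alpha)$ where $\lambda^x_{i+1}(\out) = \out\,\alpha$, and it is nonempty iff $\alpha$ contains a letter of $B$ or a register that is nonempty \emph{now} --- exactly what the emptiness bits detect (the copy restriction plays no role in this argument). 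Moreover, the alternative criterion you propose as the ``cleanest way'' --- accept when a constant letter is pushed somewhere that \emph{will eventually} be concatenated into $\out$ --- is not implementable as stated: whether the current content of a register will ever be flushed into $\out$ depends on the future of the input, so it is not information computable from your skeletons at the time of the push, and a deterministic automaton reading left to right cannot evaluate it on the fly. Drop that detour and keep your first criterion, which is precisely the paper's proof.
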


We finally give a closure property of deterministic regular functions
under pre-composition.

\begin{definition}
A \emph{restricted} \oNT{} is a \oNT{} whose states all are final.
\end{definition}
The semantics of a restricted \oNT{} $\ntrans = (A,B,Q,I, \Delta, \lambda)$
is defined so that it  \emph{always} computes 
a  function $f: A^\omega \parfonc B^\omega$.
The domain $\Dom{f}$ is the set of $x \in A^\omega$ such that
$\ntrans$ has a unique accepting run labelled by $x$,
and such that the output along this unique run is infinite. In this case,
we let $f(x)$ be the output of along this run.
Intuitively, such a transducer expresses the ability to make
non-deterministic guesses, as long as these guesses
can be verified after reading a finite number of letters
(i.e. there are no two possible infinite runs).

\begin{theorem} \label{theo:prec-reg}
Given a restricted \oNT{} computing a function $f: A^\omega \parfonc B^\omega$
and a deterministic regular function $g: B^\omega \parfonc C^\omega$,
$g \circ f$ is (effectively) deterministic regular.
\end{theorem}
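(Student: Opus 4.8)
The plan is to compose the two machines directly, using a copyless \DSST{} for $g$ (which exists by Theorem~\ref{theo:2dt-dsst}). The main idea is that a restricted \oNT{} for $f$ is \emph{almost} deterministic: the only obstruction to determinism is the possible coexistence of several finite initial runs on a prefix of $x$, but by definition of the semantics of restricted transducers, on any $x \in \Dom{f}$ exactly one of these survives to an infinite run. So I would run a subset construction on the restricted \oNT{}: the state reached after reading $x[1{:}i]$ is the set $S_i \subseteq Q$ of states reachable by an initial run labelled $x[1{:}i]$, together with, for each $q \in S_i$, the \DSST{} configuration (state of $g$ and register contents) that one would be in if one had fed to $g$ the longest common prefix of all the $f$-outputs produced by initial runs ending in $q$. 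The catch is that distinct states of $S_i$ may have produced incomparable outputs along $f$, so we cannot feed a single word to the single \DSST{} copy: instead we keep one ``virtual copy'' of the \DSST{} per state $q \in S_i$, but since $|S_i| \le |Q|$ is bounded, this only multiplies the register set of the simulating \DSST{} by $|Q|$.

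Concretely, the simulating machine $\trans$ has states $\{(S, \vec{p}) : S \subseteq Q,\ \vec p : S \to Q_g\}$ (where $Q_g$ is the state set of the \DSST{} for $g$) and register set $\Regs_g \times Q$ plus the distinguished $\out$. On reading a letter $a$: first update $S \mapsto S' = \{q' : \exists q \in S,\ (q,a,q') \in \Delta\}$; for each $q' \in S'$, let $v_{q'}$ be the longest common prefix of $\{ \lambda(q,a,q') : (q,a,q')\in\Delta,\ q\in S\}$ extended by the $f$-output words already accumulated — more carefully, one tracks for each $q \in S$ the ``pending output'' string $w_q$ of $f$ not yet fed to the $g$-copy, and on each step one computes, for each $q'\in S'$, the longest common prefix of the strings $w_q \cdot \lambda(q,a,q')$ over predecessors $q$, feeds that common prefix into the $g$-copy attached to $q'$ (updating $\vec p(q')$ and the registers indexed by $q'$ via the composition of the corresponding \DSST{} updates), and keeps the remainder as the new pending string $w_{q'}$. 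Two subtleties: (i) the pending strings $w_q$ must be shown to stay bounded in length — this follows because the relevant prefix-of outputs along runs reaching a fixed state are mutual prefixes up to a bounded discrepancy, a standard delay/twinning-type argument for (the deterministic skeleton of) rational functions, exploiting that on $x \in \Dom f$ only finitely many runs survive; (ii) to keep the simulation \emph{copyless} I would instead invoke the $K$-bounded version of Theorem~\ref{theo:2dt-dsst}: feeding a bounded word $w_{q'}$ of length $\le L$ to a copyless \DSST{} amounts to composing $\le L$ copyless updates, which is $L$-bounded, and then re-convert the resulting $K$-bounded (for $K = |Q|\cdot L$) \DSST{} back to a copyless one.

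Finally, the output register: $\out$ of $\trans$ should at every step equal the content of $\out$ in the $g$-copy attached to the \emph{unique} surviving run — but $\trans$ cannot know which $q \in S_i$ that is. The resolution is that for $x \in \Dom f$, all the competing $g$-copies produce, in $\out$, outputs that are mutual prefixes of the final answer $g(f(x))$, and the ``wrong'' ones stall (produce only finitely much). So I set $\out$ of $\trans$ to accumulate the longest common prefix over $q \in S_i$ of the $\out$-contents of the $g$-copies; one checks that (a) this is a prefix of every $g$-copy's $\out$, so it is monotone, (b) on $x \in \Dom f$ it tends to infinity (since the surviving copy's $\out$ does, and eventually the surviving run is the one of shortest ``lag'', so its $\out$ is the shortest, hence equals the common prefix cofinally), and (c) it never exceeds $g(f(x))$. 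Making ``longest common prefix of several registers'' into a \emph{copyless} register update is the one genuinely delicate point and is where I expect the main obstacle: it requires the $\out$-contents of the various copies to be uniformly mutual prefixes with bounded delay, so that the common-prefix extraction can be realized by a bounded update; I would handle it by the same delay-boundedness lemma as in (i), or, failing a clean bound, by observing it suffices to extract common prefixes only of the \emph{newly appended} material at each step, which has bounded length by the copyless \DSST{} structure. I would then verify that on $x \notin \Dom f$ the simulation correctly leaves the output finite (either some $g$-copy or the $f$-run dies, or $|\out| \not\to\infty$), completing the proof that $g\circ f$ is computed by a copyless \DSST{}, hence deterministic regular by Theorem~\ref{theo:2dt-dsst}.
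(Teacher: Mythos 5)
Your overall architecture (simulate the non-determinism of the restricted \oNT{} by a subset-like construction, keep one copy of the $g$-machine per surviving branch, and output a common part of the copies' productions) is in the right spirit, and your König-style argument for why the common part grows on $\Dom{f}$ matches the paper's key observation. But there are two genuine gaps. The central one is point (i)/(c) of your proposal: the ``delay/twinning-type'' bound you invoke is false here. A restricted \oNT{} is not assumed continuous, let alone twinned; two initial runs that coexist on a prefix (one of which dies only much later, or whose divergence is resolved arbitrarily late) can produce outputs whose lengths differ unboundedly, or which are not even mutual prefixes, so neither your pending strings $w_q$ nor the discrepancy between the copies' $\out$-contents is bounded. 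Consequently ``take the longest common prefix of several register contents'' is not realizable as a \DSST{} update (substitutions can only concatenate bounded words and registers), and your fallback of taking common prefixes of the newly appended material is incorrect, since the copies' outputs are not synchronized and their increments need not align with the common prefix of the full contents. The paper circumvents exactly this: after forming the cascade product (a copyless restricted non-deterministic \SST{} for $g\circ f$), it never computes an lcp of register contents; instead it maintains a bounded \emph{tree of runs} with ambiguity removal, trimming and merging of single-child nodes, storing on each tree node only the output increment produced along that node's run segment. The genuine $\out$ receives only the material attached to the root segment, which is common to all alive runs, and it grows unboundedly because the root segment advances (your König argument, Claim~\ref{claim:calinf}); the resulting machine is $1$-bounded because copies are made only when a branch splits into distinct futures.

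The second gap is the domain. Your construction (like the paper's) computes an \emph{extension} of $g\circ f$, and your claim that on $x\notin\Dom{f}$ the output stays finite is unjustified and in general false: if $x$ carries two distinct infinite runs of the restricted \oNT{} whose outputs agree (or whose images under $g$ share an infinite prefix), your lcp-based $\out$ tends to an infinite word even though $x\notin\Dom{g\circ f}$. The paper repairs this by proving that the domain of a restricted non-deterministic \SST{} is Büchi deterministic (Lemma~\ref{lem:dom-compo}, which needs an extra automaton that also tracks the branches discarded during ambiguity removal and checks they all die) and then restricting the computed function to that domain via Lemma~\ref{lem:dom-reg}. Without these two ingredients --- the tree-of-runs bookkeeping in place of lcp extraction, and the explicit domain restriction --- your proof does not go through.
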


\section{Continuous rational functions are deterministic regular}

\label{sec:main}

We now state the main result of this paper, which shows that
a rational function  can be extended to a deterministic regular function.
Using an extension of the original function is necessary
since  not all $\omega$-regular languages
are Büchi deterministic (see \cref{prop:dom-reg}). Note that \cref{theo:main}
is in fact an equivalence,
in the sense that a rational function which can be
extended to a deterministic regular function is
obviously  continuous.

We recall that a function $f: A^\omega \parfonc B^\omega$ is
\emph{continuous} if and only if for all $x \in \Dom{f}$
and $ n \ge 0$, there exists $ p \ge 0$ such that $\forall y \in \Dom{f}$,
$|x \wedge y| \ge p \Rightarrow |f(x) \wedge f(y)| \ge n$.

\begin{example} The functions $\replace$ and $\double$
are continuous, but $\exinf$ is not.
\end{example}

\begin{theorem} \label{theo:main} Given a continuous rational
function $f: A^\omega \parfonc B^\omega$, one can
build  a deterministic regular function $f'$ which extends $f$
(i.e. for all $x \in \Dom{f}$, $f(x) = f'(x)$).
\end{theorem}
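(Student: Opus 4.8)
The plan is to start from an unambiguous, clean and trim \oNT{} $\trans$ computing $f$, and to build from it a copyless \DSST{} computing an extension $f'$ of $f$; by \cref{theo:2dt-dsst} this suffices. The central idea is that although $\trans$ may have several infinite runs on a given input $x \in \Dom f$, continuity forces these runs to produce ``almost the same'' output. Concretely, I would first introduce the notion of a \emph{compatible set} $S \subseteq Q$: a set of states such that there is a single input word $x$ and initial runs of $\trans$ on $x$ ending in each state of $S$, with all these runs extendable to infinite runs on a common infinite continuation of $x$. The deterministic machine, when reading a finite prefix $u$ of the input, will track for each reachable compatible set the family of outputs produced along the finite runs witnessing it.

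The next step is to extract, from continuity, the combinatorial properties of outputs along runs ending in a compatible set. The key point I would aim for: if $S$ is compatible and $u$ extends so that all runs survive (i.e.\ $S$ stays reachable), then the longest common prefix of the outputs of the various runs must grow to infinity, and moreover the ``lag'' between any two runs — i.e.\ how far the output of one run lies behind the length of its run, relative to the other — stays bounded. This is exactly where continuity bites: if two runs ending in a compatible set produced wildly diverging outputs, one could feed $\trans$ a continuation making one of them accepting and the other one also staying alive but accepting a different input, violating continuity (or using unambiguity to derive a contradiction with the definition of $f$ at a point where $|x\wedge y|$ is large but $|f(x)\wedge f(y)|$ is not). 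From this one gets that, along any growing reachable compatible set, there is a well-defined limit output, which must coincide with $f(x)$ whenever $x \in \Dom f$.

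Then I would design the \DSST{}. Its state records the set of reachable compatible sets, organised in a tree-like fashion (a compatible set $S'$ reachable by a finite run refining one reaching $S$ becomes a child of $S$); to each node the \DSST{} attaches a register holding the part of the common output of that compatible set not yet flushed to $\out$. On each input letter it updates the transition structure of $\trans$, recomputes the reachable compatible sets, concatenates the bounded-length ``advance'' of the common output into $\out$, and reassigns registers of children to parents when a compatible set dies, discarding registers of dead branches. The boundedness results from step two guarantee that only finitely many registers are needed and that the register updates are $K$-bounded, hence (after the standard trick) copyless. On $x \in \Dom f$ the surviving branch corresponds to the accepting run, $\out$ converges to $f(x)$, so $f'$ extends $f$; on inputs where no branch survives forever with infinite output, $f'(x)$ is simply undefined, which is harmless.

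\textbf{Main obstacle.}
The hard part will be the bookkeeping for the tree of compatible sets: making precise how nodes are born, merged and deleted while keeping the register assignment copyless (or merely $K$-bounded) \emph{and} ensuring that what is flushed to $\out$ is always a prefix of the eventual output of the surviving branch, so that no output is committed prematurely. Establishing the boundedness lemmas of step two — in particular that the lag between runs ending in the same compatible set is bounded by a constant depending only on $|Q|$ — is the technical crux on which the whole construction rests, and I expect it to require a careful pumping argument combined with the $\varepsilon$-loop characterisation of cleanness from \cref{lem:infinite-accepting}.
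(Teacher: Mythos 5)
Your skeleton (compatible sets, a tree-like bookkeeping of them, flushing the common prefix of the surviving runs' outputs into $\out$) is indeed the right one and matches the paper's construction in spirit, but the lemma you identify as the crux — that continuity forces the lag between two runs ending in a common compatible set to stay bounded by a constant depending only on $|Q|$ — is false, and no pumping argument will deliver it. The function $\double$ of Example~\ref{ex:rational} is total and continuous and is computed by the unambiguous \oNT{} of Figure~\ref{fig:double}; after reading $0^n$ the two runs ending in the compatible set $\{q_1,q_2\}$ have produced $0^n$ and $0^{2n}$ respectively, so their lag is $n$, unbounded. Continuity only yields the mutual-prefix property of the productions along an initial step ending in a compatible set (\cref{lem:mutual}); it says nothing about the difference of their lengths.

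This sinks the construction as you describe it: an unbounded advance cannot be kept in the state, and while it can be kept in a register, a \DSST{} can only concatenate register contents — it cannot compare them, split them, or extract their longest common prefix — so the step ``concatenate the bounded-length advance of the common output into $\out$'' has no implementation once the advances are unbounded. The missing idea is precisely the paper's separability analysis: when the advance of some initial step ending in a compatible set exceeds the computable bound $\Bound$, the looping-futures lemma (\cref{lem:sep-theta}) shows that all present and future productions are confined to $\tau\theta^\omega$ for words $\tau,\theta$ computable from bounded data, so the unbounded divergence can be represented as bounded words in the state ($\lag$, $\lagm$, $\last$) plus powers of $\theta$ held in registers $\outi{\pi}\in\theta^*$ and bounded counters $\nb{\pi}$ shared along the tree of compatibles — this is what makes the update $1$-bounded and lets the machine still reason about common prefixes symbolically. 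Two further points you gloss over: the sequence of compatible sets containing the accepting run cannot be chosen by a purely deterministic device, which is why the paper factors this choice into a restricted \oNT{} (\cref{lem:pre-compat}) composed with a $1$-bounded \DSST{} via \cref{theo:prec-reg}; and cleanness alone does not guarantee that $\out$ tends to an infinite word — you need the productive normalization of \cref{lem:make-productive}.
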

To prove \cref{theo:main}, it is enough
by theorems~\ref{theo:2dt-dsst} and~\ref{theo:prec-reg} to show that $f'$ can be computed
as a composition of a restricted \oNT{} 
and a $K$-bounded \DSST{}~(see Subsection~\ref{ssec:compo-f},
the construction will in fact give a $1$-bounded transducer).

\subsection{Properties of continuous rational functions}

\label{sec:struct-prop}

We first describe some structural properties
of \oNT{} computing continuous functions.
In this subsection, we let $\trans = (A,B,Q, I, F, \Delta, \lambda)$
be an unambiguous, clean and trim  \oNT{} computing a
continuous function $f: A^\omega \parfonc B^\omega$.
It is well known  that $\trans$ verifies
\cref{lem:continuity-loops}. This property is
in fact equivalent to the continuity of $f$
(see e.g.~\cite{prieur2001decide}
or~\cite{dave2020synthesis}).

\begin{lemma}
\label{lem:continuity-loops}
For all $q_1, q_2 \in I$, $q'_1 \in F$, $q'_2 \in Q$,
$u \in A^*$, $u' \in A^+$, $\alpha_1, \alpha'_1, \alpha_2, \alpha'_2 \in B^*$
 such that $q_i \runs{u | \alpha_i} q'_i \runs{u' | \alpha'_i} q'_i$
for $i \in \{1,2\}$ we have (note that $\alpha'_1 \neq \movi$ since $\trans$ is clean):
\begin{itemize}
\item if $\alpha_2' \neq \movi$, then $\alpha_1 {\alpha'_1}^\omega = \alpha_2 {\alpha'_2}^\omega$;
\item if $\alpha'_2 = \movi$, $x \in A^\omega$, $\beta \in B^\omega$
and $q'_2 \runs{x|\beta} \infty$ is final, then
$ \alpha_1 {\alpha'_1}^\omega  = \alpha_2 \beta$.
\end{itemize}
\end{lemma}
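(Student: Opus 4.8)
The plan is to exploit continuity of $f$ together with the fact that the two runs $q_i \runs{u\mid \alpha_i} q'_i \runs{u'\mid \alpha'_i} q'_i$ can be prolonged along a common infinite input, so that the corresponding outputs must share arbitrarily long common prefixes. Concretely, fix any infinite input $x \in A^\omega$ such that $q'_1 \runs{x\mid \beta_1} \infty$ is final (this exists since $\trans$ is trim and $q'_1 \in F$); in the second bullet we are additionally given $q'_2 \runs{x\mid \beta}\infty$ final, and in the first bullet we may take $x$ to be $u'^\omega$, which yields a final run from $q'_2$ as well since $q'_2$ lies on the loop labelled $u'$. In both cases the words $u\,u'^{\,k} x$ (for all $k \ge 0$) and $u\,u'^{\,\omega}$ are accepted from $q_i$, with the run from $q_1$ producing a prefix of $\alpha_1 \alpha_1'^{\,k} \beta_1$ (resp.\ of $\alpha_1 \alpha_1'^{\,\omega}$), and similarly for $q_2$.

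The key step is then: since $q_1, q_2 \in I$ and $\trans$ is unambiguous and computes $f$, the input $u\,u'^{\,\omega}$ has exactly one accepted output, so $f(u\,u'^{\,\omega})$ equals both $\alpha_1 \alpha_1'^{\,\omega}$ and either $\alpha_2 \alpha_2'^{\,\omega}$ (first bullet, using $\alpha_2' \neq \movi$ so the output along the loop from $q'_2$ is genuinely infinite — here one must check the run from $q_2$ on $u\,u'^\omega$ is accepting, which holds because $q'_2$ is visited infinitely often and each loop contributes output, keeping the total output infinite and the run final) or $\alpha_2 \beta$ (second bullet, where the run from $q_2$ on $u\,x$ is accepting with infinite output $\alpha_2\beta$). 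Wait — in the first bullet I do not even need continuity: unambiguity plus the observation that both initial runs on $u\,u'^\omega$ are accepting with infinite output forces $\alpha_1\alpha_1'^\omega = \alpha_2\alpha_2'^\omega$ directly. The subtlety is only that $\trans$ is unambiguous, not that there is literally one run; but the functionality of $f$ together with both runs being accepting with the stated outputs is enough.

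For the second bullet the argument is the same: the input $u\,x$ is in $\Dom f$, the run $q_1 \runs{u\mid\alpha_1} q'_1 \runs{u'\mid\alpha_1'} q'_1 \runs{\text{prefix of }x}\cdots$ needs to be turned into an accepting run, which we get by iterating the $u'$-loop (legitimate since after $u$ we are in state $q'_1$ and can loop forever) — actually cleaner: take as the first run $q_1 \runs{u\mid\alpha_1} q'_1 \runs{x\mid \beta_1}\infty$, which is accepting with infinite output $\alpha_1\beta_1$, where $\beta_1 = \alpha_1'^{\,\omega}$ is forced by applying the first bullet's reasoning to $q'_1$'s own loop; and the second run $q_2 \runs{u\mid\alpha_2} q'_2 \runs{x\mid\beta}\infty$ is accepting with output $\alpha_2\beta$. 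Functionality of $f$ gives $\alpha_1\beta_1 = \alpha_2\beta$, i.e.\ $\alpha_1\alpha_1'^{\,\omega} = \alpha_2\beta$.

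The main obstacle I anticipate is purely bookkeeping: ensuring every run I invoke is genuinely \emph{accepting} in the Büchi sense and has genuinely \emph{infinite} output. For finality, one uses that $q'_1, q'_2$ (or at least $q'_1$, and whatever accepting states the prolongation $x$ visits infinitely often) are revisited infinitely often along the chosen infinite inputs. For infiniteness of output, one uses cleanness of $\trans$ via \cref{lem:infinite-accepting}: the loop $q'_1 \runs{u'\mid\alpha'_1} q'_1$ with $q'_1 \in F$ forces $\alpha'_1 \neq \movi$, and likewise $\alpha_2' \neq \movi$ in the first case handles the loop at $q'_2$; in the second case the finality and infinite output of $q'_2 \runs{x\mid\beta}\infty$ are given by hypothesis. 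Once these hygiene conditions are in place, the equalities drop out immediately from functionality, and continuity of $f$ is what guarantees $\trans$ could be assumed to exist in this clean/trim/unambiguous form in the first place — the real content of the lemma is the combinatorial rigidity it expresses, not a deep use of the $\epsilon$–$p$ definition of continuity.
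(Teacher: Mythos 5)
Your argument has a genuine gap, and it is exactly at the point where you decide continuity is dispensable. For the first bullet you claim the run from $q_2$ on $u\,u'^\omega$ is accepting ``because $q'_2$ is visited infinitely often and each loop contributes output''. Visiting a state infinitely often is not the Büchi condition: $q'_2$ is only assumed to lie in $Q$, not in $F$, and the loop $q'_2 \runs{u'\mid\alpha'_2} q'_2$ need not pass through any final state, so this run is in general not accepting and neither unambiguity nor functionality applies to it. The same confusion undermines the second bullet: the word $x$ there is given as labelling a final run from $q'_2$, and there is no reason $q'_1$ admits any run (let alone a final one) on that same $x$; moreover, if two \emph{distinct} runs on $u\,x$ were both accepting, unambiguity would already be violated, so ``functionality of $f$'' can never be invoked to equate the outputs of two different runs on one input. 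The shortcut ``in the first bullet I do not even need continuity'' cannot be repaired, because the statement is false without continuity: in the clean, trim, unambiguous transducer for $\exinf$ of Figure~\ref{fig:ex-rational}, take $u=0$, $u'=1$, the first run $q_0 \runs{0\mid 1} q_1 \runs{1\mid 0} q_1$ (with $q_1\in F$) and the second run $q_0 \runs{0\mid 0} q_2 \runs{1\mid 1} q_2$; your bullet-one conclusion would read $10^\omega = 01^\omega$.

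The paper's proof prolongs the runs in the opposite direction and uses continuity essentially. Fix $x\in A^\omega$, $\beta\in B^\omega$ with $q'_2 \runs{x\mid\beta}\infty$ final (given in the second case; in the first case such a run exists by trimness, and $\beta$ is infinite by cleanness of the accepting run through $q_2$). Then for every $n$, $u\,u'^n x \in \Dom{f}$ and $f(u\,u'^n x) = \alpha_2\,{\alpha'_2}^n\,\beta$, because the run through $q'_2$ has a final tail; on the other hand $f(u\,u'^\omega) = \alpha_1\,{\alpha'_1}^\omega$, accepting since $q'_1\in F$ and with infinite output since $\alpha'_1 \neq \movi$. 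Since $u\,u'^n x$ and $u\,u'^\omega$ share the prefix $u\,u'^n$, continuity of $f$ at $u\,u'^\omega$ forces $|f(u\,u'^n x) \wedge \alpha_1{\alpha'_1}^\omega| \to \infty$, which yields both bullets: if $\alpha'_2 \neq \movi$ the ever-longer prefixes $\alpha_2{\alpha'_2}^n$ pin down $\alpha_2{\alpha'_2}^\omega = \alpha_1{\alpha'_1}^\omega$, and if $\alpha'_2 = \movi$ then $f(u\,u'^n x) = \alpha_2\beta$ is independent of $n$, giving $\alpha_2\beta = \alpha_1{\alpha'_1}^\omega$. Your opening paragraph did gesture at this family $u\,u'^k x$, but you attached $x$ to the wrong side (a continuation of $q'_1$ instead of $q'_2$) and then dropped continuity, which is the actual content of the lemma.
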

Empty cycles $q \runs{u  | \movi} q$ for $q \not \in F$ cannot be avoided
in a \oNT{}. However, we shall see in \cref{lem:make-productive} that
such cycles can be avoided if the function is continuous.
Formally, we say that the clean $\trans$ is \emph{productive}
if the hypotheses of \cref{lem:continuity-loops} imply $\alpha'_2 \neq \movi$.

\begin{lemma} \label{lem:make-productive}
Given $\trans$, one can build an equivalent unambiguous,
trim and~productive~\oNT{}.
\end{lemma}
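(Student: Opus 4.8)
The plan is to eliminate empty cycles $q \runs{u \mid \movi} q$ at non-final states by identifying, for each such state, the \emph{canonical future output} it is committed to produce, and then rerouting the run so that this output is generated eagerly. Concretely, I would first analyze which states $q \in Q$ lie on an empty cycle, i.e.\ admit $q \runs{u \mid \movi} q$ for some $u \in A^+$; call these the \emph{stalling} states. The key observation, extracted from the second bullet of \cref{lem:continuity-loops}, is that if $q$ is reachable from an initial state by $q_0 \runs{v \mid \gamma} q$ and $q$ is stalling, then $\gamma$ is already a prefix of $f(x)$ for \emph{every} $x \in \Dom{f}$ extending $v$ through $q$ — and moreover, comparing with the first bullet applied to a \emph{productive} loop elsewhere, the entire value $f(vx')$ is determined up to a bounded ambiguity. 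So a stalling state $q$, together with its arrival output, pins down a unique infinite word (or at least a long fixed prefix). I would make this precise by showing: whenever $q$ is co-accessible and stalling, there is a well-defined word $\operatorname{\textsf{fix}}(q) \in B^* \cup B^\omega$ that the transducer is obliged to output from the moment it enters $q$, independent of which continuation of the input eventually makes the run accepting.

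Next I would build the new transducer $\trans'$. Its states are the states of $\trans$, but stalling states carry an extra component recording ``how much of $\operatorname{\textsf{fix}}(q)$ has already been flushed''. When $\trans$ would take an empty-cycle edge $q \runs{a \mid \movi} q'$ with $q' $ still stalling, $\trans'$ instead outputs the next chunk of $\operatorname{\textsf{fix}}(q)$ on that transition. The bound on how much can be flushed per letter comes from the fact that, since $\trans$ is clean and has finitely many states, any stalling region is exited within $|Q|$ letters along the accepting run — or else the run stays in the stalling region forever and then $\operatorname{\textsf{fix}}(q)$ is exactly the infinite tail, which we can output at a steady rate keyed to the loop length. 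One subtlety: we must not output beyond what is genuinely forced, so the amount flushed is capped by $|\operatorname{\textsf{fix}}(q)|$; if the run leaves the stalling region before the cap is reached, the remaining forced output is produced by the (now non-empty, by continuity) transitions that $\trans$ takes on exit, whose outputs agree with $\operatorname{\textsf{fix}}(q)$'s suffix by \cref{lem:continuity-loops}. After this surgery, I would re-apply \cref{lem:infinite-accepting} (making it clean and trim again) and \cite[Corollary~3]{choffrut1999uniformization} (restoring unambiguity), both of which preserve the computed function; the resulting transducer has no empty cycle at a non-final state reachable on an accepting run, hence is productive, and it computes the same $f$.

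I expect the main obstacle to be verifying that the rerouted outputs exactly match $f$ rather than over- or under-producing. The danger is that an empty cycle at $q$ might be traversed zero times on the accepting run for some inputs and several times for others, yet $\trans'$ commits to $\operatorname{\textsf{fix}}(q)$ on the \emph{first} visit; I must check that $\operatorname{\textsf{fix}}(q)$ is a genuine prefix of $f(x)$ for all accepted $x$ passing through $q$ — this is precisely where continuity (via both bullets of \cref{lem:continuity-loops}, and the pigeonhole that two accepting runs through $q$ with the same arrival output must produce the same value) is essential, and where the ``clean'' hypothesis guarantees $\operatorname{\textsf{fix}}(q)$ is long enough to matter when the stalling region is looped forever. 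A secondary, more bookkeeping-heavy point is bounding the flush rate so that $\trans'$ is a legitimate \oNT{} (finite output per transition): this follows from finiteness of $Q$ together with the productive-loop structure, but it requires care to state the invariant ``position in $\operatorname{\textsf{fix}}(q)$'' correctly across transitions that move between different stalling states on a common empty path.
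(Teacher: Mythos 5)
Your key claim---that every co-accessible state lying on an empty cycle carries a canonical forced future output $\operatorname{\textsf{fix}}(q)$---is false, and the step where you extract it from \cref{lem:continuity-loops} does not go through. The second bullet of \cref{lem:continuity-loops} only applies when the empty loop at $q'_2$ is \emph{parallel} to a loop at a final state: both states must be reached from initial states on the \emph{same} word $u$ and loop on the \emph{same} word $u'$, with $q'_1 \in F$. For a state that merely lies on an empty cycle this premise can fail, and then nothing forces its future to be constant. Concretely, take the deterministic (hence unambiguous, and vacuously productive) transducer with transitions $q_0 \runs{b|\movi} q$, $q \runs{a|\movi} q$, $q \runs{c|0} p$, $p \runs{c|0} p$, $q \runs{d|1} p'$, $p' \runs{d|1} p'$, where only $p,p'$ are final: it is trim and clean, the computed function ($ba^nc^\omega \mapsto 0^\omega$, $ba^nd^\omega \mapsto 1^\omega$) is continuous, and $q$ is ``stalling'' and co-accessible, yet its future output is $0^\omega$ or $1^\omega$ depending on the continuation, so $\operatorname{\textsf{fix}}(q)$ does not exist. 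Productivity does not ask you to remove such cycles; it only concerns the states where the hypotheses of \cref{lem:continuity-loops} hold with $\alpha'_2 = \movi$ (the \emph{constant} states of the paper's proof), and it is exactly for those states that the second bullet pins the whole future down to the single ultimately periodic word $\alpha_2^{-1}\alpha_1{\alpha'_1}^\omega$, computable thanks to a pumping bound $|u|,|u'| \le \bound$.

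Even after restricting to the right states, your flush-and-resume surgery cannot be carried out by a finite-state \oNT{}. On the accepting run the empty loop at a constant state $q$ may be traversed an unbounded number $n$ of times before the run leaves it; to keep the result productive you must output a nonempty chunk on (essentially) every traversal, so at the moment of exit you are ahead of the original production by an amount growing with $n$, and ``the remaining forced output is produced by the transitions $\trans$ takes on exit'' only after dropping an unboundedly long prefix of what those transitions emit---which no bounded ``position in $\operatorname{\textsf{fix}}(q)$'' component can record. Capping the flushed amount, as you suggest, brings the opposite failure: once the cap is reached the loop is traversed in a fixed extended state with empty output, parallel to the accepting loop, i.e.\ precisely the configuration productivity forbids. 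The paper resolves this dilemma by decoupling permanently instead of resuming: every outgoing transition of a constant state $q$ is deleted and $q$ is merged with the initial state of a fresh gadget $\trans_q$ which, on any input labelling a final run from $q$ in $\trans$, blindly outputs $\alpha_q$ on its first transition and $\alpha'_q$ on each subsequent one, never consulting the original outputs again; correctness is exactly the constancy statement above, and unambiguity, equivalence and productivity of the resulting disjoint union are then verified directly. To repair your argument you would need to replace ``stalling'' by ``constant'' and abandon the resume step in favour of such a permanent rerouting.
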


\subparagraph*{Compatible sets and steps.}
We now introduce the key notion of a \emph{compatible set}
which is a set of states having a ``common
future'' and such that one of the future runs is accepting.

\begin{definition}[Compatible set] \label{def:compat}
We say that a set of states $C \subseteq Q$
is \emph{compatible} whenever there exists $x \in A^\omega$
and infinite runs $\rho_q$ for each $q \in C$ labelled by $x$ such that:
\begin{itemize}
\item $\forall q \in C$, $\rho_q$ starts from $q$;
\item $\exists q \in C$ such that $\rho_q$ is final.
\end{itemize}
\end{definition}
Let $\Comp$ be the set of compatible sets.
If $S \subseteq Q$, let $\Comp(S)$ be the
set $2^S \cap \Comp$.

\begin{definition}[Pre-step] We say that $C,u,D$ is a \emph{pre-step}
if $C,D \in \Comp$, $u \in A^*$ and
for all $q \in D$, there exists a unique
state $\pre{C,D}{u}(q) \in C$ such that 
$\pre{C,D}{u}(q)  \runs{u} q$.
\end{definition}
Note that for all $D' \in \Comp(D)$, we have $\pre{C,D}{u}(D') \in \Comp$.

\begin{definition}[Step] We say that a pre-step $C,u,D$ is a \emph{step}
if $\pre{C,D}{u}$ is surjective.
\end{definition}
Given $q \in D$, let $\val{C,D}{u}(q)$ be the output
$\alpha \in B^*$ produced along the run
$\pre{C,D}{u}(q)  \runs{u | \alpha } q$.
We say that a (pre-)step is \emph{initial} whenever $C \subseteq I$.
We first claim that the productions
along the runs of an initial step are
mutual prefixes. \cref{lem:continuity-loops} 
is crucial here.

\begin{lemma} \label{lem:mutual}
Let $J,u,C$ be an initial step.
Then $\val{J,C}{u}(q)$ for $q \in C$
are mutual prefixes.
\end{lemma}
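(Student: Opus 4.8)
\textbf{Proof plan for \cref{lem:mutual}.}
The plan is to argue by contradiction: suppose that for some $q_1, q_2 \in C$ the words $\val{J,C}{u}(q_1)$ and $\val{J,C}{u}(q_2)$ are \emph{not} mutual prefixes, i.e. they disagree at some position. Write $\pre{J,C}{u}(q_i) = p_i \in J$, so that $p_i \runs{u \mid \val{J,C}{u}(q_i)} q_i$, and these are the two initial runs we will feed into \cref{lem:continuity-loops}. Since $C$ is compatible, there is an infinite word $x \in A^\omega$ and infinite runs $\rho_q$ from each $q \in C$ labelled by $x$, one of which — say $\rho_{q_0}$ for some $q_0 \in C$ — is final. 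The key observation is that, because $\trans$ is trim, from \emph{every} state of $C$ (in particular $q_1$ and $q_2$) there is \emph{some} final continuation; but to invoke \cref{lem:continuity-loops} I need the continuations of $q_1$ and $q_2$ to be labelled by the \emph{same} infinite suffix. This is exactly what compatibility of $C$ provides: extend both $p_1 u$ and $p_2 u$ by the common word $x$, so we get two initial runs reading the same infinite input $ux$, namely $p_1 \runs{u} q_1 \runs{x} \infty$ and $p_2 \runs{u} q_2 \runs{x} \infty$, at least one of which can be chosen final (we take $q_0$'s run; for the other state we just need trimness to get co-accessibility, or we pick $q_0$ on both sides suitably — see below).

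The main technical point is to fit this situation into the exact hypotheses of \cref{lem:continuity-loops}, which are phrased in terms of a finite prefix run followed by a \emph{loop}. So the second step is a pumping argument. Fix the final run $\rho_{q_0}$ on $x$ from $q_0$. Since $q_0$ lies on an accepting run and $x$ is infinite, some final state $q' \in F$ is visited infinitely often along $\rho_{q_0}$; pick a decomposition $x = x_1 x_2$ with $x_2$ further decomposed so that $q_0 \runs{x_1} q' \runs{x_2' } q'$ is a loop on $q'$ with $x_2' \in A^+$, and (using cleanness/productivity and \cref{lem:infinite-accepting}) this loop produces a nonempty output $\alpha_1' \neq \movi$. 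For the other state, reason similarly: since $C$ is compatible every $q \in C$ has an infinite run on $x$, and by following $q_2$'s run on the same $x$ and pumping at the same positions we obtain $q_2 \runs{x_1} r \runs{x_2'} r$ for some state $r$, with some output $\alpha_2'$ — possibly empty, which is why \cref{lem:continuity-loops} has its two cases. Prepending the $u$-prefixes gives the two initial runs $p_i \runs{u x_1 \mid \alpha_i} \{q', r\} \runs{x_2' \mid \alpha_i'} \{q', r\}$ required by the lemma. Here $\alpha_1 = \val{J,C}{u}(q_1) \cdot (\text{output of } q_1 \runs{x_1})$ and $\alpha_2 = \val{J,C}{u}(q_2) \cdot (\text{output of } q_2 \runs{x_1})$.

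Now \cref{lem:continuity-loops} forces either $\alpha_1 {\alpha_1'}^\omega = \alpha_2 {\alpha_2'}^\omega$ (when $\alpha_2' \neq \movi$) or $\alpha_1 {\alpha_1'}^\omega = \alpha_2 \beta$ for the infinite output $\beta$ of the final tail from $r$ (when $\alpha_2' = \movi$). In either case the two infinite words $\alpha_1 {\alpha_1'}^\omega$ and $\alpha_2 (\cdots)$ are equal, hence one of $\alpha_1, \alpha_2$ is a prefix of the other; since $\alpha_i = \val{J,C}{u}(q_i) \cdot \gamma_i$ where $\gamma_i$ is the $x_1$-output appended on \emph{both} sides consistently, a short combinatorial check (the appended parts start from $q_1$ and $q_2$ respectively and are not a priori equal, so one must be a little careful here) shows that $\val{J,C}{u}(q_1)$ and $\val{J,C}{u}(q_2)$ are themselves mutual prefixes — contradicting our assumption. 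I expect the delicate part to be exactly this last bookkeeping: making sure that the common infinite suffix $x$ can be chosen \emph{independently of the pair $q_1, q_2$} (it can, since compatibility of $C$ gives a single $x$ witnessing all of $C$ at once) and that the pumped loops on the two sides are read over the \emph{same} factor of $x$, so that the equation delivered by \cref{lem:continuity-loops} really compares $\val{J,C}{u}(q_1)$ and $\val{J,C}{u}(q_2)$ after cancelling equal-length common continuations rather than comparing unrelated words.
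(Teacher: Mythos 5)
There is a genuine gap at the point where you invoke \cref{lem:continuity-loops}. That lemma needs two initial runs over the \emph{same} word, each followed by a loop over the \emph{same} nonempty word, with the loop of the \emph{first} run sitting on a final state. In your construction the two runs start (after the $u$-prefix) from $q_1$ and $q_2$, but compatibility of $C$ only guarantees that \emph{some} $q_0\in C$ has a final run on the common suffix $x$ — $q_0$ may be neither $q_1$ nor $q_2$, and then neither of your two pumped runs loops through a final state. Your fallback "we just need trimness" does not repair this: trimness gives a final continuation of $q_1$ over some \emph{other} suffix $y$, and there is no guarantee that $q_2$ has any run over $y$, so the pair no longer fits the hypotheses. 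Your paragraph 2 in fact silently identifies $q_1$ with $q_0$ (you take the loop $q'\in F$ on $\rho_{q_0}$ but then write $\alpha_1=\val{J,C}{u}(q_1)\cdot(\text{output of }q_1\runs{x_1})$), which is only legitimate when $q_1=q_0$.

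The repair is the route the paper takes: compare every $q\in C$ to the distinguished state $q_0$ whose run is final, not $q_1$ to $q_2$ directly, and do the pumping \emph{uniformly} for all states of $C$ at once (pigeonhole over the tuple of states, as in Lemma~\ref{lem:carac-compat}), so that every application of \cref{lem:continuity-loops} yields the \emph{same} infinite word $\val{J,C}{u}(q_0)\en{C}(q_0)$ on the left-hand side. Then $\val{J,C}{u}(q_1)$ and $\val{J,C}{u}(q_2)$ are both prefixes of that single infinite word, hence mutual prefixes (this is exactly Lemma~\ref{lem:ends}); note that if you instead pumped separately for the pairs $(q_0,q_1)$ and $(q_0,q_2)$, the two applications would compare against outputs of two different ultimately periodic inputs, and equality of those words would need an extra argument. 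Finally, the "delicate bookkeeping" you flag at the end is actually immediate — $\val{J,C}{u}(q_i)$ is a prefix of $\alpha_i$, which is a prefix of the common infinite word, and any two finite prefixes of one infinite word are mutual prefixes — so no cancellation of the appended $x_1$-outputs is needed; the real care lies in the finality and uniform-pumping points above. The contradiction framing is also unnecessary: once fixed, the argument proves the statement directly.
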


\begin{example} In Figure~\ref{fig:replace}, if a step is initial,
it is of the form $\{q_0\}, u, \{q_i\}$ for some $i \in \{0,1,2\}$.
In Figure~\ref{fig:double}, $\{q_0\}, 0^n, \{q_1,q_2\}$ is an initial
step for all $n \ge 0$.
 \end{example}

\begin{definition}[Common, advance] 
\label{def:del} Let $J,u,C$ be an initial step. We define:
\begin{itemize}
\item the \emph{common} $\com{J,C}{u}  \in B^*$ as the longest common prefix
$\bigwedge_{q \in C} \val{J,C}{u}(q)$;
\item for all $q \in C$, its \emph{advance} $\adv{J,C}{u}(q) \in B^*$
as $(\com{J,C}{u})^{-1} \val{J,C}{u}(q)$;
\item the \emph{maximal advance} $\advm{J,C}{u}$ 
as the longest advance, i.e. $\bigvee_{q \in C} \adv{J,C}{u}(q)$.
\end{itemize}
\end{definition}
Definition~\ref{def:del} makes sense by \cref{lem:mutual},
and furthermore $\val{J,C}{u}(q) = \com{J,C}{u} \adv{J,C}{u}(q)$ for all $q \in C$.
Now let $M \defined \max(10,\max_{q,q' \in Q, a \in A} |\lambda(q,a,q')|)$
and $\Bound \defined M \bound$. We say that a compatible
set $C$ is \emph{separable} if there exists an initial
step which ends in $C$, and such that the lengths of the productions along two of its
runs differ of at least $\Bound$.

\begin{definition}[Separable set] \label{def:separable-set}
Let $C \in \Comp$, we say that $C$ is \emph{separable}
if there exists an initial step $J,u,C$ and $p,q \in C$ such that
$\left||\adv{J,C}{u}(p)| - |\adv{J,C}{u}(q)|\right| > \Bound$.
\end{definition}

\begin{remark}
In other words, it means that $|\advm{J,C}{u}| > \Bound$.
\end{remark}
It is easy to see (by a pumping argument)
that one can decide if a set is separable.
We now show that the productions
along the initial steps which end in a separable
set are forced to ``iterate'' some value $\theta$
if the step is pursued. The following lemma
is the key ingredient for showing that a rational
function is deterministic regular (see Section~\ref{sec:main}).

\old{
\begin{lemma}[Looping futures] \label{lem:sep-theta} Let  $C \in \Comp$  be separable
and $J,u,C$ be an initial step
(not necessarily the one which makes $C$ separable). There exists 
$\tau, \theta \in B^*$ with $|\tau| \le 3\Bound$,
and $|\theta| = \Bound!$ which can be uniquely determined
from $C$  and $\adv{J,C}{u}(q)$ for $q \in C$,
such that:
\begin{itemize}
\item $\tau \pref \advm{J,C}{u} \pref \tau  \theta^\omega$;
\item for all step $C,v,D$ and $q \in D$,
$\val{C,D}{v}(q) \pref  (\adv{J,C}{u}(p))^{-1} \tau  \theta^\omega$
with $p \defined \pre{C,D}{v}(q)$.
\end{itemize}
\end{lemma}}%
\cor{%
\begin{lemma}[Looping futures] \label{lem:sep-theta} Let  $C \in \Comp$  be separable
and $J,u,C$ be an initial step
(not necessarily the one which makes $C$ separable). There exists 
$\tau, \theta \in B^*$ with $|\tau| \le \Bound!$,
and $|\theta| = \Bound!$ which can be uniquely determined
from $C$  and $\adv{J,C}{u}(q)$ for $q \in C$,
such that:
\begin{itemize}
\item $ \advm{J,C}{u} \pref \tau  \theta^\omega$;
\item for all step $C,v,D$ and $q \in D$,
$\val{C,D}{v}(q) \pref  (\adv{J,C}{u}(p))^{-1} \tau  \theta^\omega$
with $p \defined \pre{C,D}{v}(q)$.
\end{itemize}
\end{lemma}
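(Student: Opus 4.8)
The plan is to extract the promised $\tau$ and $\theta$ from the combinatorics forced by Lemma~\ref{lem:continuity-loops}. The starting point is that $C$ is separable, so there exists an initial step $J_0,u_0,C$ witnessing $|\advm{J_0,C}{u_0}| > \Bound$. Since $\Bound = M\bound$ and $\bound = |Q|^{|Q|}$ bounds the number of functions $Q \to Q$, any run of length more than $\bound$ that stays inside compatible sets of $C$ must repeat a ``state vector'' — more precisely, reading $u_0$ from $J_0$ realizes a map $\pre{J_0,C}{u_0}$ together with an intermediate compatible set at each position, and a pigeonhole argument on these (there are at most $\bound$ possibilities) produces a factorization $u_0 = u' v u''$ with a genuine loop, i.e.\ a step $C',v,C'$ with $C' \in \Comp$, that is traversed on the witnessing run. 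Because two of the runs of the step $J_0,u_0,C$ differ in output length by more than $\Bound = M\bound$, and $v$ can be chosen of length at most $\bound$ (so it contributes at most $M\bound$ to any single production), after pumping we see that on this loop the two competing runs produce, respectively, outputs $\gamma$ and $\gamma'$ with $|\gamma| \ne |\gamma'|$, and at least one of them is nonempty; by clean/productive-ness both are nonempty.

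**Next I would feed this loop into Lemma~\ref{lem:continuity-loops}.} The loop $C',v,C'$ together with the initial prefix $J_0 \runs{u'} C'$ gives, for the two states $p,q \in C'$ realizing the length discrepancy, data $q_1 \runs{u'|\beta_i} q'_i \runs{v|\gamma_i} q'_i$ with $q_1,q_2$ initial, $q'_1$ final (we can arrange this since $C$ is compatible, hence some run through it is final — here I use the definition of compatibility and trimness to route the final run), and $\gamma_1,\gamma_2 \ne \movi$. The first bullet of Lemma~\ref{lem:continuity-loops} then forces $\beta_1 \gamma_1^\omega = \beta_2 \gamma_2^\omega$. Standard Fine–Wilf / Lyndon–Schützenberger combinatorics on words now gives that $\gamma_1$ and $\gamma_2$ are powers of a common primitive word $\pi$, and that the \emph{entire} infinite word $\beta_1\gamma_1^\omega$ is ultimately $\pi$-periodic. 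Taking $\theta$ to be the length-$\Bound!$ prefix of $\pi^\omega$ works because $|\pi| \le M\bound$ divides $\Bound!$ (every integer $\le \Bound$ divides $\Bound!$, and $|\pi| \le |\gamma_i| \le M\bound < \Bound$); so $\theta$ is a ``rounding up'' of the period to a canonical length, and $\pi^\omega = \theta^\omega$. One then sets $\tau$ to be the appropriate prefix of this ultimately periodic word of length a multiple of $|\pi|$ bounded by $\Bound!$ — concretely, the shortest prefix whose length is a multiple of $\Bound!$ beyond which the word is purely $\theta$-periodic, which by the bounds obtained is $\le \Bound!$. The first bullet $\advm{J,C}{u} \pref \tau\theta^\omega$ then holds for the witnessing step, and for an \emph{arbitrary} initial step $J,u,C$ one re-runs the same argument: its own maximal advance must sit inside $\tau'\theta'^\omega$ for \emph{its} loop-data, and a final application of the first bullet of Lemma~\ref{lem:continuity-loops} identifies $\theta'$ with $\theta$ (same primitive root) and pins $\tau$ down uniquely from the finite data $\adv{J,C}{u}(q)$, $q \in C$.

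**For the second bullet**, take any step $C,v,D$ and $q \in D$ with $p = \pre{C,D}{v}(q)$. Continuing the initial step $J,u,C$ by $v$ yields an initial step $J, uv, D$, and there is an accepting continuation through $q$ (again using compatibility of $D$, trimness, and productiveness to find a final run and, by pumping, a loop on it). Applying Lemma~\ref{lem:continuity-loops} with this loop as the inner cycle and comparing against the final run realizing $\advm{J,C}{u}$, we get that the production $\val{J,C}{u}(p)\cdot\val{C,D}{v}(q)$ along $J \runs{uv} q$ is a prefix of the same ultimately periodic word $\tau\theta^\omega$ shifted by the common prefix — i.e.\ $\val{J,C}{u}(p)\,\val{C,D}{v}(q) \pref \com{J,C}{u}\,\tau\theta^\omega$, whence, cancelling $\com{J,C}{u}\adv{J,C}{u}(p) = \val{J,C}{u}(p)$, $\val{C,D}{v}(q) \pref (\adv{J,C}{u}(p))^{-1}\tau\theta^\omega$ as required (here I should be careful that $\com{J,C}{u}$ need not be a prefix of $\tau$, but since $\val{J,C}{u}(p)$ and $\advm{J,C}{u} \pref \tau\theta^\omega$ are mutual prefixes by Lemma~\ref{lem:mutual}, the inverse is well-defined on the relevant suffix).

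**The main obstacle** I expect is the careful bookkeeping that turns ``a long run with a large length discrepancy'' into ``a single loop $C',v,C'$, traversed on a final run, with nonempty outputs of different lengths'': one must simultaneously pigeonhole on the state-vector maps $Q \to Q$ (to bound $|v|$ by $\bound$) \emph{and} keep a final run alive through the loop \emph{and} ensure the discrepancy survives to the loop rather than being hidden in the prefix $u'$ — this is exactly where $\Bound = M\bound$ is calibrated, since $M\bound$ is the most output a length-$\bound$ prefix can absorb. The subsequent word-combinatorics, and checking that $|\pi|$ divides $\Bound!$ so that $\theta$ has the uniform advertised length, are routine once this structural claim is in hand; and the uniqueness of $\tau,\theta$ from $C$ and the $\adv{J,C}{u}(q)$ follows because $\theta^\omega$ is determined by its primitive root (forced by Lemma~\ref{lem:continuity-loops}) and $\tau$ is then the canonical aperiodic prefix of bounded length.
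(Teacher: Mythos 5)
Your toolbox is the same as the paper's (a pumping characterization of separability, Lemma~\ref{lem:continuity-loops}, and Fine--Wilf-style word combinatorics producing a period of length exactly $\Bound!$), and your extraction of $\theta$ from the witnessing step matches the paper's Sublemma~\ref{slem:beta-psi}. But there is a genuine gap in how you pass from ``the single infinite production $\beta_1\gamma_1^\omega$ (i.e.\ the behaviour on the one input $u'v^\omega$) is ultimately periodic'' to the two bullets, which quantify over \emph{every} future step $C,v,D$ and over an \emph{arbitrary} initial step $J,u,C$, with one $\tau$ that must not depend on $v,D$. For the first bullet you propose to re-run the argument with the given step's ``own loop-data'', but an arbitrary initial step need not contain any loop exhibiting a length discrepancy --- it can even be the empty step $C,\movi,C$, which is exactly how the lemma is invoked at the initialization of $\strans$ --- so there is no $\theta'$ to identify with $\theta$. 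For the second bullet, invoking Lemma~\ref{lem:continuity-loops} ``against the final run realizing $\advm{J,C}{u}$'' is not a licit application: the run realizing the maximal advance is a finite run of the step, not a final run, and Lemma~\ref{lem:continuity-loops} only compares two initial runs over one and the same input of the form $u{u'}^\omega$. What is actually needed --- and what the paper proves --- is that the \emph{common future} of $C$ is itself the periodic word: for every step $C,v,D$ and $q \in D$, the infinite word $\val{C,D}{v}(q)\en{D}(q)$ equals a fixed shift of $\theta^\omega$ (Sublemma~\ref{slem:rrr}), where $\en{D}$ is the end function of Lemma~\ref{lem:ends}. This is obtained by pumping the separability loop $n$ times, comparing via Lemma~\ref{lem:ends2b} the futures attached to the two discrepant runs over the same pumped input, and letting $n \to \infty$, so that the future must absorb arbitrarily large $\theta$-shifts of itself. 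Your sketch never performs this unbounded pumping against the common future, and without it neither the confinement of arbitrary futures to $(\adv{J,C}{u}(p))^{-1}\tau\theta^\omega$ nor the independence of $\tau$ from the chosen future step follows.

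Two smaller points. Productivity is not available at this stage (the lemma is stated for a clean, trim, unambiguous \oNT{}; productivity is only arranged by Lemma~\ref{lem:make-productive} and used later), so cleanness only guarantees that the loop output on the run through a final state is nonempty; if the other loop output is empty you must use the second item of Lemma~\ref{lem:continuity-loops} rather than Fine--Wilf (the paper treats this case separately). And the claim that $\tau$ is ``uniquely determined from $C$ and the $\adv{J,C}{u}(q)$'' needs an explicit construction: in the paper, $\tau$ is computed from the witness data (computable from $C$ alone) together with the identity of a state of $C$ whose advance is empty, and the substance of the proof is checking that this value does not depend on the future step; calling $\tau$ ``the canonical aperiodic prefix'' presupposes the infinite word it is a prefix of, which is precisely what remains to be pinned down.
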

}%

\begin{remark}\label{rem:sep-theta}
Since $\adv{J,C}{u}(p) \pref \advm{J,C}{u} \pref \tau \theta^\omega$,
the second item makes sense.
\end{remark}

\begin{example} In Figure~\ref{fig:double},
the compatible set $C \defined \{q_1, q_2\}$ is separable.
For all step $C, v, D$ we have $D = C$ thus $v = 0^n$,
$\val{C,D}{0^n}(q_1) = 0^n$ and $\val{C,D}{0^n}(q_2) = 0^{2n}$.
 \end{example}

\subsection{Composition of a restricted \oNT{} and a $1$-bounded \DSST{}}

\label{ssec:compo-f}

In the rest of this paper, we let $\trans = (A,B, Q, I,F,\Delta,\lambda)$
be an unambiguous, productive and trim \oNT{}
computing a continuous $f: A^\omega \parfonc B^\omega$.
Our goal is to rewrite $f$ as the composition
of a restricted \oNT{}  and a $1$-bounded \DSST{}.
We first build the restricted \oNT{}, which computes an over-approximation
of the accepting run of $\trans$ in terms of compatible sets.

\begin{lemma} \label{lem:pre-compat}
One can build a restricted \oNT{} $\ntrans$ computing
$g:A^\omega \parfonc (\Comp  \uplus A)^\omega$ such that
$\Dom{f} \subseteq \Dom{g}$, and for all $x \in \Dom{g}$,
$g(x) = C_0 x[1] C_1  x[2] C_2 \cdots$
where:
\begin{itemize}
\item $C_0 \subseteq I$ and for all $i \ge 0$, $C_i, x[i{+}1],C_{i+1}$ is a pre-step;
\item if $x \in \Dom{f}$ then $\forall i\ge 0$, $q^x_i \in C_i$,
where $q^x_0 \runs{x[1]} q^x_1 \runs{x[2]} \cdots$ is the
accepting run of $\trans$.
\end{itemize}
\end{lemma}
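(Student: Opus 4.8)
The goal is to construct a restricted \oNT{} $\ntrans$ that, on an input $x$, produces a sequence $C_0 x[1] C_1 x[2] C_2 \cdots$ of compatible sets interleaved with the input letters, in such a way that each consecutive triple $C_i, x[i{+}1], C_{i+1}$ is a pre-step, and so that on inputs in $\Dom{f}$ the true accepting state $q^x_i$ always lies in $C_i$. The natural candidate for $C_i$ is the set of states of $\trans$ reachable from an initial state by reading $x[1{:}i]$ \emph{and} from which an accepting continuation exists along the \emph{whole} future $x[i{+}1{:}]$ --- i.e. the states that are both accessible on the prefix and lie on some final run on the suffix. The first obstacle is that this ``right'' set depends on the infinite future $x[i{+}1{:}]$, so it cannot literally be computed by a one-way machine reading $x$ left to right; instead $\ntrans$ will \emph{guess} the sequence $(C_i)_i$ nondeterministically and verify, letter by letter, only the finitary constraints (each triple is a pre-step, $C_0 \subseteq I$), accepting a run as valid exactly when the guesses were globally consistent, which is what the ``restricted \oNT{}'' semantics (all states final; a run counts only if it is the \emph{unique} accepting run with infinite output) is designed to capture.

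Concretely, I would take the state space of $\ntrans$ to be (essentially) $\Comp \uplus (\Comp \times \{\text{``letter pending''}\})$, or more simply pairs $(C, a)$ remembering the last guessed compatible set and the letter just read. From state $C$, on reading $a \in A$, $\ntrans$ outputs $a$, then nondeterministically picks any $D \in \Comp$ such that $C, a, D$ is a pre-step (this is a finitary check: for every $q \in D$ there is a unique $p \in C$ with $p \runs{a} q$), outputs $D$, and moves to state $D$, with the output along the transition being $a \cdot D$ (viewed as a word over $\Comp \uplus A$). Initial states are the $C_0 \in \Comp$ with $C_0 \subseteq I$, and the transducer first emits $C_0$. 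All states are final, so the run structure is as required for a restricted \oNT{}. The key point for the semantics is that $\Dom{g}$ is, by definition of the restricted \oNT{} semantics, the set of $x$ admitting a \emph{unique} infinite run with infinite output; I must argue that (i) every $x \in \Dom{f}$ has such a run, namely the one guessing $C_i$ to be the ``canonical'' compatible set described above, and (ii) this run is the \emph{only} accepting one, or at least that all accepting runs agree on the coordinates that matter --- which is the delicate part.

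For (i): if $x \in \Dom{f}$ with accepting run $q^x_0 \runs{x[1]} q^x_1 \runs{x[2]} \cdots$, then taking $C_i$ to be the \emph{smallest} compatible set containing $q^x_i$ that arises this way --- e.g. the set of all states $q$ such that there is a run from $q$ labelled $x[i{+}1{:}]$ that is ``compatible'' with $q^x_i$'s run in the sense of Definition~\ref{def:compat} --- one checks directly that each $(C_i, x[i{+}1], C_{i+1})$ is a pre-step: a run on $x[i{+}1{:}]$ from $q \in C_{i+1}$ back up, read one letter, gives a run on $x[i{+}2{:}]$ from its predecessor, and uniqueness of the predecessor follows from unambiguity of $\trans$ (two distinct states of $C_i$ mapping to the same $q \in C_{i+1}$ would give two distinct accepting runs on a common infinite input, contradicting unambiguity on the branch that is final). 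That $C_0 \subseteq I$ is clear, and $q^x_i \in C_i$ by construction. Canonicity of this choice --- defining $C_i$ purely from $x$ and not from a guessed sequence --- is what makes the run essentially unique among runs that actually track the accepting computation, though I may only need that \emph{some} valid run exists with $q^x_i \in C_i$ for all $i$, plus that it has infinite output (which holds since $\Comp$ is infinite in each position, so the output over $\Comp \uplus A$ is infinite regardless).

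The main obstacle I anticipate is pinning down exactly what ``$\ntrans$ computes $g$'' requires under the restricted-\oNT{} semantics: one needs $x \in \Dom g$ iff $x$ has a \emph{unique} accepting run with infinite output, and then $g(x)$ is that run's output. Since many guess-sequences $(C_i)_i$ can be locally consistent (all triples pre-steps, $C_0 \subseteq I$) without being globally meaningful, I expect $\Dom g$ to be strictly larger than $\Dom f$, which is fine --- the lemma only asks $\Dom f \subseteq \Dom g$ and a description of $g$ on \emph{all} of $\Dom g$. So I would \emph{not} try to force the canonical choice; instead I would show: every locally-consistent infinite guess sequence is realized by a run of $\ntrans$ and produces infinite output, so such $x$ lie in $\Dom g$ \emph{only if} the run is unique --- hence I may need to further restrict $\ntrans$ (e.g. make $C_{i+1}$ a \emph{deterministic} function of $C_i$ and $x[i{+}1]$, such as the maximal $D$ with $C_i, x[i{+}1], D$ a pre-step) so that there is at most one run, guaranteeing $\Dom g$ well-defined and containing $\Dom f$. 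Checking that this maximal/canonical deterministic choice still contains $q^x_i$ for $x \in \Dom f$ is then the real content, and it rests on unambiguity of $\trans$ together with the observation that the canonical compatible set from (i) is in fact this maximal pre-step successor. I would close by verifying the bookkeeping: the transition relation of $\ntrans$ is finite (both $\Comp$ and the pre-step relation are computable), and the output alphabet $\Comp \uplus A$ is finite, so $\ntrans$ is a bona fide restricted \oNT{}.
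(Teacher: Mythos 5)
Your high-level setup (guess a sequence of compatible sets forming pre-steps, output them interleaved with the input, rely on the restricted-\oNT{} semantics) matches the paper's, and you correctly identify the crux: on $x \in \Dom{f}$ the machine must have a \emph{unique} infinite run, while the sets it outputs must still contain the accepting states $q^x_i$. But your proposed resolution does not work. Making $C_{i+1}$ ``the maximal $D$ with $C_i,x[i{+}1],D$ a pre-step'' is not even well-defined: compatible sets are not closed under union, so among the states of $\push{x[i{+}1]}{C_i}$ having a unique predecessor in $C_i$ there may be several incomparable maximal compatible subsets. Worse, no choice that is a function of $(C_i, x[i{+}1])$ alone can work: whether $q^x_{i+1}$ lies in a given compatible subset depends on the infinite future of $x$ (two states may each be compatible singletons but have no common accepting future), so for any letter-by-letter deterministic rule there are inputs in $\Dom{f}$ whose accepting run falls outside the chosen set, violating the second bullet. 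Conversely, if you keep the unrestricted nondeterministic guessing, essentially every $x \in \Dom{f}$ admits many locally consistent infinite runs, so $x \notin \Dom{g}$ and the inclusion $\Dom{f} \subseteq \Dom{g}$ fails. So both variants you describe break one of the two requirements; the ``real content'' you defer is exactly the missing piece.

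The paper's proof resolves this tension differently: it defines a canonical successor using \emph{unbounded but finite} lookahead rather than one letter. Given the set $S$ of states reachable along the run built so far, one first proves (by a maximality plus K\H{o}nig-type argument) that some compatible subset $C \subseteq S$ eventually ``covers the future'' of $S$, i.e. $\push{x[i{+}1{:}j]}{C} = \push{x[i{+}1{:}j]}{S}$ for some finite $j$; the canonical choice $\cov^x_i(S)$ is the $\pc$-minimal compatible subset achieving this at the earliest such time. The restricted \oNT{} then guesses a compatible subset but stores a history component tracking, for each past guess, which compatible subsets still cover the evolving reachable set; its transition constraints force the guess to be exactly this minimal-at-earliest-covering-time set the moment the covering condition first becomes true. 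This makes the infinite run on any $x \in \Dom{f}$ unique (guesses are verified after finitely many letters, as the restricted semantics demands), and $q^x_i \in C_i$ is then proved from the covering property together with unambiguity and trimness of $\trans$ (note also that your uniqueness-of-predecessor argument should invoke co-accessibility of the successor state $q$ to obtain an accepting continuation, since the future witnessing compatibility of $C_{i+1}$ need not be final from $q$ itself). Your proposal lacks this covering mechanism, and without it neither uniqueness of the run nor membership of the accepting state can be secured simultaneously.
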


Given $x \in \Dom{f}$, we denote by $C^x_0, C^x_1, \dots$ the sequence
of compatible sets produced by $\ntrans$ in \cref{lem:pre-compat}.
We now describe a $1$-bounded \DSST{} $\strans$ which, when given as
input $g(x) \in (\Comp  \uplus A)^\omega$ for $x \in \Dom{f}$, outputs $f(x)$
(this description is continued in Section~\ref{sec:invariants}).

\subparagraph*{Tree of compatibles.} Given $C \in \Comp$,
we define  $\tree{C}$ as a finite set of words over
$\Comp(C)$, which describes the decreasing chains for $\subsetneq$.
It can be identified with the set of all root-to-node paths of a tree
labelled by elements of $\Comp(C)$, as shown in Example~\ref{ex:tree-compat}.

\begin{definition}[Tree of compatibles] Given $C \in \Comp$, we 
denote by $\tree{C}$ the set of words $\pi = C_1 \cdots C_n \in (\Comp(C))^+$
such that $C_1 = C$ and for all $1 \le i \le n{-}1$,  $C_i \supsetneq C_{i+1}$.
\end{definition}

\begin{example} \label{ex:tree-compat} 
If $C = \{1,2,3\}$ and $\Comp(C) = \{\{1,2,3\}, \{1,2\}, \{2,3\}, \{1\}, \{2\}, \{3\}\}$,
then we have $\tree{C} = \{\{1,2,3\}\{1,2\}\{1\}, \{1,2,3\}\{1,2\}\{2\}, \{1,2,3\}\{2,3\}\{2\},  \{1,2,3\}\{2,3\}\{3\},$
\linebreak
$ \{1,2,3\}\{1\}, \{1,2,3\}\{2\}, \{1,2,3\}\{3\} \}$. Its view as a tree is depicted in Figure \ref{fig:tree-compat}.
\end{example}
\begin{figure}[h!]

\centering
\begin{tikzpicture}{scale=1}

	\node[above] at (4,5.2) {$\{1,2,3\}$};	
	\draw (4,5) -- (4,5.2);		
	\draw (0,5) -- (8,5);
		\draw (0,4.8) -- (0,5);
		\node[above] at (0,4.3) {$\{1,2\}$};	
		\draw (-0.5,4.1) -- (0.5,4.1);
		\draw (0,4.1) -- (0,4.3);	
			\draw (-0.5,3.9) -- (-0.5,4.1);
			\node[above] at (-0.5,3.4) {$\{1\}$};	
			\draw (0.5,3.9) -- (0.5,4.1);
			\node[above] at (0.5,3.4) {$\{2\}$};	
		\draw (2,4.8) -- (2,5);
		\node[above] at (2,4.3) {$\{2,3\}$};
		\draw (1.5,4.1) -- (2.5,4.1);	
		\draw (2,4.1) -- (2,4.3);	
			\draw (1.5,3.9) -- (1.5,4.1);
			\node[above] at (1.5,3.4) {$\{2\}$};	
			\draw (2.5,3.9) -- (2.5,4.1);
			\node[above] at (2.5,3.4) {$\{3\}$};	
		\draw (4,4.8) -- (4,5);
		\node[above] at (4,4.3) {$\{1\}$};	
		\draw (6,4.8) -- (6,5);
		\node[above] at (6,4.3) {$\{2\}$};	
		\draw (8,4.8) -- (8,5);
		\node[above] at (8,4.3) {$\{3\}$};	
		
\end{tikzpicture}

\caption{\label{fig:tree-compat}  The tree of compatibles obtained from Example~\ref{ex:tree-compat}.}

\end{figure}
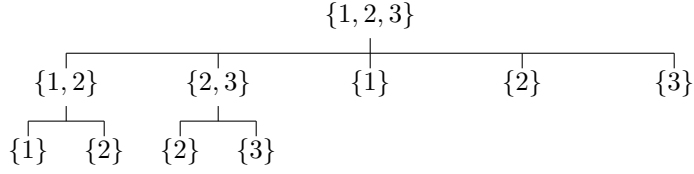

\subparagraph*{Information stored.}  The states of the \DSST{} $\strans$
are partitioned in two categories: the sets of the \emph{separable mode}
and the sets of of the \emph{non-separable mode}.
A configuration of the \DSST{} $\strans$ will always keep track
of the following information:
\begin{itemize}
\item the content of a register $\out$;
\item two sets $\Jf, \Cf \in \Comp$ and a function $\pre{}{}: \Cf \fonc \Jf$ (stored in the state);
\old{\item a function $\lag: \Cf \fonc B^*$ such that $|\lag(q)| \le 3\Bound$
for all $q \in \Cf$ (stored in the state).}
\cor{
\item a function $\lag: \Cf \fonc B^*$ such that $|\lag(q)| \le \Bound!$
for all $q \in \Cf$ (stored in the state);
\item a value $\lagm \in B^*$ such that $|\lagm| \le \Bound!$ (stored in the state).}
\end{itemize}
Furthermore, when $\strans$ is in a state of the separable mode, it will additionally store:
\begin{itemize}
\item a value $\theta \in B^*$ with $|\theta| = \Bound!$ (stored in the state);
\item for all $\pi = C_1 \cdots C_n \in \tree{\Cf}$ (note that $C_1 = \Cf$ by definition of  $\tree{\Cf}$):
\begin{itemize}
\item\old{a function $\nb{\pi}: C_n \fonc [0{:}\four]$ (stored in the state);}%
\cor{%
a function $\nb{\pi}: C_n \fonc [0{:}\four]$ (stored in the state);
}
\item the content of a register $\outi{\pi}$. For $\pi = \Cf$, we identify the register $\outi{\Cf}$ with $\out$;
\end{itemize}
\item a function $\last: \Cf \fonc B^*$ such that $|\last(q)| < \Bound!$
forall $q \in \Cf$ (stored in the state).
\end{itemize}
If a configuration of $\strans$ is clearly fixed, we abuse notations and denote by $\outi{\pi}$
(resp. $\nb{\pi}$, $\lag$, etc.) the value contained in register $\out_\pi$
 (resp. stored in the state) in this configuration.
In a given configuration of $\strans$, we say that $\pi \in \tree{\Cf}$ is  \emph{close} if
for all $\pi \prefneq \pi' \in \tree{\Cf}$, we have 
$\nb{\pi'} = 0$ and $\outi{\pi'} = \movi$
(intuitively, the subtree rooted in $\pi$ stores empty informations).

\subparagraph*{Invariants.} The main idea for building $\strans$ it the following. If $C^x_i$ is a non-separable
set, then the productions along the initial runs which end in $C^x_i$ are mutual
prefixes (by \cref{lem:mutual}) which only differ
from a bounded information. Hence the common part  $\com{}{}$
of these runs is stored to $\out$, and the $\adv{}{}$ are stored in the $\lag$.
If $C^x_i$ becomes separable, then these runs still produce mutual prefixes,
but two of them can differ by a large information. However by \cref{lem:sep-theta},
they iterate some value $\theta$. Hence the only
relevant information is the number of $\theta$ which were produced
along these runs.
Formally, our construction ensures that the following
invariants hold when $\strans$ has just read
$C^x_0 x[1] C^x_1 \cdots x[i] C^x_i$ for $i \ge 0$:
\begin{enumerate}
\item  \label{inv:C} $\Cf = C^x_i$;

\item \label{inv:step} $\Jf,x[1{:}i],\Cf$ is an initial step and $\pre{}{} = \pre{\Jf, \Cf}{}$

\item \label{inv:nosep} if $\Cf$ is not separable, then $\strans$ is in non-separable mode and:
\begin{enumerate}
\item $\out = \com{\Jf,\Cf}{x[1{:}i]}$;
\item $\lag(q) = \adv{\Jf,\Cf}{x[1{:}i]}(q)$ for all $q \in \Cf$.
\end{enumerate}

\item \label{inv:sep} if $\Cf$ is separable, then $\strans$ is in separable mode and:
\begin{enumerate}

\item\old{\label{inv:lag}the $\lag(q)$ for $q \in Q$ are mutual prefixes,
and so $\lagm \defined \bigvee_{q \in C} \lag(q)$ is defined.
Furthermore, there exists $q \in \Cf$ such that $\lag(q) = \movi$.
We say that some $q \in \Cf$ is \emph{lagging} if and only if $\lag(q) \prefneq \lagm$
(strict prefix), otherwise it is \emph{not lagging};}%
\cor{%
\label{inv:lag}for all $q \in Q$ $\lag(q) \pref \lagm$.
Furthermore, there exists $q \in \Cf$ such that $\lag(q) = \movi$.
We say that some $q \in \Cf$ is \emph{lagging} if and only if $\lag(q) \prefneq \lagm$
(strict prefix), otherwise we say that $q$ is \emph{not lagging};}

\item \label{inv:out} if $\pi \in \tree{\Cf}$ is such that $\pi \neq \Cf$ (i.e. $\outi{\pi} \neq \out$),
then $\outi{\pi} \in \theta^*$;

\item  \label{inv:last} for all $q \in \Cf$, $\last(q) \pref \theta^\omega$
(if furthermore $|\last(q)| < \Bound!$, then $\last(q) \prefneq \theta$);

\item \label{inv:lagging} if $q$ is lagging, then $\last(q) = \movi$
and  for all $\pi = C_1 \cdots C_n \in \tree{\Cf}$ such
that $q \in C_n$, we have $\nb{\pi}(q) = 0$ and, if $\pi \neq \Cf$, $\outi{\pi} = \movi$;

\item \label{inv:past} for all $\pi = C_1 \cdots C_n \in \tree{\Cf}$, 
 for $1 \le i \le n$  define $\pi_i \defined C_1 \cdots C_i$.
If $C_n = \{q\}$, then:
\begin{itemize}

\item $\val{\Jf,\Cf}{x[1{:}i]}(q) = \out~\lag(q)$ if $q$ is lagging;

\item $\val{\Jf,\Cf}{x[1{:}i]}(q) = \out~\lagm ~\theta^{\nb{\pi_1}(q)}
\left(\prod_{i=2}^n \outi{\pi_i} \theta^{\nb{\pi_i}(q)} \right) \last(q)$ 
 if $q$ is not lagging.
\end{itemize}

\item \label{inv:future} for all future steps $\Cf, u, D$ and for all $q \in D$,
$\val{\Jf,D}{x[1{:}i]u}(q)
\pref \out~ \lagm~\theta^\omega$;

%
\old{\item \label{inv:close}%
for all $\pi = C_1 \cdots C_n \in \tree{\Cf}$ not close,
let $J_n \defined \pre{\Jf,\Cf}{x[1{:}i]}(C_n) \subseteq \Jf$.
Then $J_n, x[1{:}i], C_n$ is an initial step,
which can be decomposed as an initial step
$J_n, x[1{:}j], E$  and a step $E, x[j{+}1{:}i], C_n$
such that $|\advm{J_n,E}{x[1{:}j]} | \ge \four \Bound !$.}%
\cor{%
\item \label{inv:close}
for all $\pi = C_1 \cdots C_n \in \tree{\Cf}$ not close,
let $J_n \defined \pre{\Jf,\Cf}{x[1{:}i]}(C_n) \subseteq \Jf$.
Then $J_n, x[1{:}i], C_n$ is an initial step,
which can be decomposed as an initial step
$J_n, x[1{:}j], E$  and a step $E, x[j{+}1{:}i], C_n$
such that $|\advm{J_n,E}{x[1{:}j]} | \ge \four \Bound !$.
}
\end{enumerate}
\end{enumerate}

\section{Description of the $1$-bounded \DSST{} for Subsection~\ref{ssec:compo-f}} 

\label{sec:invariants}

In this section, we finally describe how the \DSST{} $\strans$ can
preserve the invariants of Subsection~\ref{ssec:compo-f},
while being $1$-bounded and outputting $f(x)$ when $x \in \Dom{f}$.

Let us first deal with the initialization of $\strans$.
When reading the first letter $C^x_0$ of $g(x)$, $\strans$ stores
$\Jf \becomes C^x_0$, $\Cf \becomes C^x_0$
and $\lag(q) \becomes \movi$ for all $q \in C^x_0$.
\cor{There is no need to define $\lagm$ in this context.}
This is enough if  $C^x_0$ is not separable.
Otherwise, \old{we let $\theta$ be given by \cref{lem:sep-theta}}%
\cor{let $\theta$  (resp. $\lagm$) be given by the $\theta$ (resp. 
the $\tau$) of \cref{lem:sep-theta}}
(applied to the initial simulation $C^x_0, \movi, C^x_0$), $\nb{\pi}(q) \becomes 0$
and $\outi{\pi} \becomes \movi$ for all $\pi = C_1 \cdots C_n \in \tree{C^x_0}$
and all $q \in C_n$. \cor{We also let $\last(q) \defined \movi$ for all $q \in C^x_0$.}

\begin{claim} \label{claim:inv:init}
Invariants~\ref{inv:C} to~\ref{inv:sep}
(with $i = 0$) hold after this operation.
\end{claim}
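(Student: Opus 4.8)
The claim simply asks to verify Invariants~\ref{inv:C}--\ref{inv:sep} at the initial moment $i=0$, right after $\strans$ has read the single letter $C^x_0$ and performed the initialization described above. Since $i=0$ means no letter of $x$ has been processed yet, the ``step'' $\Jf, x[1{:}0], \Cf$ is really the trivial pre-step $C^x_0, \movi, C^x_0$, so the verification is mostly a matter of unwinding definitions. I would organize the proof as one short paragraph per invariant.

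\textbf{Key steps.} First, Invariant~\ref{inv:C}: this is immediate since we set $\Cf \becomes C^x_0$. For Invariant~\ref{inv:step}: we set $\Jf \becomes C^x_0$, and $C^x_0 \subseteq I$ holds by \cref{lem:pre-compat} (its first bullet says $C_0 \subseteq I$); the identity pre-step $C^x_0, \movi, C^x_0$ is trivially an initial step with $\pre{}{} = \pre{\Jf,\Cf}{} = \mathrm{id}$, which is exactly what we stored. For Invariant~\ref{inv:nosep} (the non-separable case): here $\com{\Jf,\Cf}{\movi} = \bigwedge_{q\in C^x_0} \val{\Jf,\Cf}{\movi}(q) = \movi$ since every run labelled by $\movi$ produces the empty output, matching $\out \becomes \movi$ (the initial content of a register); and $\adv{\Jf,\Cf}{\movi}(q) = \movi = \lag(q)$ for all $q \in C^x_0$, as we set. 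Finally, for Invariant~\ref{inv:sep} (the separable case), I would invoke \cref{lem:sep-theta} applied to the initial step $C^x_0, \movi, C^x_0$: this provides $\tau, \theta$ with $|\tau| \le \Bound!$ and $|\theta| = \Bound!$, and we set $\lagm \becomes \tau$, $\theta$ as given, $\nb{\pi}(q) \becomes 0$ and $\outi{\pi} \becomes \movi$ for all $\pi \in \tree{C^x_0}$ and $q$ in the last component, and $\last(q) \becomes \movi$. One then checks sub-invariants \ref{inv:lag}--\ref{inv:close} in turn: \ref{inv:lag} holds because all $\lag(q) = \movi \pref \lagm$ and some (in fact every) $q$ has $\lag(q)=\movi$; \ref{inv:out} and \ref{inv:last} hold trivially since everything is $\movi \in \theta^*$ and $\movi \prefneq \theta$; \ref{inv:lagging} is vacuous-or-trivial since no $q$ is lagging unless $\lagm \neq \movi$, and even then all $\nb{\pi}(q)=0$, $\outi{\pi}=\movi$, $\last(q)=\movi$ as required; \ref{inv:past} reduces, for singleton $C_n = \{q\}$, to checking $\val{\Jf,\Cf}{\movi}(q) = \movi = \out\,\lagm\,\theta^0 \cdots \last(q)$ when $q$ is not lagging, using that $\lagm = \tau \pref \advm{\Jf,\Cf}{\movi} = \movi$ forces $\tau = \movi$ here, hence $\lagm = \movi$ and everything collapses to $\movi$; \ref{inv:future} is exactly the second bullet of \cref{lem:sep-theta} (with $\adv{\Jf,C}{\movi}(p) = \movi$), giving $\val{\Jf,D}{v}(q) \pref \out\,\lagm\,\theta^\omega$; and \ref{inv:close} is vacuous since every $\pi \in \tree{C^x_0}$ is close (all $\nb{\pi'} = 0$ and $\outi{\pi'} = \movi$ by construction).

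\textbf{Main obstacle.} This claim is almost entirely routine definition-chasing, so there is no serious obstacle; the only delicate point is matching the bookkeeping of \cref{lem:sep-theta} with the notation of the invariants --- in particular noticing that in the initial configuration $\advm{\Jf,\Cf}{\movi} = \movi$, which forces $\tau = \movi$ and hence $\lagm = \movi$, so that several invariants (\ref{inv:lagging}, \ref{inv:past}, \ref{inv:close}) that look substantial are in fact vacuous or trivial at $i=0$. I would make sure to state this observation explicitly early in the proof, since it is what makes the rest collapse.
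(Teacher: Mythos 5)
Your verification follows the same routine unwinding of definitions that the paper leaves implicit (the paper gives no proof of this claim), and most of the individual checks are fine. However, the step you single out as the key observation is wrong: from $\advm{\Jf,\Cf}{\movi} = \movi$ you cannot conclude $\tau = \movi$. The relation $\tau \pref \advm{J,C}{u}$ is not part of the statement of \cref{lem:sep-theta} as it stands in the paper; the lemma only asserts $\advm{J,C}{u} \pref \tau\theta^\omega$, which is vacuous when the maximal advance is empty and puts no constraint on $\tau$. Indeed, in the proof of the strong version (\cref{lem:sep-theta2}) $\tau$ is built from the advances along the separability witness loop $i(C), u u' u'', C$ of \cref{lem:carac-separ}, and is in general nonempty even for the empty initial step $C^x_0,\movi,C^x_0$. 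This is exactly why the initialization stores $\lagm \becomes \tau$: Invariant~\ref{inv:future} needs the bound $\out~\lagm~\theta^\omega = \tau\theta^\omega$, and with $\lagm = \movi$ that invariant would in general be false. So the assertion ``$\lagm = \movi$, hence no state is lagging'' cannot be used.

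The claim itself still holds, but your check of Invariant~\ref{inv:past} has a (small) hole as written: you only treat the non-lagging branch, justified by the unfounded $\lagm = \movi$. The repair is immediate: if $\lagm = \tau \neq \movi$, then every $q \in C^x_0$ is lagging (since $\lag(q) = \movi \prefneq \lagm$), and the lagging branch of Invariant~\ref{inv:past} reads $\val{\Jf,\Cf}{\movi}(q) = \out~\lag(q)$, i.e.\ $\movi = \movi$; if $\lagm = \movi$, no state is lagging and your computation applies. Your treatment of Invariant~\ref{inv:lagging} already covers both cases (all $\nb{\pi}$, $\outi{\pi}$, $\last$ being $0/\movi$), and invariants~\ref{inv:lag}, \ref{inv:out}, \ref{inv:last}, \ref{inv:future} (via the second bullet of \cref{lem:sep-theta} with empty advances and $\lagm = \tau$) and \ref{inv:close} (every path is close) are checked correctly and independently of whether $\tau$ is empty. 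In short: drop the ``$\tau = \movi$'' shortcut and argue Invariant~\ref{inv:past} by the two-case (lagging/not lagging) analysis instead.
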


Assume now that the invariants hold for some $x \in \Dom{f}$
and $i \ge 0$. We describe how $\strans$ updates its information
when it reads $x[i{+}1] C^x_{i+1}$. Let $a \defined x[i{+}1]$.

\subsection{If $C^x_i$ was not separable}

\label{ssec:origin-nosep}

In this case $\strans$
was in the non-separable mode.
We update $\pre{}{} \becomes \pre{}{} \circ \pre{C^x_i, C^x_{i+1}}{a}$,
$\Cf \becomes C^x_{i+1}$ and $\Jf \becomes \pre{}{}(\Cf)$.
Since $C^x_i, a, C^x_{i+1}$ was a pre-step,
then $\Jf, x[1{:}i{+}1], \Cf$ is an initial step.
For all $q \in C^x_{i+1}$,
let $\delta_q \defined \lag(\pre{C^x_i, C^x_{i+1}}{a}(q)) \val{C^x_i, C^x_{i+1}}{a}(q)$.
Now let $c \defined \bigwedge_{q \in Q} \delta_q$,
we update $\out \becomes \out c$ and 
define  $\alpha_q \defined c^{-1} \delta_q$ for all $q \in \Cf$.
It is easy to see that:

\begin{claim} \label{claim:retrouve}
$\out = \com{\Jf, \Cf}{x[1{:}i+1]}$
and $\alpha_q = \adv{\Jf, \Cf}{x[1{:}i+1]}(q)$ for all $q \in \Cf$.
\end{claim}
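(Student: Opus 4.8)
\textbf{Proof plan for Claim~\ref{claim:retrouve}.}
The plan is to unfold the definitions of $\com{}{}$ and $\adv{}{}$ for the updated step $\Jf, x[1{:}i{+}1], \Cf$ and to check that the values $\out c$ and the $\alpha_q$ computed by $\strans$ match them. First I would record the basic identity governing how productions compose along a pre-step: for every $q \in C^x_{i+1}$, writing $p \defined \pre{C^x_i, C^x_{i+1}}{a}(q) \in C^x_i$, we have
\[
\val{\Jf, \Cf}{x[1{:}i+1]}(q) \;=\; \val{\Jf', C^x_i}{x[1{:}i]}(p)\; \val{C^x_i, C^x_{i+1}}{a}(q),
\]
where $\Jf'$ denotes the old value of $\Jf$ and the old $\pre{}{}$ sends $p$ to its source in $\Jf'$ (this is just the concatenation of the output of the run on $x[1{:}i]$ with the output of the transition-run on $a$, using that $\strans$ tracks $\pre{}{}$ correctly; more precisely, after the update $\Jf, x[1{:}i+1], \Cf$ is an initial step and the new $\pre{}{}$ equals $\pre{\Jf,\Cf}{x[1{:}i+1]}$, which is what invariant~\ref{inv:step} will assert). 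By Invariant~\ref{inv:nosep} for $i$ we have $\out_{\text{old}} = \com{\Jf', C^x_i}{x[1{:}i]}$ and $\lag(p) = \adv{\Jf', C^x_i}{x[1{:}i]}(p)$, so $\val{\Jf', C^x_i}{x[1{:}i]}(p) = \out_{\text{old}}\,\lag(p)$, and therefore $\delta_q = \lag(p)\,\val{C^x_i, C^x_{i+1}}{a}(q)$ satisfies $\val{\Jf, \Cf}{x[1{:}i+1]}(q) = \out_{\text{old}}\,\delta_q$ for every $q \in \Cf$.

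Next I would take the longest common prefix over $q \in \Cf$ of both sides. Since $\out_{\text{old}}$ is a common prefix of all the $\val{\Jf, \Cf}{x[1{:}i+1]}(q)$ (it is literally a left factor of each), the longest common prefix of the left-hand sides is exactly $\out_{\text{old}}$ concatenated with the longest common prefix of the $\delta_q$, i.e.
\[
\com{\Jf, \Cf}{x[1{:}i+1]} \;=\; \bigwedge_{q\in\Cf} \val{\Jf, \Cf}{x[1{:}i+1]}(q) \;=\; \out_{\text{old}}\, \Bigl(\bigwedge_{q\in\Cf} \delta_q\Bigr) \;=\; \out_{\text{old}}\, c \;=\; \out.
\]
One small point to verify here is that $\bigwedge_{q\in Q}\delta_q = \bigwedge_{q\in\Cf}\delta_q$, i.e.\ that taking the meet over all of $Q$ (as written in the construction) coincides with taking it over $\Cf$ — this is because $\delta_q$ is only defined for $q \in \Cf = C^x_{i+1}$, so the notation $\bigwedge_{q \in Q}$ is really ranging over $q \in \Cf$ (or one reads it with the convention that undefined entries are ignored). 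Then $\alpha_q = c^{-1}\delta_q$, and from $\val{\Jf, \Cf}{x[1{:}i+1]}(q) = \out_{\text{old}}\, c\, \alpha_q = \com{\Jf,\Cf}{x[1{:}i+1]}\,\alpha_q$ together with the definition $\adv{\Jf,\Cf}{x[1{:}i+1]}(q) = (\com{\Jf,\Cf}{x[1{:}i+1]})^{-1}\val{\Jf,\Cf}{x[1{:}i+1]}(q)$, we conclude $\alpha_q = \adv{\Jf,\Cf}{x[1{:}i+1]}(q)$ for all $q \in \Cf$.

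The only thing that genuinely needs care — and is the main obstacle such as it is — is justifying that $\Jf, x[1{:}i{+}1], \Cf$ really is an initial step (so that $\com{}{}$, $\adv{}{}$ are well-defined via \cref{lem:mutual}) and that the bookkeeping of $\pre{}{}$ is consistent, namely that composing the old $\pre{}{} = \pre{\Jf', C^x_i}{x[1{:}i]}$ with $\pre{C^x_i, C^x_{i+1}}{a}$ yields $\pre{\Jf, \Cf}{x[1{:}i+1]}$ and that this composite is surjective onto the new $\Jf$. Surjectivity of the composite follows from surjectivity of each factor: the old $\pre{}{}$ is surjective because $\Jf', x[1{:}i], C^x_i$ was a step (invariant~\ref{inv:step}), and $\pre{C^x_i, C^x_{i+1}}{a}$ is surjective onto its image by construction since we set $\Jf \becomes \pre{}{}(\Cf)$, restricting the domain exactly to the reached states; the uniqueness clause in the definition of pre-step guarantees that the composition of the source-maps agrees with $\pre{\Jf,\Cf}{x[1{:}i+1]}$. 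Everything else is the routine prefix-algebra computation above, which I would write out in one or two lines.
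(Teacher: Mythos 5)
Your proposal is correct and is exactly the routine verification the paper leaves implicit (the claim is introduced with ``it is easy to see''): decompose $\val{\Jf,\Cf}{x[1{:}i+1]}(q)$ as the old value $\val{}{}$ at $i$ followed by $\val{C^x_i,C^x_{i+1}}{a}(q)$, apply invariant~\ref{inv:nosep} to rewrite the former as $\out\,\lag(\pre{C^x_i,C^x_{i+1}}{a}(q))$, and factor the common left factor $\out$ out of the longest common prefix. Your side remarks — that $\bigwedge_{q\in Q}\delta_q$ should be read as $\bigwedge_{q\in\Cf}\delta_q$, and that the step property of $\Jf,x[1{:}i{+}1],\Cf$ together with the consistency of the composed $\pre{}{}$ is exactly what the construction (invariant~\ref{inv:step}) asserts — also match the paper's intent.
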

Finally we discuss two cases depending on the separability of
$\Cf = C^x_{i+1}$:
\begin{itemize}
\item if $\Cf$ is not separable, then $\strans$ stays in non-separable mode and it
updates $\lag(q) \becomes \alpha_q$ for all $q \in \Cf$
(note that $|\alpha_q| \le \Bound \le \Bound!$).
We easily see that invariants~\ref{inv:C}, \ref{inv:step} and~\ref{inv:nosep} hold. 
\item if $\Cf$ is separable, $\strans$ goes to separable mode.
\old{By applying \cref{lem:sep-theta} 
to $\Jf, x[1{:}i{+}1], \Cf$ we get $\tau \in B^*$
with $k \defined |\tau| \le 3 \Bound$. 
We update $\lag(q) \becomes \alpha_q[1{:}k]$ 
and $\last(q) \becomes  \alpha_q[k{+}1{:}]$
 for all $q \in \Cf$.
The $\theta$ is given by \cref{lem:sep-theta}
and we let $\nb{\pi}(q) \becomes 0$
and $\outi{\pi} \becomes \movi$ for all $\pi = C_1 \cdots C_n \in \tree{\Cf}$
(except for $\outi{\pi} = \out$ when $\pi = \Cf = C^x_{i+1}$) and all $q \in C_n$.}%
\cor{By applying \cref{lem:sep-theta} 
to $\Jf, x[1{:}i{+}1], \Cf$ we get $\tau \in B^*$
with $k \defined |\tau| \le \Bound!$ and $\theta \in B^*$
with $|\theta| = \Bound!$. 
We update $\lagm \becomes \tau$, $\lag(q) \becomes \alpha_q[1{:}k]$ 
and $\last(q) \becomes  \alpha_q[k{+}1{:}]$
 for all $q \in \Cf$.
We also let $\nb{\pi}(q) \becomes 0$
and $\outi{\pi} \becomes \movi$ for all $\pi = C_1 \cdots C_n \in \tree{\Cf}$
(except for $\outi{\pi} = \out$ when $\pi = \Cf = C^x_{i+1}$) and all $q \in C_n$.}
\old{\begin{lemma} \label{lem:premier}
Invariants~\ref{inv:C},
\ref{inv:step} and~\ref{inv:sep} hold  in $i{+}1$ after this operation.
Furthermore $|\theta| = \Bound!$, $|\lag(q)| \le 3 \Bound$
for all $q \in \Cf$,
and for all $\pi = C_1 \cdots C_n \in \tree{\Cf}$, 
$\nb{\pi} = 0$.
\end{lemma}}%
\cor{\begin{lemma} \label{lem:premier}
Invariants~\ref{inv:C},
\ref{inv:step} and~\ref{inv:sep} hold  in $i{+}1$ after this operation.
Furthermore $|\theta| = \Bound!$, $|\lagm| \le \Bound!$
for $q \in \Cf$,
and for all $\pi = C_1 \cdots C_n \in \tree{\Cf}$, 
$\nb{\pi} = 0$.
\end{lemma}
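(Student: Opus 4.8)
The plan is to check the invariants of Subsection~\ref{ssec:compo-f} at index $i{+}1$ one after another. Invariants~\ref{inv:C} and~\ref{inv:step} are immediate: $\Cf = C^x_{i+1}$ by the update, and the fact that $\Jf,x[1{:}i{+}1],\Cf$ is an initial step with $\pre{}{} = \pre{\Jf,\Cf}{x[1{:}i{+}1]}$ was already recorded at the start of Subsection~\ref{ssec:origin-nosep} (it is a \emph{step} since $\Jf$ is by definition the image of the composed $\pre{}{}$; it is \emph{initial} since $\Jf \subseteq I$; and $\Jf \in \Comp$ because images of compatible sets under $\pre$ are compatible). From Claim~\ref{claim:retrouve} we may use $\out = \com{\Jf,\Cf}{x[1{:}i{+}1]}$ and $\alpha_q = \adv{\Jf,\Cf}{x[1{:}i{+}1]}(q)$ for all $q \in \Cf$, and applying \cref{lem:sep-theta} to the initial step $\Jf,x[1{:}i{+}1],\Cf$ (legitimate because $\Cf$ is separable) yields $\tau,\theta$ with $|\theta| = \Bound!$, $k \defined |\tau| \le \Bound!$ and $\advm{\Jf,\Cf}{x[1{:}i{+}1]} \pref \tau\theta^\omega$, so every $\alpha_q \pref \tau\theta^\omega$. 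Since we come out of the non-separable mode, the previous $\lag$ values had length $\le \Bound$ (by invariant~\ref{inv:nosep}, using that the shortest advance there is empty), and each $\alpha_q$ is obtained from one of them by appending a single transition output (length $\le M$) and deleting a common prefix, hence $|\alpha_q| \le \Bound + M \le \Bound!$. Therefore $|\lag(q)| \le k \le \Bound!$, $|\last(q)| = |\alpha_q[k{+}1{:}]| < \Bound!$, $|\lagm| = |\tau| \le \Bound!$ and $|\theta| = \Bound!$, which provides both the well-formedness size bounds and the ``Furthermore'' assertions ($\nb{\pi} = 0$ everywhere holds directly by construction).

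It then remains to establish the seven sub-invariants of~\ref{inv:sep}. The workhorse fact, already noted, is that each $\alpha_q = \adv{\Jf,\Cf}{x[1{:}i{+}1]}(q)$ is a prefix of $\tau\theta^\omega$; with $k = |\tau|$ this gives at once $\lag(q) = \alpha_q[1{:}k] \pref \tau = \lagm$ (prefix part of invariant~\ref{inv:lag}) and $\last(q) = \alpha_q[k{+}1{:}] \pref \theta^\omega$ (invariant~\ref{inv:last}; the parenthetical is then automatic since $|\last(q)| < \Bound! = |\theta|$). The existence of some $q \in \Cf$ with $\lag(q) = \movi$ follows because the $\adv{\Jf,\Cf}{x[1{:}i{+}1]}(q)$ are mutual prefixes (\cref{lem:mutual}) whose common prefix $\bigwedge_{q\in\Cf}\adv{\Jf,\Cf}{x[1{:}i{+}1]}(q)$ equals $\movi$, so the shortest of them is $\movi$. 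Invariant~\ref{inv:out} is trivial since $\outi{\pi} = \movi \in \theta^*$ for $\pi \neq \Cf$. For invariant~\ref{inv:lagging}, the requirements on $\nb{}$ and $\outi{}$ hold for every $q$ by the reset, and if $q$ is lagging then $|\lag(q)| < k$ forces $|\alpha_q| < k$, so $\last(q) = \movi$. Invariant~\ref{inv:close} holds vacuously: after the reset every $\pi \in \tree{\Cf}$ is close (for $\pi \neq \Cf$ all $\nb{\pi'}$ and $\outi{\pi'}$ vanish), so there is no ``not close'' $\pi$ to examine.

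For invariant~\ref{inv:past}, fix $\pi = C_1 \cdots C_n \in \tree{\Cf}$ with $C_n = \{q\}$. As $\nb{\pi'}(q) = 0$ for every prefix $\pi'$ of $\pi$ and $\outi{\pi'} = \movi$ for every such $\pi'$ other than $\Cf$, both cases of the formula collapse: if $q$ is not lagging then $\alpha_q[1{:}k] = \lag(q) = \lagm = \tau$, so $\alpha_q = \tau\,\last(q) = \lagm\,\last(q)$ and the asserted value is $\out\,\lagm\,\last(q) = \out\,\alpha_q = \val{\Jf,\Cf}{x[1{:}i{+}1]}(q)$; if $q$ is lagging then $\alpha_q = \lag(q)$ with $\last(q) = \movi$, and the asserted value is $\out\,\lag(q) = \out\,\alpha_q = \val{\Jf,\Cf}{x[1{:}i{+}1]}(q)$. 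Finally, invariant~\ref{inv:future} is precisely the second bullet of \cref{lem:sep-theta}: given a future step $\Cf,u,D$ and $q \in D$, write $p \defined \pre{\Cf,D}{u}(q)$ and decompose $\val{\Jf,D}{x[1{:}i{+}1]u}(q) = \val{\Jf,\Cf}{x[1{:}i{+}1]}(p)\cdot\val{\Cf,D}{u}(q) = \out\,\adv{\Jf,\Cf}{x[1{:}i{+}1]}(p)\cdot\val{\Cf,D}{u}(q)$; since $\val{\Cf,D}{u}(q) \pref \bigl(\adv{\Jf,\Cf}{x[1{:}i{+}1]}(p)\bigr)^{-1}\tau\theta^\omega$ (well defined by \cref{rem:sep-theta}) this yields $\val{\Jf,D}{x[1{:}i{+}1]u}(q) \pref \out\,\tau\theta^\omega = \out\,\lagm\,\theta^\omega$.

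Everything reduces to prefix arithmetic modulo the period $\theta$ once \cref{lem:sep-theta} is in hand, so no single computation is hard; what really must be got right is the bookkeeping — in particular recognising that the freshly reset tree trivializes invariants~\ref{inv:out}, \ref{inv:lagging}, \ref{inv:past} and~\ref{inv:close}, leaving invariants~\ref{inv:lag}, \ref{inv:last} and~\ref{inv:future} as the only substantive clauses, of which~\ref{inv:future} alone appeals to the ``looping futures'' conclusion of \cref{lem:sep-theta}. One subtlety worth stating explicitly is that here $\lagm = \tau$ need not equal $\bigvee_{q\in\Cf}\lag(q)$: the stored $\lagm$ marks where the periodic regime $\theta^\omega$ begins, not the current maximal advance, and the invariants are (and must be) phrased to tolerate this.
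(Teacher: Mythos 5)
Your proof is correct and takes essentially the same route as the paper's: invariants~\ref{inv:C} and~\ref{inv:step} are immediate from the update, \cref{claim:retrouve} together with \cref{lem:sep-theta} and \cref{rem:sep-theta} handle invariants~\ref{inv:lag}, \ref{inv:last} and~\ref{inv:past}, the second bullet of \cref{lem:sep-theta} gives invariant~\ref{inv:future} exactly as in the paper, and the reset of the $\nb{\pi}$ and $\outi{\pi}$ trivializes invariants~\ref{inv:out}, \ref{inv:lagging} and~\ref{inv:close} — you merely spell out details the paper leaves implicit. Your additional claim that $|\last(q)| < \Bound!$ already holds here is not part of the statement (the construction applies the toolbox of Subsection~\ref{ssec:tool} afterwards regardless), but your size estimate for it is sound and harmless.
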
}

Note that we may have $|\last(q)| \ge \Bound!$. In order to reduce their
sizes, we apply the tool detailed in Subsection~\ref{ssec:tool}
(it will push the $\last(q)$ into the $\nb{\pi}(q)$ and $\outi{\pi}$).
\end{itemize}

\subsection{Toolbox: reducing the size of $\last(q)$}

\label{ssec:tool}

\old{In this subsection, we assume that $\strans$ is in
its separable mode and that invariants~\ref{inv:step}
and~\ref{inv:sep} hold in some $i \ge 0$.
Furthermore, we suppose that $|\theta| = \Bound!$,
$ |\lag(q)| \le 3\Bound$ for all $q \in \Cf$,
and for all $C_1 \cdots C_n \in \tree{\Cf}$,
$\nb{C_1 \cdots C_n}: C_n \fonc [0{:}\four]$.
However the $\last$ may be longer than they should.
We are thus going to resize them.}%
\cor{In this subsection, we assume that $\strans$ is in
its separable mode and that invariants~\ref{inv:step}
and~\ref{inv:sep} hold in some $i \ge 0$.
Furthermore, we suppose that $|\theta| = \Bound!$,
$|\lagm|, |\lag(q)| \le \Bound!$ for all $q \in \Cf$,
and for all $C_1 \cdots C_n \in \tree{\Cf}$,
$\nb{C_1 \cdots C_n}: C_n \fonc [0{:}\four]$.
However the $\last$ may be longer than they should.
We are thus going to resize them.}

From invariant~\ref{inv:last}, there
exists $n: \Cf \fonc \Nat$ 
 such that
$\last(q) = \theta^{n(q)} \delta_q$ with $\delta_q \prefneq \theta$
for all $q \in \Cf$.
We update $\last(q) \becomes \delta_q$
and $\nb{\Cf}(q) \becomes \nb{\Cf}(q) + n(q)$
for all $q \in \Cf$. Now, we have $|\last(q)| < \Bound!$
and $\nb{\pi}(q) \le \four$ when $\pi \neq \Cf$.

\begin{algorithm}
\SetKw{KwVar}{Variables:}
\SetKwProg{Fn}{Function}{}{}
\SetKw{In}{in}
\SetKw{Out}{Output}

 \Fn{$\tnorm{\textbf{down}} (\pi)$}{
 
 	$C_1 \cdots C_n \becomes \pi$;

	 \tcc{1. Add the common part of the buffers to the local output}
	 
	 	$m \becomes \min_{q \in C_n} \nb{\pi}(q)$;
		
		$\outi{\pi} \becomes \outi{\pi} \theta^{m}$;
	
		$\nb{\pi}(q) \becomes \nb{\pi}(q) - m$ \tnorm{ for all } $q \in C'$;

	 \tcc{2. Check if some buffers $\nb{\pi}(q)$ are still more than $\four$}

	\For{$q \in C_n$}{
	
		\If{$\nb{\pi}(q) > \four$}{

			\For{$C' \in \Comp(C_n)$ \tnorm{\bfseries such that} $C' \neq C_n$
			 \tnorm{\bfseries and} $q \in C'$}{

					$\nb{\pi C'}(q) \becomes \nb{\pi C'}(q) + (\nb{\pi}(q){-}\four)$;

				}

			$\nb{\pi}(q) \becomes \four$;
			}
	
		}
		
	 \tcc{3. Recursive calls}
		
	\For{$C' \in \Comp(C_n)$ \tnorm{\bfseries with} $C' \neq C_n $}{
		$\tnorm{\textbf{down}} (\pi C')$;
		}
}
	
 \caption{\label{algo:down} Sending down values in $\tree{\Cf}$}
\end{algorithm}

In order to reduce the value $\nb{\Cf}$, we then apply 
the function $\tnorm{\textbf{down}}(\Cf)$ of Algorithm~\ref{algo:down}
which adds some $\theta$ in the $\outi{\pi}$.
Let us describe its base case informally. If
$\nb{\Cf}(q) > 0$ for all $q \in \Cf$, then no state is lagging by invariant~\ref{inv:lagging}.
Thus $\lag(q) = \lagm$ for all $q \in \Cf$, and so $\lagm = \movi$
by invariant~\ref{inv:lag}. With the notations of invariant~\ref{inv:past}
(note that $\pi_1 = \Cf$), we get $\val{x[1{:}i]}{\Jf,\Cf}(q) = \out~\theta^{\nb{\pi_1}(q)}
\left(\prod_{i=2}^n \outi{\pi_i} \theta^{\nb{\pi_i}(q)} \right) \last(q)$
for all $q \in \Cf$. Thus we can produce in
$\out$ the value $\theta^{m} \defined \bigwedge_{q \in \Cf} \theta^{\nb{\Cf}(q)}$
(i.e. $m \becomes \min_{q \in \Cf} \nb{\Cf}(q)$)
and remove $m$ to each $\nb{\Cf}(q)$.

\old{
\begin{lemma} \label{lem:sim:algo}
Algorithm~\ref{algo:down} is well defined.
After the operation described
in this subsection, invariants~\ref{inv:step} and~\ref{inv:sep}
hold, and furthermore we have
 $|\theta| = \Bound!$, $|\lag(q) | \le 3 \Bound$ and \mbox{$|\last(q)| < \Bound!$}
 for all $q \in \Cf$,
and for all $\pi = C_1 \cdots C_n \in \tree{\Cf}$, $\nb{\pi}: C_n \fonc [0{:}\four]$.
\end{lemma}}

\cor{
\begin{lemma} \label{lem:sim:algo}
Algorithm~\ref{algo:down} is well defined.
After the operation described
in this subsection, invariants~\ref{inv:step} and~\ref{inv:sep}
hold, and furthermore we have
 $|\theta| = \Bound!$, $|\lagm| \le \Bound!$ and \mbox{$|\last(q)| < \Bound!$}
 for all $q \in \Cf$,
and for all $\pi = C_1 \cdots C_n \in \tree{\Cf}$, $\nb{\pi}: C_n \fonc [0{:}\four]$.
\end{lemma}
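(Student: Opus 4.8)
The plan is to verify, step by step, that the two operations performed in this subsection — first the ``resizing'' of the $\last(q)$ by extracting the maximal power of $\theta$ and pushing it into $\nb{\Cf}(q)$, and then the recursive call $\tnorm{\textbf{down}}(\Cf)$ — preserve invariants~\ref{inv:step} and~\ref{inv:sep}, and restore the size bounds on $\theta$, $\lagm$, $\lag(q)$, $\last(q)$ and the range $[0{:}\four]$ of the $\nb{\pi}$. First I would observe that neither operation touches $\Jf$, $\Cf$, $\pre{}{}$ nor the index $i$, so invariants~\ref{inv:C}, \ref{inv:step} and the parts of~\ref{inv:sep} concerning $\theta$ (its length is unchanged), $\lagm$, $\lag$, the lagging/not-lagging partition and invariant~\ref{inv:future} are immediate; the only invariants at risk are~\ref{inv:out}, \ref{inv:last}, \ref{inv:lagging}, \ref{inv:past} and~\ref{inv:close}. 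For the resizing step, writing $\last(q) = \theta^{n(q)} \delta_q$ with $\delta_q \prefneq \theta$ uses invariant~\ref{inv:last} (which guarantees $\last(q) \pref \theta^\omega$); after the update $\last(q) \becomes \delta_q$, $\nb{\Cf}(q) \becomes \nb{\Cf}(q)+n(q)$ we get $|\last(q)| < \Bound!$ by construction, and invariant~\ref{inv:past} is preserved because $\theta^{\nb{\Cf}(q)}\delta_q = \theta^{\nb{\Cf}(q)+n(q)}\delta'_q$ is exactly the same word (here $\pi_1 = \Cf$, so the extra $\theta^{n(q)}$ is absorbed into the $\theta^{\nb{\pi_1}(q)}$ factor); invariant~\ref{inv:lagging} is preserved since a lagging $q$ has $\last(q) = \movi$, hence $n(q) = 0$ and $\nb{\Cf}(q)$ is untouched.

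Next I would analyze Algorithm~\ref{algo:down}. Well-definedness (termination) follows because each recursive call $\tnorm{\textbf{down}}(\pi C')$ strictly increases the length of $\pi$ and $\tree{\Cf}$ is finite, so the recursion depth is bounded by $|\Cf|$. For correctness of a single call $\tnorm{\textbf{down}}(\pi)$ with $\pi = C_1\cdots C_n$: step~1 extracts $m = \min_{q\in C_n}\nb{\pi}(q)$, appends $\theta^m$ to $\outi{\pi}$ and subtracts $m$ from each $\nb{\pi}(q)$; this preserves invariant~\ref{inv:past} because in the product $\outi{\pi_j}\theta^{\nb{\pi_j}(q)}$ for $\pi_j = \pi$ (present for every descendant $\pi' \supseteq \pi$ in $\tree{\Cf}$, and for $\pi$ itself), moving a factor $\theta^m$ from the exponent to $\outi{\pi}$ leaves the word unchanged — one must check $\outi{\pi}$ commutes with $\theta$ in the relevant sense, which holds since by invariant~\ref{inv:out} $\outi{\pi} \in \theta^*$ when $\pi \neq \Cf$, and when $\pi = \Cf$ the factor $\out$ plays the role of $\outi{\pi_1}$ but is not modified here (step~1 only fires on $\outi{\pi}$ with $\pi = C_1\cdots C_n$, and the base case $\pi = \Cf$ is exactly the ``produce in $\out$'' case described informally before the lemma). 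Invariant~\ref{inv:out} itself is preserved: $\outi{\pi}$ only grows by appending powers of $\theta$, so it stays in $\theta^*$. Step~2 caps each $\nb{\pi}(q)$ at $\four$ by shifting the excess $\nb{\pi}(q) - \four$ into the children $\nb{\pi C'}(q)$; again this preserves invariant~\ref{inv:past} because when $C_n = \{q\}$ is a singleton there are no proper subsets $C' \subsetneq C_n$ containing $q$, so the only $\pi$ ending in a singleton are leaves and step~2 does nothing harmful to them — more precisely, the telescoping identity in invariant~\ref{inv:past} is an equality of words that is invariant under moving a $\theta$-power from $\nb{\pi}(q)$ to $\nb{\pi C'}(q)$ provided $\outi{\pi C'}$ is currently $\movi$ or otherwise commutes, which we arrange because step~2 precedes step~3 and the children's $\outi{}$ have not yet been enlarged; I would need to check the loop order carefully so that $\outi{\pi C'}$ sits immediately after $\theta^{\nb{\pi C'}(q)}$ in the product. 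Step~3 recurses into all proper children.

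The main obstacle I anticipate is the bookkeeping in invariant~\ref{inv:past}: one has to trace, for each leaf $\{q\}$ and each path $\pi$ through $q$, that the long product $\out\,\lagm\,\theta^{\nb{\pi_1}(q)}\prod_{i\ge 2}\outi{\pi_i}\theta^{\nb{\pi_i}(q)}\last(q)$ is numerically unchanged by every line of Algorithm~\ref{algo:down}, even though the pieces migrate between registers and counters, and simultaneously confirm that after all recursive calls complete we indeed have $\nb{\pi}(q) \le \four$ for every $\pi \neq \Cf$ (the cap in step~2 is restored on the way down, but one must argue no later sibling call re-inflates it — which holds because $\tnorm{\textbf{down}}(\pi C')$ only modifies $\nb{\pi'}$ for $\pi' \supseteq \pi C'$, disjoint from already-processed subtrees). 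Invariant~\ref{inv:last} needs $\last(q) \prefneq \theta$ now, which is exactly what the resizing step achieved and which $\tnorm{\textbf{down}}$ does not touch (it modifies only $\outi{}$ and $\nb{}$, never $\last$), and invariant~\ref{inv:close} is untouched since $\Jf,\Cf$ and the step decomposition are unchanged. Assembling these observations yields invariants~\ref{inv:step} and~\ref{inv:sep} together with the stated length bounds $|\theta| = \Bound!$, $|\lagm|, |\last(q)| \le \Bound!$ (in fact $< \Bound!$ for $\last$) and $\nb{\pi}\colon C_n \fonc [0{:}\four]$, completing the proof of \cref{lem:sim:algo}.
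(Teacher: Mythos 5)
Most of your verification runs parallel to the paper's, but there is one genuine gap, and it sits exactly where the paper spends most of its effort: your claim that invariant~\ref{inv:close} is ``untouched since $\Jf,\Cf$ and the step decomposition are unchanged'' does not hold. Invariant~\ref{inv:close} is quantified over the paths of $\tree{\Cf}$ that are \emph{not close}, and step~2 of Algorithm~\ref{algo:down} changes which paths those are: when $\nb{\pi}(q) > \four$, the excess $\nb{\pi}(q)-\four$ is pushed into the children $\nb{\pi C'}(q)$, which can turn a previously empty (hence close) subtree into a non-empty one, so $\pi$ and all of its prefixes may become non-close during the run of $\tnorm{\textbf{down}}(\Cf)$. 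For each such newly non-close path one must \emph{establish} invariant~\ref{inv:close}, i.e.\ exhibit a decomposition of the corresponding initial step whose maximal advance is at least $\four\Bound!$; this does not come for free from the fact that $\Jf$, $\Cf$ and $x[1{:}i]$ are unchanged. The paper handles this by an auxiliary induction carried through the recursion of $\tnorm{\textbf{down}}$: it records that not-yet-visited subtrees are untouched, that lagging states keep $\nb{}=0$ and $\outi{}=\movi$, that the product of invariant~\ref{inv:past} is preserved, and—crucially—that if a push was performed at $\pi$ for the state $q$, then $\nb{\pi'}(q)=\four$ for \emph{every} prefix $\pi'\pref\pi$. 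Comparing, via invariant~\ref{inv:past}, the production ending in such a $q$ with the production ending in a state $q'$ realizing the minimum $\nb{\pi}(q')=0$ then gives $|\advm{\Jf,\Cf}{x[1{:}i]}| \ge \Bound!\,(\four - \nb{\pi}(q')) \ge \four\Bound!$, which is the certificate needed for the newly non-close paths. Nothing in your proposal produces this bound, so as written the proof of the preservation of invariant~\ref{inv:sep} is incomplete.

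On the remaining points you are essentially aligned with the paper. Your treatment of the resizing of $\last(q)$, of termination of the recursion, of invariants~\ref{inv:out}, \ref{inv:last}, \ref{inv:lagging}, and your remark that sibling calls cannot re-inflate already processed counters all match the paper's argument. One detail worth tightening: in step~1 at the root ($\pi=\Cf$) the word $\theta^m$ is appended to $\out$, whereas in invariant~\ref{inv:past} the factor $\lagm$ sits between $\out$ and $\theta^{\nb{\Cf}(q)}$, and lagging states have production $\out\,\lag(q)$ containing no $\theta$ at all; so you need the observation (sketched informally before the lemma) that $m>0$ forces $\nb{\Cf}(q)>0$ for all $q$, hence no lagging state by invariant~\ref{inv:lagging} and $\lagm=\movi$ by invariant~\ref{inv:lag}. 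This resolves your ``commutation'' worry at the root, while at internal nodes it is resolved, as you say, by $\outi{\pi}\in\theta^*$ from invariant~\ref{inv:out}. The only missing—but essential—piece is the re-establishment of invariant~\ref{inv:close} for paths that stop being close.
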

}

\subsection{If $C^x_i$ was separable}

If $C^x_i$ is separable, then $\strans$
was in the separable mode by invariant~\ref{inv:sep}.
We first explain in Subsubsection~\ref{sssec:up-step} how to perform the update when
$\Cf, a, C^x_{i+1}$ is a step (it
corresponds to the ``easy case'' thanks to invariant~\ref{inv:future}
which deals with future steps). Then, we explain
in Subsubsection~\ref{sssec:prepro} how the other
case can be reduced to the first one, after a preprocessing
which selects a subset $C' \subseteq \Cf$ such that $C',a,C^x_{i+1}$ is a step.

\subsubsection{Updating when $\Cf,a,C^{x}_{i+1}$ is a step}

\label{sssec:up-step}

In the current subsubsection we assume that invariants~\ref{inv:step}
and~\ref{inv:sep} hold, that $\Cf \subseteq C^x_i$ is separable
\cor{(we may not have $\Cf = C^x_i$ because of the preprocessing 
of Subsubsection~\ref{sssec:prepro})},
and that $\Cf,a,C^{x}_{i+1}$ is a step. We show how
to update the information stored by $\strans$ in accordance with
this step. Note that $C^x_{i+1}$ is necessarily separable.

Since $\Cf$ will be modified, so will be $\tree{\Cf}$, hence
we begin with several register updates.
For $\pi = D_1 \cdots D_n \in \tree{C^x_{i+1}}$, we define
$C_i \defined \pre{\Cf, C^{x}_{i+1}}{a}(D_i)$ for $1 \le i \le n$.
Since we had a step then
$C_1 = \Cf$, $C_i \in \Comp(\Cf)$ and $C_1 \supseteq \cdots \supseteq C_n$.
But we may not have $C_1 \cdots C_n \in \tree{\Cf}$
due to possible equalities. Let $1 = i_1 < \cdots < i_m \le n$
be such that $C_{i_1}  = \cdots = C_{i_2-1} \supsetneq C_{i_2}$
and so on until $C_{i_m-1} \supsetneq C_{i_m} = \cdots = C_n$.
Then $\rho \defined C_{i_1} \cdots C_{i_m} \in \tree{\Cf}$ and:
\begin{itemize}
\item if $i_m = n$, we let $\nb{\pi} \becomes \nb{\rho}
\circ \pre{\Cf,  C^{x}_{i+1}}{a}$
and $\outi{\pi} \becomes \outi{\rho}$;
\item if $i_m < n$, we let  $\nb{\pi} \becomes 0$ and $\outi{\pi} \becomes \movi$.
\end{itemize}
For all $q \in C^x_{i+1}$, let $k_q \defined |{\lag(\pre{\Cf,  C^{x}_{i+1}}{a}(q))}^{-1} \lagm|$ and:
\begin{itemize}
\item $\lag(q) \becomes \lag(\pre{\Cf,  C^{x}_{i+1}}{a}(q)) (\val{\Cf, C^{x}_{i+1}}{a}(q) [1{:}k_q])$
(note that $\lagm$ remains unchanged);
\item $\last(q) \becomes \last(\pre{{\Cf,  C^{x}_{i+1}}}{a}(q))(\val{\Cf,  C^{x}_{i+1}}{a}(q) [k_q{+}1{:}])$.
\end{itemize}
\old{Now let $c \defined \bigwedge_{q \in \Cf} \lag(q)$. We
update $\lag(q) \becomes c^{-1} \lag(q)$ for all $q \in \Cf$
(therefore $\lagm$ becomes $c^{-1} \lagm$), $\out \becomes \out~c$,
$\Cf \becomes C^x_{i+1}$ and finally $\pre{}{} \becomes \pre{}{} \circ \pre{\Cf,  C^{x}_{i+1}}{a} $.}
\cor{
Now let $c \defined \bigwedge_{q \in \Cf} \lag(q)$. We
update $\lag(q) \becomes c^{-1} \lag(q)$ for all $q \in \Cf$
and $\lagm \becomes c^{-1} \lagm$, $\out \becomes \out~c$,
$\Cf \becomes C^x_{i+1}$ and finally $\pre{}{} \becomes \pre{}{} \circ \pre{\Cf,  C^{x}_{i+1}}{a} $.
}

\old{
\begin{lemma} \label{lem:sim:update}
After the operation described in this subsection,
invariants~\ref{inv:C}, \ref{inv:step} and~\ref{inv:sep} hold,
and $|\theta| = \Bound!$, $|\lag(q)| \le \Bound$ for all $q \in \Cf$,
and $\nb{\pi}: C_n \fonc [0{:}\four]$ for all $\pi = C_1 \cdots C_n \in \tree{\Cf}$.
\end{lemma}}

\cor{
\begin{lemma} \label{lem:sim:update}
After the operation described in this subsection,
invariants~\ref{inv:C}, \ref{inv:step} and~\ref{inv:sep} hold,
and $|\theta| = \Bound!$, $|\lagm| \le \Bound!$ for all $q \in \Cf$,
and $\nb{\pi}: C_n \fonc [0{:}\four]$ for all $\pi = C_1 \cdots C_n \in \tree{\Cf}$.
\end{lemma}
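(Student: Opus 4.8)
The plan is to verify that the sequence of register updates described just above this lemma preserves invariants~\ref{inv:C}, \ref{inv:step} and~\ref{inv:sep} in position $i{+}1$, and yields the stated size bounds. I would organize the proof around the three invariants, checking them in order.

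\textbf{Invariants~\ref{inv:C} and~\ref{inv:step}.} First I would observe that after the updates we have $\Cf = C^x_{i+1}$ by construction, so invariant~\ref{inv:C} holds. For invariant~\ref{inv:step}, since $\Jf, x[1{:}i], C^x_i$ was an initial step by hypothesis (invariant~\ref{inv:step} at position $i$), $\Cf \subseteq C^x_i$, and $\Cf, a, C^x_{i+1}$ is a step, composing $\pre{}{}$ with $\pre{\Cf, C^x_{i+1}}{a}$ gives a well-defined surjection onto the (possibly shrunk) $\Jf$; one checks that the result is again an initial step and that $\pre{}{} = \pre{\Jf,\Cf}{x[1{:}i{+}1]}$ after the update. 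This is mostly a bookkeeping argument using the fact that composing two steps is a step and composing initial steps with steps is initial.

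\textbf{Invariant~\ref{inv:sep}.} This is the heart of the proof and, I expect, the main obstacle: one has to check items~\ref{inv:lag} through~\ref{inv:close} are all re-established after the register shuffle. The key external input is invariant~\ref{inv:future} at position $i$, which guarantees that for the future step $C^x_i, a, C^x_{i+1}$ (and its sub-steps), every production $\val{\Jf,D}{x[1{:}i]a}(q)$ is a prefix of $\out~\lagm~\theta^\omega$; this is exactly what allows us to decompose the new production $\val{\Cf, C^x_{i+1}}{a}(q)$ (prepended with the old $\lag$ and $\last$ of the predecessor) into a part that goes into the new $\lag(q)$ (length $\le |\lagm|$, bounded by $\Bound!$) and a part going into $\last(q)$, which is a prefix of $\theta^\omega$ as required by invariant~\ref{inv:last}. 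I would then verify that the surgery on the $\nb{\pi}$, $\outi{\pi}$ along $\tree{\Cf}$ — relabeling each $\pi \in \tree{C^x_{i+1}}$ via its image $\rho$ under $\pre{\Cf,C^x_{i+1}}{a}$, and zeroing those whose chain collapses — correctly transports the telescoping product formula of invariant~\ref{inv:past} across the step. The delicate points are: (i) checking that ``lagging'' is preserved (i.e. the set of lagging states is transported correctly, using that a lagging predecessor has empty $\last$ and zero $\nb{\pi}$), which re-establishes invariant~\ref{inv:lagging}; (ii) re-establishing invariant~\ref{inv:future} at $i{+}1$, which follows from invariant~\ref{inv:future} at $i$ together with \cref{lem:sep-theta} applied along the composed step; and (iii) re-establishing invariant~\ref{inv:close}, which needs the decomposition of each non-close $\pi$-chain to survive the step, using that the advance $\advm{J_n,E}{\cdot}$ only grows when the step is pursued.

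\textbf{Size bounds and final cleanup.} Finally I would note that the common prefix factorization $c \defined \bigwedge_{q\in\Cf}\lag(q)$ and the update $\lag(q) \becomes c^{-1}\lag(q)$, $\lagm \becomes c^{-1}\lagm$, $\out \becomes \out~c$ do not change the semantic values (invariant~\ref{inv:past} is preserved) but only renormalize, and since $|\lagm| \le \Bound!$ was maintained, the bound $|\lagm| \le \Bound!$ still holds; likewise $|\theta| = \Bound!$ is untouched and $\nb{\pi}: C_n \fonc [0{:}\four]$ for $\pi \neq \Cf$ is preserved because the transported $\nb{\pi}$ values are, state by state, values that already satisfied the bound (the new $\nb{\Cf}$ may exceed $\four$, but that is handled afterwards by the toolbox of Subsection~\ref{ssec:tool} rather than in this lemma). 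The $1$-boundedness of the substitution applied in this step is checked separately by inspecting that each register $\outi{\pi}$ and $\out$ is used at most once in the update and each $\lag(q)$, $\last(q)$ is built from exactly one predecessor register.
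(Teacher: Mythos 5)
Your overall route is the same as the paper's: check invariants~\ref{inv:C} and~\ref{inv:step} by bookkeeping, re-establish the items of invariant~\ref{inv:sep} one by one using invariant~\ref{inv:future} at stage $i$ to split each new production into a $\lag$-part (prefix of $\lagm$) and a $\last$-part (prefix of $\theta^\omega$), transport the tree data $\nb{\pi},\outi{\pi}$ along $\pre{\Cf,C^x_{i+1}}{a}$ for invariants~\ref{inv:out}, \ref{inv:lagging} and~\ref{inv:past}, and argue that the witnessing decompositions of invariant~\ref{inv:close} survive the appended step. Two points, however, need correction.

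First, your claim that ``the new $\nb{\Cf}$ may exceed $\four$, but that is handled afterwards by the toolbox'' is wrong for this operation, and it leaves part of the lemma's conclusion unproven: the statement asserts $\nb{\pi}\colon C_n\fonc[0{:}\four]$ for \emph{all} $\pi\in\tree{\Cf}$, including the root. In the update of Subsubsection~\ref{sssec:up-step} every new counter is either a transported old value ($\nb{\pi}\becomes\nb{\rho}\circ\pre{\Cf,C^x_{i+1}}{a}$, in particular $\nb{\new{\Cf}}=\nb{\Cf}\circ\pre{\Cf,C^x_{i+1}}{a}$) or set to $0$, so the bound $[0{:}\four]$ holds everywhere immediately, by the precondition. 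What may overflow after this operation is $|\last(q)|$ (possibly $\ge\Bound!$), and it is precisely this, not $\nb{\Cf}$, that the subsequent application of Subsection~\ref{ssec:tool} repairs (there $\nb{\Cf}$ is incremented by $n(q)$ and then Algorithm~\ref{algo:down} restores the bound). You seem to have imported the toolbox's situation into this lemma.

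Second, invoking \cref{lem:sep-theta} ``along the composed step'' to re-establish invariant~\ref{inv:future} is unnecessary and somewhat misleading. The clean argument (the paper's) is: if $\new{\Cf},u,D$ is a step then $\Cf,au,D$ is a step, so invariant~\ref{inv:future} at stage $i$ applies verbatim, and the bounding word is literally unchanged since $\out~\lagm~\theta^\omega=\new{\out}~\new{\lagm}~\theta^\omega$ (the factor $c$ merely migrates from $\lagm$ into $\out$). Re-applying \cref{lem:sep-theta} would produce a $\theta$ attached to the new advances, which need not coincide with the stored $\theta$ appearing in the invariant; that mismatch is exactly the difficulty reserved for the non-close preprocessing case (Lemma~\ref{lem:sim:prune}), not for this one. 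A last small omission: for invariant~\ref{inv:out} one must check that the register $\outi{\Cf}=\out$ (which is not in $\theta^*$) is never transported to a non-root register $\outi{\pi}$, $\pi\neq\new{\Cf}$; this follows because $\rho=\Cf$ with $i_m=n$ forces $n=1$, i.e. $\pi=\new{\Cf}$.
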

}
However, we may have $|\last(q)| \ge \Bound!$.
Thus we finally apply Subsection~\ref{ssec:tool} once more.

\subsubsection{Preprocessing when $\Cf, a, C^x_{i+1}$ is not a step}

\label{sssec:prepro}

In the current subsubsection we assume that invariants~\ref{inv:C}, \ref{inv:step}
and~\ref{inv:sep} hold in $i \ge 0$, that $\Cf = C^x_i$ is separable,
and that $C^x_i,a,C^{x}_{i+1}$ is \emph{not} a step. 
Then let $C' \defined \pre{C^x_i, C^x_{i+1}}{a}(C^x_{i+1}) \subsetneq \Cf$
(an equality would give a step)
and $\pi \defined \Cf C' \in \tree{C}$. Two cases
can occur.

\subparagraph{If $\pi$ is close.} In this case, we have for all $\pi \prefneq \pi' \in \tree{\Cf}$
that $\nb{\pi'} = 0$ and $\outi{\pi'} = \movi$.
Therefore by invariant~\ref{inv:past} we can 
describe the productions for all $q \in C'$ as follows:
\begin{itemize}
\item $\val{x[1{:}i]}{\Jf,\Cf}(q) = \out~\lag(q)$ if $q$ is lagging;
\item $\val{x[1{:}i]}{\Jf,\Cf}(q) = \out~\lagm ~\outi{\Cf C'}~\theta^{\nb{\Cf}(q) + \nb{\Cf C'}(q)}~\last(q)$ 
 if $q$ is not lagging.
\end{itemize}
Now two cases are possible, depending on whether there is a lagging state in $C'$ or not:
\begin{itemize}
\item if there exists $q' \in C'$ which is lagging, then we
must have $\outi{\Cf C'} = \movi$ by invariant~\ref{inv:lagging}.
For all $q \in C'$ let $\delta_q \defined \lag(q) \theta^{\nb{\Cf}(q) + \nb{\Cf C'}(q)}~\last(q)$
and let $c \defined \bigwedge_{q \in C'}\delta_q $. Then we
update  $\out \becomes \out~c~$ and
 define $\alpha_q \defined c^{-1} \delta_q $ for all $q \in C'$;
\item if each $q \in C'$ is not lagging, we define 
$\delta_q \defined \theta^{\nb{\Cf }(q) + \nb{\Cf C'}(q)} \last(q)$
and $c \defined \bigwedge_{q \in C'} \delta_q $.
Then we update $\out \becomes \out~\lagm~\outi{\Cf C'}~c~$
and define $\alpha_q \defined c^{-1}\delta_q $ for all $q \in C'$;
\end{itemize}
We finally update $\Jf \becomes \pre{}{}(C')$, $\Cf \becomes C'$
and $\pre{}{} \becomes \pre{}{}\hspace*{-0.06cm}|_{C'}$.
It is easy to see that $\Jf, x[1{:}i], \Cf$ is a step and 
furthermore that we have computed $\com{}{}$ and $\adv{}{}$,
as shown in \cref{cla:claclacla}.

\begin{claim} \label{cla:claclacla}
After this operation, $\out = \com{\Jf, \Cf}{x[1{:}i+1]}$
and $\alpha_q = \adv{\Jf, \Cf}{x[1{:}i+1]}(q)$ for all $q \in \Cf$.
\end{claim}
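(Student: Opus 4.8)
The claim asserts that after the "close" preprocessing in Subsubsection~\ref{sssec:prepro}, the register $\out$ holds exactly the common prefix $\com{\Jf,\Cf}{x[1{:}i+1]}$ and each $\alpha_q$ holds the corresponding advance $\adv{\Jf,\Cf}{x[1{:}i+1]}(q)$, where now $\Jf = \pre{}{}(C')$ and $\Cf = C'$ with $C' = \pre{C^x_i,C^x_{i+1}}{a}(C^x_{i+1})$. The plan is to first express, for each $q\in C'$, the value $\val{\Jf,\Cf}{x[1{:}i+1]}(q)$ (with the new $\Jf,\Cf$) in terms of the data stored \emph{before} the update, then check that the construction produces exactly the longest-common-prefix/quotient decomposition of these values.

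First I would observe that because $C^x_i,a,C^x_{i+1}$ factors through $C'$ (i.e.\ $\pre{C^x_i,C^x_{i+1}}{a}$ has image $C'$ and reading $a$ from a state of $C^x_i$ lands in $C^x_{i+1}$ only via a state of $C'$), for every $q\in C'$ we have $\val{\Jf_{\mathrm{new}},C'}{x[1{:}i+1]}(q) = \val{\Jf_{\mathrm{old}},C^x_i}{x[1{:}i]}(q)$: the run ending in $q\in C'$ after reading $x[1{:}i{+}1]$ is the run ending in $q\in C^x_i$ after $x[1{:}i]$ followed by an $\movi$-labelled transition $q\runs{a|\movi}q$ (productivity forces the new letter to contribute nothing on these runs, since $q$ is already in $C^x_i$ and in $C'$). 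So the target quantities to recover are just the \emph{old} productions $\val{\Jf,\Cf}{x[1{:}i]}(q)$ for $q\in C' = \pre{C^x_i,C^x_{i+1}}{a}(C^x_{i+1})$. Then I would plug in the formula from invariant~\ref{inv:past}, using that $\pi=\Cf C'$ is close so that $\nb{\pi'}=0$ and $\outi{\pi'}=\movi$ for all strict extensions $\pi'$ of $\pi$ in $\tree{\Cf}$; this collapses the product in invariant~\ref{inv:past} to exactly the two displayed cases (lagging: $\out\,\lag(q)$; not lagging: $\out\,\lagm\,\outi{\Cf C'}\,\theta^{\nb{\Cf}(q)+\nb{\Cf C'}(q)}\,\last(q)$). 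Note that the register $\out$ here still holds the \emph{old} $\out = \com{\Jf,\Cf}{x[1{:}i]}$, a common prefix of all these values.

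Next I would split on the two sub-cases of the construction. If some $q'\in C'$ is lagging: by invariant~\ref{inv:lagging} applied to $q'$ and $\pi=\Cf C'$ we get $\outi{\Cf C'}=\movi$, so every $q\in C'$ (lagging or not) has production $\out\cdot\delta_q$ with $\delta_q = \lag(q)\,\theta^{\nb{\Cf}(q)+\nb{\Cf C'}(q)}\,\last(q)$ (when $q$ is lagging, $\last(q)=\movi$ and $\nb{\Cf}(q)=\nb{\Cf C'}(q)=0$ by invariant~\ref{inv:lagging}, so $\delta_q=\lag(q)$, consistent with the single formula). Hence $\val{\Jf,C'}{x[1{:}i+1]}(q) = \out\,\delta_q$; the construction sets $\out \becomes \out\,c$ with $c = \bigwedge_{q\in C'}\delta_q$, and $\alpha_q = c^{-1}\delta_q$, which is by definition the common prefix and the advances, i.e.\ $\out = \com{\Jf,C'}{x[1{:}i+1]}$ and $\alpha_q = \adv{\Jf,C'}{x[1{:}i+1]}(q)$. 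If instead every $q\in C'$ is not lagging, then $\val{\Jf,C'}{x[1{:}i+1]}(q) = \out\,\lagm\,\outi{\Cf C'}\cdot\delta_q$ with $\delta_q = \theta^{\nb{\Cf}(q)+\nb{\Cf C'}(q)}\,\last(q)$; the construction pushes the shared prefix $\lagm\,\outi{\Cf C'}$ into $\out$ and sets $\out\becomes\out\,\lagm\,\outi{\Cf C'}\,c$, $\alpha_q = c^{-1}\delta_q$ with $c=\bigwedge_{q\in C'}\delta_q$, again giving exactly the longest-common-prefix decomposition. In both cases I would conclude by invoking that these $\val{}{}$ values \emph{are} mutual prefixes — which follows from invariant~\ref{inv:past} combined with the monotonicity of $\cro{\out}$, or more directly from \cref{lem:mutual} applied to the initial step $\Jf,x[1{:}i+1],C'$ — so that $c$ is well defined and $c^{-1}\delta_q$ makes sense.

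The main obstacle I expect is bookkeeping precision rather than mathematical depth: one must be careful that in the "lagging $q'$ exists" branch the formula $\delta_q=\lag(q)\,\theta^{\nb{\Cf}(q)+\nb{\Cf C'}(q)}\,\last(q)$ is uniformly correct for \emph{both} lagging and non-lagging $q\in C'$ (this is exactly why invariant~\ref{inv:lagging} forcing $\last(q')=\movi$, $\nb{}(q')=0$, and $\outi{\Cf C'}=\movi$ is needed), and symmetrically that in the "all non-lagging" branch $\lagm\,\outi{\Cf C'}$ is genuinely a common prefix of every $\delta_q$-prefixed value so that moving it into $\out$ is legitimate. Verifying that $\Jf,x[1{:}i+1],C'$ is indeed an initial step — needed for $\com{}{}$, $\adv{}{}$ to be defined via \cref{lem:mutual} — is the remaining routine point: $\Jf,x[1{:}i],\Cf=C^x_i$ was an initial step by invariant~\ref{inv:step}, restricting to the preimage $C' = \pre{C^x_i,C^x_{i+1}}{a}(C^x_{i+1})$ keeps it a step (surjectivity onto $C'$ is by construction), and appending the $\movi$-transitions on $a$ does not change surjectivity, so $\pre{}{}|_{C'}$ composed appropriately is still surjective onto $\Jf_{\mathrm{new}}=\pre{}{}(C')$.
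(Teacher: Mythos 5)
Your second and third paragraphs are essentially the argument the paper intends (and leaves at the level of ``it is easy to see''): closeness of $\pi=\Cf C'$ collapses invariant~\ref{inv:past} to the two displayed formulas, invariant~\ref{inv:lagging} forces $\outi{\Cf C'}=\movi$ and trivializes the data of lagging states so that one expression $\delta_q$ covers every $q\in C'$ (using $\lag(q)=\lagm$ for non-lagging $q$, from invariant~\ref{inv:lag}), and the update then computes exactly the longest-common-prefix decomposition, mutual-prefixness coming from \cref{lem:mutual}.

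The genuine gap is in your first paragraph. To reconcile the index $i{+}1$ you assert that every $q\in C'$ carries a transition $q\runs{a|\movi}q$, ``forced by productivity''. This step fails: $C'=\pre{C^x_i,C^x_{i+1}}{a}(C^x_{i+1})$ consists of states of $C^x_i$ whose $a$-successors lie in $C^x_{i+1}$ (in general a different set of states), these transitions need not be loops and need not output $\movi$, and productivity works in the \emph{opposite} direction — it is used (via \cref{lem:continuity-loops} and \cref{lem:make-productive}) to exclude empty productions on loops, never to force them. The bridge is also unnecessary: none of the quantities $\lag(q)$, $\nb{\Cf}(q)$, $\nb{\Cf C'}(q)$, $\last(q)$ entering $\delta_q$ involves $\val{C^x_i,C^x_{i+1}}{a}$, i.e.\ the preprocessing of Subsubsection~\ref{sssec:prepro} does not consume the letter $a$ at all. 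The claim is therefore to be read with respect to the initial step $\Jf,x[1{:}i],\Cf$ (with the new $\Jf=\pre{}{}(C')$ and $\Cf=C'$), exactly as the sentence preceding the claim and the remark following it (``replace $i{+}1$ by $i$'' when matching \cref{claim:retrouve}) indicate; the letter $a$ is handled only afterwards, once the mode structure has been rebuilt. Deleting your invented reduction (and the closing remark about ``appending $\movi$-transitions'' when checking that the restricted triple is a step — surjectivity onto $\Jf=\pre{}{}(C')$ holds by definition of the image), the rest of your computation is correct and coincides with the paper's.
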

This result exactly corresponds to \cref{claim:retrouve}
from Subsection~\ref{ssec:origin-nosep} (replace $i{+}1$ by $i$).
Thus, to conclude, we just need to apply the operations
described after \cref{claim:retrouve} (i.e. determining
if the new $\Cf$ is separable or not, and building the structure accordingly).

\subparagraph{If $\pi$ is not close.}
\old{Let $c \defined \bigwedge_{q \in C'} \lag(q)$,
we update $\out \becomes \out~c~\outi{\Cf C'}$
and for all $q \in C'$, $\lag(q) \becomes c^{-1} \lag(q)$
and $\last(q) \becomes \theta^{\nb{\Cf}(q)} \last(q)$.}
\cor{Let $c \defined \bigwedge_{q \in C'} \lag(q)$,
we update $\out \becomes \out~c~\outi{\Cf C'}$
and for all $q \in C'$, $\lag(q) \becomes c^{-1} \lag(q)$,
$\last(q) \becomes \theta^{\nb{\Cf}(q)} \last(q)$, and
finally $\lagm \becomes c^{-1} \lagm$}.
Then, we update $\nb{C' \pi} \becomes \nb{\Cf C' \pi}$
and $\outi{C' \pi} \becomes \outi{\Cf C' \pi}$
for all $\pi \in (C')^{-1}\tree{C'}$ 
(except for $\pi = \movi$, in which case we 
have already updated $\outi{C'} = \out$ before).
We finally update $\Jf \becomes \pre{}{}(C')$, $\Cf \becomes C'$
and $\pre{}{} \becomes \pre{}{}\hspace*{-0.06cm}|_{C'}$.

\old{
\begin{lemma} \label{lem:sim:prune}
After the operation described in this subsection,
invariants~\ref{inv:step} and~\ref{inv:sep} hold,
and $|\theta| = \Bound!$, $|\lag(q)| \le \Bound$ for all $q \in \Cf$,
and $\nb{\pi}: C_n \fonc [0{:}\four]$ for all $\pi = C_1 \cdots C_n \in \tree{\Cf}$.
Furthermore $\Cf$ is separable.
\end{lemma}}%
\cor{
\begin{lemma} \label{lem:sim:prune}
After the operation described in this subsection,
invariants~\ref{inv:step} and~\ref{inv:sep} hold,
and $|\theta| = \Bound!$, $|\lagm| \le \Bound!$ for all $q \in \Cf$,
and $\nb{\pi}: C_n \fonc [0{:}\four]$ for all $\pi = C_1 \cdots C_n \in \tree{\Cf}$.
Furthermore $\Cf$ is separable.
\end{lemma}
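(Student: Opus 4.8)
The update shrinks $\Cf$ from $C^x_i$ to $C' = \pre{C^x_i, C^x_{i+1}}{a}(C^x_{i+1})$ and lifts the sub-tree of $\tree{C^x_i}$ rooted at the node $\Cf C'$ so that it becomes the new $\tree{C'}$. The plan is to reinstate invariants~\ref{inv:step} and~\ref{inv:sep} (at the same position $i$, now with $\Cf = C'$) clause by clause, together with the three size bounds and the separability of $C'$. Throughout I write $J \defined \pre{}{}(C') = \pre{\Jf,\Cf}{x[1{:}i]}(C')$ for the new value of $\Jf$, and I use that $C' \in \Comp(\Cf)$ by the remark following the definition of pre-steps.

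First, the routine clauses. Invariant~\ref{inv:step}: since $\pre{\Jf,\Cf}{x[1{:}i]}$ restricts to a surjection $C' \fonc J$, the triple $J, x[1{:}i], C'$ is an initial step and its preimage map is exactly the new $\pre{}{} = \pre{\Jf,\Cf}{x[1{:}i]}|_{C'}$. The size bounds: $\theta$ is untouched; the updates $\lagm \becomes c^{-1}\lagm$ and $\lag(q) \becomes c^{-1}\lag(q)$, with $c \defined \bigwedge_{q \in C'}\lag(q) \pref \lagm$, merely strip a common prefix, so $|\lagm| \le \Bound!$ survives and $\bigwedge_{q \in C'}\lag(q)$ becomes $\movi$; every new $\nb{\rho}$ copies the old $\nb{\Cf\rho}$ of a node strictly below $\Cf$, so its range is $[0{:}\four]$. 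Because the new $\nb{\rho}, \outi{\rho}$ are literally the old $\nb{\Cf\rho}, \outi{\Cf\rho}$, the closedness of nodes, invariants~\ref{inv:out}, \ref{inv:last} and the decompositions of invariant~\ref{inv:close} transfer verbatim (the set $\pre{\Jf,\Cf}{x[1{:}i]}(C_m)$ attached by invariant~\ref{inv:close} to a node with last set $C_m \subseteq C'$ is literally unchanged); the lagging status of each $q \in C'$ is preserved since stripping the same $c$ from all $\lag$ and from $\lagm$ does not affect strict-prefix relations, and whenever $C'$ has a lagging state invariant~\ref{inv:lagging} forces $\outi{\Cf C'} = \movi$, which keeps invariant~\ref{inv:lagging} after the shift. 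The new value $\last(q) = \theta^{\nb{\Cf}(q)}\last(q)$ may overshoot length $\Bound!$, but this is harmless: the update of Subsubsection~\ref{sssec:up-step}, applied next, ends with another call to the toolbox of Subsection~\ref{ssec:tool}.

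Next, invariant~\ref{inv:past}, which is the only computational point. Fix $\rho = C_1 \cdots C_m \in \tree{C'}$ with $C_m = \{q\}$ and set $\rho' = \Cf\rho \in \tree{\Cf}$; the witnessing run is literally the same, so $\val{\Jf,\Cf}{x[1{:}i]}(q) = \val{J,C'}{x[1{:}i]}(q)$, and I match the old formula for $\rho'$ against the one required of $\rho$. If $q$ is lagging the update collapses (using $\outi{\Cf C'} = \movi$) to $\out \becomes \out\,c$, $\lag(q) \becomes c^{-1}\lag(q)$, and the identity is immediate from $c \pref \lag(q)$. If $q$ is not lagging the old formula reads $\out\,\lagm\,\theta^{\nb{\Cf}(q)}\,\outi{\Cf C'}\,\theta^{\nb{\Cf C'}(q)}\,(\prod\cdots)\,\last(q)$: since every $\outi{}$ past $\Cf$ lies in $\theta^*$ and $\last(q) \pref \theta^\omega$, all the powers of $\theta$ commute and the leading $\theta^{\nb{\Cf}(q)}$ migrates to the end — exactly what setting $\last(q) \becomes \theta^{\nb{\Cf}(q)}\last(q)$ records — while $\out\,c\,\outi{\Cf C'}$ (the new $\out$) paired with $c^{-1}\lagm$ (the new $\lagm$) reproduces $\out\,\lagm\,\outi{\Cf C'}$, using $c = \lagm$ when all of $C'$ is non-lagging and $\outi{\Cf C'} = \movi$ otherwise. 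Invariant~\ref{inv:future} then follows by composing a future step of $C'$ with $J, x[1{:}i], C'$, using that $\theta$ is unchanged and, by \cref{lem:sep-theta}, is the period of $C'$'s looping futures.

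The hard part will be the last clause, the separability of $C'$. I would apply invariant~\ref{inv:close} to $\pi = \Cf C'$, which is \emph{not} close by the case hypothesis: it factors $J, x[1{:}i], C'$ as an initial step $J, x[1{:}j], E$ with $|\advm{J,E}{x[1{:}j]}| \ge \four\Bound!$ followed by a step $E, x[j{+}1{:}i], C'$. The cleanest way to conclude is by contradiction: if $C'$ were not separable then every advance $\adv{J,C'}{x[1{:}i]}(q)$ would have length $\le \Bound$, and since $\Bound < |\theta| = \Bound!$, the production formula of invariant~\ref{inv:past} — combined with the $\theta^*$-membership of the registers $\outi{}$ (invariant~\ref{inv:out}) — would force $\nb{\pi'} = 0$ and $\outi{\pi'} = \movi$ for every node $\pi'$ strictly below $\Cf C'$, i.e.\ $\Cf C'$ would be close, a contradiction. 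Turning this outline into numbers is where the slack between $\four\Bound!$ and $\Bound$ provided by invariant~\ref{inv:close} does the work (and \cref{lem:sep-theta} is used once more, to pin the productions of the step $E, x[j{+}1{:}i], C'$ to prefixes of a single ultimately periodic word whose period, of length $\Bound!$, is determined by $E$); this quantitative bookkeeping is the part that demands the most care.
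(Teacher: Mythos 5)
Your treatment of the bookkeeping clauses (invariant~\ref{inv:step}, the size bounds, the transfer of $\nb{}$, $\outi{}$, closeness, the lagging statuses, and the $\theta$-commutation computation for invariant~\ref{inv:past}) matches the paper's proof. The two clauses where the proposal breaks down are exactly the two the paper treats as the substance of the lemma.

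First, separability of the new $\Cf$. Your contradiction argument claims that if $C'$ were not separable, then all advances of the initial step $J, x[1{:}i], C'$ have length at most $\Bound$, and that this forces every node strictly below $\Cf C'$ to carry $\nb{} = 0$ and $\outi{} = \movi$, i.e.\ $\Cf C'$ close. That implication is false: if no state of $C'$ is lagging and all states carry \emph{identical} nonzero counters (say $\nb{\Cf C'}(q) = 2$ for every $q \in C'$) and identical $\theta$-power registers along their subpaths, then by invariant~\ref{inv:past} the productions to the states of $C'$ are balanced, so all advances of this step are short, yet $\Cf C'$ is not close. Non-separability bounds length \emph{differences}; it says nothing about common $\theta$-content shared by all states of $C'$. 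The paper's argument is not by contradiction: invariant~\ref{inv:close} gives an initial step $J, x[1{:}j], E$ with $|\advm{J,E}{x[1{:}j]}| \ge \four\Bound!$, so $E$ is separable, and Lemma~\ref{lem:carac-separ} supplies a short loop on which two runs produce outputs of different lengths; pumping that loop $n$ times and appending $u''x[j{+}1{:}i]$ (using surjectivity of the step $E, x[j{+}1{:}i], C'$ to lift the two states to $C'$) yields, for $n$ large, an initial step ending in $C'$ whose advances differ by more than $\Bound$. That transfer through the pumping characterization is the missing mechanism.

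Second, invariant~\ref{inv:future}. You dismiss it with ``$\theta$ is unchanged and, by \cref{lem:sep-theta}, is the period of $C'$'s looping futures'', but this assumes the conclusion: the $\tau,\theta$ of \cref{lem:sep-theta} are determined by the compatible set and its advances, so applied to $J, x[1{:}i], C'$ it yields some $\tau',\theta'$ with no a priori relation to the stored $\theta$ (the paper's remark after the lemma warns of exactly this). What must be proved is that $\new{\out}~\new{\lagm}~\theta^\omega$ equals the eventually periodic word governing the futures of $C'$. The paper does this by applying \cref{lem:sep-theta} to the decomposition point $\new{\Jf}, x[1{:}j], E$ provided by invariant~\ref{inv:close} (getting $\tau,\phi,\gamma$ with $|\gamma|\ge\Bound!$), and then using the non-closeness of $\Cf C'$ to exhibit a non-lagging state of $C'$ whose production contains at least one full $\theta$-block; a case analysis on the alignment offset then shows $\com{\new{\Jf},D}{x[1{:}j]}\,\tau\,\phi^\omega = \new{\out}~\new{\lagm}~\theta^\omega$. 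Without this identification, ``composing a future step of $C'$ with $J, x[1{:}i], C'$'' does not give the invariant in the required form.
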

}

\begin{remark}  Contrary to the former cases,
the main difficulty here is to show the preservation
of invariant~\ref{inv:future}. For this we critically rely on invariant~\ref{inv:close}:
the idea is to show that $\theta$ is still a suitable looping value,
even if we have chosen a subset of our compatible set
(observe that in \cref{lem:sep-theta}, the value $\theta$
depends on the compatible set chosen).
\end{remark}
Again, we may have $|\last(q)| \ge \Bound!$. Thus 
we finally apply Subsection~\ref{ssec:tool} once more.

\subsection{Boundedness and productivity of the construction}

\label{sec:bound}

We first claim that $\strans$ is a $1$-bounded \DSST{}, by construction.

\begin{lemma} \label{lem:strans-bound} The \DSST{} $\strans$ is $1$-bounded.
\end{lemma}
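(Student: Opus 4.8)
The plan is to verify that every register update performed by $\strans$ during the simulation uses each register at most once, so that any composition of consecutive updates $\lambda^x_i \circ \cdots \circ \lambda^x_j$ is $1$-bounded (in fact copyless). First I would catalogue the places where registers appear on the right-hand side of an update: the register $\out$, the auxiliary registers $\outi{\pi}$ for $\pi \in \tree{\Cf}$, and observe that the remaining quantities manipulated by $\strans$ ($\lag$, $\lagm$, $\last$, $\theta$, $\nb{\pi}$, $\pre{}{}$) are stored \emph{in the state}, not in registers, and are bounded in size by construction (all bounded by $\Bound!$ or by $\four$). Hence the only thing to check is that in each update step the word-valued register contents $\out$ and $\{\outi{\pi}\}_\pi$ are redistributed copylessly.

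The verification then proceeds case by case, following exactly the structure of Section~\ref{sec:invariants}. In the initialization (Section~\ref{sec:invariants}, initialization paragraph) all $\outi{\pi}$ and $\out$ are set to fixed words $\movi$, so no register is duplicated. In Subsection~\ref{ssec:origin-nosep} the new $\out$ is $\out\,c$ where $c$ is built from state-stored data ($\lag$ and $\val{}{}$ values), so $\out$ is used once and no $\outi{\pi}$ is used (in the separable sub-case the $\outi{\pi}$ are freshly reset to $\movi$). In the toolbox Subsection~\ref{ssec:tool} and in Algorithm~\ref{algo:down}, the updates only do $\outi{\pi} \becomes \outi{\pi}\,\theta^m$ (each $\outi{\pi}$ used once, $\theta$ is in the state) and shuffle the integer counters $\nb{\pi}(q)$; crucially $\tnorm{\textbf{down}}$ recurses into \emph{distinct} children $\pi C'$, so along one call of $\tnorm{\textbf{down}}(\Cf)$ each register $\outi{\pi}$ is touched at most once. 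In Subsubsection~\ref{sssec:up-step} the reindexing $\outi{\pi} \becomes \outi{\rho}$ must be checked to be injective in $\rho$: distinct $\pi \in \tree{C^x_{i+1}}$ collapsing under $\pre{\Cf,C^x_{i+1}}{a}$ to the \emph{same} $\rho$ would cause copying, so I would argue that the map $\pi \mapsto \rho$ is injective on $\tree{C^x_{i+1}}$ because $\pre{}{}$ is injective on each $C_i$ (it is a well-defined function into $\Cf$) and the collapsing of equal consecutive sets is deterministic — hence different image chains come from different source chains. The register $\out = \outi{\Cf}$ is used once (in $\out \becomes \out\,c$). Finally in Subsubsection~\ref{sssec:prepro}, the close case resets all $\outi{\pi}$ and uses $\outi{\Cf C'}$ and $\out$ once each; the not-close case does $\out \becomes \out\,c\,\outi{\Cf C'}$ (both used once) followed by the reindexing $\outi{C'\pi} \becomes \outi{\Cf C'\pi}$, which I would again argue is injective since $\pi \mapsto \Cf C'\pi$ is injective.

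The main obstacle I expect is precisely the reindexing steps in subsubsections~\ref{sssec:up-step} and~\ref{sssec:prepro}: one must be sure that when $\tree{\Cf}$ changes shape — because $\Cf$ shrinks to a subset and several nodes of the old tree may merge or be pruned — no old register $\outi{\rho}$ is read into two different new registers $\outi{\pi}$. The key point is that the defining maps ($\pre{\Cf,C^x_{i+1}}{a}$ on states, and concatenation with a fixed prefix $\Cf C'$ on chains) are injective on the relevant domains, and that after the prefix-subtraction $\out \becomes \out\,c$ and the resets, every register name on the right-hand side appears for exactly one register name on the left. Once this injectivity is established in each of the finitely many cases, boundedness by $1$ follows since composing copyless substitutions yields copyless substitutions. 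The size bounds on the state-stored data (guaranteed by invariant~\ref{inv:sep} and lemmas~\ref{lem:premier}, \ref{lem:sim:algo}, \ref{lem:sim:update}, \ref{lem:sim:prune}) guarantee that $\strans$ is a genuine finite-state \DSST{}, completing the proof.
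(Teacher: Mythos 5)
Your strategy of checking that each single transition of $\strans$ is copyless breaks down at exactly the point you flag, and the repair you propose is not available: the map $\pre{\Cf,C^x_{i+1}}{a}$ is in general \emph{not} injective. It sends each state of $C^x_{i+1}$ to its unique predecessor in $\Cf$, but several states may share that predecessor — this is precisely where the non-determinism of $\trans$ lives (for $\double$ in Figure~\ref{fig:double}, both $q_1$ and $q_2$ have the same predecessor $q_0$). Consequently two distinct chains of $\tree{C^x_{i+1}}$, typically two siblings such as $C^x_{i+1}\{p_1\}$ and $C^x_{i+1}\{p_2\}$ with $\pre{\Cf,C^x_{i+1}}{a}(p_1)=\pre{\Cf,C^x_{i+1}}{a}(p_2)$, can collapse to the same $\rho\in\tree{\Cf}$, and Subsubsection~\ref{sssec:up-step} then performs both $\outi{\pi}\becomes\outi{\rho}$ and $\outi{\pi'}\becomes\outi{\rho}$. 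So the one-step substitutions of $\strans$ are \emph{not} copyless; duplication of a register across sibling branches really happens, and it is the very reason the construction is announced as $1$-bounded (with Theorem~\ref{theo:2dt-dsst} invoked afterwards to obtain a copyless machine) rather than copyless.

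Nor can you retreat to ``each step is $1$-bounded, hence so is the composition'': $1$-bounded substitutions are not closed under composition (take $\sigma_2(\reg)=\regg\,\reggg$ and $\sigma_1(\regg)=\sigma_1(\reggg)=\reg$; each is $1$-bounded but $\sigma_1\circ\sigma_2(\reg)=\reg\reg$). What is needed — and what the paper proves by induction on $i$ in Sublemma~\ref{slem:indu-bound} — is a finer invariant on the \emph{history} of each register: every value $\cro{\outi{\pi}}^x_i$ is $1$-bounded as an expression in the registers of any earlier time, and two registers $\outi{\pi},\outi{\rho}$ with $\pi\prefneq\rho$ (i.e.\ lying on a common root-to-node branch) share no memory, while sibling registers are allowed to share memory. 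This branch-wise condition is exactly what is used and what suffices, because the only updates that concatenate two distinct registers (such as $\out\becomes\out~c~\outi{\Cf C'}$ in Subsubsection~\ref{sssec:prepro}) combine registers lying on a single branch; Algorithm~\ref{algo:down} only appends $\theta$'s, and the re-indexing $\outi{C'\pi}\becomes\outi{\Cf C'\pi}$ is indeed injective, as you note. Your catalogue of the updates and of the state-stored bounded data is accurate, but without the no-shared-memory-along-a-branch invariant (and with the false injectivity claim in Subsubsection~\ref{sssec:up-step}) the induction does not close, so the proposal as written does not establish the lemma.
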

It follows from invariants~\ref{inv:C}, \ref{inv:step} and~\ref{inv:past} that for all $x \in \Dom{f}$,
$\out$ is always a prefix of $f(x)$ when $\strans$ reads $g(x)$.
To conclude the construction of $\strans$, it remains to see that $\out$ tends to an infinite word.
The key ideas for showing \cref{lem:infini} is to use
the fact that $\trans$ is productive, and that Algorithm~\ref{algo:down}
can only empty a buffer $\nb{\Cf}(q)$ if it outputs a word.

\begin{lemma} \label{lem:infini}
If $x \in \Dom{f}$, then $|{\out}| \fonc \infty$ when $\strans$ reads $g(x)$.
\end{lemma}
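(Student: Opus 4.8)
\textbf{Proof plan for Lemma~\ref{lem:infini}.}
The plan is to show that the register $\out$ strictly grows infinitely often along the reading of $g(x)$, by tracking where "progress" comes from and using productivity of $\trans$ to rule out the pathological scenario where only empty productions are added. The starting point is invariant~\ref{inv:past}: for any $\pi = C_1 \cdots C_n \in \tree{\Cf}$ with $C_n = \{q\}$ a singleton, $\val{\Jf,\Cf}{x[1{:}i]}(q)$ equals $\out$ followed by bounded-size pieces ($\lag$, $\lagm$, the $\outi{\pi_j}$, the $\last$) together with the powers $\theta^{\nb{\pi_j}(q)}$. Since $\trans$ is productive, the accepting run $q^x_0 \runs{x[1]} q^x_1 \cdots$ produces an infinite output $f(x)$, and $q^x_i \in C^x_i = \Cf$ for all $i$ by \cref{lem:pre-compat}. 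So $\val{\Jf,\Cf}{x[1{:}i]}(q^x_i)$ is always a prefix of $f(x)$ and its length tends to $\infty$. Hence, at every stage, some quantity among $\out$, the bounded pieces, and the $\theta$-exponents must grow unboundedly; since the bounded pieces are bounded (by the invariants, $|\lag(q)|, |\lagm|, |\last(q)| \le \Bound!$, $|\theta| = \Bound!$, $|\outi\pi| \in \theta^*$ but — crucially — the $\nb\pi$ are all $\le \four$ except possibly $\nb{\Cf}$, and there are only finitely many $\pi \in \tree{\Cf}$), the only way $\val{\Jf,\Cf}{x[1{:}i]}(q^x_i)$ can tend to infinity is that either $|\out| \fonc \infty$ (which is what we want) or $\nb{\Cf}(q^x_i) \fonc \infty$.

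So the heart of the argument is to exclude that $\nb{\Cf}(q^x_i)$ grows without bound while $\out$ stays bounded. Here I would use the structure of Algorithm~\ref{algo:down} and the update rules. First I would observe that the \DSST{} returns to non-separable mode only finitely often if $\out$ stays bounded — actually, each switch between modes or each preprocessing step of Subsubsection~\ref{sssec:prepro} strictly decreases $|\Cf|$ unless it is a genuine step, and $|\Cf| \le |Q|$; a step that is not accompanied by shrinking keeps $\Cf$ of the same size. The point is that along the run of $\trans$ (which after finitely many steps sits in some strongly connected component visited infinitely often), one keeps performing \emph{steps} $\Cf, a, C^x_{i+1}$ with $C^x_{i+1}$ separable, so eventually $\strans$ is forever in separable mode on some fixed-size $\Cf$. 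Then each update on such a step (Subsubsection~\ref{sssec:up-step}) followed by the toolbox of Subsection~\ref{ssec:tool} either increments $\nb{\Cf}$ by pushing a $\theta^{n(q)}$ from $\last$, \emph{or} — in line~2 of \textbf{down} — pushes $\theta^m$ into $\outi{\Cf} = \out$ whenever $m = \min_{q} \nb{\Cf}(q) > 0$. Since the productive hypothesis forces the minimal advance to keep growing (all of $C^x_i$ shares a common accepting future, so by \cref{lem:continuity-loops} and productivity the common part of all productions along the step grows unboundedly), $m$ is positive infinitely often, and each such occurrence adds a full block $\theta$ (length $\Bound! \ge 1$) to $\out$. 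Therefore $|\out| \fonc \infty$.

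The main obstacle I expect is making precise the claim that "the minimal advance keeps growing," i.e. that $\min_{q \in \Cf} |\val{\Jf,\Cf}{x[1{:}i]}(q)| \fonc \infty$ rather than merely the maximum. This is where productivity is essential: in a non-productive \oNT{} one could have a state with an empty cycle contributing nothing, so its production stays bounded while the accepting state's production grows; then $\out$ would stall. With productivity, \cref{lem:continuity-loops} (first bullet, applied to the infinitely-visited SCC) forces the productions along \emph{all} runs sharing the future to be cofinal, so the common prefix $\com{\Jf,\Cf}{x[1{:}i]}$ — which is essentially $\out$ padded by the bounded $\lagm$ and the pushed $\theta$-blocks — must itself tend to infinity. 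The clean bookkeeping here is to argue: $|\out| + \Bound!\cdot\nb{\Cf}(q) + (\text{bounded})$ equals (up to the bounded pieces) $|\val{\Jf,\Cf}{x[1{:}i]}(q)|$ for the lagging-free singleton extensions, take the minimum over $q$ to kill the $\nb{\Cf}(q)$ discrepancy down to $\min_q \nb{\Cf}(q)$, and then invoke that \textbf{down} transfers that minimum into $\out$; combined with productivity forcing $\min_q |\val{\Jf,\Cf}{x[1{:}i]}(q)| \fonc \infty$, conclude $|\out| \fonc \infty$. The rest is routine case-checking against the invariants already established.
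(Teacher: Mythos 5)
Your opening reduction is already incorrect, and it hides the one scenario the lemma actually has to rule out. By \cref{lem:sim:algo} (and the toolbox of Subsection~\ref{ssec:tool}, which is re-applied at the end of every letter), once a letter of $g(x)$ has been fully processed we have $\nb{\pi}\colon C_n \fonc [0{:}\four]$ for \emph{all} $\pi \in \tree{\Cf}$, including $\pi = \Cf$; so $\nb{\Cf}(q^x_i)$ can never tend to infinity and your dichotomy ``either $|\out| \fonc \infty$ or $\nb{\Cf}(q^x_i) \fonc \infty$'' is vacuous on its second branch. The genuinely unbounded quantities in invariant~\ref{inv:past} are the registers $\outi{\pi}$ for $\pi \neq \Cf$: they hold words of $\theta^*$ of \emph{a priori unbounded} length, since Algorithm~\ref{algo:down} keeps cascading the overflow above $\four$ into the subtree and appending $\theta^m$ to these registers. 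The dangerous scenario is precisely that the production of the accepting run accumulates in these subtree registers while $\out$ stalls, and your accounting (``$|\out| + \Bound!\cdot\nb{\Cf}(q) + \text{bounded}$'') never sees it; in particular the step ``take the minimum over $q$ to reduce to $\min_q \nb{\Cf}(q)$'' fails, because two states with all root counters $\le \four$ can still have productions differing by arbitrarily long words stored along different branches of $\tree{\Cf}$.

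The second gap is that the claim you flag as ``the main obstacle'' --- that productivity forces $\min_{q\in C^x_i}|\val{\Jf,\Cf}{x[1{:}i]}(q)|$ (equivalently the common prefix) to grow --- is exactly the heart of the proof, and you only assert it. It does not follow from \cref{lem:continuity-loops} applied to ``the infinitely-visited SCC'': $C^x_i$ is only the over-approximation produced by the restricted \oNT{} of \cref{lem:pre-compat}, and to invoke productivity one needs a \emph{fixed} set that returns to itself by a step whose $\pre{}{}$ map is the identity and which contains a recurring accepting state. The paper manufactures this by introducing the limit sets $S^x_i \defined \bigcap_{i' \ge i} \pre{C^x_i, C^x_{i'}}{x[i{+}1{:}i']}(C^x_{i'})$, proving they form steps, proving a covering sublemma ($\pre{C^x_i,C^x_{i'}}{}(C^x_{i'}) = S^x_i$ for some $i' \ge i$), and extracting a suitable subsequence with Ramsey's theorem before the productivity sublemma applies; this yields a lower bound $\ge \four\Bound!$ on the production of \emph{every} state of some later $C^x_{i''}$. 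Even granting that, the contradiction is not ``routine case-checking'': one must track, backwards through the non-close preprocessing and the step update, a state whose root counter is $0$ (such a state exists because $m=0$ in $\tnorm{\textbf{down}}$ whenever nothing is output) and show its production between the two positions is expressible purely in terms of the differences of $|\lag|$ and $|\last|$, hence $< \four\Bound!$. That backward induction, with its careful treatment of lagging states and of the preprocessing case, is the missing argument, and without it (and without handling the subtree registers) your proposal does not yield the lemma. Your easy case --- if the non-separable mode or the close preprocessing occurs infinitely often then $\out = \com{\Jf,\Cf}{}$ with bounded advances infinitely often --- does match the paper's first case.
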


\section{Outlook}

This paper provides a solution to an open problem.
From a practical point of view, it allows to build
a copyless streaming algorithm from a rational specification
whenever it is possible (it is impossible when the 
rational function is not continuous).
We conjecture that the techniques
introduced in this paper can be extended to show
that any continuous \emph{regular} function is deterministic regular.
Furthermore, they may also
be used to study the rational or regular functions which are
uniformly continuous for the Cantor topology,
and capture them with a specific transducer model
(another open problem of~\cite{dave2020synthesis}).


\bibliography{rational-continuity}

\newpage

\appendix

\section{Proof of Remark~\ref{rem:not-sequential}}

We show in this section that the function $\double$ from
Example~\ref{ex:rational} cannot be computed by a 
one-way deterministic transducer. It implies that one-wayness
is not enough to compute continuous rational
functions by deterministic transducers,
even if the original function is total (contrary to what happens with the subclass
of rational functions considered in \cite{filiot2021synthesizing}).

Now, asssume that $\double$ is computed by a
deterministic one way transducer.
There exists $N,M \ge 0$,   $u_1,u_2, u_3, u_4 , u'_3, u'_4 \in  \{0,1,2\}^*$ such that
for all $n \ge 0$, $\double(0^{M + Nn} 1 0^\omega) = u_1u_2^nu_3u_4^\omega
= 0^{M+Nn} 1 0^\omega$
and $\double(0^{M + Nn} 20^\omega)
= u_1u_2^nu'_3{u'_4}^\omega
= 0^{2(M+Nn)} 2 0^\omega$.
A contradiction can easily be deduced.

\section{Proof of Lemma~\ref{lem:infinite-accepting}}

Let $\trans = (A,B,Q, I, F, \Delta, \lambda)$ be a trim \oNT{}
computing $f: A^\omega \parfonc B^\omega$.
By a pumping argument, it is easy to see
that $\trans$ is clean if and only if it has no
$q \runs{u | \movi} q$ for some $q \in F$.
Assume furthermore that $\trans$ is unambiguous.
We build a clean and unambiguous  (trimming can be done later) \oNT{}
$\trans' = (A,B,Q', I', F', \Delta', \lambda')$
as follows:
\begin{itemize}
\item $Q' \defined (Q \times \{0\}) \uplus (Q \times \{1\})$, $I' \defined I \times \{1\}$
and $F' \defined F \times \{1\}$;
\item for all $(q,a,q') \in \Delta$ we add:
\begin{itemize}
\item $((q,1), a, (q',1)) \in \Delta'$ if $q \not \in F$, and then
$\lambda'((q,1), a, (q',1)) \defined \lambda(q,a,q')$;
\item $((q,1), a, (q',0)) \in \Delta'$ if $q \in F$, and then
$\lambda'((q,1), a, (q',0)) \defined \lambda(q,a,q')$;
\item $((q,0), a, (q',0)) \in \Delta'$ if $\lambda'((q,0), a, (q',0)) \defined \lambda(q,a,q') = \movi$;
\item $((q,0), a, (q',1)) \in \Delta'$ if $\lambda'((q,0), a, (q',1)) \defined \lambda(q,a,q') \neq \movi$.
\end{itemize}
\end{itemize}
It is easy to see that $\trans'$ is unambiguous, clean, and computes $f$.

\section{Proof of Theorem~\ref{theo:2dt-dsst}}

\subsection{From \tDT{}s to $1$-bounded \DSST{}s}

We show how to transform a \tDT{} into a $1$-bounded \DSST{}.  The main idea is to keep track
of the right-to-right behavior of the \tDT{}  (the ``crossing sequence'') on
the prefix read so far. This proof is somehow standard for \DSST, and its main ideas originate
from \cite{shepherdson1959reduction} which first showed
how to transform a two-way automaton into a one-way
automaton.

Consider a \tDT{} $\trans = (A,B,Q,q_0, \delta, \lambda)$ computing a partial function
$f: A^\omega \parfonc B^\omega$. Let $x \in A^\omega$ . We denote by $\fonc$ the transition
relation of $\trans$ between configurations, and $\fonc^+$ its transitive closure.
Let $\Qubot \defined Q \uplus \{\bot\}$.
When reaching a position $i \ge 0$ of the input $\lmark x$, the \DSST{}
will keep track the following information (see Figure \ref{fig:cross}):
\begin{itemize}
  \item the state $\first^x_i \in \Qubot$ that is ``the state of
  $\trans$ the first time it goes to position $i{+}1$''.  More formally, $\first^x_i$
  is the state such that $ \phi^x_i \defined (q_0,0) \fonc^{+} (\first^x_{i}, i{+}1)$
 is the run which visits the position $i{+}1$ for the first time (and $\bot$ if such a run does not exist,
  which implies that $x \not \in \Dom{f}$).  This information is coded in the state of the \DSST{};
  \item the concatenation of the outputs by $\lambda$
  along the run $\phi^x_i$, stored in the register $\out$;
  \item a function $\nnext^x_i(q) \colon Q \fonc \Qubot$, which
  gives for each $q \in Q$ the state such that the run
  $\rho^x_i(q) \defined (q,i) \fonc^{+} (\nnext^x_i(q),i{+}1)$
  visits $i{+}1$ for  the first time when starting from $(q,i)$ ($\bot$ if it does not exist).  This
  information is coded in the state of the \DSST{};
  \item  the concatenation of the outputs by $\lambda$
  along the run $\rho^x_i(q)$ for all $q \in Q$.
  This information is stored in a register $\outi{q}$
  (it is empty if $\nnext^x_i(q) = \bot$).
\end{itemize}

\begin{figure}[h!]

\begin{center}
\begin{tikzpicture}[scale=1]
	\newcommand{\theon}[1]{\footnotesize {\textbf{#1}}};
	\node (in) at (-2.5,0) [above,right]{\theon{Input word}};
	
	\draw[fill = red!20,dashed](4.5,0.5) rectangle (5.5,-2.5);
		
	\node[blue] (in) at (0,-0.5) []{\footnotesize  $q_0$};
	\draw[->,blue,thick](0.2,-0.5) to (5.8,-0.5);
	\node[blue] (in) at (5.8,-0.5) [right]{\footnotesize $ \first_5$};

	\node[blue] (in) at (5,-1) []{\footnotesize  $q_{1}$};
	\node[blue] (in) at (5.8,-1.5) [right]{\footnotesize  $\nnext_5(q_1)$};
	\draw[->,blue,thick] (4.8,-1) .. controls (1,-1) and (1,-1.5) .. (5.8,-1.5) ;
	
	\node[blue] (in) at (5,-2) []{\footnotesize  $q_{2}$};
	\node[blue] (in) at (5.8,-2) [right]{\footnotesize  $ \nnext_5(q_2) =  \bot$};
	\draw[->,blue,thick](4.8,-2) to (3,-2);
	\node[blue] (in) at (2.3,-2) []{\footnotesize  deadlock};	

	\node[above] (in) at (0,-0.2) []{  $\lmark$};
	\node[above] (in) at (1,-0.2) []{  $b$};
	\node[above] (in) at (2,-0.2) []{  $a$};
	\node[above] (in) at (3,-0.2) []{  $b$};
	\node[above] (in) at (4,-0.2) []{  $b$};
	\node[above] (in) at (5,-0.2) []{  $b$};
	\node[above] (in) at (6,-0.2) []{  $a$};
	\node[above] (in) at (7,-0.2) []{  $$};
       \end{tikzpicture}
\end{center}
       \caption{\label{fig:cross} Crossing sequences in a \tDT}
\end{figure}

\subparagraph*{Updates when reading a letter.} We have to show how the \DSST{} can update
this abstraction of the behavior of $\trans$.  Assume that the invariants
are computed correctly at position $i \ge 0$, we want to compute them
for $i{+}1$. Let $a \defined x[i{+}1]$. We recursively define  a function $\sigma^x_i: \Qubot \rightarrow
(\Qubot)^* \cup (\Qubot)^\omega$, which describes the sequence of states visited
in position $i$ when starting from $q$ in position $i{+}1$, and before reaching position $i{+}2$:
\begin{enumerate}
\item if $q = \bot$ or $\delta(q,a)$ is undefined, then $\sigma^x_i(q) \defined \bot^\omega$;
\item otherwise if $\delta(q,a) = (q', \rmove)$, then $\sigma^x_i(q) \defined \movi$;
\item \label{item:sew3} otherwise if $\delta(q,x[i{+}1]) = (q', \lmove)$, then $\sigma^x_i(q) \defined q' \sigma^x_i(\nnext^x_i(q'))$.
\end{enumerate}

\begin{claim} \label{claim:clam} Let $q \in Q$. Then $p_1 \cdots p_n \in Q^*$
is a prefix of $\sigma^x_i(q)$ if and only if
$(q,i{+}1) \fonc \rho^x_i(p_1) \fonc
\cdots \fonc \rho^x_i(p_n)$ is a run.
\end{claim}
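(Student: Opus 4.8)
\textbf{Proof plan for Claim~\ref{claim:clam}.}

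The plan is to prove both directions by induction on $n$, with the recursive structure of $\sigma^x_i$ mirroring the recursive structure of the run. First I would observe that the statement is really about decomposing a run of $\trans$ that starts at position $i{+}1$, descends (possibly) to position $i$ and its left, and eventually returns to position $i{+}1$. The key structural fact is that whenever the run is at configuration $(q, i{+}1)$, the next transition $\delta(q,a)$ either moves right (so the run immediately leaves to position $i{+}2$, contributing nothing to $\sigma^x_i(q)$, matching case~2), or moves left to $(q', i)$, in which case by definition of $\nnext^x_i$ the run continues $(q',i) \fonc^+ (\nnext^x_i(q'), i{+}1)$ precisely as the subrun $\rho^x_i(q')$, after which we are back at position $i{+}1$ in state $\nnext^x_i(q')$ and can recurse. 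This is exactly the unfolding encoded by case~\ref{item:sew3}: $\sigma^x_i(q) = q' \sigma^x_i(\nnext^x_i(q'))$.

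For the ``if'' direction, I would assume $(q,i{+}1) \fonc \rho^x_i(p_1) \fonc \cdots \fonc \rho^x_i(p_n)$ is a valid run and show $p_1 \cdots p_n \pref \sigma^x_i(q)$. The first step $(q,i{+}1) \fonc$ something forces $\delta(q,a)$ to be defined and, since the run does not immediately jump to $i{+}2$ (otherwise $n = 0$, handled trivially), it must be a left move $\delta(q,a) = (q',\lmove)$ with $p_1 = q'$ (the state reached at position $i$), and moreover $\rho^x_i(p_1)$ must be a genuine finite run returning to $i{+}1$, so $\nnext^x_i(p_1) \neq \bot$. After $\rho^x_i(p_1)$ the configuration is $(\nnext^x_i(p_1), i{+}1)$, and $\rho^x_i(p_2) \fonc \cdots \fonc \rho^x_i(p_n)$ is a run starting from there; by the induction hypothesis $p_2 \cdots p_n \pref \sigma^x_i(\nnext^x_i(p_1))$, hence $p_1 \cdots p_n \pref q' \sigma^x_i(\nnext^x_i(p_1)) = \sigma^x_i(q)$. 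For the ``only if'' direction I would run the same argument in reverse: given $p_1 \cdots p_n \pref \sigma^x_i(q)$ with $n \ge 1$, the definition of $\sigma^x_i$ forces case~\ref{item:sew3}, so $\delta(q,a) = (p_1, \lmove)$ is a left move, $\rho^x_i(p_1)$ is a finite run back to $i{+}1$ (the prefix being nonempty rules out $\sigma^x_i(p_1)$ hitting $\bot^\omega$ at this stage in a way that breaks the run --- I should double check that $\nnext^x_i(q') \neq \bot$ whenever $q' \sigma^x_i(\nnext^x_i(q'))$ has a prefix of length $\ge 1$, which holds because $\sigma^x_i(\bot) = \bot^\omega$ contributes no $Q$-states), and then apply the induction hypothesis to $p_2 \cdots p_n$ and the run starting at $(\nnext^x_i(p_1), i{+}1)$.

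The main obstacle I anticipate is bookkeeping the role of $\bot$ and making sure the equivalence is stated for $Q^*$-prefixes only: $\sigma^x_i(q)$ may contain $\bot$ symbols (when a subrun deadlocks or loops), and a prefix $p_1 \cdots p_n \in Q^*$ implicitly asserts that none of the first $n$ descents deadlocks. I would handle this by noting that a $Q^*$-prefix of $\sigma^x_i(q)$ can never ``cross'' a $\bot$, since once $\sigma^x_i$ produces a $\bot$ it is because some $\nnext^x_i(\cdot) = \bot$ and then the tail is $\bot^\omega$; so restricting to $Q^*$-prefixes is exactly the condition that all the relevant $\rho^x_i(p_j)$ are genuine finite returning runs, which is what ``$(q,i{+}1) \fonc \rho^x_i(p_1) \fonc \cdots$'' means. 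A second, minor subtlety is that $\trans$ is deterministic, so at each position the continuation is unique; this is what lets the induction close without having to quantify over choices, and I would invoke it implicitly when saying ``the next transition \emph{is} $\delta(q,a)$''. Apart from these points the argument is a routine structural induction.
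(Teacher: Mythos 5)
Your induction is exactly what the paper's one-line proof (``follows directly from the definitions of $\sigma^x_i(q)$ and $\rho^x_i(q)$'') leaves implicit, so the approach is the same and the argument is sound. One small caveat: in the ``only if'' direction, your parenthetical that a nonempty $Q^*$-prefix forces $\nnext^x_i(\cdot)\neq\bot$ is only valid for the blocks before the last one (a second $Q$-symbol must come from $\sigma^x_i(\nnext^x_i(p_1))$, ruling out $\bot^\omega$), whereas for the final block $p_n$ nothing prevents $\nnext^x_i(p_n)=\bot$ (e.g.\ $\sigma^x_i(q)=p_1\bot^\omega$ has the $Q^*$-prefix $p_1$ while $\rho^x_i(p_1)$ is undefined) --- but this boundary looseness is already in the claim as stated and is harmless in the places it is used (Claims~\ref{claim:nono} and~\ref{claim:nana}).
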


\begin{proof} Follows directly from
the definitions of $\sigma^x_i(q)$ and $\rho^x_i(q)$.
\end{proof}

\begin{claim} \label{claim:nono} If $\nnext^x_{i+1}(q) = \bot$, then $|\sigma^x_i(q)| = \infty$.
\end{claim}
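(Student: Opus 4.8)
The plan is to prove the contrapositive: if $|\sigma^x_i(q)| < \infty$ then $\nnext^x_{i+1}(q) \neq \bot$, i.e. the (unique, deterministic) run of $\trans$ started in $(q, i{+}1)$ eventually reaches position $i{+}2$. I would argue by induction on $n \defined |\sigma^x_i(q)| \in \Nat$, exploiting the shape of the recursive definition of $\sigma^x_i$. First I would record two structural facts about finite values: a finite word $\sigma^x_i(q)$ contains no letter $\bot$, since $\bot$ can only be produced by case~1 of the definition, which yields the infinite word $\bot^\omega$; and, among the three cases, only case~2 can yield $\movi$ (case~1 yields $\bot^\omega$, case~3 yields a word starting with a state of $Q$). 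Consequently a finite $\sigma^x_i(q)$ arises from finitely many applications of case~3 followed by one application of case~2, and each such application of case~3, writing $\sigma^x_i(p) = p'\,\sigma^x_i(\nnext^x_i(p'))$, forces $\nnext^x_i(p') \neq \bot$, since otherwise the tail would be $\sigma^x_i(\bot) = \bot^\omega$.

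For the base case $n = 0$ we have $\sigma^x_i(q) = \movi$, which forces case~2: $\delta(q,a) = (q', \rmove)$, hence $(q, i{+}1) \rightarrow (q', i{+}2)$ and $\nnext^x_{i+1}(q) = q' \neq \bot$. For the inductive step $n \ge 1$ we must be in case~3, so $\delta(q,a) = (q', \lmove)$ and $(q, i{+}1) \rightarrow (q', i)$; finiteness gives $\nnext^x_i(q') \neq \bot$, so $\rho^x_i(q') \colon (q', i) \rightarrow^{+} (\nnext^x_i(q'), i{+}1)$ is a genuine finite run, all of whose configurations lie at positions $\le i$ before its last one. Concatenating yields a run $(q, i{+}1) \rightarrow (q', i) \rightarrow^{+} (\nnext^x_i(q'), i{+}1)$ that does not visit position $i{+}2$, and since $|\sigma^x_i(\nnext^x_i(q'))| = n - 1$, the induction hypothesis applied to $\nnext^x_i(q')$ shows that the run started in $(\nnext^x_i(q'), i{+}1)$ does reach position $i{+}2$. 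Prepending, the run started in $(q, i{+}1)$ reaches $i{+}2$, so $\nnext^x_{i+1}(q) \neq \bot$. An alternative to this hand-rolled induction would be to invoke \cref{claim:clam} for $\sigma^x_i(q) = p_1 \cdots p_n$, obtaining the run $(q, i{+}1) \rightarrow \rho^x_i(p_1) \rightarrow \cdots \rightarrow \rho^x_i(p_n)$, and then add the single final rightward step prescribed by case~2 at its endpoint.

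The step I expect to require the most care is the bookkeeping in the case analysis: checking that a finite $\sigma^x_i(q)$ is genuinely $\bot$-free and terminated by exactly one application of case~2, and that the run assembled from $(q, i{+}1) \rightarrow (q', i)$, the run $\rho^x_i(q')$, and the run supplied by the induction hypothesis is indeed the deterministic run of $\trans$, with no earlier visit to position $i{+}2$. These points are not conceptually hard, but they are exactly where a sloppy argument could slip; everything follows directly from the definitions of $\sigma^x_i$, $\rho^x_i$ and $\nnext^x_i$.
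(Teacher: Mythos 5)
Your proposal is correct and is essentially the paper's argument: the paper also establishes the claim via the contrapositive, showing that a finite $\sigma^x_i(q)=p_1\cdots p_n\in Q^*$ yields, through Claim~\ref{claim:clam} plus the terminal rightward transition supplied by case~2, a run from $(q,i{+}1)$ that reaches position $i{+}2$, whence $\nnext^x_{i+1}(q)\neq\bot$. You merely repackage this as an induction on $|\sigma^x_i(q)|$ (your "alternative" route via Claim~\ref{claim:clam} is literally the paper's proof, which treats claims~\ref{claim:nono} and~\ref{claim:nana} together), so there is no substantive difference.
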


\begin{claim} \label{claim:nana}
If $\nnext^x_{i+1}(q) = q' \neq \bot$, then $\sigma^x_i(q) = p_1 \dots p_n$ with $n \le |Q|$.
Furthermore if $n=0$ then $\delta(q,a) = (q', \rmove)$,
if $n > 0$ then $\delta(\nnext^x_i(p_n),a) = (q', \rmove)$,
and  $\rho^x_{i+1}(q) = (q,i{+}1) \fonc \rho^x_i(p_1)  \fonc
\cdots \fonc \rho^x_i(p_n) \fonc (q',i{+}2)$
\end{claim}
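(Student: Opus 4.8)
The plan is to derive the statement from \cref{claim:clam} and \cref{claim:nono} together with the determinism of $\trans$. The hypothesis $\nnext^x_{i+1}(q) = q' \neq \bot$ means precisely that the run $\rho^x_{i+1}(q) = (q,i{+}1) \fonc^+ (q',i{+}2)$ exists and is the one visiting position $i{+}2$ for the first time when starting from $(q,i{+}1)$. First I would observe that this run stays within positions $\le i{+}1$ until its very last step, which moves right from position $i{+}1$ to position $i{+}2$; in particular it is finite. By the contrapositive of \cref{claim:nono} this forces $\sigma^x_i(q)$ to be finite, say $\sigma^x_i(q) = p_1 \cdots p_n$.

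Next, apply \cref{claim:clam} to the whole word $p_1 \cdots p_n$: the sequence $(q,i{+}1) \fonc \rho^x_i(p_1) \fonc \cdots \fonc \rho^x_i(p_n)$ is a run, ending in $(\nnext^x_i(p_n), i{+}1)$ if $n>0$ and in $(q,i{+}1)$ if $n=0$. If $n=0$, the definition of $\sigma^x_i$ (case~2) forces $\delta(q,a)=(q',\rmove)$ for the unique $q'$ with $(q,i{+}1)\fonc(q',i{+}2)$, which is $\nnext^x_{i+1}(q)$, giving $\rho^x_{i+1}(q)=(q,i{+}1)\fonc(q',i{+}2)$. If $n>0$: the transition of $\trans$ from $(\nnext^x_i(p_n),i{+}1)$ reads $a$; were it a left move, case~3 of the definition of $\sigma^x_i$ would append one more letter, contradicting $\sigma^x_i(q)=p_1\cdots p_n$; hence $\delta(\nnext^x_i(p_n),a)=(q'',\rmove)$ for some $q''$, and $\rho^x_{i+1}(q)=(q,i{+}1)\fonc\rho^x_i(p_1)\fonc\cdots\fonc\rho^x_i(p_n)\fonc(q'',i{+}2)$. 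Since this is a run from $(q,i{+}1)$ reaching position $i{+}2$ for the first time, determinism gives $q''=q'=\nnext^x_{i+1}(q)$.

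For the length bound, I would note that the configurations of $\rho^x_{i+1}(q)$ at position $i{+}1$ are exactly $(q,i{+}1)$ and the $(\nnext^x_i(p_j),i{+}1)$ for $1\le j\le n$: by definition each sub-run $\rho^x_i(p_j)$ visits position $i{+}1$ only at its last configuration, and the single left moves between them leave position $i{+}1$ immediately. Since $\trans$ is deterministic, any configuration occurring twice in a run makes the run loop and hence infinite; as $\rho^x_{i+1}(q)$ is finite, these $n{+}1$ configurations are pairwise distinct, and being all at position $i{+}1$ they are distinguished by their states, so $n{+}1\le |Q|$, a fortiori $n\le|Q|$.

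I expect no real obstacle: the statement is a bookkeeping consequence of the definitions of $\sigma^x_i$, $\rho^x_i$ and $\nnext^x$. The only point that needs a little care is the finiteness argument underlying the length bound, where determinism must be invoked to exclude repeated configurations (this is also exactly what \cref{claim:nono} packages, so alternatively one can just quote it).
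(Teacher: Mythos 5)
Your overall route is the same as the paper's (the paper proves Claims~\ref{claim:nono} and~\ref{claim:nana} together by a case analysis on the shape of $\sigma^x_i(q)$), and most of your steps are sound: identifying the last transition as a right move via case~3 of the definition of $\sigma^x_i$, using determinism to get $q''=q'$, and the pigeonhole on the configurations at position $i{+}1$ for the bound $n\le|Q|$ are all fine. The bound is packaged slightly differently than in the paper, which folds it into the statement ``a prefix of length $>|Q|$ forces a repeated configuration, hence $\nnext^x_{i+1}(q)=\bot$'', but the two arguments are the same idea.

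The one genuine flaw is the step ``by the contrapositive of Claim~\ref{claim:nono} this forces $\sigma^x_i(q)$ to be finite''. The contrapositive of Claim~\ref{claim:nono} is ``$|\sigma^x_i(q)|<\infty$ implies $\nnext^x_{i+1}(q)\neq\bot$'', which goes in the wrong direction; what you need is the \emph{converse}, namely $\nnext^x_{i+1}(q)\neq\bot \Rightarrow |\sigma^x_i(q)|<\infty$, and this is not a formal consequence of Claim~\ref{claim:nono}. It is true, but it needs its own short argument, which is exactly the first half of the paper's joint proof of the two claims: $\sigma^x_i(q)\in Q^*\cup Q^\omega\cup Q^*\bot^\omega$; if $\sigma^x_i(q)=p_1\cdots p_m\bot^\omega$ then either $\nnext^x_i(p_m)=\bot$ or $\delta(\nnext^x_i(p_m),a)$ is undefined, so the (unique, by determinism) run from $(q,i{+}1)$ never reaches position $i{+}2$; and if $\sigma^x_i(q)\in Q^\omega$ then, by Claim~\ref{claim:clam} and determinism, the run from $(q,i{+}1)$ is the infinite concatenation of the $\rho^x_i(p_j)$, which stays at positions $\le i{+}1$ forever; in both cases $\nnext^x_{i+1}(q)=\bot$, contradicting the hypothesis. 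With this elimination argument inserted (it is of the same flavour as the determinism argument you already use for the length bound), your proof is complete and matches the paper's.
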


\begin{proof}[Proof of claims~\ref{claim:nono} and~\ref{claim:nana}] 
We have $\sigma^x_i(q) \in Q^* \cup Q^\omega \cup Q^* \bot^\omega$.
If $\sigma^x_i(q) = p_1 \cdots p_n \bot^\omega$ then we
must have either $\nnext^x_i(p_n) = \bot$ or $\delta(\nnext^x_i(p_n),a)$ undefined. In
both cases this implies $\nnext^x_{i+1}(q) = \bot$. On the other hand,
if $p_1 \cdots p_n \in Q^*$ with $n > |Q|$ is a prefix of $\sigma^x_i(q)$,
then the run of Claim~\ref{claim:clam} above visits twice the same configuration.
Thus it loops infinitely before seeing position $i{+}2$,
which gives $\nnext^x_{i+1}(q) = \bot$.
Finally, the remaining case is when $\sigma^x_i(q) = p_1 \cdots p_n \in Q^*$
with $n \le |Q|$. This means that  $\delta(q, a) = (q', \rmove)$ or $\delta(\nnext^x_i(p_n), a) = (q', \rmove)$
which, when added to the run of Claim~\ref{claim:clam},
gives the first visit of position $i{+}2$ in $\rho^x_{i+1}(q)$
\end{proof}
By computing of the $|Q|$ first
letters of $\sigma(q)$, the \DSST{}
can update the $\nnext^x_{i+1}$ information.
Furthermore, we define the register updates as follows:
\begin{itemize}
\item if $\nnext^x_{i+1}(q) = \bot$, then $\out_q \becomes \movi$;
\item otherwise $\sigma_i^x(q) = p_1 \cdots p_n$, and then
$
\out_q \becomes  \alpha_0 \out_{p_1} \alpha_2 \cdots \out_{p_n} \alpha_n
$ where the $\alpha_j \in B^*$ are defined accordingly
to $\rho^x_{i+1}(q)$ in Claim~\ref{claim:nana}.
\end{itemize} 
It remains to deal with the update of $\first^x_{i+1}$ and $\out$. By a similar
argument, it is easy to see that   $\first^x_{i+1} \neq \bot$ if and only
if $\sigma(\first^x_i) = p_1 \cdots p_n \in Q^*$ with $n \le |Q|$,
and furthermore in that case
$\phi^x_{i+1} = \phi^x_i \fonc \rho^x_{i+1}(\first^x_i)$.
We can thus update $\out \becomes \out
\alpha_0 \out_{p_1} \cdots \out_{p_n} \alpha_n$
(where the $\alpha_i$ are the same as those
of the update of $\out_{\first^x_i}$).

\subparagraph*{Correctness of the construction.} Let $x \in \Dom{f}$,
then by definition of the semantics of a \tDT{}, we
have $\first^x_i \neq \bot$ for all $i \ge 0$ and furthermore
output labels along the run $\phi^x_i$
tends to $f(x) \in B^\omega$ when $i \rightarrow + \infty$.
Thus $\cro{\out}^x_i$ tends to $f(x)$.

Now assume that $x \not \in \Dom{f}$. Then either
$\first^x_i = \bot$ for some $i \ge 0$
(either because there is no infinite run of the \tDT{}
labelled by $x$, or because this run is not accepting, i.e. it contains
a loop), which will be detected by our \DSST{} that will stop its computation.
Or $\first^x_i \neq \bot$ for all $i \ge 0$,
but the output labels of  the $\phi^x_i$ do not tend to an infinite word,
and therefore we get $|\cro{\out}^x_i| \not \fonc \infty$.

\subparagraph*{Copyless?} One could ask if the resulting \DSST{}
is copyless. It is not the case, since a state $p \in Q$
may occur both in $\sigma^x_i(q) \in Q^*$ and $\sigma^x_i(q') \in Q^*$
(which implies that $\out_p$ is used both for $\out_{q}$
and $\out_{q'}$). However, in this case it means that either
 $\out_{q}$ or $\out_{q'}$
stores information which is never used in $\out$
(since otherwise, it would induce a looping behavior in $\trans$).

\subparagraph*{Bounded copies.} Let $(A,B, S, s_0, \delta', \Regs, \out, \lambda')$
be the \DSST{}  built along the previous
paragraphs (this does not modify its semantics).
Let us show by Claim~\ref{claim:exact-branch} that it is $1$-bounded.
Recall that $\lambda'^x_i$ is the substitution applied (when defined)
when reading $x[i]$ on input $x \in A^\omega$.

\begin{claim} \label{claim:exact-branch}
Let $x \in A^\omega$, $1 \le j \le i$ be such
that $\lambda'^x_i$ is defined, then:
\begin{enumerate}
\item \label{item:unun} for all $\reg, \regg \in \Regs$,
$\reg$ occurs at most once in $\lambda'^x_j \circ \cdots \circ\lambda'^x_i  (\regg)$;
\item \label{item:dede} if $\out_p$ occurs in
$\lambda'^x_j \circ \cdots \circ\lambda'^x_i (\out_q)$, then
$\nnext_i(q) \neq \bot$ and $(p,j)$ occurs in the run $\rho^{x}_{i}(q)$;
\item  \label{item:dede-out}
 if $\out_p$ occurs in $\lambda'^x_j \circ \cdots \circ\lambda'^x_i (\out)$, then
$\first_i(q) \neq \bot$ and $(p,j)$ occurs in the run $\phi^{x}_{i}$.
\end{enumerate}
\end{claim}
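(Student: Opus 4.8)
The plan is to prove Claim~\ref{claim:exact-branch} by induction on $i - j \geq 0$, proving all three items simultaneously, since they are mutually supporting: item~\ref{item:dede} and item~\ref{item:dede-out} track \emph{which} register occurrences survive composition, and item~\ref{item:unun} (the $1$-boundedness) follows from the fact that distinct runs $\rho^x_i(q)$ and $\rho^x_i(q')$ from distinct starting states can only share a configuration if one of them loops — which, for a state that actually reaches position $i{+}1$, cannot happen. The key structural fact I would isolate first is that for a fixed position $i$, the configurations $(p,i)$ appearing in the various $\rho^x_i(q)$ (for $q$ with $\nnext^x_i(q) \neq \bot$) are ``used at most once'': if $(p,i)$ occurs in $\rho^x_i(q)$ then $\rho^x_i(p)$ is a suffix-segment of $\rho^x_i(q)$, so no two distinct starting states $q \neq q'$ can have $(p,i)$ in both their runs unless one run revisits $(p,i)$, contradicting that the run reaches $i{+}1$ without looping.

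The base case $i = j$ is essentially the definition of the register update performed when reading $x[i]$: by Claim~\ref{claim:nana}, $\lambda'^x_i(\out_q)$ is $\alpha_0 \out_{p_1} \alpha_1 \cdots \out_{p_n} \alpha_n$ where $p_1 \cdots p_n = \sigma^x_{i-1}(q)$ and $(q, i) \fonc \rho^x_{i-1}(p_1) \fonc \cdots \fonc \rho^x_{i-1}(p_n) \fonc (\nnext^x_i(q), i{+}1)$; hence each $\out_{p_k}$ occurring satisfies item~\ref{item:dede} with $j = i$ (the configuration $(p_k, i)$ is visited since $\rho^x_{i-1}(p_k)$ starts at $(p_k, i-1)$, so actually we want $(p_k, i-1)$ — I would be careful here to get the position indices right, using that $\rho^x_{i-1}(p_k)$ visits $(p_k, i-1)$ and the composite run $\rho^x_i(q)$ visits exactly these). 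For item~\ref{item:unun}, the $p_k$ are pairwise distinct because $\sigma^x_{i-1}(q)$ has length $\leq |Q|$ and repetition would force a loop (Claim~\ref{claim:nana} again); and the other register updates $\out_{q'}$ use the $\sigma^x_{i-1}(q')$ lists, so a shared $\out_p$ across two updates would mean $(p,i-1)$ lies on two distinct right-to-right traversal segments, impossible by the structural fact above. The update of $\out$ itself reuses the same $\alpha$'s and $\out_{p_k}$'s as the update of $\out_{\first^x_{i-1}}$, which is why item~\ref{item:dede-out} holds and why $\out$ does not create an extra copy.

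For the inductive step, I would write $\lambda'^x_j \circ \cdots \circ \lambda'^x_i = (\lambda'^x_j \circ \cdots \circ \lambda'^x_{i-1}) \circ \lambda'^x_i$ and analyze how an occurrence of $\out_p$ in the composite arises: it must come from an occurrence of some $\out_r$ in $\lambda'^x_i(\out_q)$ together with an occurrence of $\out_p$ in $(\lambda'^x_j \circ \cdots \circ \lambda'^x_{i-1})(\out_r)$. The base-case analysis gives, for each such $r$, that $(r, i-1)$ lies on the run $\rho^x_i(q)$; the induction hypothesis gives $(p,j)$ on $\rho^x_{i-1}(r)$. Concatenating, $(p,j)$ lies on $\rho^x_i(q)$ — and crucially the \emph{distinct} $r$'s contributing to $\lambda'^x_i(\out_q)$ give \emph{disjoint} segments $\rho^x_{i-1}(r)$ of $\rho^x_i(q)$, so by the induction hypothesis's item~\ref{item:unun} within each segment and disjointness across segments, $\out_p$ occurs at most once overall. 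Item~\ref{item:dede-out} is handled identically with $\phi$ in place of $\rho$. The main obstacle I anticipate is bookkeeping the position indices cleanly (which register lives at which position, and whether ``$\rho^x_i(q)$'' is indexed by the position before or after the step) and making the ``disjoint segments'' argument fully rigorous — specifically, arguing that if $r \neq r'$ both appear in the list $\sigma^x_{i-1}(q)$ then the sub-runs $\rho^x_{i-1}(r)$ and $\rho^x_{i-1}(r')$ share no configuration, which again reduces to the non-looping property of a run that successfully crosses to the next position. Everything else is routine unfolding of the definitions.
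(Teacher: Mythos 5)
Your induction is, in substance, the paper's own proof: induct on the length of the composition, expand $\lambda'^x_{i}(\out_q)$ via $\sigma$ and Claim~\ref{claim:nana}, observe that the run $\rho^x_{i}(q)$ is the disjoint concatenation of the sub-runs $\rho^x_{i-1}(p_k)$ because a run that reaches the next position for the first time cannot repeat a configuration, and then combine item~\ref{item:dede} of the induction hypothesis (to locate $(p,j)$ on exactly one segment) with item~\ref{item:unun} (at most one occurrence inside that segment); item~\ref{item:dede-out} is the same with $\phi^x_i$ in place of $\rho^x_i(q)$. Modulo the index bookkeeping you yourself flag, this part is correct and is exactly the argument in the paper.

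One assertion in your write-up is wrong, though, and you should delete it: the ``key structural fact'' that two distinct starting states $q \neq q'$ cannot both have $(p,i)$ on their runs $\rho^x_i(q)$, $\rho^x_i(q')$. Nothing forces a revisit of $(p,i)$ in that situation: each of the two runs may pass through $(p,i)$ once, and by determinism they then merge and share a common suffix (namely $\rho^x_i(p)$). This is precisely why the constructed \DSST{} is only $1$-bounded and \emph{not} copyless, as the paper points out explicitly in its ``Copyless?'' remark: $\out_p$ may be used in the updates of both $\out_q$ and $\out_{q'}$. Fortunately, the only place you invoke this false fact is the base-case sentence claiming that $\out_p$ cannot be shared ``across two updates'' --- a cross-register (copylessness) statement that item~\ref{item:unun} does not assert and that is indeed false. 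What the claim actually needs, and what you also prove, is disjointness \emph{within a single image}: the states listed in one $\sigma^x_i(q)$ are pairwise distinct, and the corresponding segments of the single run $\rho^x_{i+1}(q)$ (resp.\ $\phi^x_{i+1}$) are pairwise disjoint because that run has no repeated configuration. With the spurious cross-register claim removed, your proof stands and coincides with the paper's.
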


\begin{proof} We show the three items by induction on $i \ge 1$.
The base case for $i = j$ follows from the definition of $\sigma$ and of the updates.
Now consider the induction step from $i \ge 1$ to $i{+}1$. Item~\ref{item:unun}
holds for $\reg = \out$ by definition of the updates $\lambda'$. Assume that $\out_p$ occurs in
$\lambda'^x_j \circ \cdots \circ\lambda'^x_{i+1}(\out_q)$,we show simultaneously
that items~\ref{item:unun} and~\ref{item:dede} hold.
First note that $\nnext^{x}_{i+1}(q) \neq \bot$, since otherwise
the update in $x[i{+1}]$ is $\out_q \becomes \movi$.
Let $p_1 \dots p_n = \sigma^{x}_{i+1}(q)$, then there exists $\alpha_0, \dots, \alpha_n \in B^*$
such that $\lambda'^x_i(\out_q) = \alpha_0 \out_{p_1} \cdots \out_{p_n} \alpha_n$.
Then by Claim~\ref{claim:clam} the run $\rho^{x}_{i+1}(q)$ is
obtained by concatenating the $\rho^{x}_{i}(p_k)$ for $1 \le k \le n$
in a disjoint way. Since $j \le i{+}1$,
then $\out_p$ occurs $\lambda'^{x}_{j} \cdots \lambda'^{x}_{i}(\out_{p_k})$
for some $1 \le k \le n$. then by induction hypothesis
(item~\ref{item:dede}), $(p,j)$ occurs in
$\rho^{x}_{i+1}(q)$. But since there is no loop in $\rho^{x}_{i+1}(q)$,
then $\out_p$ only occurs in $\lambda'^{x}_{j} \cdots \lambda'^{x}_{i}(\out_{p_k})$.
By induction hypothesis (item~\ref{item:unun})
$\out_p$ occurs only once in this $\lambda'^{x}_{j} \cdots \lambda'^{x}_{i}(\out_{p_k})$,
and finally $\out_p$ occurs only once in $\lambda'^{x}_{j} \cdots \lambda'^{x}_{i}(\out_q) $.
The proof of item~\ref{item:dede-out} is similar. 
\end{proof}

\subsection{From copyless \DSST{}s to \tDT{}s}

Let $\trans = (A,B,Q, q_0,\delta, \Regs, \out, \lambda)$
be a copyless \DSST{}
computing a function $f: A^\omega \parfonc B^\omega$.
We first give a simple recursive algorithm to compute $f$, and then
show that this algorithm is correct and can
be implemented by a \tDT.

\subparagraph*{Algorithmic description.} Given $x \in \Dom{f}$,  
let $\rho \defined q_0^x \rightarrow q_1^x \rightarrow \cdots$
be the initial run of $\trans$ on $x$ (with the convention that
$q^x_i = \bot$ if it is undefined). For $i \ge 1$, we also
define $\lambda^x_i \defined \lambda(q_{i-1}^x,x[i])$ the substitution
applied when reading $x[i]$.
We give in  Algorithm~\ref{algo:dsst-2dt} a function $\aval(\alpha,i, \reg)$, producing
$\cro{\alpha}^x_i$ when $i \ge 0$ and $\alpha \in (B \uplus \Regs)^*$ as input
($\reg$ shall be used later, it is an additional information
used by a  \tDT{} implementing the function).
It makes recursive calls to compute the values of
the registers which occur in $\alpha$.

\begin{algorithm}[h!]
\SetKw{KwVar}{Variables:}
\SetKwProg{Fn}{Function}{}{}
\SetKw{In}{in}
\SetKw{Out}{Output}

 \Fn{$\aval (u,i, \reg)$}{
 
 	\tcc{$\alpha \in (B \uplus \Regs)^*$ to be computed, $i \ge 0$ current position}

		\For{$a$ \In $\alpha[1], \dots, \alpha[|\alpha|]$}{
		
			\eIf{$a \in B$}{
			
			\Out{$a$};
			\tcc{Letter $a \in B$ is output}
			
			}{
			
			\If{$i > 0$}{

				$\aval(\lambda^x_i(a),i{-}1,a)$;
				\tcc{Recursion on $\lambda^x_i(a)$}
			
			}
			
		}

	}
	
}

 \Fn{$\simu$}{
 
	\For{$i$ \In $\{1,\dots, \infty \}$}{
				
		$\out \alpha \becomes \lambda^x_i(\out)$;
		
		$\aval(\alpha, i, \out)$;
		}
}
	
 \caption{\label{algo:dsst-2dt} Computing recursively the values of a register}
\end{algorithm}

\begin{claim}[Algorithm~\ref{algo:dsst-2dt} is correct] \label{claim:indu-2}
Let $x \in A^\omega$ and $i \ge 0$
be such that $q^x_i \neq \bot$. Then for all
 $\alpha \in (\Regs \uplus B)^*$ and $\reg \in \Regs$, $\aval(\alpha, i, \reg)$
 terminates and outputs $\cro{\alpha}^x_i$.
\end{claim}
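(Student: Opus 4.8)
The plan is to prove Claim~\ref{claim:indu-2} by induction on the position $i \ge 0$, and within each induction step by induction on the length $|\alpha|$ of the word passed to $\aval$. The statement to be shown is: whenever $q^x_i \neq \bot$ (so that the substitutions $\lambda^x_1, \dots, \lambda^x_i$ are all defined), for every $\alpha \in (\Regs \uplus B)^*$ and every $\reg \in \Regs$, the call $\aval(\alpha, i, \reg)$ terminates and outputs exactly $\cro{\alpha}^x_i$ (recall that $\cro{\cdot}^x_i = \lambda^x_0 \circ \cdots \circ \lambda^x_i$ applied as a morphism, and $\lambda^x_0$ erases everything, so $\cro{\alpha}^x_i$ is a well-defined word of $B^*$).

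For the base case $i = 0$: by definition $\cro{\cdot}^x_0 = \lambda^x_0$, which sends every register to $\movi$ and fixes every letter of $B$. Inspecting Algorithm~\ref{algo:dsst-2dt}, the loop over $\alpha[1], \dots, \alpha[|\alpha|]$ outputs each letter $a \in B$ and, for each $a \in \Regs$, does nothing because the guard $i > 0$ fails. Hence $\aval(\alpha, 0, \reg)$ terminates and outputs the projection of $\alpha$ onto $B^*$, which is precisely $\cro{\alpha}^x_0$. For the inductive step, assume the claim holds at position $i - 1$ (for all words and all $\reg$), and suppose $q^x_i \neq \bot$; then also $q^x_{i-1} \neq \bot$, so the inductive hypothesis applies. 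Now argue by a sub-induction on $|\alpha|$. If $\alpha = \movi$ the call outputs nothing, matching $\cro{\movi}^x_i = \movi$. If $\alpha = a\alpha'$ with $a \in B$, the call first outputs $a$ then recurses on the tail; using $\cro{a\alpha'}^x_i = a\,\cro{\alpha'}^x_i$ and the sub-induction hypothesis on $\alpha'$, we are done. If $\alpha = a\alpha'$ with $a \in \Regs$, the call (since $i > 0$) executes $\aval(\lambda^x_i(a), i-1, a)$, which by the outer inductive hypothesis terminates and outputs $\cro{\lambda^x_i(a)}^x_{i-1} = \lambda^x_0 \circ \cdots \circ \lambda^x_{i-1}(\lambda^x_i(a)) = \cro{a}^x_i$; then it recurses on $\alpha'$ via the sub-induction. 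Concatenating, the output is $\cro{a}^x_i \cro{\alpha'}^x_i = \cro{a\alpha'}^x_i$, as required. Termination follows in each case because every recursive call strictly decreases the lexicographic pair (position $i$, then residual word length), and the only call that keeps $i$ fixed strictly shortens $\alpha$.

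The main subtlety — and the reason the copyless hypothesis matters for the \emph{overall} construction rather than for this particular claim — is that nothing here bounds the total number of recursive calls uniformly; a priori $\aval(\lambda^x_i(a), i-1, a)$ can spawn further calls for each register occurring in $\lambda^x_i(a)$, and so on, so the recursion tree can be large. For the correctness claim this is harmless, since we only need termination for each fixed input, which the decreasing measure guarantees. The point where copylessness (or $K$-boundedness) will be genuinely needed is in the \emph{next} step of the argument: showing that this recursion can be implemented by a \tDT{} whose head movements simulate the descent into $\lambda^x_i(a)$ by physically walking back to position $i-1$ on the input tape, and whose state space stays finite. I expect the delicate part there — not in Claim~\ref{claim:indu-2} itself — to be arguing that a two-way head can navigate this recursion (knowing when a sub-call finishes and where to return), using the additional register argument $\reg$ that Algorithm~\ref{algo:dsst-2dt} threads through precisely for this bookkeeping, and that the output tends to an infinite word exactly when $x \in \Dom{f}$.
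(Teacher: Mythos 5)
Your proof is correct and matches the paper's argument, which simply states that the claim is ``immediate by induction on $i \ge 0$''; your outer induction on $i$ (using $\cro{a}^x_i = \cro{\lambda^x_i(a)}^x_{i-1}$ for registers) with the inner induction on $|\alpha|$ is exactly the intended elaboration. Your side remark that copylessness plays no role in this claim, only in the subsequent \tDT{} implementation, is also accurate.
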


\begin{proof}
Immediate by induction on $i \ge 0$.
\end{proof}

\begin{remark} We shall assume that Algorithm~\ref{algo:dsst-2dt} blocks
if $q^x_i = \bot$ (since it happens if and only if $\lambda^x_i$ is undefined).
\end{remark}

We have also described in Algorithm~\ref{algo:dsst-2dt} a function $\simu$
which uses $\aval$. This function ranges over the positions
$1, 2, \dots $ of the input, and for each of them
it produces ``the value added in $\out$'' at this position.

\begin{claim} \label{claim:indu-3}
If $x \in \Dom{f}$, then
$\simu$ loops infinitely and outputs $f(x)$.
\end{claim}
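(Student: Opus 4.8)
The plan is to prove the claim in two parts: that $\simu$ never blocks, so that its main loop iterates over all of $\{1,2,\dots\}$; and that concatenating the finitely many letters it outputs at those iterations yields exactly $f(x)$. Both parts rest on \cref{claim:indu-2} and on the definition of the function computed by a \DSST{}.

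For the first part I would use that $x \in \Dom{f}$ forces, by definition of the \DSST{} semantics, that $q^x_i \neq \bot$ for every $i \ge 0$ (equivalently, $\lambda^x_i$ is defined for every $i \ge 1$) and that $|\cro{\out}^x_i| \fonc \infty$. At iteration $i$, $\simu$ writes $\lambda^x_i(\out) = \out\,\alpha_i$ for some $\alpha_i \in (B \uplus \Regs)^*$ (this decomposition is legitimate because the \DSST{} constraint imposes $\lambda(q,a)(\out) = \out\cdots$), and then runs $\aval$ on $\alpha_i$ at the current position. By \cref{claim:indu-2}, that call terminates since $q^x_i \neq \bot$. Hence every iteration completes and $\simu$ loops infinitely.

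For the second part, let $w_i$ denote the finite word output during iteration $i$ of $\simu$. \cref{claim:indu-2} says that running $\aval$ on a register expression $\beta$ at a given position produces the value of $\beta$ once all of its registers have been expanded by the substitutions applied so far; combined with the way $\simu$ feeds $\alpha_i$ to $\aval$, this tells us that $w_i$ is exactly the word appended to the content of register $\out$ when $\trans$ processes $x[i]$. The heart of the argument is then the telescoping identity $w_1\cdots w_n = \cro{\out}^x_n$ for all $n \ge 0$, which I would prove by induction on $n$: the base case is $\movi = \cro{\out}^x_0 = \lambda^x_0(\out)$, and the inductive step expands $\cro{\out}^x_n$ using $\lambda^x_n(\out)=\out\,\alpha_n$ and the morphic action of $\lambda^x_0\circ\cdots\circ\lambda^x_{n-1}$, which splits its value as $\cro{\out}^x_{n-1}$ followed by $w_n$. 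Since $(\cro{\out}^x_n)_{n\ge 0}$ is $\pref$-increasing with $|\cro{\out}^x_n| \fonc \infty$, its supremum $\bigvee_n \cro{\out}^x_n$ is a well-defined infinite word, equal to $f(x)$ by definition, and it coincides with the concatenation $w_1 w_2 \cdots$. Hence $\simu$ outputs $f(x)$.

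I expect the main obstacle to be the bookkeeping in the telescoping identity: one must keep careful track of the order in which the substitutions $\lambda^x_0, \lambda^x_1, \dots$ are composed, and must remember that a call of $\aval$ evaluates its register expression under the register valuation reached at a specific position, so as to be sure that $w_n$ is the genuine increment of $\cro{\out}^x$ between steps $n{-}1$ and $n$, and not a value shifted by one step. Once \cref{claim:indu-2} is available and this identification is pinned down precisely, everything else is routine.
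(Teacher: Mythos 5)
Your strategy is the one the paper itself intends: the paper states this claim with no written proof of its own (the proof block that follows addresses the companion claim for $x \not\in \Dom{f}$), treating it as immediate from Claim~\ref{claim:indu-2} and the definition of the function computed by a \DSST{}. Both halves of your plan — no blocking because every $\lambda^x_i$ is defined when $x \in \Dom{f}$, then a telescoping identity showing that the concatenation of the per-iteration outputs is $\cro{\out}^x_n$ — are exactly that argument, and the first half is fine.

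The gap is in the step you label routine: the identification of $w_i$ with the word appended to $\out$ at step $i$. Write $\lambda^x_i(\out)=\out\,\alpha_i$. The genuine increment is $\cro{\alpha_i}^x_{i-1}$, i.e.\ $\alpha_i$ evaluated with the register contents \emph{before} the step-$i$ update, since $\cro{\out}^x_i=\cro{\out}^x_{i-1}\,\cro{\alpha_i}^x_{i-1}$. But as printed, $\simu$ calls $\aval(\alpha_i,i,\out)$, and Claim~\ref{claim:indu-2} says this outputs $\cro{\alpha_i}^x_{i}$, the evaluation \emph{after} that update. The two differ in general: for a copyless flush-and-reset step $\lambda^x_i(\out)=\out\,\reg$, $\lambda^x_i(\reg)=\movi$, the increment is the old content of $\reg$, whereas $\cro{\reg}^x_i=\movi$; so the identity $w_1\cdots w_n=\cro{\out}^x_n$ fails if you take Claim~\ref{claim:indu-2} together with the printed call at face value, which is what your second paragraph does. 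You even point at this one-step shift in your last paragraph, but it must be resolved rather than deferred: either read the main loop as calling $\aval(\alpha_i,i{-}1,\out)$ — clearly the intended semantics of Algorithm~\ref{algo:dsst-2dt}, and arguably a typo in the paper — or reformulate the correctness statement for $\aval$ accordingly. Once the evaluation index is pinned at $i{-}1$, your telescoping induction and the conclusion that the output is $\bigvee_n \cro{\out}^x_n = f(x)$ (using $|\cro{\out}^x_n| \fonc \infty$ from $x \in \Dom{f}$) go through as you describe.
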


\begin{claim} \label{claim:indu-3}
If $x \not \in \Dom{f}$, then either $\simu$
gets blocked at some point, or it produces a finite output.
\end{claim}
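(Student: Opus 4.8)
The plan is to split according to the reason why $x\notin\Dom{f}$. By the semantics of a \DSST{}, we have $x\notin\Dom{f}$ precisely when either (i) $\cro{\out}^x_i$ is undefined for some $i\ge 0$, or (ii) $\cro{\out}^x_i$ is defined for every $i\ge 0$ but $|\cro{\out}^x_i|\not\fonc+\infty$. Moreover $\cro{\out}^x_i$ is undefined for some $i$ if and only if $q^x_i=\bot$ for some $i$ (equivalently, $\lambda^x_i$ is undefined for some $i$), and in that situation there is a least index $i_0\ge 1$ with $q^x_{i_0}=\bot$, so that $\lambda^x_{i_0}=\lambda(q^x_{i_0-1},x[i_0])$ is undefined.

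First I treat case (i) and show that $\simu$ gets blocked. For every $j<i_0$ we have $q^x_j\ne\bot$, hence by \cref{claim:indu-2} all the calls to $\aval$ made during the iterations $1,\dots,i_0-1$ of the loop of $\simu$ terminate, and $\simu$ reaches iteration $i_0$. There it must evaluate $\lambda^x_{i_0}(\out)$, which is impossible since $\lambda^x_{i_0}$ is undefined; by the remark following Algorithm~\ref{algo:dsst-2dt}, the algorithm blocks at this point.

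Next I treat case (ii) and show that $\simu$ produces a finite output. Since $q^x_i\ne\bot$ for all $i\ge 0$, \cref{claim:indu-2} guarantees that every call to $\aval$ issued by $\simu$ terminates, so $\simu$ never blocks and runs through all iterations $i=1,2,\dots$. The bookkeeping carried out in the proof of the preceding claim --- which uses only that all the $\cro{\out}^x_i$ are defined, not that $x\in\Dom{f}$ --- shows, exactly as there, that the word produced by $\simu$ is the least word (finite or infinite) having every $\cro{\out}^x_i$ as a prefix. Since the lengths $|\cro{\out}^x_i|$ are non-decreasing (because $\cro{\out}^x_i\pref\cro{\out}^x_{i+1}$) and, in case (ii), do not tend to $+\infty$, this sequence of words is eventually constant, equal to some finite $w\in B^*$; hence $\simu$ outputs the finite word $w$.

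The only delicate point is the translation performed in the first paragraph, which turns the hypothesis $x\notin\Dom{f}$ into the case distinction (i)/(ii) phrased in terms of the run $(q^x_i)_i$ and of which piece of \DSST{} data becomes undefined, together with the re-use in case (ii) of the output bookkeeping from the preceding claim outside the regime $x\in\Dom{f}$; everything else is a direct application of \cref{claim:indu-2} and of the remark on blocking.
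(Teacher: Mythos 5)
Your proof is correct and follows essentially the same route as the paper: the same two-case split (some $q^x_i=\bot$, in which case $\simu$ blocks because $\lambda^x_{i_0}$ is undefined; or all $q^x_i\neq\bot$ but $\cro{\out}^x_i$ stabilizes at a finite word, which $\simu$ then outputs), with \cref{claim:indu-2} supplying termination of the $\aval$ calls. Your write-up merely spells out the monotonicity argument that the paper leaves implicit.
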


\begin{proof} Two cases are possible if $x \not \in \Dom{f}$. Either $q^x_i = \bot$
for some $i \ge 0$, and then $\simu$ gets blocked before position $i$.
Or  $q^x_i \neq \bot$  for all $i \ge 0$, but $\cro{\out}^x_i$ tends
to a finite word $w \in B^*$. It is easy to see that $\simu$
then produces $w$.
\end{proof}

\subparagraph*{Implementation
by a \tDT.} We now describe how to implement the
function $\simu$ by a \tDT{}. First note that the machine needs to
determine the substitution $\lambda^x_i$, hence
the state $q_{i-1}^x$ (possibly $\bot$) when in position $i \ge 1$. For this,
we add a \emph{lookbehind} feature to our \tDT,
which enables it to choose its transition
 depending on a regular property of the prefix up to the current position.
Over finite words, it is well known that given a \tDT{} with lookbehind,
one can build an equivalent \tDT{} (see e.g. the ``lookahead removal''
techniques of \cite{chytil1977serial}).
We claim that the very same proof can also be applied to
infinite words, since lookbehinds only
concern a finite prefix of the input.

We now show how this \tDT{} proceeds.
The main loop of $\simu$ on $i \ge 1$ is executed by moving right
on the input. At each position $i \ge 1$ the \tDT{} determines
the value $\alpha \in (\Regs \uplus B)^*$ such that $ \lambda^x_i(\out) = \out \alpha$
(if it is defined) by using its lookbehind.
Then it processes each character $a \in \alpha[1] \cdots \alpha[|\alpha|]$ (since $\alpha$
is a bounded information, this loop is hardcoded in the states without moving).
If $a \in B$, it is output. Otherwise, the machine determines
$\beta \defined \lambda^x_{i}(a)$, moves left
and executes $\aval(\beta,i{-}1,a)$ by doing recursive calls.
The case when $i = 0$ is detected by reading the initial letter $\lmark$.

However,  a \tDT{} (which has a bounded memory)
cannot keep track of the ``call stack'' for $\aval$. Thus how can
it determine the calling function when coming back from a
recursive call? In fact, due to the copyless behavior, this information
can be easily determined without a stack. Indeed, assume that the \tDT{}
has just finished executing $\aval(\beta,i{-}1,a)$, then it moves right:
\begin{itemize}
\item if $a = \out$, then the call to $\aval$ was done in $\simu$
which had computed the value $\lambda^x_i(\out) = \out \beta$.
In this case, the \tDT{}  pursues the execution of $\out$ in $i{+}1$;
\item otherwise there exists exactly one $\reg \in \Regs$
such that $a$ occurs in $\alpha \defined \lambda^x_i(\reg)$
(and this value can be determined).
Then $\aval(\beta,i{-}1,a)$ was called while executing
$\aval(\alpha,i,\reg)$ on position $i$
and $\beta=\lambda^x_i(a)$.
Since $a = \alpha[j]$ for a unique $j$,
then the index $j$ can be determined
and thus the \tDT{} can pursue the computation.
\end{itemize}
Furthermore if $x \in \Dom{f}$, this machine visits its whole input.

\subsection{From $K$-bounded \DSST{}s to copyless \DSST{}s}

Let $\trans = (A,B, Q,q_0, \delta, \Regs, \out, \lambda)$ be a $K$-bounded
\DSST{}, we show how to transform it in an equivalent copyless \DSST{}.
The proof techniques are adapted from those for \DSST{}
over finite words \cite{dartois2016aperiodic, doueneau2020register}
or $\omega$-streaming string transducers \cite{alur2012regular}.
However, these transformations usually add extra features to
the \DSST, such as non-determinism
or lookaheads, which we avoid here. Indeed, such features allow
to check $\omega$-regular properties of the input,
and it is precisely what we intend to get rid of in this paper.

\subparagraph*{Generic proof ideas.} 
In order to transform $\trans$ into a copyless \DSST,
the natural idea is to keep $K$ copies of each register.
However, we cannot maintain $K$ copies
all the time: suppose that $\reg$
is used to update both $\reg_1$ and $\reg_2$.
If we have $K$ copies of $\reg$,
we cannot produce, in a copyless way, $K$ copies of $\reg_1$
and $K$ copies of $\reg_2$.

This issue is solved
as follows. Recall that $q^x_i$ (resp. $\lambda^x_i$) is the state
reached by $\trans$ after reading $x[1{:}i]$ (resp. the transition
applied when reading $x[i]$) on input $x \in A^\omega$.
Let $\Reggs \defined \Regs \smallsetminus \{\out\}$
and $\reg \in \Reggs$.
If $i \ge 0$ and $x \in A^\omega$ are
such that $q^x_i$ is defined,
we want to maintain $\copies^x_i(\reg)$ copies of the value $\cro{\reg}^x_i$,
where $\copies_i^x(\reg) \le K$ is the number of times
that $\reg$ will be used in $\out$ after position $i$.
Formally we define the following:

\begin{definition} Let  $x \in A^\omega$ and $i \ge 0$
such that  $q^x_i$  is defined.
Given $\reg \in \Reggs$, we let:
\begin{equation*}
\copies^x_i(\reg) \defined \max
\{ | \lambda^x_{i+1} \circ \cdots \circ \lambda^x_j(\out)|_{\reg}
: j \ge i \tnorm{ and } \lambda^x_j \tnorm{ is defined} \}.
\end{equation*}
\end{definition}

\begin{claim} \label{claim:cxi}
$\copies^x_i(\reg) \le K$ since $\trans$ is $K$-bounded.
\end{claim}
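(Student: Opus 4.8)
The plan is to prove the inequality by directly unwinding the two definitions involved. Recall that $\copies^x_i(\reg)$ is the maximum, over all $j \ge i$ for which $\lambda^x_j$ is defined, of the quantity $|\lambda^x_{i+1} \circ \cdots \circ \lambda^x_j(\out)|_{\reg}$. So it suffices to fix one such $j$ and bound this single term by $K$; the claim then follows by taking the maximum.

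First I would record a routine observation about definedness: since $\delta$ and $\lambda$ have the same domain and $q^x_k = \delta(q^x_{k-1},x[k])$, once $\lambda^x_j$ is defined, every $\lambda^x_k$ with $1 \le k \le j$ is automatically defined as well (a one-way run reaching position $j$ has gone through all earlier positions). In particular the substitution $\sigma \defined \lambda^x_{i+1} \circ \cdots \circ \lambda^x_j$ is defined whenever $\lambda^x_j$ is, so the hypothesis that $\trans$ is a $K$-bounded \DSST{} applies to it: $\sigma$ is a $K$-bounded substitution. Instantiating the definition of $K$-boundedness of $\sigma$ with the register $\out$ yields that $\reg$ occurs at most $K$ times in $\sigma(\out) = \lambda^x_{i+1} \circ \cdots \circ \lambda^x_j(\out)$, which is precisely the desired bound on the generic term of the maximum.

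The only point that requires any care is the degenerate index range $j = i$: there the ``composition'' $\lambda^x_{i+1} \circ \cdots \circ \lambda^x_i$ is the empty composition, i.e. the identity substitution, which is not literally covered by the $K$-bounded hypothesis; but then $\sigma(\out) = \out$ and $|\out|_{\reg} = 0 \le K$, since $\reg \in \Reggs$ is distinct from $\out$. I do not expect any real obstacle here: the content of the claim is simply that the semantic counter $\copies$, on which the copyless-determinization construction relies, coincides with an occurrence count inside the composed register updates that $K$-boundedness controls by definition. Modulo the bookkeeping of index ranges and the empty-composition case, the proof is a one-line application of the hypothesis.
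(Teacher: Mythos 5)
Your proof is correct and is exactly the intended argument: the paper states this claim without proof, treating it as immediate from the definition of $\copies^x_i$ together with the definition of a $K$-bounded \DSST{}, which is precisely what you spell out (including the harmless $j=i$ case where the empty composition gives $|\out|_{\reg}=0$ since $\reg\in\Reggs$).
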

We now describe an inductive relation for
the $\copies^x_i(\reg)$. Intuitively, Lemma~\ref{lem:calcule-cix}
means that if  $\copies^x_i(\reg)$ copies of $\reg$ will be needed,
then in the next transition these copies can be reparted
between the registers. We postpone
the proof of Lemma~\ref{lem:calcule-cix}
to Subsubsection~\ref{sssec:proof:cix}.

\begin{lemma} \label{lem:calcule-cix} Let  $x \in A^\omega$ and $i \ge 0$
such that $q^x_{i+1}$ is defined. 
Then for all $\reg \in \Reggs$:
\begin{equation*}
\copies^x_i(\reg) = |\lambda^x_{i+1}(\out)|_{\reg} + 
\sum_{\regg \in \Reggs} |\lambda^x_{i+1}(\regg)|_{\reg} \times \copies^x_{i+1}(\regg).
\end{equation*}
\end{lemma}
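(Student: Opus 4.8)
The plan is to unfold the definition of $\copies^x_i(\reg)$ directly and push the first substitution $\lambda^x_{i+1}$ out of the maximum. Recall that $\copies^x_i(\reg)$ is the maximum over all $j \ge i$ with $\lambda^x_j$ defined of $|\lambda^x_{i+1}\circ\cdots\circ\lambda^x_j(\out)|_{\reg}$. The case $j = i$ contributes $|\out|_\reg = 0$ (since $\reg \in \Reggs$, i.e. $\reg \neq \out$), so it is harmless and we may as well take the maximum over $j \ge i+1$. For such $j$, write $\mu \defined \lambda^x_{i+2}\circ\cdots\circ\lambda^x_j$ (with $\mu$ the identity when $j = i+1$), so that $\lambda^x_{i+1}\circ\cdots\circ\lambda^x_j(\out) = \lambda^x_{i+1}(\mu(\out))$, but it is cleaner to go the other way: $\lambda^x_{i+1}\circ\cdots\circ\lambda^x_j(\out) = (\lambda^x_{i+1}\circ\cdots\circ\lambda^x_j)(\out)$, and since composition is associative we have this equals $\lambda^x_{i+1}\bigl((\lambda^x_{i+2}\circ\cdots\circ\lambda^x_j)(\out)\bigr)$ — wait, that's backwards. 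The correct grouping is $(\lambda^x_{i+1}\circ\lambda^x_{i+2}\circ\cdots\circ\lambda^x_j)(\out)$ applied as first $\lambda^x_j$ then $\cdots$ then $\lambda^x_{i+1}$, following the convention from Example~\ref{ex:subst} where $\sigma_1\circ\sigma_2$ means apply $\sigma_2$ first. So I would instead set $\nu \defined \lambda^x_{i+1}\circ\cdots\circ\lambda^x_j$ and observe $\nu = \lambda^x_{i+1}\circ\nu'$ where $\nu' \defined \lambda^x_{i+2}\circ\cdots\circ\lambda^x_j$. Then $\nu(\out) = \lambda^x_{i+1}(\nu'(\out))$.

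Next, the key combinatorial identity: for any word $w \in (B\uplus\Regs)^*$ and any substitution $\sigma \in \subst{\Regs}{B}$, the number of occurrences of $\reg$ in $\sigma(w)$ is $\sum_{\regg \in \Regs} |w|_{\regg}\cdot|\sigma(\regg)|_{\reg}$, because $\sigma$ replaces each occurrence of $\regg$ in $w$ by $\sigma(\regg)$ and fixes letters of $B$ (which contribute no $\reg$). Apply this with $\sigma = \lambda^x_{i+1}$ and $w = \nu'(\out)$: $|\nu(\out)|_\reg = \sum_{\regg\in\Regs}|\nu'(\out)|_{\regg}\cdot|\lambda^x_{i+1}(\regg)|_{\reg}$. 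Now split off $\regg = \out$: by the \DSST{} constraint $\lambda^x_{i+1}(\out) = \out\cdots$ and $\out$ does not occur in any $\lambda^x_{i+1}(\regg)$ for $\regg \neq \out$, and moreover $\nu'(\out)$ begins with $\out$ and contains exactly one $\out$ (an easy induction on $j$ using the same constraint), so $|\nu'(\out)|_{\out} = 1$. Hence $|\nu(\out)|_\reg = |\lambda^x_{i+1}(\out)|_\reg + \sum_{\regg\in\Reggs}|\nu'(\out)|_{\regg}\cdot|\lambda^x_{i+1}(\regg)|_{\reg}$.

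Finally, take the maximum over $j \ge i+1$. The term $|\lambda^x_{i+1}(\out)|_\reg$ does not depend on $j$, so it pulls out of the max. For the sum, I would like to exchange $\max_j$ and $\sum_{\regg}$, which is \emph{not} valid in general — but here it is, because for each fixed $\regg \in \Reggs$ the quantity $|\nu'(\out)|_{\regg} = |\lambda^x_{i+2}\circ\cdots\circ\lambda^x_j(\out)|_{\regg}$ is \emph{monotone nondecreasing} in $j$. This monotonicity is exactly the content of the $K$-boundedness / the structure already recorded implicitly: each further substitution can only increase the count of $\regg$ in the image of $\out$, since $\lambda^x_{j+1}(\out) = \out\cdots$ means the image of $\out$ only grows as a word under further application, hence $|\cdot|_{\regg}$ only grows. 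Wait — more carefully: $\lambda^x_{i+2}\circ\cdots\circ\lambda^x_{j+1}(\out) = \lambda^x_{i+2}\circ\cdots\circ\lambda^x_{j}(\lambda^x_{j+1}(\out))$, and $\lambda^x_{j+1}(\out) = \out\,\gamma$ for some $\gamma$, so the image is $(\lambda^x_{i+2}\circ\cdots\circ\lambda^x_j)(\out)$ followed by $(\lambda^x_{i+2}\circ\cdots\circ\lambda^x_j)(\gamma)$, which indeed has at least as many $\regg$'s. Given this monotonicity, all terms $|\nu'(\out)|_{\regg}$ for varying $\regg$ are simultaneously maximized at the largest admissible $j$ — more precisely, $\max_j \sum_\regg a_\regg(j) b_\regg = \sum_\regg (\max_j a_\regg(j)) b_\regg = \sum_\regg \copies^x_{i+1}(\regg)\cdot|\lambda^x_{i+1}(\regg)|_\reg$ since all $a_\regg(\cdot)$ are nondecreasing with a common domain. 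This yields the claimed formula.

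The main obstacle is the exchange of $\max$ and $\sum$: one must be careful that the monotonicity argument genuinely licenses it, and that $\copies^x_{i+1}(\regg)$ is attained at the \emph{same} position $j$ for all $\regg$ — which follows because the domain of $j$ (positions with $\lambda^x_j$ defined) is a downward-closed-to-the-right initial segment (an interval $[i{+}1, N]$ or $[i{+}1,\infty)$) and each relevant function is nondecreasing on it, so the supremum of each is attained (or approached) at the right end simultaneously. If the run is infinite in the defined positions, the maxima are suprema over $\Nat$ but are finite by Claim~\ref{claim:cxi}, and a nondecreasing $\Nat$-valued bounded sequence is eventually constant, so the supremum is a maximum attained at some common large $j$. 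The bookkeeping that $\nu'(\out)$ always contains exactly one $\out$ is the other thing to nail down, but it is a one-line induction from the definition of \DSST.
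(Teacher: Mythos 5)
Your proof is correct and follows essentially the same route as the paper's: the occurrence-counting identity for composed substitutions (the paper's Claim~\ref{claim:cal-somm}, with the $\out$-term split off using $|\nu'(\out)|_{\out}=1$), monotonicity of $j \mapsto |\lambda^x_{i+2}\circ\cdots\circ\lambda^x_j(\out)|_{\regg}$ coming from $\lambda(q,a)(\out)=\out\cdots$, and then attaining all the maxima simultaneously at a sufficiently large common $j$ to exchange $\max$ and $\sum$. The only differences are presentational (you prove the two auxiliary facts inline rather than as separate claims), and your explicit treatment of the $j=i$ term and of the finite-versus-infinite domain of $j$ is a harmless refinement of the paper's argument.
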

However, this number $c_i^x(\reg)$ cannot be determined
after reading only $x[1{:}i]$. It requires some information
about $x[i{+}1{:}]$ (that is is typically why we would
need a lookahead). Thus our copyless \DSST{} will have to
memorize a finite forest which
describes the possible non-deterministic choices
done to determine the values of $\copies^x_i(\reg)$.
The copyless \DSST{} will also keep track of the substitutions
applied along the branches of this forest.

\subparagraph*{Structure of the proof.} 
In order to make the construction of a copyless \DSST{}
more understandable, we first describe its behavior
in a high-level algorithmic fashion in Subsubsection~\ref{sssec:dsst-info}.
Then we justify in Subsubsection~\ref{sssec:dsst-how} that this
algorithm can be implemented by a copyless \DSST{}.
We finally give in Subsubsection~\ref{sssec:proof:cix} the proof
of  Lemma~\ref{lem:calcule-cix}.

\subsubsection{Algorithmic description of the copyless \DSST}

\label{sssec:dsst-info}

We describe informally  the 
behavior of the copyless \DSST, denoted $\trans'$. The goal of
this section is to show the main ideas of the construction,
without dealing with implementation details
which shall be explained in Subsubsection~\ref{sssec:dsst-how}.

The key idea is that $\trans'$, after reading position 
$i \ge 0$ of $x \in A^\omega$, keeps track of a \emph{decomposition}
(see Definition~\ref{def:decompose})
which describes the substitution $\cro{\cdot}^x_i$
as a composition of $K$-bounded substitutions
along the branches of a forest.
We use the usual vocabulary for describing trees
and forests: a \emph{leaf} is a node which has children;
the \emph{depth} of a node is defined inductively when starting from
the nodes with depth $0$ (also called the \emph{roots});
the \emph{height} of the forest is the maximal depth of a node.

\begin{definition}[Decomposition] \label{def:decompose}
Let  $x \in A^\omega$ and $i \ge 0$ be
such that $\cro{\cdot}^x_i$  is defined.
A \emph{decomposition} of $\cro{\cdot}^x_i$ is:
\begin{enumerate}
\item a sequence $0 = i_0 < \cdots < i_m = i$
of positions;
\item a sequence $\sigma_1,\dots, \sigma_m \in \subst{\Reggs}{B}$
of $K$-bounded substitutions such that for all 
$1 \le \ell \le m$, $\sigma_\ell =\lambda^x_{i_{\ell-1}+1} \circ \cdots \circ \lambda^x_{i_{\ell}}$
restricted to $\Reggs$;
\item a finite forest $F$ whose nodes 
are labelled by functions $g: \Reggs \fonc [0{:}K]$ such that:
\begin{enumerate}
\item \label{cond:tree-a} all the leaves have same depth $m$ and distinct labels;
\item  \label{cond:tree-c} there exists a leaf whose label is $\copies^x_i$;
\item \label{cond:tree-b} if $h$ labels a node of depth $1 \le \ell \le m$
and $g$ labels its parent, then for all $\reg \in \Reggs$:
\begin{equation}
\label{eq:cond:tree-b}
g(\reg) =  \sum_{\regg \in \Reggs} |\sigma_{\ell}(\regg)|_{\reg} \times h(\regg).
\end{equation}
\end{enumerate}
\end{enumerate}
\end{definition}

\begin{example} \label{ex:decompo}
Assume that  $\Reggs = \{\reg, \regg\}$, $K = 5$ and
$\lambda^x_1(\reg) = a \reg$, $\lambda^x_1(\regg) = b\regg$;
$\lambda^x_2(\reg) = \regg a \regg$, $\lambda^x_2(\regg) = \reg b$;
$\lambda^x_3(\reg) = \reg $, $\lambda^x_3(\regg) = \regg$.
We describe a possible decomposition of $\cro{\cdot}^x_3$:
\begin{enumerate}
\item the sequence $i_0 = 0$, $i_1 = 1$, $i_2=3$;
\item  the substitutions $\sigma_1, \sigma_2$ defined
by $\sigma_1(\reg)  = a\reg $, $\sigma_1(\regg) =\reg b$ and
 $\sigma_2(\reg) = \regg a \regg$,
 $\sigma_2(\regg)  = \reg b$;
\item the forest depicted in Figure~\ref{fig:guess-forest}
(note that Equation~\ref{eq:cond:tree-b} holds).
\end{enumerate}
\end{example}

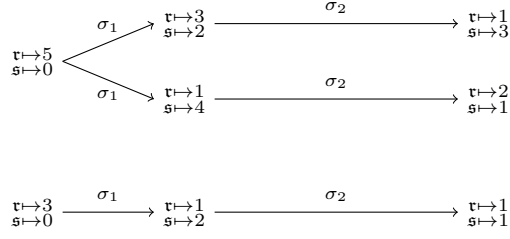
\begin{figure}[h!]

\centering
\begin{tikzpicture}{scale=1}

	\node[left] at (0,2) {$\substack{\reg \mapsto 5 \\ \regg \mapsto 0}$};	
	\draw[->] (0,2) -- (1.2,1.5) node[midway,below] {$\substack{\sigma_1}$};
	\node[left] at (2,1.5) {$\substack{\reg \mapsto 1 \\ \regg \mapsto 4}$};
	\draw[->] (0,2) -- (1.2,2.5) node[midway,above] {$\substack{\sigma_1}$};
	\node[left] at (2,2.5) {$\substack{\reg \mapsto 3 \\ \regg \mapsto 2}$};
	\draw[->] (2,1.5) -- (5.2,1.5) node[midway,above] {$\substack{\sigma_2}$};
	\node[left] at (6,1.5) {$\substack{\reg \mapsto 2 \\ \regg \mapsto 1}$};	
	\draw[->] (2,2.5) -- (5.2,2.5) node[midway,above] {$\substack{\sigma_2}$};
	\node[left] at (6,2.5) {$\substack{\reg \mapsto 1 \\ \regg \mapsto 3}$};

	\node[left] at (0,0) {$\substack{\reg \mapsto 3 \\ \regg \mapsto 0}$};	
	\draw[->] (0,0) -- (1.2,0) node[midway,above] {$\substack{\sigma_1}$};
	\node[left] at (2,0) {$\substack{\reg \mapsto 1 \\ \regg \mapsto 2}$};	
	\draw[->] (2,0) -- (5.2,0) node[midway,above] {$\substack{\sigma_2}$};
	\node[left] at (6,0) {$\substack{\reg \mapsto 1 \\ \regg \mapsto 1}$};

\end{tikzpicture}

\caption{\label{fig:guess-forest}
The forest of the decomposition described in Example~\ref{ex:decompo}
(roots are on the left).}

\end{figure}

\subparagraph*{Information stored by $\trans'$.}
The configuration of $\trans'$ when in position $i \ge 0$ of $x \in A^\omega$
stores some decomposition $D^x_i$ of $\cro{\cdot}^x_i$ as follows,
with $m \le L \defined (K+1)^{|\Reggs|}$:
\begin{enumerate}
\item  the sequence of positions $i_0 < \cdots < i_m$
is not stored by $\trans'$ (its existence
is an invariant which will be preserved along the computation);
\item the sequence of substitutions $\sigma_1, \cdots, \sigma_m$
is stored depending on the forest $F$, see below;
\item the forest $F^x_i$ associated to $D^x_i$.
$F^x_i$  has depth $m \le L$, hence its structure
is a bounded information which can be stored in the state.
Let us now explain how we deal with the substitutions.
If $g$ labels a node of depth $1 \le \ell \le m$ in $F^x_i$, then
$\trans'$ ``stores $g(\reg)$ virtual copies of  $\sigma_{\ell}(\reg)$''
for all $\reg \in \Reggs$. What we mean by ``storing
virtual copies'' is explained in Subsubsection~\ref{sssec:dsst-how}
(the ideas  are those of \cite{dartois2016aperiodic, doueneau2020register}).
\end{enumerate}
Furthermore, the register $\out$ of $\trans'$ will
contain $\cro{\out}^x_i$.

\subparagraph*{Initialization of the decomposition.}
When $i=0$, the configuration of
$\trans'$ describes the following decomposition $D^x_0$
of $\cro{\cdot}^x_0$ with $m = 0 \le L$:
\begin{enumerate}
\item the sequence $i_0 = 0$ (not stored explicitly);
\item no substitutions since $m = 0$;
\item the forest of depth $0$ (i.e. it has only roots)
with $L = (K+1)^{|\Reggs|}$ nodes labelled by
all possible functions $\Reggs \fonc [0{:}K]$.
Conditions~\ref{cond:tree-a} and \ref{cond:tree-b}
of Definition~\ref{def:decompose}
are obviously true. Condition~\ref{cond:tree-c}
follows from Claim~\ref{claim:cxi}.
\end{enumerate}

\subparagraph*{Updates of the decomposition.}
Assume that $\lambda^x_{i+1}$ is defined (otherwise $\trans'$
gets blocked), and that $\trans'$ stores
a decomposition $D^x_i$ of $\cro{\cdot}^x_i$.
Then $\trans'$ transforms $D^x_i$ into a decomposition
of $\cro{\cdot}^x_{i+1}$ as follows:
\begin{enumerate}
\item the new sequence of positions is $i_0 = 0 < \dots < i_m = i < i_{m+1} = i{+}1$.
If $m{+}1 > L$, we first completely build this new decomposition,
and then we apply the \emph{merging operation} detailed in the
next paragraph in order to reduce  the depth to $L$;
\item the sequence $\sigma_1, \dots, \sigma_m, \sigma_{m+1}$
where $\sigma_{m+1} \defined \lambda^x_{i+1}$.
Note that the substitution $\sigma_{m+1} $ is $K$-bounded since $\trans$ is so;
\item before describing the new forest built from $F^x_i$, let us 
give some intuitions. Recall that $\sigma_{m+1}(\out) = \out \alpha$
for some $\alpha \in (B \uplus \Reggs)^*$.
We want $\trans'$ to add $\cro{\alpha}^x_i$ to $\out$ when reading $x[i{+}1]$.
For this purpose, it needs to determine the value
$\lambda^x_0 \circ \sigma_1 \circ \cdots \circ \sigma_m(\reg) = \cro{\reg}^x_i$
for each $\reg \in \Reggs$ which occurs in $\alpha$. We thus define
the functions $\used_{\ell}: \Reggs \fonc [0{:}K]$
for $0 \le \ell \le m$, which describe ``how many virtual copies'' of the
$\sigma_{\ell}(\reg)$ will be consumed to
compute $\cro{\alpha}^x_{i+1}$. They are built by a decreasing
induction:
\begin{itemize}
\item $\used_m(\reg) \defined  |\sigma_{m+1}(\out)|_{\reg}$ for all $\reg \in \Reggs$;
\item $\used_{\ell}(\reg) \defined 
\sum_{\regg \in \Reggs} |\sigma_{\ell{+}1}(\regg)|_{\reg} \times \used_{\ell +1}(\regg)$
for all $\reg \in \Reggs$, if $0 \le \ell \le m{-}1$.
\end{itemize}

We now want to subtract the $\used_{\ell}$ to the labels
of the nodes, since they describe the number of copies
that we need to ``consume'' to compute
$\lambda^x_0 \circ \sigma_1 \circ \cdots \circ \sigma_m(\alpha)$.

\begin{claim} \label{claim:tilde} Assume that
$h$ labels a node of depth $1 \le \ell \le m$ in 
$F^x_i$ and let $g$ label its parent.
Define $\new{h} \defined h {-} \used_{\ell} $
 and $\new{g} \defined g {-} \used_{\ell-1}$, then:
\begin{itemize}
\item if $\new{h} \ge 0$, then $\new{g} \ge 0$;
\item$\new{g}$ and $\new{h}$
verify Equation~\ref{eq:cond:tree-b} in Condition~\ref{cond:tree-b}.
\end{itemize}
\end{claim}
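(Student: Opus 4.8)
The plan is to observe that this claim is purely arithmetic: the recursion defining the functions $\used_\ell$ is governed by exactly the same linear relation that Condition~\ref{cond:tree-b} imposes between the label $g$ of a node of depth $\ell-1$ and the label $h$ of one of its children at depth $\ell$, namely the one whose ``matrix'' coefficients are $|\sigma_\ell(\regg)|_\reg$. So I would establish the second bullet (that $\new g$ and $\new h$ verify Equation~\ref{eq:cond:tree-b}) first, by a direct subtraction, and then read off the first bullet as an immediate consequence.

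For the second bullet: fix $\reg \in \Reggs$ and expand $\new g(\reg) = g(\reg) - \used_{\ell-1}(\reg)$. Since $g$ labels the parent of the node labelled $h$ at depth $\ell$, Condition~\ref{cond:tree-b} gives $g(\reg) = \sum_{\regg \in \Reggs} |\sigma_\ell(\regg)|_\reg\, h(\regg)$. Moreover, because $1 \le \ell \le m$ we have $0 \le \ell-1 \le m-1$, so the recursive clause defining $\used$ applies at index $\ell-1$ and yields $\used_{\ell-1}(\reg) = \sum_{\regg \in \Reggs} |\sigma_\ell(\regg)|_\reg\, \used_\ell(\regg)$. Subtracting and factoring term by term, $\new g(\reg) = \sum_{\regg \in \Reggs} |\sigma_\ell(\regg)|_\reg\,\bigl(h(\regg) - \used_\ell(\regg)\bigr) = \sum_{\regg \in \Reggs} |\sigma_\ell(\regg)|_\reg\, \new h(\regg)$, which is precisely Equation~\ref{eq:cond:tree-b} for $\new g$ and $\new h$.

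For the first bullet: assume $\new h \ge 0$, i.e.\ $\new h(\regg) \ge 0$ for every $\regg \in \Reggs$. In the identity just obtained, each summand $|\sigma_\ell(\regg)|_\reg\, \new h(\regg)$ is a product of two non-negative integers, hence $\new g(\reg) \ge 0$ for every $\reg \in \Reggs$, i.e.\ $\new g \ge 0$.

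I do not expect a genuine obstacle here; the only thing to be careful about is the index bookkeeping — checking that the recursion for $\used_\ell$ is invoked at the correct index $\ell-1$, which is exactly what the hypothesis $1 \le \ell \le m$ guarantees. It is also worth noting in passing that $\new h \le h$ and $\new g \le g$ componentwise (since the $\used$'s are non-negative), so that whenever they are non-negative they still take values in $[0{:}K]$, as will be needed when assembling the new forest.
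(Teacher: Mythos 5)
Your proof is correct and follows essentially the same route as the paper: apply Condition~\ref{cond:tree-b} to the parent--child pair, apply the recursive definition of $\used$ at index $\ell-1$, and subtract term by term to get Equation~\ref{eq:cond:tree-b} for $\new{g}$ and $\new{h}$, with non-negativity of $\new{g}$ then immediate from the non-negative coefficients. The only difference is cosmetic: you spell out the first bullet and the index bookkeeping explicitly, which the paper leaves implicit (its proof even shifts the indices by one in notation), so nothing is missing.
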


\begin{proof} By Condition~\ref{cond:tree-b} on $F$ we have  
$
g(\reg) =  \sum_{\regg \in \Reggs} |s_{\ell+1}(\regg)|_{\reg} \times h(\regg)
$,
therefore we get 
$g(\reg) - \used_{\ell}(\reg)=  \sum_{\regg \in \Reggs}
|s_{\ell+1}(\regg)|_{\reg} \times (h(\regg) - \used_{\ell+1}(\regg))$
by definition of $\used$.
\end{proof}

We now describe in three steps how to build the new forest from $F^x_i$:
\begin{description}
\item[Step 1: consuming the $\used_{\ell}(\reg)$.]
We replace each label $g$ in $F^x_i$ by
$\new{g}$ from Claim~\ref{claim:tilde},
which may create negative labels.
But since one leaf is labelled by $\copies^x_i$,
then by Lemma~\ref{lem:calcule-cix} we see that 
$\new{\copies^x_i} \ge 0$. Hence by Claim~\ref{claim:tilde},
there is a branch whose labels are nonnegative.
The copyless \DSST{} shall use the ``virtual copies of the $\sigma_{\ell}(\reg)$''
stored along this branch to output $\lambda^x_i \circ \sigma_1 \circ \cdots \circ \sigma_m(\alpha)$
in a copyless fashion (see Subsubsection~\ref{sssec:dsst-how});
\item[Step 2: adding level $m{+}1$.]
For each leaf now labelled by $\new{g}$, we
create several children labelled by the
$h: \Reggs \fonc [0{:}K]$ such that for all
$\reg \in \Reggs$ we have:
\begin{equation*}
\new{g}(\reg) =  \sum_{\regg \in \Reggs} |\sigma_{m+1}(\regg)|_{\reg} \times h(\regg).
\end{equation*}
For all $\reg \in \Regs$ and all created leaf labelled by $h$, the \DSST{}
creates $h(\reg)$ virtual copies of $\sigma_{m+1}(\reg) \in (\Reggs \uplus B)^*$
(which is a bounded information). Note that two 
created leaves cannot have the same label (otherwise
it would be the case for their parents).
Finally by Lemma~\ref{lem:calcule-cix}
 the node labelled by $\new{\copies^x_i}$ has a leaf labelled by $\copies^x_{i+1}$;

\item[Step 3: removing errors.] Now it remains to deal with the fact that
some nodes may have negative labels,
and some leaves may have depth $\ell < m{+}1$. We thus
remove all the nodes labelled by functions
which take negative values, and their descendants. Finally, we trim the 
resulting forest by removing all nodes which are not ancestors
of some leaf of depth $m{+}1$ (i.e. a leaf which has created in Step~2).
It is easy to see that conditions~\ref{cond:tree-a}, \ref{cond:tree-b} and \ref{cond:tree-c}
now hold.
\end{description}

\end{enumerate}

\subparagraph*{Merging operation: removing single children.}
Let us now explain how to reduce the height of the decomposition
obtained in the previous paragraph when  $m{+}1 > L$.

\begin{claim} If $m{+}1 > L$, there exists
 $1 \le \ell \le m$ such that all nodes
of depth $\ell$ have exactly one children
(in the forest build in the previous paragraph).
\end{claim}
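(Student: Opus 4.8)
The plan is to argue by contradiction, by a level-by-level count of the nodes of the forest produced by the update (Steps~1--3 of the preceding paragraph, before merging). Write $n_\ell$ for the number of nodes of depth $\ell$ in that forest. By Condition~\ref{cond:tree-a} all of its leaves have the common depth $m{+}1$ and pairwise distinct labels, and Condition~\ref{cond:tree-c} guarantees that it is non-empty (it contains a leaf labelled by $\copies^x_{i+1}$). The first thing I would record is that, since the leaves carry pairwise distinct labels drawn from the $L=(K+1)^{|\Reggs|}$ maps $\Reggs\fonc[0{:}K]$, we have $n_{m+1}\le L$. The second is that every node of depth $\ell\le m$ is \emph{not} a leaf (the leaves being exactly the depth-$(m{+}1)$ nodes), hence has at least one child; in particular the forest has a root and $1\le n_0\le n_1\le\cdots\le n_{m+1}$. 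Note also that $m{+}1>L\ge 1$ forces $m\ge 1$, so the range $\{1,\dots,m\}$ in the statement is non-empty.

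Now suppose, towards a contradiction, that for \emph{every} $\ell$ with $1\le\ell\le m$ there is a node of depth $\ell$ with at least two children. Since each node of depth $\ell\le m$ has at least one child, summing the number of children over the nodes of depth $\ell$ gives $n_{\ell+1}\ge (n_\ell-1)+2=n_\ell+1$ for each such $\ell$. Telescoping over $\ell=1,\dots,m$ yields $n_{m+1}\ge n_1+m\ge 1+m$. But $m{+}1>L$ forces $m\ge L$ (both being integers), so $n_{m+1}\ge 1+L>L$, contradicting $n_{m+1}\le L$. Hence there is some $\ell\in\{1,\dots,m\}$ such that every node of depth $\ell$ has exactly one child, which is precisely the claim.

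This is an elementary telescoping estimate, so I do not expect a genuine obstacle; the only points that need care are the two preliminary observations — that after Steps~1--3 the leaves all genuinely sit at the common depth $m{+}1$ (so that no node of depth $\le m$ is prematurely a leaf, which is what lets us replace ``not exactly one child'' by ``at least two children''), and that the forest is non-empty. Both are part of the invariants of Definition~\ref{def:decompose} and are re-established at the end of Step~3, so they may be invoked directly.
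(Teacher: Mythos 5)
Your proof is correct and follows essentially the same route as the paper's: assume every level $1\le\ell\le m$ has a node with at least two children, use the fact that all leaves sit at depth $m{+}1$ to force more than $L$ leaves, and contradict the pairwise-distinctness of leaf labels (of which there are at most $L=(K+1)^{|\Reggs|}$). Your write-up merely makes the level-by-level count $n_{\ell+1}\ge n_\ell+1$ explicit where the paper states the conclusion directly.
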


\begin{proof} Assume that for all $1 \le \ell \le m$, some node of depth
$\ell$ has at least two children. Since $m{+}1 > L$ and all leaves have
the depth $m{+}1$, then our forest has
more than $L$ leaves, which contradicts the fact that
two leaves cannot have the same label.
\end{proof}
The main idea is to ``merge'' $\sigma_{\ell}$ and $\sigma_{\ell + 1}$
(which exists since $1 \le \ell \le m$).
The decomposition of the previous paragraph is updated
as follows to build $D^x_{i+1}$
\begin{enumerate}
\item the positions  become $i_0 < i_1 < \cdots < i_{\ell-1} < i_{\ell+1} < \cdots <  i_{m+1}$
\item  the substitutions become
$\sigma_1, \dots, \sigma_{\ell-1}, \sigma_{\ell} \circ \sigma_{\ell +1}, \sigma_{\ell+2}, \dots, \sigma_{m+1}$
(note that $\sigma_{\ell} \circ \sigma_{\ell +1}$ is $K$-bounded
and corresponds to the restriction of $\lambda^x_{i_{\ell-1}+1} \circ \cdots
\circ \lambda^x_{i_{\ell+1}}$);
\item before modifying the forest, we first show the following:
\begin{claim} \label{claim:deux-somm}
Assume that $h$ labels a node of depth $\ell{+}1$,
and let $g$ be the label of its grandparent
(for the forest built by the previous paragraph).
Then for all $\reg \in \Reggs$:
\begin{equation*}
g(\reg) =  \sum_{\regg \in \Reggs} |\sigma_\ell \circ \sigma_{\ell+1}(\regg)|_{\reg} \times h(\regg).
\label{eq:cond:tree-b}
\end{equation*}
\end{claim}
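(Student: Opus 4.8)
The plan is to derive the claim from two applications of Condition~\ref{cond:tree-b} to the forest built in the previous paragraph (which satisfies Condition~\ref{cond:tree-b}), combined with the elementary identity describing how register-occurrence counts behave under composition of substitutions. First I would name the relevant chain of three nodes: let $v$ be the node of depth $\ell{+}1$ labelled by $h$, let $w$ be its parent, of depth $\ell$, with label $k$, and let $u$ be its grandparent, of depth $\ell{-}1$, with label $g$. Here $1 \le \ell \le m$ is the merge level fixed by the preceding claim, so $w$ (depth $\ell \ge 1$) and $u$ (depth $\ell{-}1 \ge 0$) both exist, and both $\sigma_\ell$ and $\sigma_{\ell+1}$ belong to the sequence $\sigma_1, \dots, \sigma_{m+1}$ carried by that forest.

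Applying Condition~\ref{cond:tree-b} to the edge $w \to v$ --- whose associated substitution is $\sigma_{\ell+1}$, since the index is the depth of the child --- gives, for every $\regg \in \Reggs$,
\[
k(\regg) = \sum_{\reggg \in \Reggs} |\sigma_{\ell+1}(\reggg)|_{\regg}\, h(\reggg);
\]
applying it to the edge $u \to w$, with associated substitution $\sigma_{\ell}$, gives, for every $\reg \in \Reggs$,
\[
g(\reg) = \sum_{\regg \in \Reggs} |\sigma_{\ell}(\regg)|_{\reg}\, k(\regg).
\]
Substituting the first identity into the second and exchanging the order of summation, I obtain
\[
g(\reg) = \sum_{\reggg \in \Reggs} h(\reggg) \sum_{\regg \in \Reggs} |\sigma_{\ell+1}(\reggg)|_{\regg}\, |\sigma_{\ell}(\regg)|_{\reg}.
\]

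It then remains to recognise the inner sum. The one genuinely combinatorial ingredient is the identity $|\sigma \circ \tau(\reggg)|_{\reg} = \sum_{\regg \in \Reggs} |\tau(\reggg)|_{\regg}\, |\sigma(\regg)|_{\reg}$, valid for all $\sigma, \tau \in \subst{\Reggs}{B}$ and all $\reg, \reggg \in \Reggs$: it holds because $\sigma$, extended as a morphism, rewrites each occurrence of a register $\regg$ in $\tau(\reggg)$ into the word $\sigma(\regg)$ and leaves the letters of $B$ untouched, and a letter of $B$ contributes no occurrence of the register $\reg$. Using it with $\sigma = \sigma_\ell$, $\tau = \sigma_{\ell+1}$, and renaming $\reggg$ into $\regg$, the display above becomes $g(\reg) = \sum_{\regg \in \Reggs} |\sigma_{\ell} \circ \sigma_{\ell+1}(\regg)|_{\reg}\, h(\regg)$, which is precisely the claim. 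I do not expect a real obstacle here: the only points demanding care are the index/depth bookkeeping (ensuring that $w$, $u$, $\sigma_\ell$ and $\sigma_{\ell+1}$ are all defined) and the orientation of composition --- recall from Example~\ref{ex:subst} that $\sigma_\ell \circ \sigma_{\ell+1}$ applies $\sigma_{\ell+1}$ first and then $\sigma_\ell$, which matches the order in which the two edges collapsed by the merging operation are traversed from $u$ down to $v$.
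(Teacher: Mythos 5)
Your proof is correct and follows essentially the same route as the paper: apply Condition~\ref{cond:tree-b} to the two edges at depths $\ell$ and $\ell{+}1$, exchange the order of summation, and identify the inner sum via the composition-counting identity, which is exactly Claim~\ref{claim:cal-somm} (you re-derive it by the morphism argument rather than citing it, but it is the same ingredient).
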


\begin{proof} By Claim~\ref{claim:cal-somm}, we have:
\begin{equation*}
\sum_{\regg \in \Reggs} |\sigma_\ell \circ \sigma_{\ell+1}(\regg)|_{\reg} \times h(\regg)
= \sum_{\regg \in \Reggs} \sum_{\reggg \in \Reggs}
|\sigma_\ell(\reggg)|_{\reg} \times |\sigma_{\ell+1}(\regg)|_{\reggg} \times h(\regg)
\end{equation*}
The result follows by permuting the sums and 
using condition~\ref{cond:tree-b} in $\ell$ and $\ell{+}1$.
\end{proof}

We thus transform the forest by merging
each node of depth $\ell$ with its single child
of depth $\ell {+}1$  (labelled by some $g$),
and labelling the resulting node by $g$.
Condition~\ref{cond:tree-b} still holds because
of Claim~\ref{claim:deux-somm}. Note that $\trans'$
also has to compute and store several copies
of $\sigma_\ell \circ \sigma_{\ell+1}(\reg)$ for $\reg \in \Regs$.
Subsubsection~\ref{sssec:dsst-how} describes how to 
perform this update in a copyless fashion when starting from
copies of $\sigma_\ell(\regg)$ and $\sigma_{\ell+1}(\reg)$.
\end{enumerate}

Finally $\trans'$ has the same domain than
$\trans$, since it gets blocked  at position $i \ge 0$
if $\lambda^x_i$ is undefined, and otherwise it stores $\cro{\out}^x_i$ 
in its output $\out$.

\subsubsection{Implementation details}

\label{sssec:dsst-how}

In the previous subsection, we have described
the behavior of the copyless \DSST{} without detailing
how, for each $g$ labelling a node of depth $1 \le \ell \le m$
and $\reg \in \Reggs$, this machine could
``store $g(\reg)$ virtual copies of $\sigma_{\ell}(\reg) \in (\Reggs \uplus B)^*$''.
We now explain it in detail.

\subparagraph*{Storing $K$-bounded substitutions.}
We describe how a copyless \DSST{} can
store $K$-bounded substitutions (the ideas
are mainly those of \cite{doueneau2020register}).
Let $\sigma \in \subst{\Reggs}{B}$ be a $K$-bounded substitution,
then for all $\reg \in \Reggs$ there exists
$n \le K|\Reggs|$ such that
$\sigma(\reg) = \alpha_0 \regg_1 \alpha_1 \cdots \regg_n \alpha_n$
with $\alpha_i \in B^*$, $\reggg_i \in \Reggs$.
We mainly have two informations in this expression:
\begin{itemize}
\item the sequence $\regg_1 \regg_2 \cdots \regg_n$
which describes where the former
registers must be used;
\item the sequence $\alpha_1 \cdots \alpha_n$ of 
(unbounded) words from $B^*$. Each of them
must be stored in a register. Furthermore,
we must keep track of the ``mapping'' between the
registers and the $\alpha_i$, which is a bounded information.
\end{itemize}
We can now explain what we mean by ``storing $g(\reg)$
virtual copies of $\sigma(\reg) \in (\Reggs \uplus B)^*$'':
it means that we store $g(\reg)$ copies (in $g(\reg)$
distinct registers) of each $\alpha_i$.
Note that if $g(\reg) \le K$, we need at most
$K^2|\Reggs|$ registers. The sequence 
$\regg_1 \regg_2 \cdots \regg_n$ is stored the state.

\subparagraph*{Composing $K$-bounded substitutions.}
With this representation, $\trans'$ is able to simulate 
the composition of two $K$-bounded substitutions,
when their composition is itself $K$-bounded (which is
always the case in the above merging operation).

\begin{claim} \label{claim:copyless-merge}
Assume that $\sigma,\sigma' \in \subst{\Reggs}{B}$ and $\sigma \circ \sigma'$
are $K$-bounded and that $\trans'$ stores:
\begin{itemize}
\item $g(\reg)$ virtual copies of $\sigma(\reg)$ for $\regg \in \Reggs$, where $g: \Reggs \fonc [0{:}K]$;
\item $g'(\regg)$ virtual copies of $\sigma'(\regg)$ for $\regg \in \Reggs$, where $g': \Reggs \fonc [0{:}K]$;
\end{itemize}
such that $g(\reg) =  \sum_{\regg \in \Reggs} |\sigma'(\regg)|_{\reg} \times g'(\regg)$
Then there exists a copyless update of $\trans'$ which
allows to store $g'(\regg)$ copies of $\sigma \circ \sigma'(\regg)$ for $\regg \in \Reggs$.
\end{claim}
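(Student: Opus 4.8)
The plan is to describe an explicit copyless sequence of register operations that, starting from the stored virtual copies of $\sigma$ and $\sigma'$, produces the stored virtual copies of $\sigma \circ \sigma'$. Recall that ``storing $g(\reg)$ virtual copies of $\sigma(\reg)$'' means we hold, in $g(\reg)$ distinct registers, the content of each maximal $B^*$-block occurring in $\sigma(\reg)$, together with the bounded skeleton $\regg_1 \cdots \regg_n$ recorded in the state. So for each target register $\regg \in \Reggs$ I must build $g'(\regg)$ disjoint copies of $\sigma \circ \sigma'(\regg)$, meaning I must produce $g'(\regg)$ disjoint copies of each $B^*$-block of $\sigma\circ\sigma'(\regg)$. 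Writing $\sigma'(\regg) = \beta_0 \reggg_1 \beta_1 \cdots \reggg_k \beta_k$ with $\beta_j \in B^*$ and $\reggg_j \in \Reggs$, the value $\sigma\circ\sigma'(\regg)$ is obtained by substituting $\sigma(\reggg_j)$ for each $\reggg_j$; its $B^*$-blocks are therefore the $\beta_j$ (which are bounded, hence can be hard-coded as constants in the update) glued around the blocks of the $\sigma(\reggg_j)$.

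First I would verify the counting bookkeeping: the content of each $B^*$-block of a given $\sigma(\reg)$ must be reused once for every occurrence of $\reg$ across all the $\sigma'(\regg)$ weighted by how many copies of $\sigma'(\regg)$ we want, i.e.\ exactly $\sum_{\regg} |\sigma'(\regg)|_{\reg} \times g'(\regg) = g(\reg)$ times, which is precisely the number of copies of $\sigma(\reg)$ that $\trans'$ has in stock by hypothesis. So there is no deficit and no surplus: the copies balance out exactly, which is what makes the update copyless. Concretely, I would enumerate, for each $\regg$ and each of its $g'(\regg)$ desired copies and each position $j$ in the skeleton of $\sigma'(\regg)$, one occurrence of some $\reggg_j = \reg$, and assign to it a fresh, not-yet-consumed copy of the blocks of $\sigma(\reg)$; the assignment is injective and onto the $g(\reg)$ available copies by the displayed identity.

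Then I would write the actual substitution update of $\trans'$: for each new register meant to hold a $B^*$-block of the $c$-th copy of $\sigma\circ\sigma'(\regg)$, its value is a concatenation of (i) constants $\beta_j \in B^*$ from $\sigma'(\regg)$, and (ii) old registers holding the blocks of the appropriate assigned copy of $\sigma(\reggg_j)$ — each such old register used exactly once, by the injectivity above. Since every old register appears at most once on the right-hand side, the update is copyless; and since the composed skeleton $\regg_1 \cdots$ of $\sigma\circ\sigma'$ is a bounded object computable from the two given skeletons, it can be recorded in the new state. After this update all old registers can be discarded. The only mildly delicate point — and the part I would spell out most carefully — is the combinatorial matching that shows the global reuse count of each old block register is exactly one: this is where the hypothesis $g(\reg) = \sum_{\regg} |\sigma'(\regg)|_{\reg}\, g'(\regg)$ is used in an essential way, and it is exactly the inductive relation of Lemma~\ref{lem:calcule-cix}, so the same arithmetic that justifies maintaining the decomposition also justifies this merging step.
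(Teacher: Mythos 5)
Your overall strategy is the paper's: the whole content of the claim is the exact-balance count, namely that producing $g'(\regg)$ copies of $\sigma\circ\sigma'(\regg)$ consumes exactly $|\sigma'(\regg)|_{\reg}\times g'(\regg)$ copies of $\sigma(\reg)$, and summing over $\regg$ this is exactly the $g(\reg)$ copies in stock; your injective assignment of stored block-registers to the new copies is just a more explicit phrasing of that argument, and your remark that the composed skeleton is a bounded object storable in the state (because $\sigma\circ\sigma'$ is $K$-bounded) is fine.

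There is, however, one concrete error in your construction: you treat the $B^*$-blocks $\beta_j$ of $\sigma'(\regg)$ as ``bounded, hence hard-coded constants in the update''. They are not bounded. In the merging operation where this claim is invoked, $\sigma'$ is one of the $\sigma_{\ell+1}$, i.e.\ the restriction of $\lambda^x_{i_{\ell}+1}\circ\cdots\circ\lambda^x_{i_{\ell+1}}$ over an arbitrarily long factor of the input, so its $B^*$-blocks are unbounded words and cannot appear as constants in a finite-state update function. This is exactly why the hypothesis of the claim also provides $g'(\regg)$ stored virtual copies of $\sigma'(\regg)$ — a resource your proof never consumes, which is the symptom of the misstep. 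The fix is immediate and preserves copylessness: when building the $c$-th copy of $\sigma\circ\sigma'(\regg)$ (for $1\le c\le g'(\regg)$), splice in the registers of the $c$-th stored copy of the blocks of $\sigma'(\regg)$, so each such register is used exactly once and exactly $g'(\regg)$ sets of them are available; the occurrences of the blocks of $\sigma(\reggg_j)$ are filled with fresh stored copies of $\sigma(\reggg_j)$ as you describe, and your identity $g(\reg)=\sum_{\regg}|\sigma'(\regg)|_{\reg}\, g'(\regg)$ guarantees this assignment is a bijection onto the available copies. With that correction your argument coincides with the paper's proof.
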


\begin{proof} In order to compute $g'(\regg)$ copies of $\sigma \circ \sigma'(\regg)$,
we exactly need to used $ |\sigma'(\regg)|_{\reg} \times g'(\regg)$ copies of $\reg$.
The result follows by summing over all $\regg \in \Reggs$.
\end{proof}
Claim~\ref{claim:copyless-merge} justifies how $\trans'$ can update
is information in a copyless fashion when performing the
merging operation. We still have to justify how $\trans'$ can
compute $\lambda^x_0 \circ \sigma_1 \circ \cdots \circ \sigma_m(\alpha)$ when it has
to add something in $\out$ in the above Step~1. As for the proof
of Claim~\ref{claim:copyless-merge}, we exactly need 
to have $\used_{\ell}(\reg)$ copies of $\sigma_{\ell}(\reg)$
for each $\reg \in \Regs$. This copies are taken
along some root-to-leaf branch of the forest where removing the $\used_{\ell}$
does not create negative labels $\new{g}$
(such a branch exists because $\copies^x_i$ labels
a leaf, as shown in Step~1).
The remaining labels  $\new{g}$ exactly correspond to
the copies that were not used, hence they are still stored.

\begin{remark} To produce $\cro{\alpha}^x_i$, we only need to consume the copies along
one branch of $F^x_i$. However, to maintain a forest which is consistent with
our decomposition $D^x_i$, we remove copies along all the branches, even if only one branch
is truly used.
\end{remark}

\subsubsection{Proof of Lemma~\ref{lem:calcule-cix}}

\label{sssec:proof:cix}

We first give a way to count the copies
obtained when composing two substitutions.

\begin{claim} \label{claim:cal-somm} Let $\sigma,\sigma' \in \subst{\Regs}{B}$, then
for all $\reg, \regg \in \Regs$, $|\sigma \circ \sigma'(\reg)|_{\regg} 
= \sum_{\reggg \in \Regs} |\sigma(\reggg)_{\regg} |\times |\sigma'(\reg)|_{\reggg}$.
\end{claim}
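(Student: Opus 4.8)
The plan is to prove Claim~\ref{claim:cal-somm} by a direct syntactic counting argument, using the convention (fixed just before Example~\ref{ex:subst}) that substitutions are extended morphically while preserving the elements of $B$. First I would fix $\sigma,\sigma' \in \subst{\Regs}{B}$ and $\reg,\regg \in \Regs$, and write $\sigma'(\reg) \in (B \uplus \Regs)^*$ in its canonical form
\[
\sigma'(\reg) = w_0\, \reggg_1\, w_1\, \reggg_2 \cdots \reggg_n\, w_n,
\]
where $w_0,\dots,w_n \in B^*$ and $\reggg_1,\dots,\reggg_n \in \Regs$ are exactly the register occurrences of $\sigma'(\reg)$, read from left to right. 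By definition of $|\cdot|_{\reggg}$, for every $\reggg \in \Regs$ the number of indices $i \in \{1,\dots,n\}$ with $\reggg_i = \reggg$ equals $|\sigma'(\reg)|_{\reggg}$ (and in particular $n = \sum_{\reggg \in \Regs} |\sigma'(\reg)|_{\reggg}$).

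Next I would apply $\sigma$ to this decomposition. Since the morphic extension of $\sigma$ leaves every letter of $B$ unchanged, we have $\sigma(w_i) = w_i$ for all $i$, hence
\[
\sigma \circ \sigma'(\reg) = w_0\, \sigma(\reggg_1)\, w_1\, \sigma(\reggg_2) \cdots \sigma(\reggg_n)\, w_n.
\]
Counting the occurrences of the register $\regg$ in this word, the blocks $w_i \in B^*$ contribute nothing, so $|\sigma \circ \sigma'(\reg)|_{\regg} = \sum_{i=1}^{n} |\sigma(\reggg_i)|_{\regg}$. Finally I would regroup this sum according to the value of $\reggg_i$: the indices $i$ with $\reggg_i = \reggg$ number exactly $|\sigma'(\reg)|_{\reggg}$ and each contributes $|\sigma(\reggg)|_{\regg}$, which yields
\[
|\sigma \circ \sigma'(\reg)|_{\regg} = \sum_{\reggg \in \Regs} |\sigma'(\reg)|_{\reggg} \times |\sigma(\reggg)|_{\regg},
\]
the claimed identity.

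I do not expect any real obstacle: the statement is a routine bookkeeping fact, essentially saying that the ``abelianization'' of a composition of substitutions is the product of the corresponding occurrence matrices. The only point needing a little care is to use the convention that the morphic extension preserves $B$, so the $B^*$-blocks of $\sigma'(\reg)$ are untouched and carry no register occurrences; given that, the three displayed equalities above are immediate.
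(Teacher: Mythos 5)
Your proof is correct and follows essentially the same route as the paper's: decompose $\sigma'(\reg)$ into its $B^*$-blocks and register occurrences, apply $\sigma$ morphically (fixing $B$), count occurrences of $\regg$, and regroup the sum by the register being substituted. The paper's version is just a terser rendering of this exact argument.
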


\begin{proof}
We have $s'(\reg) = \alpha_0 \reggg_1 \alpha_1 \reggg_2 \dots \reggg_n \alpha_n$ 
for some words $\alpha_i \in B^*$ and registers $\reggg_i \in \Regs$.
Therefore $|\sigma \circ \sigma'(\reg)|_{\regg} = \sum_{i=1}^n |\sigma(\reggg_i)|_{\regg}
= \sum_{\reggg \in \Regs} |\sigma'(\reg)|_{\reggg} \times |\sigma(\reggg)|_{\regg}$.
\end{proof}
We then note that since $\out$ is always updated 
under the form $\out \alpha$, the number of copies of a given register in 
$\out$ can only grow.

\begin{claim} Given $\reg \in \Reggs$ and $i \ge 1$ such that $\lambda^x_i$ is defined,
the function which maps $j \ge i{-}1 \mapsto |\lambda^x_i \circ \cdots \circ \lambda^x_j(\out)|_{\reg}$
is increasing (on its domain).
\end{claim}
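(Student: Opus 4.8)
The plan is to prove the slightly stronger statement that $\lambda^x_i \circ \cdots \circ \lambda^x_j(\out)$ is a prefix of $\lambda^x_i \circ \cdots \circ \lambda^x_{j+1}(\out)$ for every $j \ge i{-}1$ in the domain (with the convention that the empty composition $\lambda^x_i \circ \cdots \circ \lambda^x_{i-1}$ is the identity). Monotonicity of $j \mapsto |\lambda^x_i \circ \cdots \circ \lambda^x_j(\out)|_{\reg}$ then follows immediately, since $|u|_{\reg} \le |v|_{\reg}$ whenever $u \pref v$.

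First I would recall two elementary facts. (i) By the definition of a \DSST, for every $(q,a) \in \Dom{\lambda}$ we have $\lambda(q,a)(\out) = \out \cdots$; hence whenever $\lambda^x_{j+1}$ is defined we may write $\lambda^x_{j+1}(\out) = \out\, \alpha$ for some $\alpha \in (B \uplus \Regs)^*$, so that $\out \pref \lambda^x_{j+1}(\out)$. (ii) Each substitution is extended morphically to $(B \uplus \Regs)^*$, and so is any composition of substitutions; in particular $\psi \defined \lambda^x_i \circ \cdots \circ \lambda^x_j$ is a morphism, and morphisms are monotone for the prefix order: $u \pref v$ implies $\psi(u) \pref \psi(v)$ (write $v = u w$, then $\psi(v) = \psi(u)\psi(w)$).

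Then I would combine the two: applying the morphism $\psi$ to the prefix relation $\out \pref \lambda^x_{j+1}(\out)$ from (i) gives
\[
\lambda^x_i \circ \cdots \circ \lambda^x_j(\out) = \psi(\out) \pref \psi(\lambda^x_{j+1}(\out)) = \lambda^x_i \circ \cdots \circ \lambda^x_{j+1}(\out),
\]
where the last equality uses associativity of (right-to-left) function composition, namely $(\lambda^x_i \circ \cdots \circ \lambda^x_j) \circ \lambda^x_{j+1} = \lambda^x_i \circ \cdots \circ \lambda^x_{j+1}$. Taking the number of occurrences of $\reg$ on both sides then yields $|\lambda^x_i \circ \cdots \circ \lambda^x_j(\out)|_{\reg} \le |\lambda^x_i \circ \cdots \circ \lambda^x_{j+1}(\out)|_{\reg}$ for every $j$ in the domain, which is exactly the claim; the degenerate step $j = i{-}1$ reduces to $\out \pref \lambda^x_i(\out)$, which is again fact (i). There is essentially no obstacle here: the only point requiring a little care is keeping the composition order straight, so that increasing $j$ really corresponds to appending $\lambda^x_{j+1}$ on the right of the composition — i.e.\ to post-substituting into the single letter $\out$ — which is precisely where prefix monotonicity comes from.
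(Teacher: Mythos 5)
Your proof is correct, but it follows a different route from the paper's. The paper derives the claim from the occurrence-counting identity of Claim~\ref{claim:cal-somm}: writing $\sigma = \lambda^x_i \circ \cdots \circ \lambda^x_j$ and $\sigma' = \lambda^x_{j+1}$, it expands $|\sigma \circ \sigma'(\out)|_{\reg}$ as a sum over intermediate registers and keeps only the term for $\out$, using $|\lambda^x_{j+1}(\out)|_{\out} = 1$ to get $|\sigma\circ\sigma'(\out)|_{\reg} \ge |\sigma(\out)|_{\reg}$. You instead prove the stronger structural fact that $\lambda^x_i \circ \cdots \circ \lambda^x_j(\out) \pref \lambda^x_i \circ \cdots \circ \lambda^x_{j+1}(\out)$, by applying the morphism $\lambda^x_i \circ \cdots \circ \lambda^x_j$ to the relation $\out \pref \lambda^x_{j+1}(\out)$; your handling of the composition order is consistent with the paper's convention (the rightmost substitution is expanded first, cf. Example~\ref{ex:subst} and the remark that $\cro{\out}^x_i \pref \cro{\out}^x_{i+1}$), so the argument goes through, and monotonicity of the $\reg$-count is an immediate consequence of the prefix relation. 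What each buys: your version yields a slightly stronger conclusion (prefix containment, hence monotonicity of every letter or register count simultaneously) from only the shape condition $\lambda(q,a)(\out)=\out\cdots$ and morphism monotonicity, while the paper's choice keeps the proof uniform with the neighbouring computations, since the same counting identity is reused immediately afterwards in the proof of Lemma~\ref{lem:calcule-cix}.
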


\begin{proof} By Claim~\ref{claim:cal-somm} we have
$|\lambda^x_i \circ \cdots \circ \lambda^x_{j+1}(\out)|_{\reg}
\ge |\lambda^x_i \circ \cdots \circ \lambda^x_{j}(\out)|_{\reg} \times |\lambda^x_{j+1}(\out)|_{\out}$.
\end{proof}
We are now ready to show Lemma~\ref{lem:calcule-cix}.
If $j \ge i{+}1$ is such that $\cro{\cdot}^x_{j}$ is defined, we have
the following for all $\reg \in \Reggs$ by Claim~\ref{claim:cal-somm}:
\begin{equation*}
|\lambda^x_{i+1} \circ \cdots \circ \lambda^x_j (\out)|_{\reg} = 
|\lambda^x_{i+1}(\out)|_{\reg}  \times 1 + 
\sum_{\regg \in \Reggs} |\lambda^x_{i+1}(\regg)|_{\reg} \times
|\lambda^x_{i+2}\circ \cdots \circ \lambda^x_j(\out)|_{\regg}.
\end{equation*}
Now let $j_0 \ge i{+}1$ be such that $|\lambda^x_{i+1} \circ \cdots \circ \lambda^x_{j_0} (\out)|_{\reg}$
is maximal. Since the function $j \mapsto |\lambda^x_{i+1} \circ \cdots \circ \lambda^x_j (\out)|_{\reg} $
is increasing by Claim~\ref{claim:cal-somm}, we get
for all $j \ge j_{0}$ (and when defined):
\begin{equation*}
\copies^x_i(\reg) = |\lambda^x_{i+1} \circ \cdots \circ \lambda^x_j (\out)|_{\reg} 
= |\lambda^x_{i+1}(\out)|_{\reg} + 
\sum_{\regg \in \Reggs} |\lambda^x_{i+1}(\regg)|_{\reg} \times
|\lambda^x_{i+2}\circ \cdots \circ \lambda^x_j(\out)|_{\regg}.
\end{equation*}
And by the same argument of Claim~\ref{claim:cal-somm}
applied to the $|\lambda^x_{i+2}\circ \cdots \circ \lambda^x_j(\out)|_{\regg}$,
we conclude:
\begin{equation*}
\copies^x_i(\reg) =
|\lambda^x_{i+1}(\out)|_{\reg} + 
\sum_{\regg \in \Reggs} |\lambda^x_{i+1}(\regg)|_{\reg} \times
\copies^x_{i+1}(\regg).
\end{equation*}

\section{Proof of Proposition~\ref{prop:dom-reg}}

We show the stronger result which follows.
\begin{lemma}
\label{lem:dom-reg}
If $f$ is deterministic
regular, then $\Dom{f}$ is Büchi deterministic.
Conversely, if $L \subseteq \Dom{f}$ is Büchi deterministic,
then $f$ restricted to $L$ is deterministic regular.
\end{lemma}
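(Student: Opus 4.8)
The plan is to use the \DSST{} characterization of deterministic regular functions from \cref{theo:2dt-dsst}: to prove a function deterministic regular it suffices to exhibit a copyless \DSST{} computing it, and conversely every deterministic regular function is computed by a copyless \DSST{}. So fix a copyless \DSST{} $\trans = (A,B,Q,q_0,\delta,\Regs,\out,\lambda)$ computing $f$, and recall the notations $q^x_i$, $\lambda^x_i$, $\cro{\cdot}^x_i = \lambda^x_0 \circ \cdots \circ \lambda^x_i$ from the semantics, as well as the fact that the $\cro{\out}^x_i$ form an increasing chain of prefixes.

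\textbf{First direction.} By definition of the semantics, $x \in \Dom{f}$ if and only if (i)~$q^x_i$ is defined for all $i \ge 0$, and (ii)~$|\cro{\out}^x_i| \fonc \infty$; and since the $\cro{\out}^x_i$ grow monotonically, (ii) is equivalent to $\cro{\out}^x_{i+1} \neq \cro{\out}^x_i$ for infinitely many $i$. The point is that (i) and (ii) can be checked by a single deterministic Büchi automaton reading $x$. Write $\lambda^x_{i+1}(\out) = \out\, w$, where $w$ contains no occurrence of $\out$ by the \DSST{} convention; then $\cro{\out}^x_{i+1} = \cro{\out}^x_i \cdot \cro{\cdot}^x_i(w)$, and this last factor is nonempty exactly when $w$ contains a letter of $B$, or a register $\reg$ with $\cro{\reg}^x_i \neq \movi$. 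Hence it is enough to maintain the finite data $E^x_i \defined \{\reg \in \Regs : \cro{\reg}^x_i = \movi\}$, which evolves deterministically: $\reg \in E^x_{i+1}$ precisely when $\lambda^x_{i+1}(\reg)$ contains no letter of $B$ and only registers from $E^x_i$. I would therefore take the deterministic automaton with states $Q \times 2^\Regs \times \{0,1\}$ plus a rejecting sink, starting in $(q_0,\Regs,0)$, which on letter $a$ from $(q,E,\cdot)$ goes to the sink if $\delta(q,a)$ is undefined and otherwise to $(\delta(q,a),E',b)$ where $E'$ is the updated empty-register set and $b=1$ exactly when the factor added to $\out$ is nonempty (a test depending only on $q$, $a$ and the \emph{old} set $E$). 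Making the states with last component $1$ Büchi-accepting, a run is accepting iff it avoids the sink forever and has $b=1$ infinitely often, i.e. iff (i) and (ii) hold; so the automaton recognizes $\Dom{f}$, which is Büchi deterministic.

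\textbf{Second direction.} Let $\mc B = (A,P,p_0,\delta_{\mc{B}},P_f)$ be a complete deterministic Büchi automaton recognizing $L$ (a deterministic Büchi automaton can always be completed with a rejecting sink). I would build a copyless \DSST{} $\trans'$ for $f|_L$ as the product of $\trans$ and $\mc B$, equipped with a buffer register. Its states are $Q \times P$, its registers are those of $\Regs \smallsetminus \{\out\}$ together with two fresh registers $z$ (a buffer, which will play the role of $\trans$'s output register) and $\out'$ (the distinguished output of $\trans'$), and its transitions are: on letter $a$ from state $(q,p)$, if $\delta(q,a)$ is undefined then $\trans'$ blocks; otherwise let $q' \defined \delta(q,a)$, $p' \defined \delta_{\mc{B}}(p,a)$, write $\lambda(q,a)(\out) = \out\, w$, apply $\lambda(q,a)$ to the registers of $\Regs \smallsetminus \{\out\}$ (their images involve no register outside $\Regs \smallsetminus \{\out\}$, since $\out$ occurs in no image other than that of $\out$), and
\begin{itemize}
\item if $p' \notin P_f$: set $z \becomes z\, w$ and $\out' \becomes \out'$;
\item if $p' \in P_f$: set $\out' \becomes \out'\, z\, w$ and $z \becomes \movi$.
\end{itemize}
Because $\out$ occurs in no image of $\lambda(q,a)$ other than $\lambda(q,a)(\out)$, each single-step substitution of $\trans'$ is copyless whenever $\lambda(q,a)$ is, and a composition of copyless substitutions is copyless; hence $\trans'$ is copyless. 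A routine induction on $i$ then shows that the value of $\out'$ concatenated with the value of $z$ in $\trans'$ after reading $x[1{:}i]$ equals the value of $\out$ in $\trans$ after reading $x[1{:}i]$, and that $z$ is empty whenever the run of $\mc B$ is in $P_f$ after reading $x[1{:}i]$. Consequently, if $x \in L$ then $\mc B$ visits $P_f$ infinitely often, at each such step the value of $\out'$ in $\trans'$ coincides with that of $\out$ in $\trans$, and since $L \subseteq \Dom{f}$ the latter tends to $f(x)$ (in particular has length $\fonc \infty$), so $\trans'$ outputs $f(x)$; whereas if $x \notin L$, then either $\delta$ is eventually undefined and $\trans'$ blocks, or $\mc B$ visits $P_f$ only finitely often, after which $\out'$ is no longer modified, so its value stays bounded and $\trans'$ produces no output. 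Thus $\trans'$ computes $f|_L$, which is deterministic regular.

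\textbf{Main obstacle.} Neither direction is deep once the right finite abstraction is identified; the delicate points are (a)~realizing that in the first direction the set $E^x_i$ of currently-empty registers is exactly the bounded information needed to detect whether a letter is output at a given step, and (b)~verifying in the second direction that the ``accumulate in $z$, flush into $\out'$ at accepting states'' construction stays copyless, which hinges on the \DSST{} convention that $\out$ never occurs in the update of another register.
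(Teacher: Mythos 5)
Your proposal is correct and follows essentially the same route as the paper: for the first direction, a deterministic Büchi automaton that tracks the \DSST{} state together with which registers are currently empty and accepts when a non-empty factor is appended to $\out$ infinitely often; for the second, a product with the deterministic Büchi automaton in which additions to the output are buffered in an auxiliary register and flushed into the true output register only at accepting states. The only differences are cosmetic (explicit bookkeeping of the empty-register set, and flushing on entering rather than leaving an accepting state), so there is nothing to change.
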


\begin{proof}
We first show that the domain of a deterministic regular function
is accepted by a Büchi deterministic automaton. For this let
$\trans$ be a copyless \DSST{}
computing a deterministic regular function
$f: A^\omega \parfonc B^\omega$.
We describe a deterministic
Büchi automaton which follows the states of
$\trans$ in a deterministic way (in particular, it gets blocked
if $\trans$ gets blocked). It also memorizes
for each $\reg \in \Regs$ if $\cro{\reg}^x_i = \movi$ or not (this is a bounded
information which can be updated with a bounded memory).
The automaton reaches an accepting state when $\trans$
adds a non-empty value into $\out$.
It is not hard see that it exactly recognizes $\Dom{f}$.

Now let $\auto$ be a deterministic Büchi automaton
accepting a language $L \subseteq \Dom{f}$.
We build the product of $\trans$ and $\auto$ and follow
the transition functions of both machines. The register updates
are the same as in $\trans$ for all $\reg \in \Regs \smallsetminus \{\out\}$.
For $\out$, we store temporarily the values which are added in a new register $\out'$,
which is added to $\out$ only when leaving an accepting state of $\auto$.
Thus the output (which is always a prefix
of the output of $\trans$) is infinite if and only if the input is accepted by $\auto$.
\end{proof}

\section{Proof of Theorem~\ref{theo:prec-reg}}

Let us consider a restricted \oNT{} computing  a
function $f: A^\omega \parfonc B^\omega$
and a copyless  \DSST{} computing a deterministic regular
function $g: B^\omega \parfonc C^\omega$.
By making their cascade product
(a standard construction for composing one way machines),
we can easily build a copyless \emph{restricted non-deterministic streaming string
transducer} computing $h \defined g \circ f$.

\begin{definition} \label{def:nsst}
A \emph{restricted non-deterministic streaming string transducer}
\emph{(restricted \NSST)} $\ntrans = (A,C, Q,I, \Delta, \Regs, \out, \lambda)$
consists of:
\begin{itemize}
\item a finite input (resp. output) alphabet $A$ (resp. $C$);
\item a finite set of states $Q$ with $I \subseteq Q$ initial;
\item a transition relation $\Delta \subseteq Q \times A \times Q$;
\item a finite set of registers $\Regs$ with a distinguished output register $\out \in \Regs$;
\item an update function $\lambda: \Delta \fonc \subst{\Regs}{C}$
such that for all $(q,a,q) \in \Delta$:
\begin{itemize}
\item $\lambda(q,a,q')(\out) = \out \cdots$;
\item there is no other occurence of $\out$ in $\{ \lambda(q,a,q')(\reg): \reg \in \Regs\}$.
\end{itemize}
\end{itemize}
\end{definition}
Given $x \in A^\omega \cup A^*$, a  (resp. initial, final, accepting)
run of $\ntrans$ labelled by $x$ is a (resp. initial, final, accepting)
run of the underlying one-way automaton $(A,Q,I, F, \Delta)$
where $F = Q$ (all states are accepting). Given an initial run
$\rho = q_0 \fonc q_1 \fonc \cdots$ composed of $n \in \Nat \uplus \{\infty\}$
transitions, we define for $0 \le i \le n$
the substitution $\lambda^{\rho}_i \in \subst{\Regs}{C}$
by $\lambda^{\rho}_0(\reg) = \movi$ for all $\reg \in \Regs$,
and $\lambda^{\rho}_i \defined \lambda(q_i,x[i])$ for $i \ge 1$
We define the substitution $\cro{\cdot}^{\rho}_i \in \subst{\Regs}{B}: \Regs \fonc B^*$
by $\cro{\cdot}^{\rho}_i = \lambda^{\rho}_0 \circ  \cdots \circ \lambda^{\rho}_i$.
By construction one has that $\cro{\out}^{\rho}_i \pref \cro{\out}^{\rho}_{i+1}$
(when defined).

The restricted \NSST{}  computes a function $h$ defined as follows.
Let $x \in A^\omega$ be such that there exists a unique accepting run
$\rho$ labelled by $x$, and such that $|\cro{\out}^{\rho}_i| \fonc \infty$.
Then $h(x) \defined  \bigvee_i \cro{\out}^x_i$.
Otherwise $h(x)$ is undefined.
We say that the restricted \NSST{} is \emph{copyless} if all the
$\lambda^{\rho}_i \circ \cdots \circ \lambda^{\rho}_j$
are copyless when $\rho$ is an initial run.
Equivalently (up to trimming), $\lambda(q,a,q')$
is always copyless when defined.

\subparagraph*{Building a $1$-bounded \DSST{}.}
We now build a $1$-bounded \DSST{}
$\trans$ which computes an extension of $h$
(i.e. whose domain contains $\Dom{h}$
and which coincides with $h$ on this set).
We shall deal with the domain in the next paragraph.
The general idea is to perform a subset
construction on $\ntrans$, while keeping track of
the forest of all initial runs and the outputs
that were produced along its branches.
Without loss of generality, we assume that
$I = \{q_0\}$ is a singleton. Thus the forest
of initial runs is a tree.
If a single accepting run exists, then the first node
of this tree which has at least two children
must move forward infinitely  often, which enables
to produce the whole output of $\trans$.
The fact that we obtain a $1$-bounded transducer
and not a copyless one follows from the fact
that the same value can be re-used when creating
new branches (but the branches will never be merged later).

\begin{definition}[Tree of runs] \label{def:Txi}
Given $x \in A^\omega$ and $i \ge 0$ we define
the \emph{tree of runs} $T^x_i$ as a tree whose nodes
are labelled by tuples $(i_1, \rho, i_2)$ with $0 \le i_1 \le i_2$
and $\rho$ is a run labelled  by $x[i_1{+}1{:}i_2]$.
It is built by induction as follows:
\begin{itemize}
\item $T^x_0$ consists of a single node labelled by
$(0,\rho,0)$ where  by $\rho = q_0$
is a run labelled by $\movi$;
\item if $T^x_i$ is built, we obtain $T^x_{i+1}$ by doing the following
operations where $a \defined x[i{+}1]$:
\begin{description}
\item[New transition:] below each leaf labelled
by $(i_1, \rho, i)$ with  $\rho = q_{i_1} \fonc \cdots \fonc q_{i}$,
and for all transitions $(q_i, a,q_{i+1}) \in \Delta$,
we add a leaf labelled by $(i, q_{i} \fonc q_{i+1}, i{+}1$);
\item[Removing ambiguity:] for all $q' \in Q$, if at least two
 leaves created by the previous step are such that $q' = q_{i+1}$, then we remove
all the created leaves such that $q' = q_{i+1}$
(indeed, if there exists an infinite run starting on $q'$
and labelled by $x[i{+}2{:}]$, then $x \not \in \Dom{f}$ since it
has two accepting runs);
\item[Trimming the tree:] we remove all nodes in $T^x_{i+1}$
which are not an ancestor of a new leaf;
\item[Merging single nodes:] for any node 
(labelled by $(i_1, q_{i_1}\fonc \cdots \fonc q_{i_2}, i_2)$)
which has only one child (labelled by $(i_2, q_{i_2} \fonc \cdots
\fonc q_{i_3}, i_3)$) we merge these two nodes
together with common label
$(i_1, q_{i_1}\fonc \cdots \fonc q_{i_2} \fonc \cdots \fonc q_{i_3}, i_3)$.
\end{description}
\end{itemize}
\end{definition}
The following properties immediately follow from the construction.

\begin{claim} There exists $L \ge 0$, such
that for all $x \in A^\omega$, $i \ge 0$, $T^x_i$ has at most $L$ nodes.
\end{claim}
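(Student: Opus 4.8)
The plan is to bound the number of nodes of $T^x_i$ uniformly in $x$ and $i$ by carefully tracking the structure-preserving effect of the four operations in Definition~\ref{def:Txi}. First I would isolate the two quantities that control the size of a tree whose internal nodes all have at least two children: the number of leaves, and the depth. Because of the ``Merging single nodes'' step, every node of $T^x_i$ that is not a leaf and not the root has at least two children; a routine induction then gives $|T^x_i| \le 2\cdot(\text{number of leaves}) - 1$ (or a similar linear bound), so it suffices to bound the number of leaves.

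Next I would bound the number of leaves of $T^x_i$ by $|Q|$. The key invariant, maintained by the ``Removing ambiguity'' step, is that the leaves of $T^x_i$ are labelled by tuples $(i_1,\rho,i)$ whose runs $\rho$ all end in \emph{distinct} states of $Q$: whenever the ``New transition'' step would create two leaves ending in the same state $q'$, \emph{all} such leaves are deleted. Hence at each step the map sending a leaf to the final state of its run is injective, so there are at most $|Q|$ leaves. One should check that ``Trimming the tree'' and ``Merging single nodes'' do not increase the leaf count (trimming can only remove leaves; merging replaces a unary node by its child, leaving the leaf set unchanged), and that ``New transition'' followed by ``Removing ambiguity'' yields at most $|Q|$ leaves regardless of how many leaves $T^x_i$ had. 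Combining, $T^x_{i+1}$ has at most $|Q|$ leaves.

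Finally I would bound the depth. After ``Merging single nodes'', on any root-to-leaf path every internal node strictly increases the number of descendant leaves as we go up (since it has $\ge 2$ children, each with $\ge 1$ leaf below); therefore the depth of $T^x_i$ is at most the number of leaves, i.e.\ at most $|Q|$. Putting the leaf bound and the depth bound together, the total number of nodes is at most some fixed $L$ depending only on $|Q|$ (for instance $L \defined 2|Q|$ suffices once the linear bound above is made precise), uniformly in $x\in A^\omega$ and $i\ge 0$, which is exactly the claim.

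The main obstacle I anticipate is the bookkeeping around the order in which the four operations are applied within a single step: one must verify that the ``distinct final states at the leaves'' invariant is genuinely restored by ``Removing ambiguity'' \emph{before} trimming and merging can interfere, and that merging unary nodes cannot secretly create a node of out-degree $1$ again further up (it cannot, since merging is iterated to a fixed point, but this should be stated). None of this is deep, but it is the place where a careless argument could give a bound that grows with $i$; the crucial point to get right is that the leaf count is reset to $\le |Q|$ at \emph{every} step rather than accumulating.
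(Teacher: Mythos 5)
Your argument is correct and is essentially the reasoning the paper leaves implicit (it only says the claim ``immediately follows from the construction''): the ``Removing ambiguity'' and trimming steps force the leaves of $T^x_{i+1}$ to end in pairwise distinct states, hence at most $|Q|$ leaves, and the ``Merging single nodes'' step eliminates unary nodes, so the node count is bounded by roughly $2|Q|$, uniformly in $x$ and $i$. No gap to report.
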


\begin{claim} If $x \in \Dom{f}$, there exists a root-to-leaf branch of $T^x_i$
whose labels describe the beginning of the unique accepting run 
labelled by $x$.
\end{claim}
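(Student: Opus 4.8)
The plan is to prove the claim by induction on $i \ge 0$, tracking one distinguished branch of $T^x_i$ throughout.

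First I would fix $x \in \Dom{f}$ and let $\tau = q_0 \fonc q_1 \fonc q_2 \fonc \cdots$ be the unique accepting run of $\ntrans$ labelled by $x$; its existence and uniqueness follow from $\ntrans$ being the cascade product of the restricted \oNT{} for $f$ --- whose domain is, by definition, the set of inputs carrying a single accepting run --- with the \emph{deterministic} \DSST{} for $g$, so that no new non-determinism is introduced. I would call a root-to-leaf branch of $T^x_i$ a \emph{trace of $\tau$ up to $i$} if, concatenating the run segments occurring in the node labels along the branch --- which glue together, since consecutive labels have the form $(j_{t-1},\rho_t,j_t)$ and $(j_t,\rho_{t+1},j_{t+1})$ with $\rho_t$ ending and $\rho_{t+1}$ starting in the same state --- one recovers exactly the prefix $q_0 \fonc q_1 \fonc \cdots \fonc q_i$ of $\tau$. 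The base case $i = 0$ is immediate, since $T^x_0$ is the single node labelled $(0,q_0,0)$, which traces the empty prefix of $\tau$.

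For the induction step I would assume $v_0,\dots,v_k$ is a trace of $\tau$ up to $i$, so the leaf $v_k$ carries a label $(j,\,q_j \fonc \cdots \fonc q_i,\, i)$, and follow what happens when $a \defined x[i{+}1]$ is read. In the ``New transition'' step a leaf $w$ labelled $(i,\,q_i \fonc q_{i+1},\, i{+}1)$ is created below $v_k$, because $(q_i,a,q_{i+1}) \in \Delta$; it then remains to check that $w$ survives the three following operations. This is where the only real difficulty lies: in ``Removing ambiguity'', $w$ would be deleted only if some other newly created leaf $w'$ also ended in state $q_{i+1}$, and to rule this out I would use the auxiliary invariant that, for every $i$, distinct leaves of $T^x_i$ describe distinct initial runs of $\ntrans$ (which is easy to maintain: the ambiguity step deletes \emph{all} clashing new leaves, so the surviving ones end in pairwise distinct states, while distinct old branches are distinct runs by induction). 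Then $w'$ would describe an initial run $\eta$ of length $i{+}1$ ending in $q_{i+1}$ with $\eta \neq q_0 \fonc \cdots \fonc q_{i+1}$, so prolonging $\eta$ by the infinite suffix $q_{i+1} \fonc q_{i+2} \fonc \cdots$ of $\tau$ would yield a second infinite initial run of $\ntrans$ on $x$ --- hence, all states being final, a second accepting run --- contradicting $x \in \Dom{f}$. The remaining operations are harmless: ``Trimming'' keeps $w$ together with all its ancestors since $w$ is a new leaf, and ``Merging single nodes'' only relabels by concatenating adjacent run segments, which leaves unchanged the run traced along a branch. Hence the branch from the root down to (the possibly-merged copy of) $w$ is a trace of $\tau$ up to $i{+}1$, which closes the induction.

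In short, the heart of the argument is the ambiguity-removal step, which is precisely where the uniqueness of the accepting run is consumed; everything else is bookkeeping about how the four tree operations act on a fixed branch. The one fact worth isolating alongside the main induction is that, for every $i$, distinct leaves of $T^x_i$ label distinct initial runs of $\ntrans$, since that is exactly what converts a hypothetical deletion of $w$ into a genuine second accepting run.
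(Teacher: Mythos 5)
Your proof is correct and follows essentially the same route as the paper, which treats the claim as immediate from the construction: the key point you make explicit -- that the ``Removing ambiguity'' step cannot delete the leaf extending the accepting run, since another created leaf ending in the same state would yield a second infinite initial (hence accepting) run of $\ntrans$ -- is exactly the justification the paper embeds parenthetically in the definition of $T^x_i$. Your induction and the auxiliary invariant that distinct leaves trace distinct initial runs are just the natural bookkeeping the paper leaves implicit.
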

We also note that if $x \in \Dom{f}$, the run stored in the root of $T^x_i$
becomes longer and longer.

\begin{claim}\label{claim:calinf} If $x \in \Dom{f}$ and $(0, \rho_i, r_i)$ is the label of the 
root of $T^x_i$, then  $r_i \fonc \infty$.
\end{claim}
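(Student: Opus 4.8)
The plan is to argue by contradiction. Suppose $r_i \not\to\infty$; since $(r_i)_{i\ge 0}$ is non-decreasing (the only operation of Definition~\ref{def:Txi} that modifies the label of the root is \textbf{Merging single nodes}, which can only append to the root's run), it is eventually constant, say $r_i = R$ for all $i \ge N$. I would then show that this forces the root of $T^x_j$ to have at least two children for every sufficiently large $j$, and extract from these two distinct accepting runs of $\ntrans$ on $x$ — contradicting that $\ntrans$, being the cascade product of the restricted \oNT{} for $f$ and the deterministic \DSST{} for $g$, has at most one accepting run on $x$ (cascading with a deterministic machine preserves unambiguity, and $x \in \Dom f$ gives unambiguity of the restricted \oNT{}). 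Throughout I write $\rho^*$ for that unique accepting run.

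For the middle step I would proceed as follows. First, for $j > \max(N,R)$, in the passage $T^x_{j-1} \to T^x_j$ the root cannot be merged with a single child (this would give $r_j > R$), and it cannot end up without children (a childless node is a leaf, hence at position $j \ne R$); so the root has at least two children, necessarily at position $R{+}1$. Next I would establish a structural invariant: the runs labelling the children of any internal node of any $T^x_j$ have pairwise distinct first transitions. This is preserved through the construction — new siblings are created as length-one runs ending in distinct states by \textbf{Removing ambiguity} (and they all start in the same state, so their transitions are distinct), \textbf{Trimming the tree} only deletes children, and \textbf{Merging single nodes} gives a node the children of its former unique child, which already satisfied the property. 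Combining this with the earlier claim that some root-to-leaf branch of $T^x_j$ follows $\rho^*$: one child of the root carries $\rho^*$'s transition at position $R{+}1$, and any other child carries a transition from $\rho^*$'s state at position $R$ whose target state $s$ differs from $\rho^*$'s state at position $R{+}1$. Any leaf below that other child yields a finite initial run of $\ntrans$ labelled by $x[1{:}j]$ that coincides with $\rho^*$ on its first $R$ transitions but deviates at step $R{+}1$. The set of all such deviating finite initial runs, labelled by prefixes of $x$ and ordered by the prefix relation, is a finitely branching tree ($\ntrans$ has finitely many transitions), and by the previous point it contains a run of length $j$ for every $j > \max(N,R)$, hence is infinite; by König's lemma it has an infinite branch, i.e.\ an infinite initial run $\tau$ of $\ntrans$ on $x$. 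Since $\ntrans$ is restricted, $\tau$ is accepting, and $\tau \ne \rho^*$ because their states at position $R{+}1$ differ — the desired contradiction.

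The main obstacle is the structural invariant on siblings together with the bookkeeping around \textbf{Merging single nodes}, since merging changes which node sits at which position and reshuffles parent/child relationships; I expect most of the care to go into checking that it nonetheless preserves ``siblings have pairwise distinct first transitions'' and that the root genuinely acquires $\ge 2$ children rather than silently changing identity. A minor preliminary point to dispatch is the existence of $\rho^*$: if some $T^x_i$ is empty the statement is vacuous, and otherwise König's lemma applied to the finitely branching tree of initial runs of $\ntrans$ labelled by prefixes of $x$ yields an infinite, hence accepting, run, which is unique because $x \in \Dom f$.
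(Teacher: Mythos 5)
Your proposal is correct and follows essentially the same route as the paper's (very terse) proof: assume $r_i$ is eventually constant, deduce that the root then always has at least two children, and extract two distinct infinite (hence accepting) runs of $\ntrans$ on $x$, contradicting $x\in\Dom{f}$. Your sibling-distinctness invariant and the König argument are exactly the details the paper leaves implicit, so this is a faithful, fleshed-out version of the same argument.
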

\begin{proof} Assume that $r_i$ is ultimately constant.
Then the root of $T^x_i$ always has at least two children, which implies
that there exists two distinct infinite runs labelled by $x$.
\end{proof}

Let us finally describe how $T^x_i$ is used to build 
a $1$-bounded \DSST{} $\trans$. When in position $i \ge 0$
of its input $x \in A^\omega$, the \DSST{} stores 
the following information:
\begin{itemize}
\item the structure of $T^x_i$, i.e. the tree  without its labels (stored in the state);
\item for each leaf of $T^x_i$ labelled by $(i_1, q_{i_1} \fonc \cdots \fonc q_i, i)$,
the last state $q_i$ (stored in the state);
\item for each node of $T^x_i$ labelled by $(j,\rho_n, j')$,
let $\rho \defined \rho_1 \cdots \rho_n$ be the run
labelling the branch of $T^x_i$ starting in the root and ending in $(j,\rho_n, j')$.
Let $\sigma_n \defined \lambda^{\rho}_{j{+}1} \circ \cdots \circ \lambda^{\rho}_{j'}$
be the substitution applied along $\rho_n$. Let
$\alpha \in ((\Regs \smallsetminus \{\out\})\cup C)^*$ be such
that $\sigma_n(\out) = \out \alpha$, then we store $\cro{\alpha}^{\rho_1 \cdots \rho_{n-1}}_{j}$
(it is the value which is added in $\out$ when following $\rho_n$).
Furthermore we do the following for the root and the leaves:
\begin{itemize}
\item if $(j,\rho_n, j')$ is the root of $T^x_i$, then $\cro{\alpha}^{\rho_1 \cdots \rho_{n-1}}_{j}$ is stored in
the register $\out$ of $\trans$;
\item if $(j,\rho_n, j')$ is a leaf, we also store for all $\reg \in \Regs \smallsetminus \{\out\}$,
the value $\cro{\reg}^{\rho}_{j'}$
(it is the value of $\reg$ after executing $\rho$).
\end{itemize}
\end{itemize}
It is clear that for $x \in \Dom{h}$, the value of $\out$ in $\trans$
in always a prefix of $h(x)$. Furthermore, this value
tends to an infinite word by Claim~\ref{claim:calinf} and the semantics of
restricted \NSST{}.
The updates of $\trans$ can be performed by following the operations
which build $T^x_i$ in Definition~\ref{def:Txi}. Furthermore, 
the machine can be built in a $1$-bounded way. Indeed,
$\ntrans$ was copyless, hence we can check that given a
branch of $T^x_i$, there is no need to make copies
in order to update the information stored by $\trans$ along this branch
(hence we only create copies if a leaf creates several children,
but they correspond to several distinct futures).

\subparagraph*{Domains.} We have built a 
$1$-bounded \DSST{} computing an extension of $h$.
We now show that its domain can be restricted to $\Dom{h}$. This result follows from
 lemmas~\ref{lem:dom-reg} and~\ref{lem:dom-compo}.

\begin{lemma} \label{lem:dom-compo}
If $h$ is computed by a restricted \NSST{},
then $\Dom{h}$ is Büchi deterministic.
\end{lemma}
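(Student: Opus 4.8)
The plan is to realise $\Dom h$ as the intersection of two Büchi deterministic languages and invoke closure of that class under intersection. The first factor is $\Dom{\tilde h}$, where $\tilde h$ denotes the extension of $h$ computed by the $1$-bounded \DSST{} $\trans$ constructed above: by \cref{theo:2dt-dsst} the function $\tilde h$ is deterministic regular, so by \cref{prop:dom-reg} (equivalently \cref{lem:dom-reg}) the language $\Dom{\tilde h}$ is Büchi deterministic. The second factor is $U \defined \{x \in A^\omega : \ntrans \text{ has at most one accepting run on } x\}$. I then claim that $\Dom h = \Dom{\tilde h} \cap U$ and that $U$ is Büchi deterministic, from which the lemma follows.

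To see that $U$ is Büchi deterministic, I would analyse its complement, the set of $x$ carrying at least two distinct accepting runs — equivalently two distinct infinite initial runs, since all states of $\ntrans$ are final. Two such runs first disagree at some position $j \ge 0$, so $A^\omega \smallsetminus U = \bigcup_{j \ge 0} L_j$, where $L_j$ collects the $x$ for which some common initial partial run on $x[1{:}j]$ splits at step $j$ into two distinct states $p \ne p'$, each admitting an infinite run on $x[j{+}1{:}]$. The splitting condition depends only on the finite prefix $x[1{:}j]$, hence defines a clopen set, and ``$\{p\}$ admits an infinite run on $x[j{+}1{:}]$'' is a safety condition on the suffix, hence a closed set; so each $L_j$ is closed and $A^\omega \smallsetminus U$ is $\Sigma^0_2$. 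Since it is moreover $\omega$-regular (a Büchi automaton can guess the two runs together with a splitting position), the classical topological characterisation of deterministic (co-)Büchi recognisable languages (see e.g.~\cite{perrin2004}) yields a deterministic co-Büchi automaton for $A^\omega \smallsetminus U$, hence a deterministic Büchi automaton for $U$.

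For the identity $\Dom h = \Dom{\tilde h} \cap U$, the inclusion $\subseteq$ is immediate: $\tilde h$ extends $h$, and every element of $\Dom h$ has a unique accepting run by definition. Conversely, take $x \in \Dom{\tilde h} \cap U$. Since $x \in \Dom{\tilde h}$ the run of $\trans$ on $x$ does not block, so the tree of runs $T^x_i$ is non-empty for every $i$; by König's lemma this forces an infinite initial run to exist, which by $x \in U$ is unique — call it $\rho$. I would then argue that the branch of $T^x_i$ following $\rho$ is never deleted by the ``removing ambiguity'' step: were it deleted at some position $k$, then every branch-run of $T^x_{k'}$ for $k' > k$ would have length-$k$ prefix different from the one of $\rho$, so none of these branches could extend to an infinite initial run (such a run would equal $\rho$ by $x \in U$, a contradiction); by König's lemma $T^x_i$ would then become empty, contradicting $x \in \Dom{\tilde h}$. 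Hence $\rho$'s branch survives forever and, being the unique infinite branch, makes the length of the run stored at the root tend to infinity (arguing as in \cref{claim:calinf}); consequently the register $\out$ of $\trans$ accumulates exactly the prefixes of the output produced along $\rho$. As $x \in \Dom{\tilde h}$, this output is infinite, i.e. $|\cro{\out}^{\rho}_j| \to \infty$, so $x \in \Dom h$ (and $\tilde h(x)=h(x)$).

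I expect the delicate step to be the last one, namely checking that on $\Dom{\tilde h} \cap U$ the transducer $\trans$ genuinely outputs the word produced along the unique accepting run; this hinges on the behaviour of the tree-of-runs construction from the preceding paragraphs and on the fact that the ``ambiguity removal'' never discards the relevant branch when there is at most one accepting run. The $\Sigma^0_2$ analysis of $A^\omega \smallsetminus U$ and the appeal to the deterministic Büchi/co-Büchi characterisation are routine; alternatively, one could replace them by an explicit deterministic Büchi automaton for $U$ running the subset construction while tracking, via a breakpoint, the lineages deleted for ambiguity and verifying that none of them survives forever.
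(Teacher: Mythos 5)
Your proof is correct, but it packages the argument differently from the paper. The paper's proof builds a \emph{single} deterministic Büchi automaton: it tracks the tree of runs $T^x_i$ of Definition~\ref{def:Txi} exactly as the $1$-bounded \DSST{} does, checks that the root advances and that $\out$ grows, and in addition \emph{simulates the runs discarded in the ``Removing ambiguity'' step}, reaching an accepting state only when such a simulation gets blocked — this bespoke acceptance condition is the paper's way of enforcing, deterministically, that no spurious second infinite run survives. You instead factor $\Dom{h} = \Dom{\tilde h} \cap U$, reuse \cref{theo:2dt-dsst} and \cref{lem:dom-reg} for the first factor, and dispose of $U$ (``at most one accepting run'') abstractly: its complement is $\omega$-regular and $\Sigma^0_2$, hence deterministic co-Büchi by Landweber's characterisation, so $U$ is deterministic Büchi; closure under intersection (a routine product-with-breakpoint construction, worth naming) finishes the job. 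The genuinely new content in your version is the explicit proof of the identity $\Dom{h} = \Dom{\tilde h} \cap U$ — in particular that when $x \in U$ the ambiguity-removal step never deletes the branch of the unique infinite run (otherwise, by König, $T^x_i$ eventually empties and $\out$ stops growing), and that the root then advances as in Claim~\ref{claim:calinf}, so $\out$ follows exactly that run's output. This is precisely the correctness argument the paper leaves implicit in its one-paragraph sketch, so your extra detail is a genuine asset; what the paper's route buys in exchange is a self-contained, effective automaton construction that avoids the detour through topological classification. Two minor points to tidy: state explicitly that an empty tree $T^x_i$ forces $x \notin \Dom{\tilde h}$ (either the \DSST{} blocks or $\out$ stops growing — both excluded by the semantics), and either cite Landweber's theorem precisely or fall back on the explicit breakpoint automaton for $U$ you mention, which is closer in spirit to the paper's simulation of the removed runs.
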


\begin{proof} The idea is to build a deterministic Büchi automaton 
which keeps track of $T^x_i$ as $\trans$ does. However, checking that
$r_i \rightarrow \infty$ (see Claim~\ref{claim:calinf}) and that the
$\out$ of $\trans$ tends to an infinite value is not sufficient
for $x \in A^\omega$ to be in $\Dom{f}$. Indeed, there may exist
infinite runs which we removed in the operation ``Removing ambiguity'' of Definition~\ref{def:Txi}. 
Hence if $q'$ is defined as in ``Removing ambiguity'', we also have to check
all runs which start in $q'$ and are labelled by a prefix of $x[i{+}2{:}]$ are finite.
This can be done by simulating these runs, and reaching an accepting state only
when this simulation gets blocked.
\end{proof}

\section{Proof of Lemma~\ref{lem:continuity-loops}}

Let $x \in A^\omega$ and $\beta \in B^\omega$
be such that $q'_2 \runs{x|\beta}$ is final (such a run exists
since the transducer is trim and clean). Therefore, for all $n \ge 0$ we
have $f(u{u'}^nx) = \alpha_2 {\alpha'_2}^n \beta $.
On the other hand $f(u{u'}^\omega) = \alpha_1{\alpha'}_1^\omega$ because
$q_1 \in F$. By continuity in $u{u'}^\omega \in \Dom{f}$, for all $p \ge 0$
we have $|f(u{u'}^nx) \wedge f(u{u'}^\omega)| \ge p$ for $n$ large enough.
The result follows directly.

\section{Proof of Lemma~\ref{lem:make-productive}}

Let $\trans = (A,B,Q, I, F, \Delta, \lambda)$
be a trim, unambiguous and clean \oNT{} which computes
a continuous function $f: A^\omega \parfonc B^\omega$.
Note that Lemma~\ref{lem:continuity-loops} holds.

\begin{definition} We say that $q'_2 \in Q \smallsetminus F$ is \emph{constant}
if there exists $q_1, q_2 \in I$, $q'_1 \in F$,
$u \in A^*$, $u' \in A^+$, $\alpha_1, \alpha'_1, \alpha_2, \alpha'_2 \in B^*$
with $\alpha'_2 = \movi$ such that $q_i \runs{u | \alpha_i} q'_i \runs{u' | \alpha'_i} q'_i$.
\end{definition}
Since $\trans$ is clean, the existence of constant states
is clearly equivalent to the non-productivity of $\trans$.
Thus we want to avoid such states.
Let us now justify the ``constant'' terminology.

\begin{claim} We can compute the set of constant states.
Furthermore given a constant state $q$,
we can compute $\alpha_{q} \in B^*, \alpha'_{q} \in B^+$
such that for all
final run $q \runs{x|\beta}$, 
$\beta =  \alpha_{q} {\alpha'_{q}}^\omega$.
\end{claim}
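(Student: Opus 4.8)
The plan is to exploit the pairing structure already visible in \cref{lem:continuity-loops}. First I would make the decidability part explicit: a state $q'_2 \in Q \smallsetminus F$ is constant precisely when there exist $q_1,q_2 \in I$, $q'_1 \in F$, words $u \in A^*$, $u' \in A^+$ such that $q_1 \runs{u} q'_1 \runs{u'} q'_1$, $q_2 \runs{u} q'_2 \runs{u'} q'_2$, and the output along the cycle $q'_2 \runs{u'} q'_2$ is $\movi$. This is a property that can be checked by a search in the product automaton $\trans \times \trans$: build the automaton on state pairs $(p_1,p_2)$, look for a path from some $(q_1,q_2) \in I \times I$ to a pair $(q'_1,q'_2)$ with $q'_1 \in F$, and then for a cycle on $(q'_1,q'_2)$ whose second component produces the empty word. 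Since $\trans$ is trim, any state occurring in such a configuration is automatically co-accessible, and the emptiness-of-output constraint on the $q'_2$-component of the cycle is a regular condition (restrict the product to transitions whose second-component output is $\movi$). So the set of constant states is computable.

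Next, for a fixed constant state $q$, I would produce the witnessing data $q_1,q_2,q'_1,u,u'$ as above with $q'_2 = q$ (this comes out of the same search). Because $\trans$ is clean, the cycle $q'_1 \runs{u'|\alpha'_1} q'_1$ has $\alpha'_1 \neq \movi$. Now I would take \emph{any} final run $q \runs{x|\beta} \infty$ — one exists since $\trans$ is trim and clean — and apply the second bullet of \cref{lem:continuity-loops} with $q'_2 = q$, $\alpha'_2 = \movi$: it gives $\alpha_1 {\alpha'_1}^\omega = \alpha_2 \beta$. Since $\alpha_2$ is a fixed finite word that is a prefix of the infinite word $\alpha_1 {\alpha'_1}^\omega$, we can solve for $\beta = \alpha_2^{-1}(\alpha_1 {\alpha'_1}^\omega)$, which is an ultimately periodic word of the form $\alpha_q {\alpha'_q}^\omega$ for computable $\alpha_q \in B^*$, $\alpha'_q \in B^+$ (take $\alpha'_q$ to be a cyclic rotation of $\alpha'_1$ of the same length, and $\alpha_q$ the finite prefix before the periodic part kicks in). Crucially this value does not depend on the choice of the final run $q \runs{x|\beta} \infty$, again by \cref{lem:continuity-loops}: any two such $\beta$'s both equal $\alpha_2^{-1}\alpha_1{\alpha'_1}^\omega$.

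The main obstacle I anticipate is purely bookkeeping: matching up the finite-word arithmetic so that $\alpha_q$ and $\alpha'_q$ are genuinely computable and well-defined. Concretely, one must check that $\alpha_2 \pref \alpha_1 \alpha_1'^{k}$ for some bounded $k$, extract the primitive period appropriately, and verify independence from the witness pair $(q_1,u,u')$ chosen (different witnesses could a priori give syntactically different $\alpha_1, \alpha'_1$, but \cref{lem:continuity-loops} forces $\alpha_1 {\alpha'_1}^\omega$ to agree once $q, x, \beta$ are fixed, since $\beta$ is determined). None of this is deep; it is a finite case analysis on ultimately periodic words, and the only genuine input is \cref{lem:continuity-loops} together with trimness and cleanness. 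This completes the proof of the claim.
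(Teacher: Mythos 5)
Your proposal is correct and takes essentially the same route as the paper: the core of both arguments is the second bullet of Lemma~\ref{lem:continuity-loops} applied with $\alpha'_2 = \movi$, which forces $\beta = (\alpha_2)^{-1}\alpha_1{\alpha'_1}^\omega$ for every final run from $q$, hence $\alpha_q$ and $\alpha'_q \in B^+$ are computable (a conjugate of $\alpha'_1$ plus a finite prefix) and independent of the run and of the chosen witness. The only, immaterial, difference is the decidability step: you detect constant states by a reachability-plus-cycle search in the synchronized product $\trans \times \trans$ restricted to output-empty transitions in the second component, whereas the paper bounds the witnesses $|u|,|u'| \le \bound$ by pumping (as in Lemma~\ref{lem:carac-compat}) and checks finitely many candidates; both are routine and equivalent.
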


\begin{proof}
If $q'_2$ is constant,
we can enforce $|u|, |u'| \le \bound$
(see e.g. Lemma~\ref{lem:carac-compat} which states
a more general result).
Hence we can decide if a state is constant.
Furthermore, if $q'_2$ is constant then by Lemma~\ref{lem:continuity-loops},
we have $\beta = (\alpha_2)^{-1}\alpha_1 {\alpha'_1}^\omega$
for all $q \runs{x|\beta}$ final.
\end{proof}
Given $q$ constant, we describe a transducer without $\movi$-loops,
which computes the function $x \mapsto \alpha_q \alpha'_q$ when $x$ labels a
final run starting in $q$. Let $\trans_q = (A,B, Q_q, I_q, F_q, \Delta_q, \lambda_q)$ be:
\begin{itemize}
\item $Q_q \defined \{q \} \uplus  \{q': q \rightarrow^* q' \} \times \{1\}$;
\item $I_q \defined \{q\}$ and $F_q \defined \{(q',1) \in Q_q: q' \in F\}$;
\item $\Delta_q  = \{(q,a, (q',1)): (q,a,q') \in \Delta \} \uplus
 \{((q',1),a, (q'',1)): (q',a,q'') \in \Delta \}$;
 \item $\lambda_q(q,a, (q',1)) = \alpha_q$ and $\lambda_q((q',1),a, (q'',1)) = \alpha'_q$.
\end{itemize}

\begin{claim} \label{claim:loooop} If $\new{q} \runs{u | \alpha} \new{q}$ in $\trans_q$,
then $u \neq \movi$ implies $\alpha \neq \movi$.
Furthermore $\trans_q$ is unambiguous and
it computes $f_q: A^\omega \parfonc B^\omega$
such that $x \in \Dom{f_q}$ if and only if there exists a (unique) final run $q \runs{x | \beta}$
in $\trans$, and then $f_q(x) \defined \beta$.
\end{claim}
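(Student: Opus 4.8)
The plan is to establish the three assertions in turn, exploiting the rigid shape of $\trans_q$: its unique initial state $q$ has no incoming transition, every transition of the form $((q',1),a,(q'',1))$ outputs the fixed word $\alpha'_q\in B^+$, and the two ``entry'' transitions $(q,a,(q',1))$ output $\alpha_q$.

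First I would handle the absence of productive-free loops. Any cycle $\new{q}\runs{u|\alpha}\new{q}$ in $\trans_q$ must avoid the state $q$, since $q$ is never the target of a transition of $\Delta_q$; hence $\new{q}$ has the form $(q',1)$ and every transition occurring in the cycle is of the form $((q_i,1),a_i,(q_{i+1},1))$, so it outputs $\alpha'_q$. If $u\neq\movi$ the cycle contains at least one transition, so $\alpha=(\alpha'_q)^{|u|}$, which is nonempty because $\alpha'_q\in B^+$.

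Next, for unambiguity I would use the evident bijection between accepting runs of $\trans_q$ on an input $x\in A^\omega$ and final runs of $\trans$ starting in $q$ and labelled by $x$: a run of $\trans_q$ through the states $q,(q_1,1),(q_2,1),\dots$ corresponds to the run of $\trans$ through $q,q_1,q_2,\dots$, and the former visits a final state infinitely often iff the latter visits $F$ infinitely often. Now suppose $\trans$ had two distinct final runs $\rho_1\neq\rho_2$ starting in $q$ and labelled by some $x$. Since $q$ is constant, there are $q_2\in I$ and $u\in A^*$ with $q_2\runs{u}q$ in $\trans$; prepending this path to $\rho_1$ and to $\rho_2$ yields two distinct runs of $\trans$ on $ux\in A^\omega$ that are initial (they start in $q_2\in I$) and final (each $\rho_i$ already visits $F$ infinitely often), hence two distinct accepting runs, contradicting the unambiguity of $\trans$. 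Therefore $\trans$ has at most one final run starting in $q$ on each input, and by the bijection $\trans_q$ is unambiguous.

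Finally, for the semantics, every accepting run of $\trans_q$ uses an entry transition once (output $\alpha_q$) and then infinitely many transitions in the layer $(\cdot,1)$ (each output $\alpha'_q\in B^+$), so its output is $\alpha_q(\alpha'_q)^\omega\in B^\omega$, always infinite. Combined with the bijection and the unambiguity established above, $\Dom{f_q}$ is exactly the set of $x\in A^\omega$ admitting a (necessarily unique) final run $q\runs{x|\beta}$ in $\trans$, and on such an $x$ we obtain $f_q(x)=\alpha_q(\alpha'_q)^\omega$; but the preceding claim characterising $\alpha_q$ and $\alpha'_q$ says precisely that every such $\beta$ equals $\alpha_q(\alpha'_q)^\omega$, whence $f_q(x)=\beta$. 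I expect the only point requiring a little care to be the unambiguity argument — checking that constancy of $q$ supplies the prepended path and that the concatenated runs are genuinely accepting and still distinct — but this is the only place where a non-local property of $\trans$ is invoked, and it demands no real computation.
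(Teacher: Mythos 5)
Your proof is correct and follows essentially the same route as the paper: no empty loops because every layer-one transition outputs $\alpha'_q \neq \movi$, the bijection between accepting runs of $\trans_q$ and final runs of $\trans$ starting in $q$, and unambiguity inherited from $\trans$ by prepending an initial run reaching $q$. The only cosmetic difference is that you obtain that initial run from the constancy of $q$ while the paper invokes trimness of $\trans$; both are available and the argument is otherwise identical.
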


\begin{proof} It is clear that $\trans_q$ has no loop
which produces $\movi$ since $\alpha'_q \neq \movi$.
Furthermore, an accepting run of $\trans_q$ labelled by $x$
is necessarily of the form $q \runs{x[1] | \alpha_q } (q_1,1) \runs{x[2] | \alpha'_q} (q_2, 1) \cdots$,
where $q \runs{x[1]} q_1 \runs{x[2]} q_2 \cdots$ is final in $\trans$.
The converse also holds. Hence $\trans_q$ is unambiguous
since $\trans$ is unambiguous and trim, and furthermore
it computes the function $f_q$.
\end{proof}

Finally, let us build $\trans' = (A,B, Q', I',F',\Delta', \lambda')$ which computes
$f$ and is productive and unambiguous (the trimming can be done after).
It consists of the disjoint union of $\trans$ and $\trans_q$ for $q$ constant.
Its initial states are those of $\trans$, and the final states are both those
of  $\trans$ and $\trans_q$ for $q$ constant.
Furthermore, for $q$ constant, all the outing transitions from $q$
in $\trans$ are removed, and $q$ is merged with the initial state of $\trans_q$
(we are forced to go in $\trans_q$).

\begin{claim} Let $x \in \Dom{f}$ and $\rho$
be the accepting run of $\trans$ labelled by $x$.
Assume that it never visits a constant state.
Then $\rho$ is also an accepting run in $\trans'$ (which stays in $\trans$)
with same label and output than in $\trans$. Conversely, if $\rho$
is an accepting run of $\trans'$ which stays in $\trans$, labelled by $x \in A^\omega$,
then $\rho$ is an accepting run of $\trans$ which never visits
a constant state, with same label and output than in $\trans'$.
\end{claim}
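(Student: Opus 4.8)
The plan is to read off the claim directly from the way $\trans'$ was assembled, using three elementary facts about its structure: (i) $I' = I$; (ii) the transitions of $\trans'$ both of whose endpoints lie in $Q$ are exactly the transitions $(p,a,p') \in \Delta$ with $p$ \emph{not} constant, and on each of them $\lambda'$ coincides with $\lambda$ (the transitions of the copies $\trans_q$ all have a ``tagged'' endpoint, and the only transitions deleted from $\trans$ are the outgoing ones of constant states); (iii) the final states of $\trans'$ contained in $Q$ are exactly $F$, since by definition constant states lie in $Q \smallsetminus F$. Fact (iii) is what makes the Büchi acceptance condition of $\trans$ and of $\trans'$ agree along any run all of whose states lie in $Q$.

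For the first implication, let $x \in \Dom{f}$ and let $\rho$ be the accepting run of $\trans$ on $x$, and suppose $\rho$ never visits a constant state. Then every transition used by $\rho$ has a non-constant source, so by (ii) it is also a transition of $\trans'$ with the same output; $\rho$ starts in $I = I'$ by (i); and $\rho$ visits states of $F$ infinitely often, which by (iii) are final in $\trans'$ as well. Hence $\rho$ is an accepting run of $\trans'$ all of whose states lie in $Q$, i.e.\ it stays in $\trans$, and since the per-transition outputs are unchanged it produces the same output.

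For the converse, let $\rho$ be an accepting run of $\trans'$ that stays in $\trans$, meaning all its states lie in $Q$. I would first observe that $\rho$ cannot visit any constant state $q$: all outgoing transitions of $q$ in $\trans'$ lead into $\trans_q$, whose non-initial states are the ``tagged'' states and thus do not belong to $Q$; since $\rho$ is infinite it must take such a transition right after reaching $q$, contradicting that all its states lie in $Q$. Therefore every transition of $\rho$ has a non-constant source, so by (ii) it is a transition of $\trans$ with the same output, its first state lies in $I$, and by (iii) the final states it visits infinitely often lie in $F$; thus $\rho$ is an accepting run of $\trans$ that avoids constant states, with the same label and output. The argument is essentially bookkeeping; the one point needing care is exactly this last step, namely ruling out a run of $\trans'$ that ``stays in $\trans$'' while secretly passing through a constant state before escaping into a copy $\trans_q$ — this is where the identification of $q$ with the initial state of $\trans_q$ and the fact that every outgoing transition of $q$ enters $\trans_q$ are used, together with the easy but essential remark (iii) that constant states are non-final.
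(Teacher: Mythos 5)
Your proof is correct: the paper states this claim without proof, treating it as immediate from the construction of $\trans'$, and your verification (initial states unchanged, transitions between untagged states are exactly the $\trans$-transitions with non-constant source and carry the same outputs, final states within $Q$ are exactly $F$, and a run staying in $Q$ cannot pass through a constant state since all its outgoing transitions enter the tagged copy $\trans_q$) is precisely the intended bookkeeping. Nothing is missing.
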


\begin{claim}  Let $x \in \Dom{f}$ and $\rho = q_0 \runs{x[1]} q_1 \cdots$
be the accepting run of $\trans$ labelled by $x$.
Assume that it visits a constant state for the first time in $q_i$ for $i \ge 0$.
Then $\rho' \defined q_0 \runs{x[1{:}i]} q_i \runs{x[i{+}1]} (q_{i+1},1) \cdots$
is also an accepting run in $\trans'$
with same label and output than $\rho$. Conversely, if $\rho'$
is an accepting run of $\trans'$ labelled by $x \in A^\omega$, which goes into some
$\trans_q$ at some point, then it stays in this $\trans_q$
and is of the form
$\rho' \defined q_0 \runs{x[1{:}i]} q_i \runs{x[i{+}1]} (q_{i+1},1) \cdots$
where $q_i = q$ is constant in $\trans$.
Then $\rho \defined q_0 \runs{x[1]} q_1 \runs{x[2]} q_2 \cdots $
is an accepting run in $\trans$ which visits a constant state,
and with the same output as $\rho'$.
\end{claim}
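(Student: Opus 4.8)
The plan is to establish the two halves of the claim by bookkeeping how runs of $\trans$ and of $\trans'$ correspond through the surgery that replaced each constant state $q$ of $\trans$ by its copy $\trans_q$ (recall that in $\trans'$ one deletes every transition leaving a constant state and identifies each constant state $q$ with the initial state of $\trans_q$). For the forward implication, let $\rho = q_0 \runs{x[1]} q_1 \cdots$ be the accepting run of $\trans$ and $q_i$ its first constant state. First I would observe that, since $q_0,\dots,q_{i-1}$ are not constant, the transitions they use survive in $\trans'$, so the prefix $q_0 \runs{x[1{:}i]} q_i$ is a run of $\trans'$ with the same output as in $\trans$. Next, the suffix $q_i \runs{x[i{+}1{:}]} \cdots$ of $\rho$ is a final run of $\trans$ starting in the constant state $q_i$, so Claim~\ref{claim:loooop} applies: $x[i{+}1{:}] \in \Dom{f_{q_i}}$, and the unique accepting run of $\trans_{q_i}$ on $x[i{+}1{:}]$ is precisely $q_i \runs{x[i{+}1]} (q_{i+1},1) \runs{x[i{+}2]} (q_{i+2},1)\cdots$, producing $\alpha_{q_i}{\alpha'_{q_i}}^\omega$. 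Gluing these two runs (legitimate since $q_i$ is the initial state of $\trans_{q_i}$ inside $\trans'$) yields exactly the run $\rho'$ of the statement; it is initial since $q_0 \in I$, and final because $\rho$ being accepting forces its tail from position $i$ to meet $F$ infinitely often, hence $\rho'$ meets $F_{q_i} = \{(q',1) : q' \in F\}$ infinitely often. For the output, I would invoke the characterization of constant states established above, by which the tail of $\rho$ also outputs $\alpha_{q_i}{\alpha'_{q_i}}^\omega$; since the two prefixes coincide, $\rho$ and $\rho'$ have the same output.

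For the converse, let $\rho'$ be an accepting run of $\trans'$ entering some $\trans_q$. The crucial structural remark is that $\trans_q$ is a \emph{trap} inside $\trans'$: from its initial state $q$ the only transitions lead to tagged states $(q',1)$, from tagged states one can only reach tagged states, and the transitions out of constant states of $\trans$ have been removed. Hence once $\rho'$ is inside $\trans_q$ it never leaves, and it cannot have visited any other constant state earlier (otherwise it would be stuck in that other copy). So $\rho'$ has the shape $q_0 \runs{x[1{:}i]} q_i \runs{x[i{+}1]} (q_{i+1},1)\cdots$ with $q_0,\dots,q_{i-1}$ non-constant and $q_i = q$ constant. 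Reading off the underlying transitions --- the prefix uses $\trans$-transitions, and each $\Delta_q$-transition comes from a $\Delta$-transition by construction of $\trans_q$ --- gives a genuine run $\rho = q_0 \runs{x[1]} q_1 \cdots$ of $\trans$, which is initial, visits the constant state $q_i$, and is final because the tail of $\rho'$ lives among the states $(q',1)$ with $F_q = \{(q',1):q'\in F\}$. The output comparison is symmetric to the forward case: the output of $\rho'$ is the prefix output followed by $\alpha_q{\alpha'_q}^\omega$, and the output of $\rho$ is the same prefix followed by the output of the final run $q_i \runs{x[i{+}1{:}]}\cdots$ of $\trans$, which is $\alpha_q{\alpha'_q}^\omega$ by the constant-state characterization.

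The one genuinely delicate point I expect is this trap argument in the converse: one must use the exact construction of $\trans'$ to be certain that a run cannot shuttle back and forth between $\trans$ and several copies $\trans_q$, so that ``entering some $\trans_q$'' pins down both the copy $q$ and the index $i$. Everything else is routine verification that initiality, the Büchi condition and the produced words are preserved on each side of the correspondence; combined with the previous claim about runs that avoid constant states, these two claims will give the run-for-run correspondence between $\trans$ and $\trans'$ underlying Lemma~\ref{lem:make-productive}.
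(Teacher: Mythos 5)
Your proof is correct and follows exactly the intended argument: the paper states this claim without proof, leaving it as the routine run-for-run bookkeeping from the construction of $\trans'$, and your verification (prefix through non-constant states unchanged, suffix matched with the unique accepting run of $\trans_{q_i}$ via Claim~\ref{claim:loooop}, outputs equal by the constant-state characterization $\beta=\alpha_q{\alpha'_q}^\omega$, and the trap property of each copy $\trans_q$ for the converse) is precisely what is needed. No gaps.
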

The two above claims enable us to show that
$\trans'$ is unambiguous and that it computes
the function $f$. It is still clean, we now check that
it is productive. Assume that
there exists $q_1, q_2 \in I'$, $q'_1 \in F'$, $q'_2 \in Q \smallsetminus F$,
$u \in A^*$, $u' \in A^+$, $\alpha_1, \alpha'_1, \alpha_2, \alpha'_2 \in B^*$
with $\alpha'_2 = \movi$ such that $q_i \runs{u | \alpha_i} q'_i \runs{u' | \alpha'_i} q'_i$.
Then $q'_2$ is not in one of the $\trans_q$, because these
machines have no $\movi$-loops by Claim~\ref{claim:loooop}.
Thus we had a run $q_2 \runs{u | \alpha_2} q'_2 \runs{u' | \alpha'_2} q'_2$
in $\trans$ and so $q'_2$ was not constant in $\trans$
(because otherwise it would be the initial state of $\trans_{q'_2}$).
Now:
\begin{itemize}
\item either $q_1 \runs{u | \alpha_i} q'_1 \runs{u' | \alpha'_i} q'_1$ stays in $\trans$.
Since $q'_1 \in F'$, $q'_1 \in F$ and this statement contradicts
the fact that $q'_2$ is not constant in $\trans$;
\item or $q_1 \runs{u | \alpha_1} q'_1 \runs{u' | \alpha'_1} q'_1$ goes in
$\trans_q$ at some point, for some constant state $q$. Assume without loss
of generality that $q_1 \not \in \trans_q$, then $q'_1 = (q''_1, 1)$ and
 $q_1 \runs{u | \beta_1} q''_1 \runs{u' | \beta'_1} q''_1$  is a run in $\trans$.
Then $q''_1 \in F$ since $q'_1 \in F_q$, hence $\beta'_1 \neq \movi$ because
$\trans$ was clean. This contradicts the fact that $q'_2$ is  not constant in $\trans$.
\end{itemize}

\section{Proofs of lemmas~\ref{lem:mutual} and~\ref{lem:sep-theta}}

The goal of this section is to show
lemmas~\ref{lem:mutual} and~\ref{lem:sep-theta}.
For this purpose, we establish several
properties of the \oNT{}  $\trans = (A,B,Q, I, F, \Delta, \lambda)$
which is assumed to compute a continuous function.
The two results will (respectively) be consequences of
lemmas~\ref{lem:ends} and~\ref{lem:sep-theta2}.

\subsection{Compatible sets and continuity}

We first give a pumping-like characterization
of compatible sets (which implies that one can
decide if a set is compatible).

\begin{lemma}[Characterization of compatibility]
\label{lem:carac-compat} The set $C$ is compatible
if and only if there exists a function $d: C \rightarrow Q$,
and $u,u'\in A^*$ such that the following holds:
\begin{itemize}
\item for all $q \in C$, $q \runs{u} d(q) \runs{u'} d(q)$;
\item there exists $q \in C$ such that $d(q)$ is accepting,
\item  $u' \neq \movi$ and $|u|, |u'| \le \bound$.
\end{itemize}
\end{lemma}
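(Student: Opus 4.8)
The plan is to prove both implications of \cref{lem:carac-compat}; the $\Leftarrow$ direction is routine, while $\Rightarrow$ reduces to a standard ``lasso extraction'' in a product automaton. For the $\Leftarrow$ direction, suppose $d$, $u$, $u'$ are as in the statement and fix $q_0 \in C$ with $d(q_0)$ accepting. Set $x \defined u\,{u'}^\omega$, and for each $q \in C$ let $\rho_q$ be the infinite run obtained by concatenating the run $q \runs{u} d(q)$ with infinitely many copies of the cycle $d(q) \runs{u'} d(q)$. Each $\rho_q$ is a well-defined run labelled by $x$ that starts from $q$, and since $u' \neq \movi$ the run $\rho_{q_0}$ passes through $d(q_0) \in F$ at every loop boundary, hence infinitely often, so it is final. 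By \cref{def:compat} this shows $C$ is compatible.

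For the $\Rightarrow$ direction, assume $C$ is compatible, witnessed by some $x \in A^\omega$ and runs $\rho_q$ for $q \in C$, with $\rho_{q_0}$ final for some $q_0 \in C$. The key point is that the B\"uchi condition on $\rho_{q_0}$ depends only on the \emph{current} tuple of states, so no additional memory is needed: I would consider the nondeterministic automaton $\mc{B}$ whose states are the maps $f \colon C \fonc Q$, with a transition $f \runs{a} f'$ whenever $f(q) \runs{a} f'(q)$ holds in $\trans$ for every $q \in C$, with initial state $\iota \colon q \mapsto q$, and with $f$ declared accepting iff $f(q_0) \in F$. Since $C \subseteq Q$, the automaton $\mc{B}$ has at most $|Q|^{|C|} \le |Q|^{|Q|} = \bound$ states. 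Reading the $\rho_q$ synchronously yields an infinite run $R$ of $\mc{B}$ on $x$ starting from $\iota$ that visits accepting states infinitely often (because $\rho_{q_0}$ does).

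It then remains to extract a lasso of the right size. By the pigeonhole principle $R$ visits some single accepting state $g$ infinitely often; in particular $g$ is reachable from $\iota$ and admits a nonempty cycle. Choose a shortest path $\iota \runs{u} g$ in $\mc{B}$: it is simple, so $|u| \le \bound - 1 \le \bound$. Choose a shortest nonempty cycle $g \runs{u'} g$ in $\mc{B}$: minimality forces its intermediate states to be pairwise distinct, hence $1 \le |u'| \le \bound$, and in particular $u' \neq \movi$. Setting $d \defined g$ and unfolding the definition of $\mc{B}$, we obtain for every $q \in C$ that $q = \iota(q) \runs{u} g(q) = d(q)$ and $d(q) = g(q) \runs{u'} g(q) = d(q)$ in $\trans$, while $d(q_0) = g(q_0) \in F$ is accepting since $g$ is an accepting state of $\mc{B}$; this yields all the required conditions. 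Decidability is then immediate, since one can search over the finitely many choices of $d$, $u$, $u'$ within the stated bounds. The only slightly delicate point — the main thing to get right — is making the two length bounds $|u|, |u'| \le \bound$ come out exactly, which is precisely what the simplicity of a shortest path, respectively a shortest cycle, in the size-$\le\bound$ automaton $\mc{B}$ provides.
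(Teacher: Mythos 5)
Your proof is correct and follows essentially the same route as the paper: the easy direction by unfolding the lasso, and the converse by a pigeonhole argument over tuples of states (your product automaton on maps $C \fonc Q$ is just a packaging of the paper's direct pigeonhole on the tuples $(\rho_q(i))_{q \in C}$ taken at positions where the final run visits $F$), followed by shortening to within $\bound$. The only cosmetic difference is that you get the length bounds by choosing a shortest simple path and a shortest nonempty cycle in the product, whereas the paper shortens the witnessing words by an inductive pumping argument; both yield $|u|,|u'| \le \bound$.
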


\begin{proof}
It is clear that this condition implies compatibility.
Now assume that there exists $ x \in A^\omega$
and infinite runs $\rho_q$ for $q \in C$ labelled by $x$ such that
$\forall q \in C$, $\rho_q$ begins in $q$, and
furthermore  $\rho_p$ is final for some $p \in Q$.
Therefore we have $\rho_p(i) \in F$ infinitely often,
and by a pigeonhole argument we get $i < j \in \Nat$ such that
$\rho_q(i) = \rho_q(j)$ for all $q \in C$,
and $\rho_p(i) \in F$. We define $d(q) \defined \rho_q(i)$
for $q \in C$. Finally we get
$q \runs{x[1{:}i]} d(q) \runs{x[i{+}1{:}j]} d(q)$ for all $q \in C$.

Now if $|x[i{+}1{:}j]| \ge \bound + 1$, by a similar
pumping argument we factor
$x[1{:}i]  = v v'v''$ such that $v,v' \neq \movi$
and $q \runs{vv''} d(q) \runs{x[i{+}1{:}j]} d(q)$ for all $q \in C$.
By induction we obtain $|u| \le \bound$ and 
a similar reasoning gives $1 \le |u'| \le \bound$.
\end{proof}
We now introduce the notion of \emph{end},
which allows to complete initials runs by some future.

\begin{lemma}[End] \label{lem:ends} Let $C \in \Comp$,
there exists a function $\en{C}: C \rightarrow B^\omega$
such that for all initial step $J,u,C$ and $p,q \in C$ we have
$\val{J,C}{u}(p) \en{C}(p) = \val{J,C}{u}(q) \en{C}(q)$.
\end{lemma}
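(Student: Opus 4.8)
The plan is to produce a single map $\en{C}$ depending only on $C$, and then to check the displayed equality for each initial step $J,u,C$ separately using \cref{lem:continuity-loops}; since equality is transitive, it suffices to fix one distinguished state $q_0 \in C$ and prove $\val{J,C}{u}(q)\,\en{C}(q) = \val{J,C}{u}(q_0)\,\en{C}(q_0)$ for every $q \in C$.

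First I would feed $C$ to \cref{lem:carac-compat}, obtaining $w \in A^*$, $w' \in A^+$ and $d\colon C \fonc Q$ with $q \runs{w} d(q) \runs{w'} d(q)$ for all $q \in C$ and $d(q_0) \in F$ for some $q_0 \in C$; write $q \runs{w \mid \gamma_q} d(q) \runs{w' \mid \gamma'_q} d(q)$. Since $\trans$ is clean and trim and $w' \in A^+$, \cref{lem:infinite-accepting} gives $\gamma'_{q_0} \neq \movi$. I then define $\en{C}$ by a dichotomy. If $\gamma'_q \neq \movi$, set $\en{C}(q) \defined \gamma_q\,(\gamma'_q)^\omega \in B^\omega$. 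If $\gamma'_q = \movi$, use trimness: $d(q)$ lies on some accepting run $\rho$, so its suffix from $d(q)$ is a final run $d(q) \runs{x_q \mid \beta_q} \infty$; set $\en{C}(q) \defined \gamma_q\,\beta_q$, which is infinite because $\rho$ has infinite output (cleanness) and only a finite prefix of it precedes $d(q)$. Note $q_0$ falls in the first case, and that this whole definition ignores $J$ and $u$.

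Next, fixing an initial step $J,u,C$ (if there is none the statement is vacuous, and $\en{C}$ is defined anyway), I would set $p_r \defined \pre{J,C}{u}(r) \in J \subseteq I$ for $r \in C$, so that, concatenating the runs of \cref{lem:carac-compat}, $p_r \runs{uw \mid \val{J,C}{u}(r)\gamma_r} d(r) \runs{w' \mid \gamma'_r} d(r)$. For a fixed $q \in C$ I apply \cref{lem:continuity-loops} with common words $uw \in A^*$ and $w' \in A^+$, taking the first branch to be $\bigl(p_{q_0},\, d(q_0)\in F,\, \val{J,C}{u}(q_0)\gamma_{q_0},\, \gamma'_{q_0}\bigr)$ (legitimate since $\gamma'_{q_0}\neq\movi$) and the second branch $\bigl(p_{q},\, d(q),\, \val{J,C}{u}(q)\gamma_{q},\, \gamma'_{q}\bigr)$. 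If $\gamma'_q \neq \movi$, the first item of \cref{lem:continuity-loops} yields $\val{J,C}{u}(q_0)\gamma_{q_0}(\gamma'_{q_0})^\omega = \val{J,C}{u}(q)\gamma_{q}(\gamma'_{q})^\omega$, i.e. $\val{J,C}{u}(q_0)\en{C}(q_0) = \val{J,C}{u}(q)\en{C}(q)$. If $\gamma'_q = \movi$, the second item applied to the final run $d(q) \runs{x_q \mid \beta_q} \infty$ yields $\val{J,C}{u}(q_0)\gamma_{q_0}(\gamma'_{q_0})^\omega = \val{J,C}{u}(q)\gamma_{q}\beta_{q}$, which is again the wanted equality by definition of $\en{C}$.

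The genuinely delicate points, which I would spell out, are that $\en{C}$ must be defined independently of the chosen initial step (solved by reading everything off $C$ via \cref{lem:carac-compat}), that every $\en{C}(q)$ is truly in $B^\omega$ (solved by cleanness, in both branches of the dichotomy), and matching the bookkeeping of \cref{lem:continuity-loops} — in particular that the two runs share the very same pair of words $uw$, $w'$, and that $\gamma'_{q_0}\neq\movi$. No ingredient beyond lemmas~\ref{lem:carac-compat}, \ref{lem:infinite-accepting} and \ref{lem:continuity-loops} is required; in particular this also gives \cref{lem:mutual} as the special case $u=w=\movi$ is not even needed, the equality of the $\val{J,C}{u}(q)\en{C}(q)$ forcing the $\val{J,C}{u}(q)$ to be mutual prefixes.
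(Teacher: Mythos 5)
Your proposal is correct and follows essentially the same route as the paper: extract $d$, $w$, $w'$ from the pumping characterization of compatibility (Lemma~\ref{lem:carac-compat}), define $\en{C}(q)$ as either $\gamma_q(\gamma'_q)^\omega$ or $\gamma_q\beta_q$ depending on whether the loop output is empty, and then reduce, by transitivity, to comparing each $q$ with the distinguished state whose image under $d$ is final, via the two items of Lemma~\ref{lem:continuity-loops} applied to the concatenated runs labelled $uw$ and $w'$. The paper's proof is just a terser version of the same argument, so there is nothing to change.
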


\begin{proof} Since $C$ is compatible we get by Lemma~\ref{lem:carac-compat}
words $v \in A^*$, $v' \in A^+$ and a function $d: C \rightarrow Q$
such that for all $q \in C$, $q \runs{v | \alpha(q)} d(q) \runs{v'|\alpha'(q)} d(q)$
with $\alpha(q), \alpha'(q) \in B^*$.  Since $\trans$ is trim and clean,
for all $q \in C$ there exists $x(q) \in A^\omega, \beta(q) \in B^\omega$
such that $d(q) \runs{x(q) | \beta(q)} \infty$ is accepting. We define
$\en{C}(q) \defined \alpha(q) {\alpha'(q)}^\omega$ if $\alpha'(q) \neq \movi$
and $\en{C}(q) \defined \alpha(q) \beta(q)$ otherwise.
Let us now justify that $\en{C}$ verifies our equalities.
By Lemma~\ref{lem:carac-compat}, there is some $p \in C$ such that $d(p)$ is final.
By transitivity is it enough to show that  for all $q \in C$ we have
$\val{J,C}{u}(p) \en{C}(p) = \val{J,C}{u}(q) \en{C}(q)$,
which is a direct consequence of Lemma~\ref{lem:continuity-loops}.
\end{proof}
The notions of \emph{common} and \emph{advance}
are presented in~Definition~\ref{def:del}.
They enable us to reformulate Lemma~\ref{lem:ends}
as follows.

\begin{lemma} \label{lem:ends2b}
Let $J,u,C$ be an initial step and $C,v,D$ be a step. Then
for all $p,q \in D$:
\begin{equation*}
 \adv{J,C}{u}(\pre{C,D}{v}(q))  \val{C,D}{v}(q) \en{D}(q)
= \adv{J,C}{u}(\pre{C,D}{v}(q)) \val{C,D}{v}(p) \en{D}(p).
\end{equation*}
\end{lemma}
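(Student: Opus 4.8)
The plan is to obtain the identity by collapsing the two given transitions into a single initial step and invoking \cref{lem:ends}. First I would note that the initial step $J,u,C$ and the step $C,v,D$ compose into an \emph{initial} step $J,uv,D$. Concretely, for each $r \in D$ set $r' \defined \pre{C,D}{v}(r) \in C$; then $\pre{J,C}{u}(r') \in J$ is the unique state of $J$ from which $r$ is reached along $uv$, so $\pre{J,D}{uv} = \pre{J,C}{u} \circ \pre{C,D}{v}$ is well defined and surjective onto $J$ (both factors are surjective, and $J \subseteq I$). Reading the output along $\pre{J,C}{u}(r') \runs{u} r' \runs{v} r$ gives the factorisation $\val{J,D}{uv}(r) = \val{J,C}{u}(r')\,\val{C,D}{v}(r)$ for every $r \in D$.

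Next I would apply \cref{lem:ends} to the initial step $J,uv,D$, which yields $\val{J,D}{uv}(p)\,\en{D}(p) = \val{J,D}{uv}(q)\,\en{D}(q)$ for all $p,q \in D$. Substituting the factorisation transforms this into
\[
\val{J,C}{u}(p')\,\val{C,D}{v}(p)\,\en{D}(p) = \val{J,C}{u}(q')\,\val{C,D}{v}(q)\,\en{D}(q).
\]
By \cref{lem:mutual} the productions $\val{J,C}{u}(c)$, $c \in C$, of the initial step $J,u,C$ are mutual prefixes, so $\com{J,C}{u}$ and the advances $\adv{J,C}{u}(c)$ are defined (\cref{def:del}) and satisfy $\val{J,C}{u}(c) = \com{J,C}{u}\,\adv{J,C}{u}(c)$. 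I would then left-cancel the common finite prefix $\com{J,C}{u}$ from both sides.

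After this cancellation both sides collapse to a single infinite word $W \defined \adv{J,C}{u}(\pre{C,D}{v}(r))\,\val{C,D}{v}(r)\,\en{D}(r)$ that is independent of the chosen $r \in D$. Specialising $r = q$ identifies the left-hand side of the statement with $W$, and rewriting $W$ through its $r = p$ instance while keeping the displayed advance factor $\adv{J,C}{u}(\pre{C,D}{v}(q))$ gives the right-hand side, which is the claimed equality. The step I expect to be the real work is the very first one: verifying rigorously that the composite $J,uv,D$ is a genuine initial step (uniqueness of the $J$-predecessor of each $r\in D$ and surjectivity of the composed $\pre$) and that productions factor exactly as written, since everything afterwards is substitution into \cref{lem:ends} followed by the cancellation of $\com{J,C}{u}$.
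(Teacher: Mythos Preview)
Your approach is essentially the paper's: apply \cref{lem:ends} to the composed initial step $J,uv,D$, factor $\val{J,D}{uv}(r)=\val{J,C}{u}(\pre{C,D}{v}(r))\,\val{C,D}{v}(r)$, and cancel the common prefix $\com{J,C}{u}$. The paper's proof is two lines and leaves the verification that $J,uv,D$ is an initial step implicit, while you spell it out; that is the only difference.

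One remark: the displayed identity in the statement has a typo---both sides carry the prefix $\adv{J,C}{u}(\pre{C,D}{v}(q))$, whereas the right-hand side should carry $\adv{J,C}{u}(\pre{C,D}{v}(p))$ (this is what the paper's own proof derives and what every later use of the lemma requires). Your argument correctly yields the intended equality
\[
\adv{J,C}{u}(\pre{C,D}{v}(q))\,\val{C,D}{v}(q)\,\en{D}(q)=\adv{J,C}{u}(\pre{C,D}{v}(p))\,\val{C,D}{v}(p)\,\en{D}(p),
\]
but your final sentence, where you try to ``keep the displayed advance factor $\adv{J,C}{u}(\pre{C,D}{v}(q))$'' while switching to the $r=p$ instance, does not make sense: you cannot simultaneously specialise $W$ at $r=p$ and retain the $r=q$ prefix. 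Simply stop after establishing that $W$ is independent of $r$; that \emph{is} the (corrected) lemma.
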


\begin{proof} By Lemma~\ref{lem:ends} we have
$\val{J,D}{uv}(p) \en{D}(p) = \val{J,D}{uv}(q) \en{D}(q)$
which by splitting gives
$\val{J,C}{u}(\pre{C,D}{v}(p)) \val{C,D}{v}(p) \en{D}(p)
= \val{J,C}{u}(\pre{C,D}{v}(q)) \val{C,D}{v}(q) \en{D}(q)$.
\end{proof}

\subsection{Separable compatible sets}

The notion of \emph{separable} compatible set
is presented in~Definition~\ref{def:separable-set}.
We first give a pumping-like characterization
of these sets (which implies that one can
decide if a set is separable).

\begin{lemma}[Characterization of separability]
\label{lem:carac-separ} A set $C \in \Comp$ is separable
if and only if there exists two functions $i: C \fonc I$ and $\ell: C \fonc Q$,
$u,u',u''\in A^*$ and three functions $\alpha, \alpha', \alpha'': C \fonc B^*$
such that the following holds:
\begin{itemize}
\item for all $q \in C$, $i(q) \runs{u | \alpha(q)} \ell(q) \runs{u' | \alpha'(q)} \ell(q) \runs{u'' | \alpha''(q)} q$;

\item $u' \neq \movi$ and $|u|, |u'|, |u''| \le \bound$;
\item there exists $p,q \in C$ such that $|\alpha'(p)| \neq |\alpha'(q)|$.
\end{itemize}
\end{lemma}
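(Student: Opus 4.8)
The statement is an equivalence. The \emph{if} direction (existence of the bounded pumping data implies separability) is obtained by pumping the loop $u'$; the \emph{only if} direction (separability implies \emph{short} pumping data) is a pigeonhole argument on tuples of states of $\trans$, and is the delicate one — it is also what yields the decidability claim. For the \emph{if} direction, suppose we are given $i\colon C\fonc I$, $\ell\colon C\fonc Q$, words $u,u''\in A^*$, $u'\in A^+$ and $\alpha,\alpha',\alpha''\colon C\fonc B^*$ as in the statement, with $|\alpha'(p)|\ne|\alpha'(q)|$ for some $p,q\in C$. For $N\ge 0$ put $v_N\defined u(u')^Nu''$, so that $i(q)\runs{v_N\mid\alpha(q)\alpha'(q)^N\alpha''(q)}q$ for every $q\in C$, and let $J\defined\{i(q):q\in C\}\subseteq I$. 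First, $J\in\Comp$: taking an input $x$ with infinite runs from every state of $C$ witnessing compatibility (one of them final, from some $q^\star$), the runs $i(q)\runs{v_N}q\runs{x}\infty$ are infinite runs from every state of $J$ on the common input $v_Nx$, and the one through $q^\star$ is final. One then checks, discarding superfluous states of $J$ if necessary, that $J,v_N,C$ is an initial step with $\pre{J,C}{v_N}(q)=i(q)$, so $\val{J,C}{v_N}(q)=\alpha(q)\alpha'(q)^N\alpha''(q)$. By \cref{lem:mutual} these words are mutual prefixes for $q\in C$; since they share the common prefix $\com{J,C}{v_N}$, for the chosen $p,q$ we get $\big||\adv{J,C}{v_N}(p)|-|\adv{J,C}{v_N}(q)|\big|=\big||\alpha(p)\alpha'(p)^N\alpha''(p)|-|\alpha(q)\alpha'(q)^N\alpha''(q)|\big|$, an affine function of $N$ with nonzero slope $\big||\alpha'(p)|-|\alpha'(q)|\big|$. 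Choosing $N$ large makes this exceed $\Bound$, so $C$ is separable.

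For the \emph{only if} direction, let $J,w,C$ be an initial step and $p_0,q_0\in C$ with $\big||\adv{J,C}{w}(p_0)|-|\adv{J,C}{w}(q_0)|\big|>\Bound=M\bound$; since advances share a common prefix this means $\big||\val{J,C}{w}(p_0)|-|\val{J,C}{w}(q_0)|\big|>M\bound$, and we may assume $|\val{J,C}{w}(p_0)|>|\val{J,C}{w}(q_0)|$. For $q\in C$ set $i(q)\defined\pre{J,C}{w}(q)\in J\subseteq I$, let $\rho_q\colon i(q)\runs{w}q$ be the witnessing run, $s_q(t)\in Q$ its state after reading $w[1{:}t]$, and $o_q(t)\in\Nat$ the length of its output so far; put $\tau(t)\defined(s_q(t))_{q\in C}\in Q^C$ and $\delta(t)\defined o_{p_0}(t)-o_{q_0}(t)$. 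Then $\delta(0)=0$, $\delta(|w|)>M\bound$, $|\delta(t{+}1)-\delta(t)|\le M$, and $\tau$ takes at most $|Q|^{|C|}\le\bound$ values. Since $\delta$ is a walk with steps in $[-M,M]$ from $0$ to a value $>M\bound$, it visits each of the $\bound{+}1$ pairwise disjoint intervals $[jM,(j{+}1)M)$, $0\le j\le\bound$ (the first time $\delta\ge jM$ it still lies below $(j{+}1)M$), hence takes at least $\bound{+}1$ distinct values; so $\delta$ cannot be constant on every $\tau$-class, i.e.\ there are $a<b$ with $\tau(a)=\tau(b)$ and $\delta(a)\ne\delta(b)$. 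Setting $\ell(q)\defined s_q(a)$, $u\defined w[1{:}a]$, $u'\defined w[a{+}1{:}b]$, $u''\defined w[b{+}1{:}|w|]$ and keeping $i$, we obtain $i(q)\runs{u}\ell(q)\runs{u'}\ell(q)\runs{u''}q$ for all $q$, with $u'\ne\movi$ and $|\alpha'(p_0)|-|\alpha'(q_0)|=\delta(b)-\delta(a)\ne 0$.

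It remains to enforce $|u|,|u'|,|u''|\le\bound$ by iterated cycle removal. Whenever $|u|>\bound$ (resp.\ $|u''|>\bound$), two of the more than $\bound$ tuples $\tau(\cdot)$ along that segment coincide, and deleting the infix between them shortens $u$ (resp.\ $u''$) without changing $\ell$, $u'$, or any output. Whenever $|u'|>\bound$, two tuples $\tau(c)=\tau(d)$ with $a\le c<d<b$ coincide, splitting the loop $\ell(q)\runs{u'}\ell(q)$ into an inner cycle labelled $w[c{+}1{:}d]$ and an outer cycle labelled $w[a{+}1{:}c]\,w[d{+}1{:}b]$; their contributions to $\delta(b)-\delta(a)$ are $\delta(d)-\delta(c)$ and $\big(\delta(b)-\delta(a)\big)-\big(\delta(d)-\delta(c)\big)$, which cannot both vanish, so at least one of the two sub-cycles still produces outputs of different lengths for $p_0$ and $q_0$; we keep that one, updating $\ell,u,u''$ accordingly. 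Each operation strictly decreases the relevant length, so after finitely many steps all three words have length $\le\bound$. Since the resulting data is of bounded size, it can be searched exhaustively, which yields decidability of separability.

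The main obstacle is the only-if direction, specifically coupling the two pigeonhole requirements: the extracted word $u'$ must be a cycle on \emph{every} state of $C$ simultaneously \emph{and} must produce outputs of different lengths along the $p_0$- and $q_0$-runs. A naive cycle extraction achieves the first but can destroy the second by accidentally cancelling the length gap; the fix is the interval pigeonhole showing $\delta$ must change value across some $\tau$-class boundary, together with the inner/outer-cycle dichotomy used while shortening $u'$. A secondary point to watch is the verification, in the \emph{if} direction, that $\{i(q):q\in C\},v_N,C$ is a genuine step (uniqueness of predecessors), which one arranges by trimming $J$.
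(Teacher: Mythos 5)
Your proof is correct, and in the delicate (``only if'') direction it is organized differently from the paper's, although both rest on the same two pigeonhole ingredients: the tuple of states of the $|C|$ runs of the step takes at most $\bound$ values, and each transition outputs at most $M$ letters, so an advance gap larger than $\Bound = M\bound$ forces the word to be longer than $\bound$. The paper extracts a \emph{short} loop right away (first repetition of the state tuple, so $0<|u'|\le\bound$) and then argues by induction on $|w|$: if the loop is balanced, i.e.\ $|\alpha'(p)|=|\alpha'(q)|$, it is deleted --- the gap is unchanged, the word is strictly shorter yet still longer than $\bound$ --- so after finitely many deletions an unbalanced short loop must appear, after which $u$ and $u''$ are shortened by ordinary cycle removal, which never touches $u'$. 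You instead find an unbalanced loop in one shot via the walk $\delta(t)=o_{p_0}(t)-o_{q_0}(t)$: going from $0$ to beyond $M\bound$ in steps of size at most $M$, it takes at least $\bound+1$ values, so it cannot be constant on the at most $\bound$ classes of the tuple map, giving $\tau(a)=\tau(b)$ with $\delta(a)\neq\delta(b)$; the price is that this loop may be long, and you shrink it with the inner/outer-cycle dichotomy (at least one sub-cycle keeps a nonzero length difference), a step the paper never needs. So you trade the paper's termination-by-induction for an interval-counting argument plus loop-shrinking; both work, and yours makes the ``why an unbalanced loop must exist'' quantitative. Two small points to tidy, neither affecting correctness: in the ``if'' direction, ``discarding superfluous states of $J$'' is not the right fix (removing a state of $J$ would leave its targets in $C$ without any predecessor); uniqueness of $\pre{J,C}{v_N}(q)$ is automatic because $\trans$ is trim and unambiguous, since two distinct initial states reaching $q$ on $v_N$ would yield two accepting runs on $v_N x$ for a suitable $x$. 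And when you shrink $u'$ by keeping the inner cycle, $u$ gets longer, so ``each operation strictly decreases the relevant length'' should be read as: first drive $|u'|$ below $\bound$ (each such step strictly decreases $|u'|$), and only then shorten $u$ and $u''$, which leaves $u'$ untouched.
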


\begin{proof} If the conditions holds, then by iterating the loop
the set is separable. Conversely,
let $J,v,C$ and $p,q \in C$ be such that
$||\val{J,C}{v}(p)| - |\val{J,C}{v}(q)|| > M \bound$.
Suppose by symmetry that $ |\val{J,C}{v}(p)| > |\val{J,C}{v}(q)| + M \bound$.
Thus $|v| > |Q|^{|Q|}$. By pumping
we can  factor $v = uu'u''$ with $0 < |u'| \le |Q|^{|Q|}$
such that $i(q) \runs{u| \alpha(r)} \ell(r) \runs{u' | \alpha'(r)} \ell(r) \runs{u'' | \alpha''(r)} r$
for all $r \in Q$. Now, if  $|\alpha'(p)| = |\alpha'(q)|$, we can remove the loop
and get the result by induction since $|uu''| < |v|$
and $|\alpha(p) \alpha''(p)| > |\alpha(q) \alpha''(q)| + M \bound$.
Otherwise $|\alpha'(p)| \neq |\alpha'(q)|$ and we can enforce
 $|u|, |u''| \le \bound$ by a similar pumping argument.
\end{proof}
\begin{remark} Since $\Bound \defined M \bound$, observe that
$|\alpha(q)|, |\alpha'(q)|, |\alpha''(q)| \le \Bound$.
\end{remark}

We now state the strong version of Lemma~\ref{lem:sep-theta}.
Its proof is given in Subsection~\ref{ssec:proof:sep-theta2}.

\old{
\begin{lemma}[Looping futures - strong version] \label{lem:sep-theta2}
Let  $C \in \Comp$  be separable
and $J,w,C$ be an initial step. There exists 
$\tau, \theta \in B^*$ with $|\theta| = \Bound!$
and $|\tau| \le 3\Bound$, which can be uniquely determined
from $C$  and $\adv{J,C}{u}(q)$ for $q \in C$,
such that:
\begin{itemize}
\item $\tau \pref \advm{J,C}{w} \pref \tau  \theta^\omega$;
\item for all step $C,v,D$ and $q \in D$,
$\val{C,D}{w}(q) \en{D}(q) =  (\adv{J,C}{w}(\pre{C,D}{v}(q)))^{-1} \tau  \theta^\omega$.
\end{itemize}
\end{lemma}}

\cor{
\begin{lemma}[Looping futures - strong version] \label{lem:sep-theta2}
Let  $C \in \Comp$  be separable
and $J,w,C$ be an initial step. There exists 
$\tau, \theta \in B^*$ with $|\theta| = \Bound!$
and $|\tau| \le \Bound!$, which can be uniquely determined
from $C$  and $\adv{J,C}{u}(q)$ for $q \in C$,
such that:
\begin{itemize}
\item $ \advm{J,C}{w} \pref \tau  \theta^\omega$;
\item for all step $C,v,D$ and $q \in D$,
$\val{C,D}{w}(q) \en{D}(q) =  (\adv{J,C}{w}(\pre{C,D}{v}(q)))^{-1} \tau  \theta^\omega$.
\end{itemize}
\end{lemma}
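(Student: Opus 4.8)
The plan is to reduce \cref{lem:sep-theta2} to a single combinatorial statement about one ultimately periodic word. Fix the initial step $J,w,C$ and pick a state $q_\emptyset \in C$ with $\adv{J,C}{w}(q_\emptyset) = \movi$ (such a state exists because, by \cref{lem:mutual}, the advances are mutual prefixes whose longest common prefix is empty). Set $W \defined \en{C}(q_\emptyset) \in B^\omega$. By \cref{lem:ends} one has $\adv{J,C}{w}(q)\,\en{C}(q) = W$ for every $q \in C$; in particular $W$ depends only on $C$ and the tuple $\bigl(\adv{J,C}{w}(q)\bigr)_{q\in C}$, and $\advm{J,C}{w} = \bigvee_{q\in C}\adv{J,C}{w}(q) \pref W$. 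The lemma then follows once we prove: (i) $W$ is ultimately periodic with pre-period and minimal period both at most $\Bound$ — so that taking $\tau$ to be its pre-periodic prefix and $\theta$ the length-$\Bound!$ factor of $W$ starting right after $\tau$ yields $W = \tau\theta^\omega$ with $|\theta| = \Bound!$, $|\tau| \le \Bound \le \Bound!$, both canonically read off $W$; and (ii) for every step $C,v,D$ and every $q \in D$, writing $p \defined \pre{C,D}{v}(q)$, one has $\adv{J,C}{w}(p)\,\val{C,D}{v}(q)\,\en{D}(q) = W$, which is exactly the displayed equality after moving $\adv{J,C}{w}(p)$ to the other side.

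For (i) I would first merge the separability witness of \cref{lem:carac-separ} with a compatibility witness of \cref{lem:carac-compat} by a pumping argument, obtaining $u \in A^*$, $u' \in A^+$ with $|u|,|u'|\le\bound$ and maps $d\colon C\to Q$, $\alpha,\alpha'\colon C\to B^*$ such that $q \runs{u | \alpha(q)} d(q) \runs{u' | \alpha'(q)} d(q)$ for every $q \in C$, with $d(q^*)\in F$ for some $q^*$ and $|\alpha'(p_0)|\neq|\alpha'(q_0)|$ for the separating pair $p_0,q_0$. Prepending the initial run reaching $C$ via $w$ and using that $\trans$ is clean, the run through $q^*$ that iterates $u'$ forever is accepting; since $\trans$ is productive, \cref{lem:continuity-loops} applied to this accepting loop against the parallel loop through any $r\in C$ identifies $W$ with the word $\alpha(r)\,\alpha'(r)^\omega$ shifted by $\adv{J,C}{w}(r)$, where $|\alpha(r)|,|\alpha'(r)|\le\Bound$ and $\alpha'(r)\neq\movi$. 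Taking $r = q_\emptyset$ gives the pre-period and period bounds $\le\Bound$; taking $r=p_0$ and $r=q_0$ shows $W$ is ultimately periodic with two distinct periods $\le\Bound$, so by the Fine--Wilf theorem its minimal period divides their $\gcd$, hence divides $\Bound!$. This divisibility is precisely what makes the choice $|\theta|=\Bound!$ work, and also what makes $\theta$ a well-defined function of $W$.

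The crux is (ii). By \cref{lem:ends2b} the quantity $\adv{J,C}{w}(\pre{C,D}{v}(q))\,\val{C,D}{v}(q)\,\en{D}(q)$ does not depend on $q\in D$; choosing $q$ whose end $\en{D}(q)$ is the output of an accepting continuation that, by \cref{lem:carac-compat}, may be taken ultimately periodic of the form $zz'^\omega$ with $z,z'$ short, this quantity equals $(\com{J,C}{w})^{-1} f(wvzz'^\omega)$, whereas $W = (\com{J,C}{w})^{-1} f(w u (u')^\omega)$. So (ii) reduces to the equality $f(wvzz'^\omega) = f(w u (u')^\omega)$ of two values of $f$ on inputs that a priori share only the prefix $w$ — and this is the delicate point, since plain continuity of $f$ only guarantees a bounded common output prefix for inputs sharing a bounded prefix. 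I would resolve it exactly as suggested by the intuition preceding the lemma: separability means two advances into $C$ differ by more than $\Bound$, so the already-established period of $W$ (which is $\le\Bound$) is "locked in" over a window strictly longer than one period, leaving no room for a continuation of the step to leave $W$; concretely one compares the separating loop with the loop underlying the continuation $zz'^\omega$ through \cref{lem:continuity-loops}, after bringing the two loops to a common iterated shape, and the length discrepancy of the separating loop rules out any deviation. This gives $\val{C,D}{v}(q)\,\en{D}(q) = (\adv{J,C}{w}(p))^{-1}\,W = (\adv{J,C}{w}(p))^{-1}\,\tau\theta^\omega$, and together with the facts recorded above ($\advm{J,C}{w}\pref\tau\theta^\omega$ and the determinacy of $\tau,\theta$ from $C$ and the advances) this completes the proof.
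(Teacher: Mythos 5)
Your framing is the right one and matches the paper's skeleton: setting $W \defined \adv{J,C}{w}(q)\,\en{C}(q)$ (independent of $q\in C$ by \cref{lem:ends}) and reducing the lemma to (i) ``$W=\tau\theta^\omega$ with controlled lengths'' and (ii) ``$\adv{J,C}{w}(\pre{C,D}{v}(q))\,\val{C,D}{v}(q)\,\en{D}(q)=W$ for every step $C,v,D$'' is exactly what has to be shown, and (i) essentially goes through (your appeal to productivity is admissible where the lemma is used, although the paper's own proof avoids it; the Fine--Wilf detour is unnecessary since any period $\le\Bound$ divides $\Bound!$). However, two things do not hold up. First, the ``merged witness'' you start from is unjustified: \cref{lem:carac-separ} places the loops with $|\alpha'(p_0)|\neq|\alpha'(q_0)|$ \emph{before} $C$ (between initial states and $C$), while \cref{lem:carac-compat} places its loops \emph{after} $C$; no pumping argument transports the length discrepancy of the past loop into a common-future loop of $C$, and in general no future loop with outputs of distinct lengths need exist. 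This matters because your argument for (ii) leans on exactly that object.

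Second, and this is the genuine gap, part (ii) --- which you correctly call the crux --- is not proved. Knowing that the future word $\val{C,D}{v}(q)\,\en{D}(q)$ agrees with $W$ on a window ``strictly longer than one period'' does not pin down an infinite word: after any bounded forced prefix the continuation could still deviate, and continuity gives nothing more for a fixed finite common input prefix. Also, \cref{lem:continuity-loops} cannot be used to ``compare the separating loop with the loop underlying $zz'^\omega$'', since that lemma compares two runs over the \emph{same} input, whereas those two loops read different inputs. The missing mechanism is the unbounded pumping that drives the paper's proof: iterating the loop of \cref{lem:carac-separ} yields, for every $n$, an initial step $i(C),u{u'}^{n}u'',C$ in which one advance exceeds another by a word $\beta\theta^{\,n-K}$ of unbounded length; applying \cref{lem:ends2b} to each of these pumped initial steps together with the arbitrary future step $C,v,D$ forces $\beta\theta^{\,n-K}$ to be a prefix of the fixed word $\val{C,D}{v}(\new{q})\,\en{D}(\new{q})$ for all $n$, hence that word equals $\beta\theta^{\omega}$, uniformly in $v$ and $D$. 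Only afterwards does one apply \cref{lem:ends2b} once more, now to the given step $J,w,C$, to express everything through the advances $\adv{J,C}{w}$ and to extract $\tau$ (this last comparison, using $|uu'u''|\le 3\bound$, is also where $|\tau|\le\Bound!$ comes from). Your sketch never produces an arbitrarily long forced prefix --- a single separation of size $>\Bound$ cannot do the job --- so the second bullet of the lemma remains unestablished as written.
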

}

\subsection{Proof of Lemma~\ref{lem:sep-theta2}}

\label{ssec:proof:sep-theta2}

\cor{Since $C$ is separable, we get $p,q \in C$ verifying the conditions
of Lemma~\ref{lem:carac-separ}. Assume by symmetry that
$0  \le |\alpha'(q)| <|\alpha'(p)| \le \Bound$.
Note that $i(C), u{u'}^n u'', C$ is an initial step for all $n \ge 0$.
From this observation, we deduce \cref{slem:beta-psi}.

\begin{sublemma} \label{slem:beta-psi} There exists $\beta, \theta \in B^*$ such that
$|\beta| \le \Bound$ and $|\theta| = \Bound!$, and $N,K \ge 0$ such that
for all $n$ large enough, we have:
\begin{equation}
\beta \theta^{n - K} \pref \left(\val{i(C),C}{u{u'}^{nN}u''}(q)\right)^{-1} \val{i(C),C}{u{u'}^{nN}u''}(p).
\end{equation}
Furthermore, the values $\beta$ and $\theta$ can be computed from
$i(C),C,u,u'$ and $u''$.
\end{sublemma}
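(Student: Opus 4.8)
The plan is to describe the word $g_m:=\big(\val{i(C),C}{u{u'}^{m}u''}(q)\big)^{-1}\val{i(C),C}{u{u'}^{m}u''}(p)$ explicitly enough to read off its eventual periodicity, and then to restrict $m$ to a suitable arithmetic progression. Recall that along the initial step $i(C),u{u'}^{m}u'',C$ one has $\val{i(C),C}{u{u'}^{m}u''}(r)=\alpha(r)(\alpha'(r))^{m}\alpha''(r)$ for every $r\in C$. Put $\delta:=|\alpha'(p)|-|\alpha'(q)|\ge 1$, $k_m:=|\alpha(q)|+m|\alpha'(q)|+|\alpha''(q)|$ and $\ell_m:=|\alpha(p)|+m|\alpha'(p)|$, so $\ell_m-k_m\to+\infty$. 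By \cref{lem:mutual} the words $\alpha(q)(\alpha'(q))^{m}\alpha''(q)$ and $\alpha(p)(\alpha'(p))^{m}\alpha''(p)$ are mutual prefixes; since for all large $m$ the first is strictly shorter, it is a prefix of the second, so $g_m$ is well defined and equals $\big(\alpha(p)(\alpha'(p))^{m}\alpha''(p)\big)[k_m{+}1{:}\,]$.

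Now set $W:=\alpha(p)(\alpha'(p))^{\omega}\in B^{\omega}$ (using $|\alpha'(p)|\ge 1$). The word $\alpha(p)(\alpha'(p))^{m}\alpha''(p)$ agrees with $W$ on its first $\ell_m$ letters and has total length $\ell_m+|\alpha''(p)|$; hence for all large $m$ (so that $k_m\le\ell_m$), $g_m=W[k_m{+}1{:}\,\ell_m]\cdot\alpha''(p)$, i.e.\ a factor of the single ultimately periodic word $W$ (pre-period $|\alpha(p)|\le\Bound$, period $\alpha'(p)$ with $|\alpha'(p)|\le\Bound$) followed by a tail of length $|\alpha''(p)|\le\Bound$. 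I then restrict to $m=nN$ with $N:=\Bound!$. Because $|\alpha'(p)|$ divides $N$, the residue $(k_{nN}-|\alpha(p)|)\bmod|\alpha'(p)|$ does not depend on $n$ once $k_{nN}>|\alpha(p)|$; the only remaining case is $\alpha'(q)=\movi$, where $k_{nN}$ is simply a constant. In both situations, for all large $n$ one can write $W[k_{nN}{+}1{:}\,\ell_{nN}]=\beta\,\pi^{\omega}[1{:}L_n]$ with $\beta$ a fixed word of length $\le\Bound$, $\pi$ a fixed cyclic rotation of $\alpha'(p)$ (so $|\pi|\le\Bound$ divides $\Bound!$), and $L_n=\ell_{nN}-k_{nN}-|\beta|=n\,\Bound!\,\delta+O(\Bound)$.

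Finally put $\theta:=\pi^{\,\Bound!/|\pi|}$, so $|\theta|=\Bound!$ and $\theta^{\omega}=\pi^{\omega}$. For all large $n$ we get $g_{nN}=\beta\,\theta^{\omega}[1{:}L_n]\cdot\alpha''(p)$, and since $L_n=n\,\Bound!\,\delta+O(\Bound)\ge (n-1)\,\Bound!$ (as $\delta\ge 1$ and $\Bound!$ dwarfs $\Bound$), the word $\beta\,\theta^{n-1}$ is a prefix of $\beta\,\theta^{\omega}[1{:}L_n]$, hence of $g_{nN}$; with $K:=1$ this is precisely the asserted inequality. The words $\beta$ and $\pi$, hence $\theta$, are built explicitly from $\alpha(p),\alpha'(p)$ and the lengths $|\alpha(q)|,|\alpha''(q)|$, so are computable from $i(C),C,u,u'$ and $u''$.

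The one place continuity of $f$ enters is \cref{lem:mutual}, which yields the mutual-prefix property that makes $g_m$ well defined and, crucially, a factor of the single periodic word $W$. I expect the only delicate part to be the length bookkeeping of the last two paragraphs: checking that the several bounded quantities ($|\alpha(p)|,|\alpha''(p)|,|\alpha''(q)|,|\beta|$) shift lengths only by $O(\Bound)$, which one extra copy of $\theta$ absorbs (so $K=1$ suffices), and that choosing $N=\Bound!$ genuinely freezes the rotation $\pi$ of the period.
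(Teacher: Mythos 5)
Your proof is correct and follows essentially the same route as the paper's: express the quotient as a suffix of $\alpha(p)\,\alpha'(p)^{n}\alpha''(p)$ shifted by $|\alpha(q)|+n|\alpha'(q)|+|\alpha''(q)|$, restrict the number of loop iterations to a fixed multiple so that the shift freezes a single conjugate of $\alpha'(p)$, and take $\theta$ as the power of that conjugate of length $\Bound!$. The only differences are cosmetic: you take $N=\Bound!$ where the paper takes $N=|\alpha'(p)|$ (equivalent since $|\alpha'(p)|\le\Bound$ divides $\Bound!$), you allow a nonempty $\beta$ to treat the $\alpha'(q)=\movi$ case uniformly where the paper sets $\beta=\movi$ and calls that case clear, and you make the appeal to \cref{lem:mutual} explicit.
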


\begin{proof} Let us first observe that for $n \ge 0$ large enough, we have:
\begin{equation*}
\begin{aligned}
\pi_n \defined \left(\val{i(C),C}{u{u'}^nu''}(q)\right)^{-1} \val{i(C),C}{u{u'}^nu''}(p)
= \alpha(p) \alpha'(p)^n \alpha''(p)[t_n{:}]
\end{aligned}
\end{equation*}
where $t_n \defined |\alpha(q)| +n \times |\alpha'(q)| + |\alpha''(q)|$.
If $|\alpha'(q)| = 0$ the result is clear. From now on, we assume that
$|\alpha'(q)| > 0$. Let us consider $n|\alpha'(p)|$ iterations of the loop, then:
\begin{equation*}
\begin{aligned}
\pi_{n |\alpha'(p)|} &=  \left( \alpha'(p)^{n |\alpha'(p)|} \alpha''(p)\right)[|\alpha(q)|+|\alpha''(q)| - |\alpha(p)|
+ n |\alpha'(p)| |\alpha'(q)|: ]\\
& = \left(\alpha'(p)^{{(n-M)} (|\alpha'(p)|  - |\alpha'(q)|)}
 \alpha''(p)\right)[|\alpha(q)|+|\alpha''(q)| - |\alpha(p)| + M|\alpha'(p)| |\alpha'(q)|:]\\
\end{aligned}
\end{equation*}
where $M$ is fixed such that $|\alpha(q)|+|\alpha''(q)| - |\alpha(p)| + M|\alpha'(p)| |\alpha'(q)| \ge 0$.
The result easily follows by choosing $\theta \defined \phi^{\Bound!/|\alpha'(p)|}$
where $\phi$ is defined as a conjugate of $\alpha'(p)$ (shifted of
$|\alpha(q)|+|\alpha''(q)| - |\alpha(p)| + M|\alpha'(p)| |\alpha'(q)|$), and $\beta = \movi$.
\end{proof}
}

\old{
Since $C$ is separable, we get $p,q \in C$ verifying the conditions
of Lemma~\ref{lem:carac-separ}. Assume by symmetry that
$0  \le |\alpha'(q)| <|\alpha'(p)| \le \Bound$.
Note that $i(C), u{u'}^n u'', C$ is an initial step for all $n \ge 0$.
Furthermore for $n \defined |\alpha'(p)|k$ and $k \ge 0$ large enough, we get:
\begin{equation*}
\label{eq:abo}
\begin{aligned}
\left(\val{i(C),C}{u{u'}^nu''}(q)\right)^{-1} \val{i(C),C}{u{u'}^nu''}(p)
&= \left(\alpha(q) \alpha'(a)^n \alpha''(q)\right)^{-1} \alpha(p) \alpha'(p)^n \alpha''(p)\\
&=  \left(\alpha(p) \alpha'(p)^n \alpha''(p)\right)[|\alpha(q)\alpha''(q)| + k |\alpha'(p)\alpha'(q)|: ]\\
&= \left(\alpha'(p)^{k(|\alpha'(p)| {-} |\alpha'(q)|)}\alpha''(p)\right) [K:] \\
& \substack{~~\tnorm{with}~~ K \defined |\alpha(q)\alpha''(q)| {-} |\alpha(p)|}\\
&= \phi^{k-K} \delta(k)
\end{aligned}
\end{equation*}
where $\phi \defined (\alpha'(p)^\omega) [K: K {+} |\alpha'(p)|]$
is a conjugate of $\alpha'(p)$ which is computable from $C$,
and $\delta(k) \defined
\left(\phi^{k-K}\right)^{-1} \left((\alpha'(p)^{k(|\alpha'p| - |\alpha'q|)} \alpha''(p))
[K{:}] \right)$.\\
Let $\psi \defined \phi^{\Bound! / |\phi|}$, note that $|\psi| = \Bound!$
and it can be computed from $C$.}

\cor{From this result, we now deduce that the futures have a looping behavior.

\begin{sublemma} \label{slem:rrr} For all step $C,v,D$ and for all $\new{r} \in D$,
if $r \defined \pre{C,D}{v}(\new{r})$ then we have
$\val{C,D}{v}(\new{r}) \en{D}(\new{r})
= (\adv{i(C),C}{u{u'}u''}(r))^{-1} (\adv{i(C),C}{u{u'}u''}(q)) \beta \theta^{\omega}$.
\end{sublemma}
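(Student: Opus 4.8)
The plan is to recognize the right-hand side as the canonical infinite continuation provided by the end function $\en{D}$ of Lemma~\ref{lem:ends}, and to read off its shape from the loop of Lemma~\ref{lem:carac-separ} together with Sublemma~\ref{slem:beta-psi}. Throughout, $p,q\in C$, the words $u,u',u''$ and the functions $i,\ell,\alpha,\alpha',\alpha''$ are those produced by Lemma~\ref{lem:carac-separ} under the symmetry convention $0\le|\alpha'(q)|<|\alpha'(p)|$, and $\beta,\theta,N,K$ are those of Sublemma~\ref{slem:beta-psi}. First I would record the elementary facts used repeatedly: for every $m\ge 0$ the triple $i(C),u{u'}^{m}u'',C$ is an initial step with $\val{i(C),C}{u{u'}^{m}u''}(p)=\alpha(p)\alpha'(p)^{m}\alpha''(p)$, and similarly for $q$; composing this initial step with the step $C,v,D$ shows that $i(C),u{u'}^{m}u''v,D$ is again an initial step, with $\val{i(C),D}{u{u'}^{m}u''v}(\new{s})=\val{i(C),C}{u{u'}^{m}u''}(\pre{C,D}{v}(\new{s}))\,\val{C,D}{v}(\new{s})$ for every $\new{s}\in D$.

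Since $C,v,D$ is a step, $\pre{C,D}{v}$ is surjective, so I can fix $\new{p},\new{q}\in D$ with $\pre{C,D}{v}(\new{p})=p$ and $\pre{C,D}{v}(\new{q})=q$, and set $V\defined\val{C,D}{v}(\new{p})\,\en{D}(\new{p})$ and $Z\defined\val{C,D}{v}(\new{q})\,\en{D}(\new{q})$, two fixed infinite words. Applying Lemma~\ref{lem:ends} to $D$ along the initial step $i(C),u{u'}^{m}u''v,D$ and splitting as above gives, for every $m\ge 0$, the equality $\alpha(p)\alpha'(p)^{m}\alpha''(p)\,V=\alpha(q)\alpha'(q)^{m}\alpha''(q)\,Z$. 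On the other hand, for all $n$ large enough Sublemma~\ref{slem:beta-psi} yields $\beta\theta^{n-K}\pref\bigl(\alpha(q)\alpha'(q)^{nN}\alpha''(q)\bigr)^{-1}\alpha(p)\alpha'(p)^{nN}\alpha''(p)$, that is, $\alpha(q)\alpha'(q)^{nN}\alpha''(q)\,\beta\theta^{n-K}\pref\alpha(p)\alpha'(p)^{nN}\alpha''(p)$. Feeding the previous equality at $m=nN$ (and using that $\alpha(p)\alpha'(p)^{nN}\alpha''(p)$ is a prefix of $\alpha(p)\alpha'(p)^{nN}\alpha''(p)\,V$) gives $\alpha(q)\alpha'(q)^{nN}\alpha''(q)\,\beta\theta^{n-K}\pref\alpha(q)\alpha'(q)^{nN}\alpha''(q)\,Z$; cancelling the common finite prefix leaves $\beta\theta^{n-K}\pref Z$ for all large $n$. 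As $|\theta|=\Bound!>0$, the words $\beta\theta^{n-K}$ form an increasing chain of prefixes of the infinite word $Z$ whose lengths tend to infinity, so $Z=\beta\theta^{\omega}$; equivalently, $\val{C,D}{v}(\new{q})\,\en{D}(\new{q})=\beta\theta^{\omega}$.

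Finally, for an arbitrary $\new{r}\in D$ with $r\defined\pre{C,D}{v}(\new{r})$, I would apply Lemma~\ref{lem:ends} to $D$ along the initial step $i(C),uu'u''v,D$ (the case $m=1$) and split to obtain $\val{i(C),C}{uu'u''}(r)\,\val{C,D}{v}(\new{r})\,\en{D}(\new{r})=\val{i(C),C}{uu'u''}(q)\,\val{C,D}{v}(\new{q})\,\en{D}(\new{q})=\val{i(C),C}{uu'u''}(q)\,\beta\theta^{\omega}$. By Lemma~\ref{lem:mutual} the words $\val{i(C),C}{uu'u''}(s)$ for $s\in C$ are mutual prefixes, so $\com{i(C),C}{uu'u''}$ is a common prefix of both $\val{i(C),C}{uu'u''}(r)$ and $\val{i(C),C}{uu'u''}(q)$; cancelling it and using $\val{i(C),C}{uu'u''}(s)=\com{i(C),C}{uu'u''}\,\adv{i(C),C}{uu'u''}(s)$ yields $\adv{i(C),C}{uu'u''}(r)\,\val{C,D}{v}(\new{r})\,\en{D}(\new{r})=\adv{i(C),C}{uu'u''}(q)\,\beta\theta^{\omega}$, which is precisely the asserted identity; being an equation between genuine infinite words, it also shows the left quotient in the statement is well defined. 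The one delicate point — and the main obstacle — is the last step of the second paragraph, turning the family of prefix relations $\beta\theta^{n-K}\pref Z$ valid for all large $n$ into the exact equality $Z=\beta\theta^{\omega}$, which is where $\theta\neq\movi$ and $Z\in B^{\omega}$ are essential; the rest is routine splitting of runs along steps.
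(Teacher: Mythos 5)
Your proof is correct and follows essentially the same route as the paper: first pin down $\val{C,D}{v}(\new{q})\,\en{D}(\new{q})=\beta\theta^\omega$ by combining the splitting identity along $i(C),u{u'}^{nN}u''v,D$ with Sublemma~\ref{slem:beta-psi} and letting $n$ grow, then transfer to an arbitrary $\new{r}$ via the same splitting at the single-loop step $uu'u''$ and cancel the common part. The only cosmetic difference is that you re-derive inline (from Lemma~\ref{lem:ends} and Lemma~\ref{lem:mutual}) what the paper invokes directly as Lemma~\ref{lem:ends2b}.
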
}

\old{%
\begin{sublemma} \label{slem:rrr} For all step $C,v,D$ and for all $\new{r} \in D$,
if $r \defined \pre{C,D}{v}(\new{r})$ then we have
$\val{C,D}{v}(\new{r}) \en{D}(\new{r})
= (\adv{i(C),C}{u{u'}u''}(r))^{-1} (\adv{i(C),C}{u{u'}u''}(q)) \psi^{\omega}$.
\end{sublemma}%
}

\cor{%
\begin{proof}
Let $\new{p}$, $\new{q}$ be such that
$\pre{C,D}{v}(\new{p}) = p$ and $\pre{C,D}{v}(\new{q}) = q$.
We get from \cref{lem:ends2b} applied to $i(C), uu'^nu'', C$ and $C,v,D$ that:
\begin{equation*}
\begin{aligned}
\val{C,D}{v}(\new{q}) \en{D}(\new{q})
&= \left(\val{i(C),C}{u{u'}^nu''}(q)\right)^{-1} \val{i(C),C}{u{u'}^nu''}(p) \val{C,D}{v}(\new{p}) \en{D}(\new{p}).
\end{aligned}
\end{equation*}
For $n$ large enough, \cref{slem:beta-psi} shows
$\beta \theta^{n-K}$$ \left(\val{i(C),C}{u{u'}^{nN}u''}(q)\right)^{-1} \val{i(C),C}{u{u'}^{nN}u''}(p) $.
Therefore, $\beta \theta^{n-K}\pref \val{C,D}{v}(\new{q}) \en{D}(\new{q})$.
Hence $\val{C,D}{v}(\new{q}) \en{D}(\new{q}) = \beta \theta^\omega$
because we can chose arbitrarily large $n \ge 0$.

Now, let us apply  \cref{lem:ends2b} to $i(C), uu'u'', C$ and $C,v,D$,
we get 
\begin{equation*}
\begin{aligned}
\val{C,D}{v}(\new{r}) \en{D}(\new{r})
= (\adv{i(C),C}{u{u'}u''}(r))^{-1} \adv{i(C),C}{u{u'}u''}(q)
\underbrace{\val{C,D}{v}(\new{q}) \en{D}(\new{q})}_{= \beta \theta^\omega}.
\end{aligned}
\end{equation*}
and the result follows immediately.
\end{proof}}

\old{%
\begin{proof}
Let $\new{p}$, $\new{q}$ be such that
$\pre{C,D}{v}(\new{p}) = p$ and $\pre{C,D}{v}(\new{q}) = q$.
We get from Lemma~\ref{lem:ends2b} applied to $i(C), uu'^nu'', C$ and $C,v,D$ that:
\begin{equation*}
\begin{aligned}
\val{C,D}{v}(\new{q}) \en{D}(\new{q})
&= \left(\val{i(C),C}{u{u'}^nu''}(q)\right)^{-1} \val{i(C),C}{u{u'}^nu''}(p) \val{C,D}{v}(\new{p}) \en{D}(\new{p})\\
& = \phi^{k-K} \delta(k) \val{C,D}{v}(\new{p}) \en{D}(\new{p})
\substack{\tnorm{~~for $k \ge 0$ large enough }}
\end{aligned}
\end{equation*}
By using arbitrarily large $k$, we see that
$\val{C,D}{v}(\new{q}) \en{D}(\new{q}) = \phi^{\omega} = \psi^\omega$.
Now if $\new{r}$ is such that $\pre{C,D}{v}(\new{r}) = r$, then by Lemma~\ref{lem:ends2b}
applied to $i(C), uu'u'',C$ and $C,v,D$ we get: 
\begin{equation*}
\begin{aligned}
\val{C,D}{v}(\new{r}) \en{D}(\new{r})
= (\adv{i(C),C}{u{u'}u''}(r))^{-1} \adv{i(C),C}{u{u'}u''}(q)
\underbrace{\val{C,D}{v}(\new{q}) \en{D}(\new{q})}_{\psi^\omega}.
\end{aligned}
\end{equation*}
and the result follows immediately.
\end{proof}}
\cor{%
Let us now consider what happens with the step $J,w,C$. Let
$r \in C$ (resp. $s \in C$) be such that $\adv{J,C}{w}(r) = \movi$
(resp. $\adv{J,C}{w}(s) = \advm{J,C}{w}$), i.e.
the run ending in $r$ (resp. in $s$) has the smallest
(resp. the longest) production.

Let  $C,v,D$ be a step and $\new{s} \in D$ (resp. $\new{r}$)
be such that $s = \pre{C,D}{v}(\new{s})$ (resp. $r = \pre{C,D}{v}(\new{r})$).
Note that there exists such a step (at least the empty one), and furthermore:
\begin{equation*}
\begin{aligned}
\advm{J,C}{w}  \val{C,D}{v}(\new{s}) \en{D}({\new{s}})
&= \adv{J,C}{w}(s)  \val{C,D}{v}(\new{s}) \en{D}({\new{s}})
& \substack{\tnorm{~by choice of $s$;}}\\
& =  \adv{J,C}{w}(r)  \val{C,D}{v}(\new{r}) \en{D}(\new{r})
& \substack{\tnorm{~by \cref{lem:ends2b};}}\\
& =  \movi  \val{C,D}{v}(\new{r}) \en{D}(\new{r}) 
 &\substack{\tnorm{~by choice of $r$;}}\\
&= (\adv{i(C),C}{u{u'}u''}(r))^{-1} (\adv{i(C),C}{u{u'}u''}(q))
\beta \theta^\omega
&\substack{\tnorm{~by Sublemma~\ref{slem:rrr}.}}\\
\end{aligned}
\end{equation*}
Let $m \defined |(\adv{i(C),C}{u{u'}u''}(q)| - |\adv{i(C),C}{u{u'}u''}(r)| + |\beta|$,
then $-3 \Bound \le m \le 4 \Bound$ (indeed $|\beta| \le \Bound$,
and furthermore $|\adv{i(C),C}{u{u'}u''}(q)| \le 3 \Bound$
and $|\adv{i(C),C}{u{u'}u''}(r)| \le 3 \Bound$ because 
$|uu'u''| \le 3 \bound$).

Now, observe that $|\theta| = \Bound! \ge 4 \Bound$
because $\Bound = M \bound$ and because we have chosen $M \ge 10$. We finally
make a case disjunction depending on the sign of $m$:
\begin{itemize}
\item if $m \ge 0$, we let $\tau \defined (\adv{i(C),C}{u{u'}u''}(r))^{-1} (\adv{i(C),C}{u{u'}u''}(q))
\beta$;
\item if $m < 0$, we let $\tau \defined (\adv{i(C),C}{u{u'}u''}(r))^{-1} (\adv{i(C),C}{u{u'}u''}(q))
\beta \psi$.
\end{itemize}
Note that $|\tau| \le \Bound!$ and that it only depends on $\beta$ (i.e. on
the step $i(C), uu'u'',C$) and on $r$ (thus on the advances of $J,w,C$),
but \emph{not} on the ``future'' step $C,v,D$ that we have selected.
Hence, for all step $C,v,D$ we have:
\begin{equation*}
\begin{aligned}
\advm{J,C}{w}  \val{C,D}{v}(\new{s}) \en{D}({\new{s}})
= \tau \theta^\omega.
\end{aligned}
\end{equation*}
and \cref{lem:sep-theta2} immediately follows.}%
\old{%
Let us now consider what happens with the step $J,w,C$. Let
$r \in C$ (resp. $s \in C$) be such that $\adv{J,C}{w}(r) = \movi$
(resp. $\adv{J,C}{w}(s) = \advm{J,C}{w}$), i.e.
the run ending in $r$ (resp. in $s$) has the smallest
(resp. the longest) production.
We conclude the proof by making
a case disjunction on the value of $\advm{J,C}{w}$:
\begin{itemize}
\item either $|\advm{J,C}{w}| \le 3 \Omega$. Then we define:
\begin{equation*}
\left\{
    \begin{array}{l}
        \tau \defined \advm{J,C}{w}\\
        \theta \defined \psi^\omega[|(\adv{J,C}{w}(q))^{-1} \tau|:
|(\adv{J,C}{w}(q))^{-1} \tau|{+}|\psi|]\\
    \end{array}
    \right.
\end{equation*}

\begin{claim} \label{claim:ka1} For all step $C,v,D$
and $\new{s} \in D$ be such that $s = \pre{C,D}{v}(\new{s})$, we have:\\
$\tau  \val{C,D}{v}(\new{s}) \en{D}(\new{s}) = \tau \theta^\omega.
$
\end{claim}

\begin{proof}
By Lemma~\ref{lem:ends2b},
$\displaystyle
\advm{J,C}{w}  \val{C,D}{v}(\new{s}) \en{D}(\new{s})
= \adv{J,C}{w}(q) \underbrace{\val{C,D}{v}(\new{q}) \en{D}(\new{q})}%
_{= \psi^\omega \tnorm{ by Sublemma~\ref{slem:rrr}}}
$
and so $ \val{C,D}{v}(\new{s}) \en{D}(\new{s}) =
(\psi^\omega)[|(\adv{J,C}{w}(q))^{-1} \tau|:] = \theta^\omega$.
\end{proof}
\item or  $|\advm{J,C}{w}| > 3 \Omega$.
Let $m \defined |(\adv{i(C),C}{u{u'}u''}(q)| - |\adv{i(C),C}{u{u'}u''}(r)|$.
Since $|uu'u''| \le 3 \bound$, we have
$m \le M \times 3 \bound = 3 \Bound \le |\advm{J,C}{w}|$.
In this case we define:
\begin{itemize}
\item if $m \le 0$, $\tau \defined \movi$ and $\theta \defined \psi^\omega[{-}m{:}{-}m{+}|\psi|]$;
\item if $m > 0$, $\tau \defined \advm{J,C}{w}[1{:}m]$ and $\theta \defined \psi$.
\end{itemize}

\begin{claim} \label{claim:ka2} For all step $C,v,D$
and $\new{s} \in D$ be such that $s = \pre{C,D}{v}(\new{s})$, we have:\\
$\advm{J,C}{w}  \val{C,D}{v}(\new{s}) \en{D}(\new{s}) = \tau \theta^\omega.
$
\end{claim}

\begin{proof}
Let $\new{r} \in D$ be such that $r = \pre{C,D}{v}(\new{r})$.
We get from Lemma~\ref{lem:ends2b}:
\begin{equation*}
\begin{aligned}
\advm{J,C}{w}  \val{C,D}{v}(\new{s}) \en{D}({\new{s}})
&= \movi \val{C,D}{v}(\new{r}) \en{D}(\new{r})\\
&= (\adv{i(C),C}{u{u'}u''}(r))^{-1} (\adv{i(C),C}{u{u'}u''}(q)) \psi^\omega \substack{\tnorm{~~~by Sublemma~\ref{slem:rrr}.}}\\
\end{aligned}
\end{equation*}
If $m \le 0$, then $|\adv{i(C),C}{u{u'}u''}(q)| {\le} |\adv{i(C),C}{u{u'}u''}(r)|$
and $\advm{J,C}{w}  \val{C,D}{v}(\new{s}) \en{D}({\new{s}}) = \psi^\omega [{-}m{:}]$.
If $m>0$, then $\advm{J,C}{w}  \val{C,D}{v}(\new{s}) \en{D}({\new{s}})
=(\adv{i(C),C}{u{u'}u''}(q))[1{:}m]\psi^\omega $.
\end{proof}
\end{itemize}
Note that claims~\ref{claim:ka1} and \ref{claim:ka2}
give the same result (in two cases).
Now let $C,v,D$ and apply Lemma~\ref{lem:ends2b}
to $J,w,C$ and $C,v,D$, we get for all $\new{t} \in D$
with $t \defined  \pre{C,D}{v}(\new{t})$:
\begin{equation*}
\begin{aligned}
\val{C,D}{w}(\new{t}) \en{D}(\new{t})
&= (\adv{J,C}{w}(t))^{-1} \advm{J,C}{w} \val{C,D}{v}(\new{s}) \en{D}(\new{s}) \\
&= (\adv{J,C}{w}(t))^{-1} \tau  \theta^\omega.
\end{aligned}
\end{equation*}}

\section{Proof of Lemma~\ref{lem:pre-compat}}

We denote by $\Parts$ the powerset $2^{Q}$.
We fix a total ordering $\pc$ on $\Parts$.

\subsection{Properties of compatible sets}

We begin this proof by giving some basic properties
of compatible sets.

\begin{definition} If $u \in A^*$ and $S \subseteq Q$,
we let $\push{u}{S} \defined \{q': \exists q \in S,
 q \runs{u|\alpha} q' \} \subseteq Q$.
\end{definition}

\begin{lemma}[Compatible sets cover the future] \label{lem:compat-fini} 
Let $x \in \Dom{f}$, $i \ge 0$ and $q^x_i \in S \subseteq Q$.
Then there exists
$C \in \Comp(S)$ and
$j \ge i$ such that $\push{x[i{+}1{:}j]}{C} = \push{x[i{+}1{:}j]}{S}$.
\end{lemma}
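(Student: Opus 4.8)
The plan is to let $C$ be the set of states of $S$ that are ``alive'' along the suffix $x[i{+}1{:}]$, that is,
\[
C \defined \{\, p \in S : \text{there is an infinite run of } \trans \text{ starting in } p \text{ and labelled by } x[i{+}1{:}] \,\}.
\]
First I would check that $C \in \Comp(S)$. Since $x \in \Dom{f}$ there is a (unique, as $\trans$ is unambiguous) accepting run $q^x_0 \runs{x[1]} q^x_1 \runs{x[2]} \cdots$; it visits $F$ infinitely often, hence so does its tail $q^x_i \runs{x[i{+}1]} q^x_{i+1} \cdots$, which is an infinite \emph{final} run from $q^x_i$ labelled by $x[i{+}1{:}]$. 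In particular $q^x_i \in C$. Taking the witness word $y \defined x[i{+}1{:}]$, assigning to $q^x_i$ this final run and to every other $q \in C$ an arbitrary infinite run starting in $q$ and labelled by $y$ (one exists by definition of $C$), all conditions of \cref{def:compat} are met; so $C \in \Comp$, and $C \subseteq S$ gives $C \in \Comp(S)$.

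Next I would exhibit the position $j$. If $C = S$ there is nothing more to do: take $j = i$, so that $x[i{+}1{:}j] = \movi$ and both sides equal $S$. Otherwise, fix $p \in S \smallsetminus C$ and consider the tree of all runs of $\trans$ that start in $p$ and are labelled by a prefix of $x[i{+}1{:}]$, ordered by the prefix relation on runs. This tree is finitely branching (each node has at most $|\Delta|$ children) and, by definition of $C$, has no infinite branch, hence is finite by König's lemma; let $L_p$ be its height. Setting $j \defined i + 1 + \max_{p \in S \smallsetminus C} L_p$, no run starting in a state of $S \smallsetminus C$ can be labelled by $x[i{+}1{:}j]$, i.e. $\push{x[i{+}1{:}j]}{\{p\}} = \vide$ for every $p \in S \smallsetminus C$. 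Since $\push{x[i{+}1{:}j]}{S} = \bigcup_{p \in S} \push{x[i{+}1{:}j]}{\{p\}}$, the states of $S \smallsetminus C$ contribute nothing, whence $\push{x[i{+}1{:}j]}{S} = \push{x[i{+}1{:}j]}{C}$; the reverse inclusion is immediate from $C \subseteq S$.

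I do not foresee a real obstacle, only two points that deserve care. First, \cref{def:compat} requires a \emph{single} infinite word witnessing the compatibility of all states of $C$ at once, which is exactly why one keeps $C$ inside $S$ and uses the concrete suffix $x[i{+}1{:}]$ rather than choosing an ad hoc word per state. Second, the appeal to König's lemma is legitimate only because $\trans$ is finite-state, so the run tree is finitely branching; this should be stated explicitly. Finally, the degenerate case $S = C$ must be treated separately, since then the only admissible choice is $j = i$.
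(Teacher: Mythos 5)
Your proof is correct and in substance coincides with the paper's: the paper likewise isolates the states of $S$ admitting an infinite run on the suffix $x[i{+}1{:}]$ (there, via a maximal such subset containing $q^x_i$), certifies compatibility using the tail of the accepting run as the common witness, and applies K\H{o}nig's lemma to the finitely branching trees of runs issued from the remaining states. The only difference is presentational: you take the full set of ``alive'' states directly and produce an explicit horizon $j$, whereas the paper argues by contradiction with a maximality argument.
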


\begin{proof} Assume by contradiction that the property
of Lemma~\ref{lem:compat-fini} does not
hold. Let $P$ be the set of subsets $C \subseteq S$ 
such that $q^x_i \in C$, and for all $p \in C$ there exists an infinite
run starting in $p$ and labelled by $x[i{+1}{:}]$.
Then $P \subseteq \Comp(S)$ and $\{q_i^x\} \in P$.

Now consider a set $C \in P$ such that
$|C| = \max_{C' \in P} |C'|$.
Since $C \in \Comp(S)$
then $\forall j \ge i$ we must have
$\push{x[i{+}1{:}j]}{C} \neq \push{x[i{+}1{:}j]}{S}$,
thus $\push{x[i{+}1{:}j]}{(S \smallsetminus C)} \neq \varnothing$
(because $\push{x[i{+}1{:}j]}{S}= (\push{x[i{+}1{:}j]}{C}) \cup (\push{x[i{+}1{:}j]}{(S \smallsetminus C)})$).
Hence the tree of all runs starting from $S \smallsetminus C$
and labelled by $x$ is infinite, thus by Kruskal's lemma it has
a infinite branch, i.e. there exists a state $p \in S \smallsetminus C$
and a infinite run starting in $p$ and labelled by $x$.
Finally $C \uplus \{p\} \in P$, which contradicts
the maximality of $|C|$.
\end{proof}

Using this result, one can define
$\cov^{x}_i(S)$ to be the smallest
set for $\pc$ among the elements
of $\Comp(S)$ whose future completely covers the 
future of $S$ as quickly as possible. Formally we have
Definition~\ref{def:cov} (which makes sense by Lemma~\ref{lem:compat-fini}).

\begin{definition}[Time and cover] \label{def:cov}
Let $x \in \Dom{f}$, $i \ge 0$ and $q^x_i \in S \subseteq Q$,
 we define:
\begin{itemize}
\item $\ctime^x_i(S) \defined
\min \{j \ge i: \exists C \in \Comp(S) \text{ such that } \push{x[i{+}1{:}j]}{C} = \push{x[i{+}1{:}j]}{S}\}$
\item $\cov^{x}_i(S)$ the minimal element for $\pc$ in the (non-empty) set:\\
$\{C \in \Comp(S): \push{x[i{+}1{:}\ctime^x_i(S)]}{C} = \push{x[i{+}1{:}\ctime^x_i(S)]}{S}\}$.
\end{itemize}
\end{definition}
We now define the sequence $(\Good^x_i)_i \in \Comp^\Nat$
which is obtained by applying successively the functions $(\cov^x_{i})_i$
when starting from $I$.

\begin{definition}[Good] \label{def:good}
Let $x \in \Dom{f}$, the sequence $(\Good^x_i)_i \in \Comp^\Nat$
is defined by:
\begin{itemize}
\item $\Good_0 = \cov^x_0(I)$;
\item for $i \ge 0$, $\Good^x_{i+1} = \cov^x_{i+1}(\push{x[i{+}1]}{\Good^x_i})$.
\end{itemize}
\end{definition}

\begin{lemma} \label{lem:good}
For all $i \ge 0$, $\Good^x_i$
is well defined, $\Good^x_i \subseteq \push{x[1{:}i]}{I}$ and
$q^x_i \in \Good^x_i$.
\end{lemma}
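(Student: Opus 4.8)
\textbf{Proof plan for Lemma~\ref{lem:good}.}
The plan is to prove the three assertions simultaneously by induction on $i \ge 0$, unfolding Definition~\ref{def:good} and relying on the basic properties already established: Lemma~\ref{lem:compat-fini} (which guarantees that $\cov^x_i$ is well defined whenever its argument contains $q^x_i$), the monotonicity of $\push{u}{\cdot}$ in its set argument, and the identity $\push{x[i{+}1]}{\push{x[1{:}i]}{I}} = \push{x[1{:}i{+}1]}{I}$. The key point to carry through the induction is that $\cov^x_i(S) \subseteq S$ always holds (it is selected inside $\Comp(S) \subseteq 2^S$), and that $\cov^x_i(S)$ still contains $q^x_i$ whenever $S$ does — this second fact needs a short argument: since $\cov^x_i(S)$ and $S$ have the same image under $\push{x[i{+}1{:}\ctime^x_i(S)]}{\cdot}$, and the run $q^x_i \runs{x[i{+}1{:}\ctime^x_i(S)]} q^x_{\ctime^x_i(S)}$ witnesses that $q^x_{\ctime^x_i(S)} \in \push{x[i{+}1{:}\ctime^x_i(S)]}{S}$, there must be a state of $\cov^x_i(S)$ reaching $q^x_{\ctime^x_i(S)}$; but I actually want $q^x_i$ itself to be in $\cov^x_i(S)$, which requires the slightly stronger observation that $\cov^x_i(S)$ can always be chosen to contain $q^x_i$ — indeed the set $P$ built in the proof of Lemma~\ref{lem:compat-fini} consists of subsets containing $q^x_i$, so the covering compatible set produced there contains $q^x_i$, and since we take the $\pc$-minimal such set among all covers, we should restrict attention to covers containing $q^x_i$. (If Definition~\ref{def:cov} as written does not already impose $q^x_i \in \cov^x_i(S)$, the cleanest fix is to read it that way; this is the one subtle point.)

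Granting that, the induction runs as follows. \textbf{Base case $i = 0$:} By Definition~\ref{def:good}, $\Good^x_0 = \cov^x_0(I)$. Since $q^x_0 \in I$ (the initial run starts in an initial state) and $I \supseteq I$, Lemma~\ref{lem:compat-fini} applies with $S = I$, so $\cov^x_0(I)$ is well defined; it lies in $\Comp(I) \subseteq 2^I$, hence $\Good^x_0 \subseteq I = \push{x[1{:}0]}{I}$; and by the discussion above $q^x_0 \in \Good^x_0$. \textbf{Inductive step:} assume $\Good^x_i$ is well defined, $\Good^x_i \subseteq \push{x[1{:}i]}{I}$ and $q^x_i \in \Good^x_i$. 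Let $S \defined \push{x[i{+}1]}{\Good^x_i}$. Since $q^x_i \in \Good^x_i$ and $q^x_i \runs{x[i{+}1]} q^x_{i+1}$, we get $q^x_{i+1} \in S$; and by monotonicity of $\push{x[i{+}1]}{\cdot}$ together with the induction hypothesis, $S \subseteq \push{x[i{+}1]}{\push{x[1{:}i]}{I}} = \push{x[1{:}i{+}1]}{I}$. Now Lemma~\ref{lem:compat-fini} applies to $S$ (which contains $q^x_{i+1}$), so $\cov^x_{i+1}(S) = \Good^x_{i+1}$ is well defined; it lies in $\Comp(S) \subseteq 2^S$, hence $\Good^x_{i+1} \subseteq S \subseteq \push{x[1{:}i{+}1]}{I}$; and $q^x_{i+1} \in \Good^x_{i+1}$ by the same argument as in the base case. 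This closes the induction.

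\textbf{Expected main obstacle.} The only genuinely delicate point is confirming that $q^x_i \in \cov^x_i(S)$ whenever $q^x_i \in S$ — i.e.\ that passing to the $\pc$-minimal covering compatible set never drops the state on the actual accepting run. Everything else is a bookkeeping unwinding of the definitions together with the elementary monotonicity $S \subseteq S' \implies \push{u}{S} \subseteq \push{u}{S'}$. I would handle the obstacle by extracting from the proof of Lemma~\ref{lem:compat-fini} the refinement that the compatible cover it produces always contains $q^x_i$, and by reading Definition~\ref{def:cov} so that the minimum in the definition of $\cov^x_i(S)$ is taken over covers $C$ with $q^x_i \in C$ (the two readings coincide once one checks such a cover exists, which is exactly what Lemma~\ref{lem:compat-fini} gives).
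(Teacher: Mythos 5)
Your induction skeleton, the base case, and the first two assertions (well-definedness via Lemma~\ref{lem:compat-fini}, and $\Good^x_{i+1} \subseteq \push{x[i{+}1]}{\Good^x_i} \subseteq \push{x[1{:}i{+}1]}{I}$ by monotonicity) are exactly the paper's argument. The genuine gap is where you yourself locate the difficulty: the claim $q^x_i \in \cov^x_i(S)$. You do not prove it for Definition~\ref{def:cov} \emph{as stated}; instead you propose to re-read the definition so that the minimum is taken only over covers containing $q^x_i$, and you assert that "the two readings coincide once one checks such a cover exists, which is exactly what Lemma~\ref{lem:compat-fini} gives." That justification fails: Lemma~\ref{lem:compat-fini} produces a cover containing $q^x_i$ at \emph{some} time $j$, not necessarily at the minimal covering time $\ctime^x_i(S)$, and even when a $q^x_i$-containing cover exists at time $\ctime^x_i(S)$, the $\pc$-minimal cover selected by Definition~\ref{def:cov} could a priori be a different compatible set omitting $q^x_i$. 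So the coincidence of the two readings is precisely the statement to be proved, not a consequence of Lemma~\ref{lem:compat-fini}. Changing the definition is also not an admissible fix in context: the restricted \oNT{} of Proposition~\ref{prop:ont-good} implements Definition~\ref{def:cov} literally (its condition~\ref{cond:b} picks the $\pc$-minimal cover, with no access to the accepting run), so Lemma~\ref{lem:pre-compat} needs the lemma for the definition as written.

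The missing idea is the paper's unambiguity argument, which completes the very computation you started and then abandoned. Let $S \defined \push{x[i{+}1]}{\Good^x_i}$ and $j \defined \ctime^x_{i+1}(S)$. Since $q^x_{i+1} \in S$, the accepting run gives $q^x_j \in \push{x[i{+}2{:}j]}{S} = \push{x[i{+}2{:}j]}{\Good^x_{i+1}}$, so some $p_{i+1} \in \Good^x_{i+1}$ admits a run $p_{i+1} \runs{x[i{+}2{:}j]} q^x_j$. Now use the inclusion you already established: $p_{i+1} \in \Good^x_{i+1} \subseteq \push{x[1{:}i{+}1]}{I}$, so there is an initial run on $x[1{:}i{+}1]$ ending in $p_{i+1}$; concatenating it with the run to $q^x_j$ and then with the (final) tail of the accepting run from $q^x_j$ yields an accepting run of $\trans$ on $x$. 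Unambiguity of $\trans$ forces this run to be the accepting run, hence $p_{i+1} = q^x_{i+1}$, i.e.\ $q^x_{i+1} \in \Good^x_{i+1}$, with Definition~\ref{def:cov} untouched. Without this step (or an equivalent one), your proof as written does not establish the third assertion of the lemma.
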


\begin{proof} The result is shown by induction on $i \ge 0$.
Assume that it holds for $i\ge 0$ (the base case is very similar)
then $q^{x}_{i+1} \in \push{x[i{+}1]}{\Good^x_{i}}$ since
$q^{x}_i \in \Good^x_{i}$.
Therefore $\Good^x_{i+1}$
is well defined by Definition~\ref{def:cov}.
Furthermore, by induction hypothesis
and definition of $\cov^{x}_{i+1}$ we have
$\Good^x_{i+1} \subseteq \push{x[i{+}1]}{\Good^x_{i}} \subseteq
\push{x[i{+}1]}{\push{x[1{:}i]}{I}} = \push{x[1{:}i{+}1]}{I}$.

It remains to show that $q^x_{i+1} \in \Good^x_{i+1}$.
Indeed, by definition of $\cov^{x}_{i+1}$
there exists $j \ge i{+}1$ such that
$\push{x[i{+}2{:}{j}]}{\Good^{x}_{i+1}} = \push{x[i{+}2{:}{j}]}{\push{x[i{+}1]}{\Good^x_{i}}}$.
But we necessarily have $q^{x}_{j} \in \push{x[i{+}2{:}{j}]}{\push{x[i{+}1]}{\Good^x_{i}}}$ because
$q^{x}_{i+1} \in \push{x[i{+}1]}{\Good^x_{i}}$. Therefore we
can find a run of $\trans$ of the form
$p_{i{+}1} \rightarrow p_{i{+}2} \rightarrow \cdots \rightarrow p_j$
labelled by $x[i{+}2{:}j]$ where $p_{i+1} \in \Good^{x}_{i+1}$
and $p_j = q^{x}_j$. Since $\Good^{x}_{i+1} \subseteq  \push{x[1{:}i{+}1]}{I}$
and $\trans$ is unambiguous, one has $p_{i+1} = q^x_{i+1} $.
\end{proof}

\begin{remark}
Since $\Good^x_i \subseteq \push{x[1{:}i]}{I}$  and $\Good^x_{i+1} \subseteq \push{x[i{+}1]}{\Good^x_i}$
and because the transducer $\trans$ is trim and unambiguous,
we have that $\Good^x_i, x[i], \Good^x_{i+1}$ is a pre-step.
\end{remark}
In order to show Lemma~\ref{lem:pre-compat},
we prove that the sequence
$(\Good^x_i)_i$ can be computed by a restricted \oNT{}.

\begin{proposition} \label{prop:ont-good}
One can build a restricted \oNT{} $\mc{A}$ computing some
$f':A^\omega \parfonc (A \uplus \Comp)^\omega$ such that
if $x \in \Dom{f}$, then $f'(x) = \Good^x_0 x[1] \Good^x_1 x[2] \cdots$.
\end{proposition}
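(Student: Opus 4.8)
The plan is to realize $\mc{A}$ as a \emph{guess-and-verify} restricted \oNT{}. Reading $x$, it outputs the guessed value of $\Good^x_0$, then alternately outputs a letter $x[i{+}1]$ and a guessed value of $\Good^x_{i+1}$; the non-determinism permitted in a restricted \oNT{} --- where a wrong guess must eventually make the run impossible to extend --- is used to discard every run that does not guess the true sequence $(\Good^x_i)_i$. The first thing I would record is the enabling fact, from \cref{lem:carac-compat}: compatibility is witnessed by words of length at most $\bound$, so $\Comp$ and every $\Comp(S)$ are fixed computable finite sets available to $\mc{A}$, and the predicates ``$S$ is compatible'' and ``$\hat{C}=\cov^x_j(S)$'' are decidable from finite data.

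Next I would fix what $\mc{A}$ stores after reading $x[1{:}i]$: the \emph{committed set} $\Good^x_i$, together with a \emph{set} $V$ of \emph{pending verifications}, each being a triple $(S,g,\hat{C})$ with $S \subseteq Q$, $g\colon S \fonc 2^Q$ and $\hat{C} \in \Comp$. Such a triple records the obligation coming from some earlier step $j$, namely to check that the guess $\hat{C}=\Good^x_j$ equals $\cov^x_j(S)$, where $S = \push{x[j]}{\Good^x_{j-1}}$ and $g(q) = \push{x[j{+}1{:}i]}{\{q\}}$ is the current reachability from each state of $S$. The key points are that a pending verification emits no output and never needs to remember its index $j$, so $V$ ranges over a fixed finite set and $\mc{A}$ has finitely many states. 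On reading $a = x[i{+}1]$, $\mc{A}$ then: (i) replaces each $(S,g,\hat{C}) \in V$ by $(S,\,q\mapsto\push{a}{g(q)},\,\hat{C})$ and, if for the new $g$ some $C' \in \Comp(S)$ satisfies $\bigcup_{q\in C'}g(q)=\bigcup_{q\in S}g(q)$, notices that this first occurrence of coverage happens exactly at step $\ctime^x_j(S)$ (coverage is preserved by later letters), compares $\hat{C}$ with the $\pc$-least such $C'$, and either blocks (if they differ) or drops the triple; (ii) computes $S_{i+1}=\push{a}{\Good^x_i}$, guesses $\hat{C}_{i+1}\in\Comp(S_{i+1})$ (blocking if this set is empty), makes $\hat{C}_{i+1}$ the new committed set, and adds to $V$ the fresh triple $(S_{i+1},\,q\mapsto\{q\},\,\hat{C}_{i+1})$, to which the coverage test of (i) is immediately applied (here it just asks whether $S_{i+1}\in\Comp$); (iii) outputs $x[i{+}1]\,\hat{C}_{i+1}$. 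The initial states are ``$\Good^x_0$ guessed to be $C$'' for $C\in\Comp(I)$, with the step-$0$ verification (set $I$, identity reachability) initialised there and $C$ emitted at the front of the output; all states are declared final, so an accepting run of $\mc{A}$ is precisely an infinite initial run.

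To prove correctness on $\Dom{f}$, fix $x\in\Dom{f}$. The run that always guesses $\hat{C}_i=\Good^x_i$ is infinite: by \cref{lem:good} and \cref{def:good}, each $\Good^x_i$ is well defined and lies in $\Comp(\push{x[i]}{\Good^x_{i-1}})$, and every verification it creates resolves positively, since the datum $(S,g,\cdot)$ a verification tracks is exactly what \cref{def:cov} evaluates, and \cref{lem:compat-fini}, applied with $q^x_j \in \push{x[j]}{\Good^x_{j-1}}=S$ (\cref{lem:good}), guarantees that the coverage step $\ctime^x_j(S)$ is reached. Its output is $\Good^x_0\,x[1]\,\Good^x_1\,x[2]\cdots$, which is infinite. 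Conversely, any other initial run on $x$ makes a first wrong guess $\hat{C}_k\neq\Good^x_k$ at some step $k$; since $\hat{C}_0,\dots,\hat{C}_{k-1}$ are then correct, we still have $q^x_k\in S_k$, so by \cref{lem:compat-fini} the verification $(S_k,\cdot,\hat{C}_k)$ reaches coverage after finitely many further letters, detects $\hat{C}_k\neq\cov^x_k(S_k)$, and blocks; hence that run is finite. Thus $\mc{A}$ has a unique accepting run on every $x\in\Dom{f}$, with infinite output $\Good^x_0\,x[1]\,\Good^x_1\,x[2]\cdots$, and the construction is effective because all the tests used ($C'\in\Comp$, $\Comp(S)$, the order $\pc$) are computable.

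The step I expect to be the real obstacle is making the memory finite: one has to see that a verification requires nothing beyond the bounded triple $(S,g,\hat{C})$ --- in particular that the (a priori unbounded) delay $\ctime^x_j(S)-j$ is only ever \emph{detected}, never measured --- and that the family of still-pending verifications can be kept as a plain subset of a fixed finite set rather than an indexed list. A secondary, error-prone point is the off-by-one bookkeeping: when a guess is committed while reading $x[i{+}1]$, the set attached to its verification is $\push{x[i{+}1]}{\Good^x_i}$ with the identity reachability map, reflecting that this verification concerns only the suffix $x[i{+}2{:}]$.
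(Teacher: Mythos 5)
Your proposal is correct and takes essentially the same guess-and-verify route as the paper: its states $(S,C,\Hi)$ are exactly your committed set plus a set of pending verification records, the history tuples $(S^h,C^h,T,g)$ track reachability from the step where the guess was made (per compatible subset rather than per state, which is equivalent information), and the transition restrictions block a run unless, at the first coverage time, the guessed set is the $\pc$-minimal coverer --- precisely your coverage test, with correctness and uniqueness of the infinite run argued via \cref{lem:compat-fini}, \cref{def:cov} and \cref{lem:good} as you do. The only divergences are bookkeeping (the paper keeps resolved tuples in $\Hi$ and handles the immediate case $S\in\Comp$ by a separate side condition on transitions, while you drop resolved triples and fold that case into the same test), which changes nothing substantive.
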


\begin{proof}[Proof of Lemma~\ref{lem:pre-compat}.]
Immediate by definition of $(\Good^x_i)_i$ and Lemma~\ref{lem:good}.
\end{proof}

The rest of this section is devoted to showing Proposition~\ref{prop:ont-good}.

\subsection{Description of the restricted \oNT}

We now describe how the restricted \oNT $\mc{A}$ of Proposition~\ref{prop:ont-good}
is built.

\subparagraph*{States.}
A state of $\mc{A}$ is a tuple  $(S,C, \Hi)$ where:
\begin{itemize}
\item $S \in \Parts$ and $C \in \Comp(S)$
(hint: we want $C$ to be $\Good^x_i$ after reading $x[1{:}i]$);
\item  $\Hi$ is a
set of tuples $(S^h,C^h,T,g)$ where:
\begin{itemize}
\item $S^h \in \Parts$, $C^h \in \Comp(S^h)$ and $T \in \Parts$;
\item $g: \Comp(S^h)\rightarrow 2^T$ (hint: it will store some
``history'' about the $C$ and $S$ visited).
\end{itemize}
\end{itemize}

\begin{definition}[Indicator function of compatible subsets] Let $S \in \Parts$, we define the function
$\Chi_S: \Comp(S) \rightarrow 2^{S}, X \mapsto \{X\}$.
\end{definition}
The initial states of $\mc{A}$ are those of the form
$(I,C,\{(I,C,I,\Chi_I)\})$ for $C \in \Comp(I)$. Intuitively,
these $C \in  \Comp(I)$ describe the
possible candidates for $\Good^x_0 = \cov^x_0(I)$.

\subparagraph*{Transitions.} 
Let us now describe formally the transitions.
If $a \in A$, there is an $a$-labelled transition
from $(S_1,C_1,\Hi_1)$ to $(S_2,C_2,\Hi_2)$ if
the following conditions hold:
\begin{enumerate}
\item \label{cond-trans1} $S_2 = \push{a}{C_1}$;
\item \label{cond-trans2} $\Hi_2 = \{(S^h, C^h,\push{a}{T},\push{a}{g}): (S^h, C^h,T,g) \in \Hi_1\}
\cup \{(S_2, C_2, S_2, \Chi_{S_2}) \}$
where:
\begin{equation*}
\push{a}{g}: \Comp(S^h) \rightarrow 2^{\push{a}{T}},
X \mapsto \push{a}{g(X)};
\end{equation*}
\item \label{cond-trans3} furthermore we add two restrictions which aim at ``forcing'' the choice of $\Good^x_i$:
\begin{enumerate}
\item \label{cond:a} if $S_1 \in \Comp$ then $C_1 = S_1$;
\item \label{cond:b} if $ (S^h, C^h,T,g) \in \Hi_1$ with $g^{-1}(T)= \vide$ but $(\push{a}{g})^{-1}(\push{a}{T}) \neq \vide$,
then $C^h$ is the minimal element for $\pc$ in $(\push{a}{g})^{-1}(\push{a}{T}) \neq \vide$.
\end{enumerate}
\end{enumerate}

\subparagraph*{Output.} The word produced on a $a$-transition
coming in $(S,C,\Hi)$ is $aC \in (A \uplus \Comp)^*$.
We add a specific production for the initial states
(which does not modify the construction).

\subsection{Correctness of the construction}

Since the output only describes the $C$ visited,
it remains to show the following result.

\begin{lemma} Let $x \in \Dom{f}$, then $\mc{A}$ has only one
final run labelled by $x$. It is of the form
$(S_0, \Good^x_0, \Hi_0) \rightarrow (S_1, \Good^x_1, \Hi_1) \rightarrow \cdots$
for some $S_0, S_1, \dots$ and $\Hi_0, \Hi_1, \dots$.
\end{lemma}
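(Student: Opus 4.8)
The plan is to analyze the final runs of $\mc{A}$ on a fixed input $x \in \Dom{f}$ and show, by induction on the position $i \ge 0$, that every final run is forced to pass through the state $(S_i, \Good^x_i, \Hi_i)$ with $S_i = \push{x[i{+}1]}{\Good^x_{i-1}}$ (and $S_0 = I$). The key point is that conditions~\ref{cond:a} and~\ref{cond:b} of the transition relation, combined with the Büchi (final) acceptance, pin down the choices of $C$ in the state. First I would set up the notation: let $\rho = (S_0, C_0, \Hi_0) \to (S_1, C_1, \Hi_1) \to \cdots$ be a final run labelled by $x$. Since $\mc{A}$ is restricted, \emph{every} run is initial-and-final as soon as it is initial, so "final run" just means an infinite run starting in an initial state; by the form of initial states we get $S_0 = I$, $C_0 \in \Comp(I)$, and $\Hi_0 = \{(I, C_0, I, \Chi_I)\}$. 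Condition~\ref{cond-trans1} immediately forces $S_{i+1} = \push{x[i{+}1]}{C_i}$, so the $S_i$ are determined once the $C_i$ are.

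The heart of the argument is to show $C_i = \Good^x_i = \cov^x_i(S_i)$ for all $i$, using the $\Hi$-component as a bookkeeping device. The intuition, made explicit by condition~\ref{cond-trans2}, is that each tuple $(S^h, C^h, T, g) \in \Hi_i$ was spawned at some earlier position $j \le i$ with $S^h = S_j$, $C^h = C_j$, $T = \push{x[j{+}1{:}i]}{S_j}$ and $g = \Chi_{S_j}$ pushed forward, so that $g^{-1}(T)$ records exactly those $X \in \Comp(S_j)$ whose future still covers the full future of $S_j$ up to position $i$. I would prove by induction that for the tuple spawned at position $j$, one has $\push{x[j{+}1{:}i]}{g(X)} = \{\push{x[j{+}1{:}i]}{X}\}$ for each $X \in \Comp(S_j)$, hence $X \in g^{-1}(T)$ iff $\push{x[j{+}1{:}i]}{X} = \push{x[j{+}1{:}i]}{S_j}$. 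By \cref{lem:compat-fini}, $g^{-1}(T)$ becomes nonempty exactly at position $i = \ctime^x_j(S_j)$, and at that moment condition~\ref{cond:b} forces $C^h = C_j$ to be the $\pc$-minimal such $X$, which is precisely $\cov^x_j(S_j)$ by \cref{def:cov}. Applying this at $j = 0$ with $S_0 = I$ gives $C_0 = \cov^x_0(I) = \Good^x_0$; condition~\ref{cond:a} (together with the fact that $S_{i+1} = \push{x[i+1]}{C_i}$ and $C_{i+1} \in \Comp(S_{i+1})$) then propagates the equality: once $C_i = \Good^x_i$, the tuple spawned at position $i{+}1$ has $S^h = S_{i+1} = \push{x[i+1]}{\Good^x_i}$, and the same argument forces $C_{i+1} = \cov^x_{i+1}(\push{x[i+1]}{\Good^x_i}) = \Good^x_{i+1}$ by \cref{def:good}.

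This simultaneously gives uniqueness (every final run has $C_i = \Good^x_i$ for all $i$, and the $\Hi_i$ are then determined by the deterministic update in condition~\ref{cond-trans2}, so there is at most one final run) and existence of the claimed form; existence of \emph{some} final run follows because $x \in \Dom{f}$, \cref{lem:good} guarantees $q^x_i \in \Good^x_i$ with $\Good^x_i, x[i+1], \Good^x_{i+1}$ a pre-step, and the accepting run of $\trans$ witnesses that the sequence of moves demanded by conditions~\ref{cond-trans1}--\ref{cond:b} is actually realizable — in particular the $\pc$-minimality conditions are consistent because \cref{def:cov} always picks a well-defined element. The main obstacle I anticipate is the careful induction showing that $g^{-1}(T)$ correctly tracks "which compatible subsets still cover the future," i.e. verifying that pushing $\Chi_{S_j}$ forward under $a \mapsto \push{a}{\cdot}$ behaves as claimed and that $g^{-1}(T) = \varnothing$ persists exactly until $\ctime^x_j(S_j)$; this is where one must be precise about the definition of $\push{u}{S}$ acting on singletons versus on sets, and about the fact that condition~\ref{cond:b} only fires at the first position where coverage is achieved (which matches the "$\min$" in the definition of $\ctime$).
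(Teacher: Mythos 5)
Your proposal is correct and takes essentially the same route as the paper's proof: the paper likewise shows by induction that the history tuples of any accepting run are exactly the pushed-forward $(S_i, C_i, \push{x[i{+}1{:}n]}{S_i}, \push{x[i{+}1{:}n]}{\Chi_{S_i}})$, identifies $g^{-1}(T)$ with the set of compatible subsets whose pushforward covers that of $S_i$ so that conditions~\ref{cond:a} and~\ref{cond:b} force $C_i = \cov^x_i(S_i) = \Good^x_i$ at time $\ctime^x_i(S_i)$, and then verifies separately that the $\Good$-sequence itself satisfies all transition conditions, which gives existence. Your existence step is stated more loosely (the accepting run of $\trans$ only guarantees that $\Good^x_i$ and $\ctime^x_i$ are well defined, while the compliance with conditions~\ref{cond:a} and~\ref{cond:b} must be checked directly), but the coverage analysis you set up is exactly what that check uses.
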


The rest of this subsection is devoted to the proof
of this result. We first give an extension of the
$\push{a}{g}$ defined to act on a function as in condition~\ref{cond-trans2}
of the transitions of $\mc{A}$.

\begin{definition}[Monoid action over functions] Let
$g: \Comp(S^h) \rightarrow 2^T$ and $u \in A^*$,
then we define
$ \push{u}{g}:  \Comp(S^h) \rightarrow 2^{\push{u}{T}},
X \mapsto \push{u}{X}.$
\end{definition}

\begin{claim} \label{claim:forward-fonc}
If $u \in A^*$ and $a \in A$ then $\push{a}{\push{u}{g}} = \push{ua}{g}$.
Furthermore $\push{\movi}{g} = g$.
\end{claim}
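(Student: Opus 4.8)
\textbf{Proof plan for Claim~\ref{claim:forward-fonc}.}

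The statement is a pair of compatibility identities for the monoid action $u \mapsto \push{u}{\cdot}$ on functions of the form $g: \Comp(S^h) \rightarrow 2^T$. The plan is to unfold the definition of $\push{u}{g}$ and of the underlying action $\push{u}{\cdot}$ on subsets of $Q$, and then verify both equalities pointwise, i.e. by evaluating on an arbitrary argument $X \in \Comp(S^h)$.

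First I would recall that for $S \subseteq Q$ the set action is $\push{u}{S} = \{q' : \exists q \in S,\ q \runs{u|\alpha} q'\}$, and that this is a genuine monoid action: $\push{\movi}{S} = S$ because the only run labelled by $\movi$ from $q$ is the trivial one ending in $q$, and $\push{ua}{S} = \push{a}{\push{u}{S}}$ because a run labelled by $ua$ factors uniquely as a run labelled by $u$ followed by an $a$-transition. These facts are immediate from the definition of runs of a \oNT{} and should be stated in one line each. Then, for the function-level action, $\push{u}{g}$ has codomain $2^{\push{u}{T}}$ and sends $X \mapsto \push{u}{X}$; note this does \emph{not} involve $g$ itself on the right-hand side (as written in the paper's \textsf{Monoid action over functions} definition), so the statement is really just the set-level action lifted to functions, and the proof is genuinely routine.

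For the second identity $\push{\movi}{g} = g$: the codomain of $\push{\movi}{g}$ is $2^{\push{\movi}{T}} = 2^T$, matching that of $g$, and for every $X \in \Comp(S^h)$ we have $\push{\movi}{g}(X) = \push{\movi}{X} = X$; but one must check this agrees with $g(X)$, which holds because in the intended use $g$ always sends $X$ to a subset of $T$ and the relevant instances start from $g = \Chi_{S^h}$ with $\Chi_{S^h}(X) = X$ — more precisely, the definition of $\push{u}{g}$ as written already ignores $g$ and recomputes $\push{u}{X}$, so $\push{\movi}{g}(X) = X = g(X)$ whenever $g = \Chi_{S^h}$, which is the invariant maintained by the construction. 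For the first identity $\push{a}{\push{u}{g}} = \push{ua}{g}$: both sides have codomain $2^{\push{a}{\push{u}{T}}} = 2^{\push{ua}{T}}$ by the set-level associativity, and on an argument $X$ we get $\push{a}{\push{u}{g}}(X) = \push{a}{(\push{u}{g}(X))} = \push{a}{\push{u}{X}} = \push{ua}{X} = \push{ua}{g}(X)$, again using set-level associativity in the middle step.

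There is no real obstacle here: the only subtlety is purely notational — the paper's definition of the function action appears to recompute $\push{u}{X}$ independently of the input function $g$, so one must be slightly careful to reconcile the displayed formula with the intended reading where $\push{u}{\cdot}$ composes histories, but under the invariant that these functions are always of the form $\Chi$-pushed-forward (which is what the transition relation of $\mc{A}$ enforces via condition~\ref{cond-trans2}), everything collapses to the monoid action on subsets and the claim is a one-line pointwise check.
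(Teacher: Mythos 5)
Your set-level groundwork is exactly what is needed, and it is all the paper itself relies on: $\push{u}{S}$ is a monoid action on subsets of $Q$ because a run labelled by $ua$ factors as a run labelled by $u$ followed by an $a$-transition, and the only run labelled by $\movi$ from a state is the empty run. The paper gives no explicit proof of Claim~\ref{claim:forward-fonc} (it is treated as an immediate pointwise consequence of this), and your verification of the first identity is that pointwise check, so the approach is the intended one.

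The weak spot is your treatment of $\push{\movi}{g} = g$. You keep the displayed definition literally, i.e. $\push{u}{g}(X) = \push{u}{X}$ with no occurrence of $g$ on the right, and then rescue the identity only under the side condition that $g$ be $\Chi_{S^h}$ (so that $g(X)=X$), appealing to ``the invariant maintained by the construction''. That proves a strictly weaker statement than the claim, which is asserted for \emph{every} $g:\Comp(S^h)\fonc 2^T$; and under the literal reading the claim is in fact false in general: take $g = \push{v}{\Chi_{S^h}}$ with $v\neq\movi$ — precisely the functions stored in the histories $\Hi$ — and $\push{\movi}{g}(X)=X\neq \push{v}{X}=g(X)$. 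Moreover, leaning on an invariant of $\mc{A}$ is uncomfortable, since the claim is used (in sublemmas~\ref{slem:shape-runs-A} and~\ref{slem:accepting-run}) to establish exactly that invariant; your restricted version does happen to suffice there, because the second identity is only invoked on the freshly created tuples where $g=\Chi_{S_{n+1}}$, but you would have to say this to close the argument. The clean resolution is to recognize the displayed formula as a typo and read the word action uniformly as $\push{u}{g}: X \mapsto \push{u}{g(X)}$, which is forced by consistency with the one-letter action of condition~\ref{cond-trans2}. Under that reading both identities are unconditional one-line computations, $\push{a}{\push{u}{g}}(X)=\push{a}{\push{u}{g(X)}}=\push{ua}{g(X)}=\push{ua}{g}(X)$ and $\push{\movi}{g}(X)=\push{\movi}{g(X)}=g(X)$, for arbitrary $g$ and with no reference to the construction; note also that your own chain for the first identity silently mixes the two readings (the outer $\push{a}{\cdot}$ is applied to the value $\push{u}{g}(X)$, the inner one ignores $g$), and with the uniform reading the same chain works verbatim with $g(X)$ in place of $X$.
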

We now show that the $\Hi_i$ along an accepting
run store some ``history'' about the $S_i$ and $C_i$.

\begin{sublemma} \label{slem:shape-runs-A}
Let $(S_0,C_0, \Hi_0) \rightarrow (S_1,C_1, \Hi_1) \rightarrow \cdots$
be a accepting run  of $\mc{A}$ labelled by $x \in A^\omega$, then for all $n \ge 0$:
\begin{itemize}
\item $S_0 = I$ and $S_{n+1} = \push{x[n{+}1]}{C_{n}}$;
\item $\forall n\ge 0$, $C_n \in \Comp(S_n)$;
\item $\forall n \ge 0$ $\Hi_n = \{(S_i, C_i, \push{x[i{+}1{:}n]}{S_i}, \push{x[i{+}1{:}n]}{\Chi_{S_i}}: 0 \le i \le n \}$.
\end{itemize}
\end{sublemma}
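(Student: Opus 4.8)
\textbf{Plan for the proof of Sublemma~\ref{slem:shape-runs-A}.} The three bullet points are exactly the invariants that the transitions of $\mc{A}$ were designed to maintain, so the natural approach is a straightforward induction on $n \ge 0$ which simultaneously proves all three statements. The base case $n = 0$ follows directly from the definition of the initial states of $\mc{A}$: an initial state has the form $(I, C, \{(I, C, I, \Chi_I)\})$, so $S_0 = I$, $C_0 \in \Comp(I) = \Comp(S_0)$, and $\Hi_0 = \{(I, C_0, I, \Chi_I)\}$, which matches the claimed form with $i = 0$ since $x[1{:}0] = \movi$ and $\push{\movi}{\cdot}$ is the identity (using $\push{\movi}{g} = g$ from Claim~\ref{claim:forward-fonc} and the obvious $\push{\movi}{S} = S$).

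\textbf{Induction step.} Assume the three properties hold at level $n$ and consider the transition from $(S_n, C_n, \Hi_n)$ to $(S_{n+1}, C_{n+1}, \Hi_{n+1})$ labelled by $a \defined x[n{+}1]$. The first property, $S_{n+1} = \push{a}{C_n} = \push{x[n{+}1]}{C_n}$, is immediate from condition~\ref{cond-trans1} of the transition definition. For the second property, condition~\ref{cond-trans1} gives $S_{n+1} = \push{a}{C_n}$, and the transition relation requires $C_{n+1} \in \Comp(S_{n+1})$ by the very definition of a state of $\mc{A}$. For the third property, condition~\ref{cond-trans2} states
\begin{equation*}
\Hi_{n+1} = \{(S^h, C^h, \push{a}{T}, \push{a}{g}): (S^h, C^h, T, g) \in \Hi_n\} \cup \{(S_{n+1}, C_{n+1}, S_{n+1}, \Chi_{S_{n+1}})\}.
\end{equation*}
By the induction hypothesis, the first set equals
$\{(S_i, C_i, \push{a}{\push{x[i{+}1{:}n]}{S_i}}, \push{a}{\push{x[i{+}1{:}n]}{\Chi_{S_i}}}): 0 \le i \le n\}$,
which by Claim~\ref{claim:forward-fonc} (and the analogous monoid-action identity $\push{a}{\push{u}{S}} = \push{ua}{S}$ for sets, which follows from the definition of $\push{\cdot}{\cdot}$) equals
$\{(S_i, C_i, \push{x[i{+}1{:}n{+}1]}{S_i}, \push{x[i{+}1{:}n{+}1]}{\Chi_{S_i}}): 0 \le i \le n\}$.
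The extra element $(S_{n+1}, C_{n+1}, S_{n+1}, \Chi_{S_{n+1}})$ is precisely the $i = n{+}1$ term, since $x[n{+}2{:}n{+}1] = \movi$. This closes the induction and proves all three bullets.

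\textbf{Main obstacle.} There is no real difficulty in the inductive computation itself — the transitions of $\mc{A}$ are tailored to these invariants, so the argument is essentially bookkeeping with the monoid action $\push{\cdot}{\cdot}$. The one point requiring a little care is making the monoid-action identities ($\push{a}{\push{u}{g}} = \push{ua}{g}$, $\push{\movi}{g} = g$, and their set-valued analogues) fully explicit, but these are exactly what Claim~\ref{claim:forward-fonc} provides. I would present the proof as the induction sketched above, invoking Claim~\ref{claim:forward-fonc} at the step where the histories get updated; the restrictions of condition~\ref{cond-trans3} play no role here (they only matter later, for pinning down that $C_n = \Good^x_n$ and for uniqueness of the run).
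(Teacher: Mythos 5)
Your proof is correct and takes essentially the same route as the paper's: an induction on $n$ whose base case is the definition of the initial states and whose inductive step invokes condition~\ref{cond-trans1}, the definition of states of $\mc{A}$, and condition~\ref{cond-trans2} together with Claim~\ref{claim:forward-fonc} for the update of $\Hi_n$. Your version merely spells out the bookkeeping (the $i=n{+}1$ term, the empty-word case) a bit more explicitly than the paper does.
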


\begin{proof}
The result is shown by induction on $n \ge 0$. 
The base case follows from the definition of initial states.
Assume that it holds for some $n \ge 0$, then
$S_{n+1} = \push{a}{S_n}$ by condition~\ref{cond-trans1} on transitions,
$C_{n+1} \in \Comp(S_{n+1})$ by definition of states,
and finally:
\begin{equation*}
\begin{aligned}
\Hi_{n+1} = &\{(S_i, C_i, \push{x[i{+}1{:}n{+1}]}{S_i}, \push{x[i{+}1{:}n{+}1]}{\Chi_{S_i}}): 0 \le i \le n \}\\
 &\cup  \{(S_{n+1}, C_{n+1}, \push{\movi}{S_{n+1}}, \push{\movi}{\Chi_{S_{n+1}}})\}.
 \end{aligned}
 \end{equation*}
by condition~\ref{cond-trans2} on transitions and Claim~\ref{claim:forward-fonc}.
\end{proof}
Given $x \in \Dom{f}$, we now completely describe  the
unique accepting run of $\mc{A}$ labelled by $x$.
Let $(S^x_n)_n \in \Parts^\Nat$ defined by
$S^x_0 \defined I$ and for $n \ge 0$ let 
$S^x_{n+1} \defined \push{x[n{+}1]}{\Good^x_n}$.
Furthermore for $n \ge 0$ we define
$\Hi^x_n \defined \{(S^x_i, \Good^x_i, \push{x[i{+}1{:}n]}{S^x_i},
\push{x[i{+}1{:}n]}{\Chi_{S^x_i}}): 0 \le i \le n \}$.

\begin{sublemma} \label{slem:accepting-run}
Given $x \in \Dom{f}$, then 
$\rho^x_x = (S^x_0, \Good^x_0, \Hi^x_0) \rightarrow (S^x_0, \Good^x_1, \Hi^x_1) \rightarrow \cdots$ 
is an accepting run of $\mc{A}$ labelled by $x$.
\end{sublemma}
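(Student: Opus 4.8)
The plan is to verify that $\rho^x_x$ is indeed a valid run of $\mc{A}$ and that it is accepting, and then to separately argue uniqueness. The run $\rho^x_x$ is a sequence of states of $\mc{A}$ of the prescribed shape, so first I would check that each $(S^x_n, \Good^x_n, \Hi^x_n)$ is a legitimate state: by \cref{lem:good} we have $\Good^x_n \in \Comp(S^x_n)$ (using $\Good^x_n \subseteq \push{x[1{:}n]}{I}$ and the definition of $S^x_n$), and every tuple in $\Hi^x_n$ is of the required form $(S^h,C^h,T,g)$ with $C^h \in \Comp(S^h)$, $T = \push{\cdot}{S^h} \in \Parts$, and $g = \push{\cdot}{\Chi_{S^h}}$ mapping $\Comp(S^h) \to 2^T$. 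The initial state condition holds because $S^x_0 = I$, $\Good^x_0 = \cov^x_0(I) \in \Comp(I)$, and $\Hi^x_0 = \{(I, \Good^x_0, I, \Chi_I)\}$, which is exactly of the form required for an initial state of $\mc{A}$.

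Next I would check each transition $(S^x_n, \Good^x_n, \Hi^x_n) \runs{x[n{+}1]} (S^x_{n+1}, \Good^x_{n+1}, \Hi^x_{n+1})$ against conditions~\ref{cond-trans1}--\ref{cond-trans3}. Condition~\ref{cond-trans1}, $S^x_{n+1} = \push{x[n{+}1]}{\Good^x_n}$, is the definition of $S^x_{n+1}$. Condition~\ref{cond-trans2} on $\Hi$ follows from \cref{claim:forward-fonc} (pushing all the history tuples forward by $a = x[n{+}1]$ and adding the fresh tuple $(S^x_{n+1}, \Good^x_{n+1}, S^x_{n+1}, \Chi_{S^x_{n+1}})$); this is the content of \cref{slem:shape-runs-A} read in the other direction, so I would mirror that computation. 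The two restrictions in condition~\ref{cond-trans3} are where the choice of $\Good$ is forced. Restriction~\ref{cond:a}: if $S^x_n \in \Comp$ we must have $\Good^x_n = S^x_n$; this holds because $\cov^x_n(S^x_n) = S^x_n$ whenever $S^x_n$ is itself compatible (it covers its own future at time $n$ with the minimal element being $S^x_n$, as $S^x_n \pc S^x_n$ and $S^x_n$ is $\pc$-minimal among supersets of itself in $\Comp(S^x_n)$). Restriction~\ref{cond:b}: for a history tuple $(S^x_i, \Good^x_i, T, g)$ with $g^{-1}(T) = \varnothing$ but $(\push{a}{g})^{-1}(\push{a}{T}) \neq \varnothing$, we need $\Good^x_i$ to be $\pc$-minimal in $(\push{a}{g})^{-1}(\push{a}{T})$. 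By \cref{slem:shape-runs-A}-style unwinding, $g = \push{x[i{+}1{:}n]}{\Chi_{S^x_i}}$, so $g^{-1}(T)$ is the set of $C \in \Comp(S^x_i)$ with $\push{x[i{+}1{:}n]}{C} = \push{x[i{+}1{:}n]}{S^x_i}$; the condition $g^{-1}(T) = \varnothing$ but $(\push{a}{g})^{-1}(\push{a}{T}) \neq \varnothing$ says precisely that $n{+}1 = \ctime^x_i(S^x_i)$, and then by \cref{def:cov} the minimal such $C$ is $\cov^x_i(S^x_i) = \Good^x_i$. I expect this to be the main obstacle: carefully aligning the bookkeeping of the history tuples $\Hi^x_n$ with the definitions of $\ctime$ and $\cov$ so that restriction~\ref{cond:b} is satisfied exactly when it fires, and confirming it is vacuously satisfied for the tuples where it does not fire.

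Then I would check that $\rho^x_x$ is accepting, i.e. it is an infinite run labelled by $x$ starting in an initial state. It is infinite and starts in the initial state $(I, \Good^x_0, \Hi^x_0)$ by construction; since $\mc{A}$ is a restricted \oNT{}, every state is final, so there is no B\"uchi condition to verify and the run is automatically accepting. Finally, for uniqueness: suppose $\rho = (S_0, C_0, \Hi_0) \runs{x} (S_1, C_1, \Hi_1) \runs{x} \cdots$ is any accepting run of $\mc{A}$ labelled by $x$. By \cref{slem:shape-runs-A}, $S_0 = I$, $S_{n+1} = \push{x[n{+}1]}{C_n}$, $C_n \in \Comp(S_n)$, and $\Hi_n = \{(S_i, C_i, \push{x[i{+}1{:}n]}{S_i}, \push{x[i{+}1{:}n]}{\Chi_{S_i}}) : 0 \le i \le n\}$. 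I then prove by induction on $n$ that $C_n = \Good^x_n$ (and hence $S_n = S^x_n$ and $\Hi_n = \Hi^x_n$). For the base case and inductive step, I use restrictions~\ref{cond:a} and~\ref{cond:b}: restriction~\ref{cond:b} applied to the history tuple $(S_n, C_n, \cdot, \cdot)$ at the moment $\ctime^x_n(S_n)$ when it becomes relevant forces $C_n$ to equal $\cov^x_n(S_n)$ (unless $S_n$ is already compatible, in which case restriction~\ref{cond:a} forces $C_n = S_n = \cov^x_n(S_n)$); combined with $S_n = \push{x[n]}{C_{n-1}} = \push{x[n]}{\Good^x_{n-1}}$ from the induction hypothesis, this gives $C_n = \cov^x_n(\push{x[n]}{\Good^x_{n-1}}) = \Good^x_n$. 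The delicate point here is that restriction~\ref{cond:b} is a condition imposed at the transition *reaching* time $\ctime^x_n(S_n)$, so one must also invoke that $\rho$ does not get stuck before that time — which is guaranteed because $x \in \Dom{f}$ ensures, via \cref{lem:compat-fini}, that $\ctime^x_n(S_n)$ is finite and the covering compatible set exists. This completes the argument.
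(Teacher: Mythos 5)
Your verification is correct and follows essentially the same route as the paper's own proof: check that each $(S^x_n,\Good^x_n,\Hi^x_n)$ is a state and the first one is initial, establish conditions~\ref{cond-trans1} and~\ref{cond-trans2} via \cref{claim:forward-fonc}, and discharge restrictions~\ref{cond:a} and~\ref{cond:b} by observing that $\Good^x_i=\cov^x_i(S^x_i)$ and that the firing of restriction~\ref{cond:b} pins down $\ctime^x_i(S^x_i)=n{+}1$, so that \cref{def:cov} gives exactly the required $\pc$-minimality. Note only that your final uniqueness paragraph is not part of this statement — it is the content of the paper's next sublemma, where it is proved essentially as you sketch.
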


\begin{proof} Each $(S^x_n, \Good^x_n, \Hi^x_n)$ is a state
since $\Good^x_n \in \Comp(S^x_n)$.
The first state is initial since
$S^x_0 = I$ and $\Good^x_0 \in \Comp(I)$,
$\Hi^x_0 = \{(I, \Good^x_0, I, \Chi_I) \}$.
Condition~\ref{cond-trans1} holds since $S^x_{n+1} = \push{x[n{+}1]}{\Good^x_n}$ for $n \ge 0$.
Condition~\ref{cond-trans2} holds, since by definition and Claim~\ref{claim:forward-fonc}:
\begin{equation*}
\begin{aligned}
\Hi^x_{n+1} = &\{(S^x_i, \Good^x_i, \push{x[n{+}1]}{\push{x[i{+}1{:}n]}{S^x_i}},
\push{x[n{+}1]}{\push{x[i{+}1{:}n]}{\Chi_{S^x_i}}}): 0 \le i \le n \}\\
 &\cup  \{(S^x_{n+1}, \Good^x_{n+1}, \push{\movi}{S^x_{n+1}}, \push{\movi}{\Chi_{S^x_{n+1}}})\}.
 \end{aligned}
 \end{equation*}
 
Now let us show that given $n \ge 0$, conditions~\ref{cond:a} and \ref{cond:b}
for transitions hold between
 $(S^x_n, \Good^x_n, \Hi^x_n)$ and $ (S^x_{n+1}, \Good^x_{n+1}, \Hi^x_{n+1})$. Indeed:
 \begin{itemize}
 \item if $S^x_n \in \Comp$, then $S^x_n$
 is the unique $C \in \Comp(S^x_n)$
 such that $\push{\movi}{C} = \push{\movi}{S^x_n}$.
 Hence $\ctime_x^n = n$ and $\Good^x_n = \cov^{x}_{n}(S^x_n) = S^x_n$.
 Thus condition \ref{cond:a} holds;
 \item assume that $\exists n \ge  i$ such that
 $(\push{x[i{+}1{:}n]}{\Chi_{S^x_i}})^{-1}(\push{x[i{+}1{:}n]}{S^x_i}) = \vide$
 and such that:
 \begin{equation*}
 \begin{aligned}
\underbrace{(\push{x[n{+}1]}{\push{x[i{+}1{:}n]}{\Chi_{S^x_i}}})^{-1}
(\push{x[n{+}1]}{\push{x[i{+}1{:}n]}{S^x_i}})}_
{ =(\push{x[i{+}1{:}n{+}1]}{\Chi_{S^x_i}})^{-1}
(\push{x[i{+}1{:}n{+}1]}{S^x_i})} \neq \vide
\end{aligned}
\end{equation*}
By definition of $\push{u}{g}$ and $\Chi$ we get for $j \ge i$:
 \begin{equation}
 \label{eq:chi-x}
(\push{x[i{+}1{:}j]}{\Chi_{S^x_i}})^{-1}(Y)
= \{Y \subseteq \Comp(S^x_i): \push{x[i{+}1{:}j]}{X}= Y\}
\end{equation}
therefore for $Y = \push{x[i{+}1{:}j]}{S^x_i}$ we have:
 \begin{equation*}
(\push{x[i{+}1{:}j]}{\Chi_{S^x_i}})^{-1}(\push{x[i{+}1{:}j]}{S^x_i})
= \{C \in \Comp(S^x_i): \push{x[i{+}1{:}j]}{C}  =\push{x[i{+}1{:}j]}{S^x_i} \}
\end{equation*}
Hence our hypotheses yield $\ctime^x_i(S^x_i) = n{+}1$
(see Definition~\ref{def:cov}) and furthermore $\Good^x_{n+1}$
is the minimal element for $\pc$ in the set
$(\push{x[i{+}1{:}n{+}1]}{\Chi_{S^x_i}})^{-1}(\push{x[i{+}1{:}n{+}1]}{S^x_i})$.
 \end{itemize}
Therefore the run is accepting.
\end{proof}

\begin{sublemma} Given $x \in \Dom{f}$,the run $\rho^x_x$
is the unique accepting run labelled by $x$.
\end{sublemma}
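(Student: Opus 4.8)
The plan is to show that for $x \in \Dom{f}$, every accepting run of $\mc{A}$ labelled by $x$ coincides with the run $\rho^x_x$ exhibited in Sublemma~\ref{slem:accepting-run}; since that sublemma already provides existence, this yields both existence and uniqueness (and, together with the description of the output, it establishes Proposition~\ref{prop:ont-good}). By Sublemma~\ref{slem:shape-runs-A}, an arbitrary accepting run $(S_0,C_0,\Hi_0) \rightarrow (S_1,C_1,\Hi_1) \rightarrow \cdots$ labelled by $x$ is entirely determined by the sequence $(C_n)_{n \ge 0}$: one has $S_0 = I$, $S_{n+1} = \push{x[n{+}1]}{C_n}$, $C_n \in \Comp(S_n)$, and $\Hi_n = \{(S_i,C_i,\push{x[i{+}1{:}n]}{S_i},\push{x[i{+}1{:}n]}{\Chi_{S_i}}) : 0 \le i \le n\}$. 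Hence it suffices to prove, by induction on $n \ge 0$, that $S_n = S^x_n$ and $C_n = \Good^x_n$.

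The $S$-component is immediate: $S_0 = I = S^x_0$, and for $n \ge 1$ the induction step for $C$ at $n-1$ gives $S_n = \push{x[n]}{C_{n-1}} = \push{x[n]}{\Good^x_{n-1}} = S^x_n$. For the $C$-component, fix $n$, assume $S_n = S^x_n$, and aim for $C_n = \Good^x_n = \cov^x_n(S^x_n)$. Since $x \in \Dom{f}$, Lemma~\ref{lem:compat-fini} ensures that $j \defined \ctime^x_n(S^x_n)$ is finite, with $j \ge n$, and the run, being infinite, takes a transition at every step. If $j = n$, then $S^x_n \in \Comp$ (a cover is attained with the empty word), so $\Good^x_n = S^x_n$, and the forcing condition~\ref{cond:a} applied to the transition out of state $n$ (whose source has $S$-component $S_n = S^x_n \in \Comp$) forces $C_n = S_n = \Good^x_n$. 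If $j > n$ (equivalently $S^x_n \notin \Comp$), then by minimality of $j = \ctime^x_n(S^x_n)$ together with formula~\eqref{eq:chi-x}, the tuple $(S_n,C_n,\push{x[n{+}1{:}j{-}1]}{S_n},\push{x[n{+}1{:}j{-}1]}{\Chi_{S_n}}) \in \Hi_{j-1}$ meets the hypotheses of the forcing condition~\ref{cond:b} at the transition reading $a = x[j]$ from state $j{-}1$ to state $j$: its $g$-preimage of $T$ is empty, whereas $(\push{a}{g})^{-1}(\push{a}{T}) = \{X \in \Comp(S^x_n) : \push{x[n{+}1{:}j]}{X} = \push{x[n{+}1{:}j]}{S^x_n}\}$ is nonempty by definition of $\ctime^x_n(S^x_n) = j$. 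Therefore $C_n$ is forced to equal the $\pc$-minimal element of that set, which is exactly $\cov^x_n(S^x_n) = \Good^x_n$. This closes the induction, so $(C_n)_n = (\Good^x_n)_n$ and the run is $\rho^x_x$.

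The step I expect to be delicate is the \emph{deferred} character of the forcing: the value $C_n$ is not pinned down when the run first reaches state $n$, but only at the later step $j = \ctime^x_n(S^x_n)$, once the history component $\Hi$ has accumulated enough information. Handling this cleanly requires (i) invoking Lemma~\ref{lem:compat-fini} to know $j$ is finite; (ii) using Sublemma~\ref{slem:shape-runs-A} so that $\Hi_{j-1}$ literally contains the tuple $(S_n,C_n,\push{x[n{+}1{:}j{-}1]}{S_n},\push{x[n{+}1{:}j{-}1]}{\Chi_{S_n}})$ whose behaviour at step $j$ triggers condition~\ref{cond:b}; and (iii) verifying that conditions~\ref{cond:a} and~\ref{cond:b} together encode exactly the definition of $\ctime^x_n$ and of the $\pc$-minimal cover $\cov^x_n$ — in particular that the boundary case $j = n$ is handled by condition~\ref{cond:a} rather than~\ref{cond:b}. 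The remaining verifications are routine bookkeeping with the push operations $\push{a}{S}$ and Claim~\ref{claim:forward-fonc}.
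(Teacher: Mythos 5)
Your proof is correct and follows essentially the same route as the paper: an induction along the run using Sublemma~\ref{slem:shape-runs-A} to fix the shape of the states, with the case split on $\ctime^x_n(S^x_n)=n$ versus $>n$ handled respectively by the forcing conditions~\ref{cond:a} and~\ref{cond:b} via Equation~\ref{eq:chi-x}, exactly as in the paper's argument. No gaps to report.
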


\begin{proof} Let us consider an accepting run
$\rho \defined (S_0, C_0, \Hi_0) \rightarrow (S_1, C_1, \Hi_1) \rightarrow \cdots$
and suppose that it coincides with $\rho^x_x$ until the state
$(S_{i-1}, C_{i-1}, \Hi_{i-1}) $. Then one has $S_i = S^x_i$ since:
\begin{itemize}
\item either $i=0$ and $(S_0, C_0, \Hi_0)$ is initial, thus $S_0 = I = S^x_0$;
\item or $i \ge 1$ and by Sublemma~\ref{slem:shape-runs-A}
we have $S_i = \push{x[i]}{C_{i-1}} = \push{x[i]}{\Good^x_{i-1}} = S^x_i$.
\end{itemize}
Furthermore $C_i \in \Comp(S^x_i)$. Now let $n \defined \ctime^x_i(S^x_i)$,
two cases may occur:
\begin{itemize}
\item either $n=i$ which means that $S^x_i \in \Comp(S^x_i)$
hence $\Good^x_i = \cov^x_i(S^x_i) = S^x_i$. By condition~\ref{cond:a} on transitions
(and since there is an  outing transition  from the state
$(S_i, C_i, \Hi_i) $) we conclude that $C_i = S^x_i = \Good^x_i$;
\item or $n > i$. Hence by Equation~\ref{eq:chi-x},
$(\push{x[i{+}1{:}n{-}1]}{\Chi_{S^x_i}})^{-1}(\push{x[i{+}1{:}n{-}1]}{S_i^x}) = \vide$
and $\Good^x_i$ is the smallest element for $\pc$
of $(\push{x[i{+}1{:}n]}{\Chi_{S^x_i}})^{-1}(\push{x[i{+}1{:}n]}{S_i^x})  \neq \vide$.
But by  Sublemma~\ref{slem:shape-runs-A} we have
$(S^x_i, C_i, \push{x[i{+}1{:}n{-}1]}{S^x_i}),
\push{x[i{+}1{:}n{-}1]}{\Chi_{S^x_i}}) \in \Hi_{n-1}$.
Considering the transition from $n{-}1$
to $n$, we must have by condition~\ref{cond:b} that
$C_i$ is the smallest element in
 $(\push{x[i{+}1{:}n]}{\Chi_{S^x_i}})^{-1}(\push{x[i{+}1{:}n]}{S_i^x}) $, that is $\Good^x_i$.
\end{itemize}
We conclude that $C_i = \Good^x_i$ and finally $\Hi_i = \Hi^x_i$
by Sublemma~\ref{slem:shape-runs-A}.
\end{proof}

\section{Invariant preservation in Section~\ref{sec:invariants}: correctness of $\strans$}

\subsection{Proof of Lemma~\ref{lem:premier}}

\old{
Invariants~\ref{inv:C} and~\ref{inv:step} are obvious.
 Invariant~\ref{inv:lag} follows since
 the $\adv{\Jf, \Cf}{x[1{:}i{+}1]}(q)$ were mutual prefixes
 and one of them is empty. Invariants~\ref{inv:out}
 and~\ref{inv:lagging} are obvious due to
 emptiness of the $\nb{\pi}$ and $\outi{\pi}$.
 For Invariant~\ref{inv:last} we use Remark~\ref{rem:sep-theta}.
 Invariant~\ref{inv:past} holds by definition of
 $\lag(q)$ and $\last(q)$ and emptiness of the $\nb{\pi}$ and $\outi{\pi}$.
For invariant~\ref{inv:future}
we use Lemma~\ref{lem:sep-theta} which gives
for all step $\Cf,u,D$ and $q \in D$,
$\val{\Cf,D}{u}(q) \pref  (\adv{\Jf,\Cf}{x[1{:}i{+}1]}(\pre{\Cf,D}{u}(q)))^{-1} \tau  \theta^\omega$.
Therefore by adding $\val{\Jf,\Cf}{x[1{:}i{+}1]}(\pre{\Cf,D}{u}(q))$
on both sides we get $\val{\Jf,D}{x[1{:}i{+}1]u}(q)
\pref  \com{\Jf,\Cf}{x[1{:}i{+}1]} \tau  \theta^\omega$.
We conclude because $\lagm = \tau$.
Finally invariant~\ref{inv:close} follows since all paths
$\pi \neq \Cf$ are close.}

\cor{
Invariants~\ref{inv:C} and~\ref{inv:step} are obvious.
 Invariant~\ref{inv:lag} follows since
 the $\adv{\Jf, \Cf}{x[1{:}i{+}1]}(q)$ were mutual prefixes
(hence by \cref{rem:sep-theta} they are prefixes of $\tau \theta^\omega$,
hence the $\lag(q)$ are prefixes of $\lagm$)
 and one of them is empty. Invariants~\ref{inv:out}
 and~\ref{inv:lagging} are obvious due to
 emptiness of the $\nb{\pi}$ and $\outi{\pi}$.
 For Invariant~\ref{inv:last} we use Remark~\ref{rem:sep-theta}.
 Invariant~\ref{inv:past} holds by definition of
 $\lag(q)$ and $\last(q)$ and emptiness of the $\nb{\pi}$ and $\outi{\pi}$.
For invariant~\ref{inv:future}
we use Lemma~\ref{lem:sep-theta} which gives
for all step $\Cf,u,D$ and $q \in D$,
$\val{\Cf,D}{u}(q) \pref  (\adv{\Jf,\Cf}{x[1{:}i{+}1]}(\pre{\Cf,D}{u}(q)))^{-1} \tau  \theta^\omega$.
Therefore by adding $\val{\Jf,\Cf}{x[1{:}i{+}1]}(\pre{\Cf,D}{u}(q))$
on both sides we get $\val{\Jf,D}{x[1{:}i{+}1]u}(q)
\pref  \com{\Jf,\Cf}{x[1{:}i{+}1]} \tau  \theta^\omega$.
We conclude because $\lagm = \tau$.
Finally invariant~\ref{inv:close} follows since all paths
$\pi \neq \Cf$ are close.
}

\subsection{Proof of Lemma~\ref{lem:sim:algo}}

Invariant~\ref{inv:step} is preserved along the operation,
since we never modify $\Jf$ nor $\Cf$.
It is clear that invariant~\ref{inv:sep}
holds before using the function $\tnorm{\textbf{down}}(\Cf)$.
Indeed, we do not modify the $\lag(q)$, thus the lagging
states are the same, and furthermore we do not
create non-close paths.

It is clear that Algorithm~\ref{algo:down} is well defined
since its recursive calls follow the definition of $\tree{C}$
(in a strictly decreasing way).
For all $\pi \in \tree{\Cf}$, let $n_{\pi}$ (resp. $o_{\pi}$)
be the value of $\nb{\pi}$ (resp. $\outi{\pi}$) before
launching the function $\tnorm{\textbf{down}}(\Cf)$.

\begin{sublemma}
Let $\pi = C_1 \cdots C_n \in \tree{\Cf}$. During the execution of $\tnorm{\textbf{down}}(\Cf)$,
the following invariants hold just before $\tnorm{\textbf{down}}(\pi)$ makes its recursive calls:
\begin{enumerate}[label ={\normalfont \textbf{\textcolor{darkgray}{\roman*.}}}]
\item \label{inv:algo:four} for all $ \pi'  = C_1 \cdots C_{n'} \pref \pi$, $\nb{\pi'}: C_{n'} \fonc [0{:}\four]$;
\item \label{inv:algo:untouched} for all $C' \in \Comp(C_n)$, $C' \neq C_n$
and all $\pi C' \prefneq \pi'$, we have $\nb{\pi'} = n_{\pi'}$
and $\outi{\pi'} = o_{\pi'}$;
\item \label{inv:algo:down} if $\tnorm{\textbf{down}}(\pi)$
has performed the update $\nb{\pi C'}(q) \becomes \nb{\pi C'}(q) + (\nb{\pi}(q){-}4)$,
for some $C'$ and $q \in C_n$, then for all $\pi' \pref \pi$,
we have ${\nb{\pi'}} (q) = \four$.
\item \label{inv:algo:lagging} if $q \in \Cf$ was lagging before launching
$\tnorm{\textbf{down}}(\Cf)$, then for all
$\pi' = C'_1 \cdots C'_{n'} \in \tree{\Cf}$ such
that $q \in C'_{n'}$, we have $\nb{\pi'}(q) = 0$ and, if $\pi' \neq \Cf$, $\outi{\pi'} = \movi$;
\item \label{inv:algo:past}
for all $\pi \pref \pi' = C_1 \cdots C_{n'} \in \tree{\Cf}$ such that $C_{n'} = \{q\}$,
 if $\pi_i$ denotes $C_1 \cdots C_i$ then:
\begin{equation*}
\theta^{\nb{\pi_1}} \left(\prod_{i=2}^{n'} \outi{\pi_i} \theta^{\nb{\pi_i}(q)} \right)  = 
\theta^{n_{\pi_1}} \left(\prod_{i=2}^{n'} o_{\pi_i} \theta^{n_{\pi_i}(q)} \right);
\end{equation*}
\item \label{inv:algo:close} for all $ \pi' = C_1 \cdots C_{n'} \pref \pi$
which is not close and was close before
launching $\tnorm{\textbf{down}}(\Cf)$, let
$J' \defined \pre{\Jf,\Cf}{x[1{:}i]}(C_{n'}) \subseteq \Jf$.
Then $J', x[1{:}i], C_{n'}$ is an initial step and
 $|\advm{J',C_{n'}}{x[1{:}i]} | \ge 4 \Bound !$.
\end{enumerate}
\end{sublemma}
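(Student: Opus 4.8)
The plan is to argue by induction on the length $|\pi|$ of the path, viewing the execution of $\tnorm{\textbf{down}}(\Cf)$ as a depth-first traversal of $\tree{\Cf}$: a call $\tnorm{\textbf{down}}(\pi)$ first performs its steps~1 and~2, then issues the calls $\tnorm{\textbf{down}}(\pi C')$ for the children $\pi C'$ of $\pi$ in $\tree{\Cf}$, which all have length $|\pi|+1$; hence the recursion terminates after at most $|Q|$ nested levels and Algorithm~\ref{algo:down} is well defined. Write the root-to-$\pi$ path as $\Cf = \pi_1 \prefneq \cdots \prefneq \pi_n = \pi$. The structural fact underlying everything is that a call $\tnorm{\textbf{down}}(\sigma)$ only modifies $\nb{\sigma}$, $\outi{\sigma}$ and the $\nb{\sigma C''}$ of the immediate children of $\sigma$, and that distinct children of a node span disjoint subtrees of $\tree{\Cf}$. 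Consequently, at the moment just before $\tnorm{\textbf{down}}(\pi)$ issues its recursive calls, the calls already executed are exactly those $\tnorm{\textbf{down}}(\pi_b)$ with $1 \le b \le n$, together with the complete processing of the subtrees rooted at the children of the $\pi_b$ that precede $\pi_{b+1}$ in depth-first order; since those subtrees are disjoint from the subtree rooted at $\pi$, no $\nb{\pi'}$ or $\outi{\pi'}$ with $\pi C' \prefneq \pi'$ for a child $\pi C'$ of $\pi$ has been touched since $\tnorm{\textbf{down}}(\Cf)$ began — which is exactly invariant~\ref{inv:algo:untouched} — and, for the inductive step, the earlier sibling subtrees processed since $\tnorm{\textbf{down}}(\pi)$ finished its steps~1--2 are irrelevant to the subtree of $\pi C'$, so we may proceed as if $\tnorm{\textbf{down}}(\pi C')$ ran immediately after $\tnorm{\textbf{down}}(\pi)$'s step~2.

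In the base case $\pi=\Cf$ we start from the state produced by the resizing that precedes $\tnorm{\textbf{down}}(\Cf)$ — so $\last(q) \prefneq \theta$ for all $q$, $\nb{\sigma}(q) \le \four$ for all $\sigma \neq \Cf$, and invariants~\ref{inv:step} and~\ref{inv:sep} hold — and in the inductive step $\pi=\rho C'$ we start from the state just after the steps~1--2 of $\tnorm{\textbf{down}}(\rho)$, for which invariants~\ref{inv:algo:four}--\ref{inv:algo:close} hold by the induction hypothesis applied to $\rho$. In both cases we then trace step~1 (append $\theta^m$ to $\outi{\pi}$ and subtract $m=\min_{q}\nb{\pi}(q)$ from each $\nb{\pi}(q)$) and step~2 (for each $q$ with $\nb{\pi}(q)>\four$, move $\nb{\pi}(q)-\four$ into the $\nb{\pi C'}(q)$ and reset $\nb{\pi}(q)$ to $\four$). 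Invariant~\ref{inv:algo:four} follows since step~2 caps $\nb{\pi}$ at $\four$ while no $\nb{\pi_b}$ with $b<n$ is modified after it was capped by $\tnorm{\textbf{down}}(\pi_b)$. For invariant~\ref{inv:algo:down}: if $\nb{\pi}(q)$ exceeds $\four$ after step~1 then, as $\nb{\pi}(q)$ can only decrease during $\tnorm{\textbf{down}}(\pi)$ and was at most $\four$ beforehand (unless $\pi=\Cf$, where the ancestor clause is vacuous), it must have been raised above $\four$ by the step~2 of the parent $\sigma$ of $\pi$; that only happens when $\nb{\sigma}(q)$ was itself above $\four$, in which case $\tnorm{\textbf{down}}(\sigma)$ performed the push on $q$ to every child containing $q$, so the induction hypothesis (invariant~\ref{inv:algo:down} for $\sigma$) yields $\nb{\pi''}(q)=\four$ for all $\pi'' \pref \sigma$, none of which changes before $\tnorm{\textbf{down}}(\pi)$ completes its step~2; with $\nb{\pi}(q)=\four$ after that step, this gives $\nb{\pi''}(q)=\four$ for all $\pi'' \pref \pi$. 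For invariant~\ref{inv:algo:lagging}: if $q$ is lagging then invariant~\ref{inv:lagging} gives $\nb{\pi'}(q)=0$ and $\outi{\pi'}=\movi$ on every path through $q$ before $\tnorm{\textbf{down}}(\Cf)$, and since $\last(q)=\movi$ the resizing leaves $\nb{\Cf}(q)=0$; hence every minimum $m$ encountered on a path through $q$ is $0$, no $\theta$ is produced or pushed on the $q$-component, and these values stay at $0$ and $\movi$ — this also forces $\lagm=\movi$ whenever the $m$ computed by $\tnorm{\textbf{down}}(\Cf)$ is positive, by invariant~\ref{inv:lag}, which is what keeps invariant~\ref{inv:past} (and thus the output semantics) intact. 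Finally, invariant~\ref{inv:algo:past} holds because every factor of the displayed product — the register contents $\outi{\pi_i}$, which lie in $\theta^*$ by invariant~\ref{inv:out}, and the powers $\theta^{\nb{\pi_i}(q)}$ — is a power of the single word $\theta$, so they commute and concatenate by adding exponents; steps~1 and~2 only transfer powers of $\theta$ among these locations, so the total product along each root-to-leaf path is preserved.

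The step I expect to be the main obstacle is invariant~\ref{inv:algo:close}: one must show that if a path $\pi' \pref \pi$ that was \emph{close} before $\tnorm{\textbf{down}}(\Cf)$ has become not close, then with $J' \defined \pre{\Jf,\Cf}{x[1{:}i]}(C_{n'})$ the triple $J',x[1{:}i],C_{n'}$ is an initial step and $|\advm{J',C_{n'}}{x[1{:}i]}| \ge \four\,\Bound!$ — i.e.\ the \emph{strong} form of invariant~\ref{inv:close}, with trivial decomposition. That $J',x[1{:}i],C_{n'}$ is an initial step comes from restricting the initial step $\Jf,x[1{:}i],\Cf$ of invariant~\ref{inv:step} along $\pre{\Jf,\Cf}{x[1{:}i]}$ to the compatible subset $C_{n'}\subseteq\Cf$. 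For the length bound, the key observation is that for the subtree rooted at $\pi'$ to acquire a nonzero $\nb{\pi''}$ or a nonempty $\outi{\pi''}$ for some strict descendant $\pi''$ of $\pi'$, an overflow of the count at some $q$ above $\four$ must cascade, through the step~2's, from $\Cf$ down along every node of the path from $\Cf$ to $\pi'$ and into a strict descendant of $\pi'$; each cascade step strips off at least $\four$ units, so — the path having at most $|Q|$ nodes — the value $\nb{\Cf}(q)$ produced by the resizing must exceed a suitable multiple of $\four$ for some $q\in C_{n'}$, the cascade travelling only through sets that contain $q$. By invariant~\ref{inv:last} this $\nb{\Cf}(q)$ equals $\lfloor |\last(q)|/|\theta| \rfloor$, and by the construction preceding the toolbox $|\last(q)|$ is the length of the corresponding advance at $\Cf$ up to a quantity bounded by $|\lagm|\le\Bound!$; threading these bounds through, and using $q\in C_{n'}$ together with the fact that $\advm{J',C_{n'}}{x[1{:}i]}$ dominates the contribution of $q$, yields $|\advm{J',C_{n'}}{x[1{:}i]}|\ge\four\,\Bound!$. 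The consistency of all of this with invariants~\ref{inv:step} and~\ref{inv:sep} themselves is immediate, since Algorithm~\ref{algo:down} never changes $\Jf$, $\Cf$, $\pre{}{}$, $\theta$, $\lagm$, or the values $\lag(q)$, so the initial step, the value of $\theta$, and the set of lagging states are untouched throughout.
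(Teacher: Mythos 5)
Items \ref{inv:algo:four}--\ref{inv:algo:past} of your argument are essentially the paper's: the depth-first framing, the disjoint-subtree observation for \ref{inv:algo:untouched}, the trace-back through the parent's step~2 for \ref{inv:algo:down}, and the ``everything is a power of $\theta$'' bookkeeping for \ref{inv:algo:lagging} and \ref{inv:algo:past} are what the paper dismisses as bureaucratic verification, and your treatment is at the same level of rigor (including the same glossed-over point that at the root the prefix $\theta^m$ migrates into $\out$, outside the displayed product). The genuine gap is item~\ref{inv:algo:close}, which is the heart of the sublemma. To bound $|\advm{J',C_{n'}}{x[1{:}i]}|$ from below you must exhibit \emph{two} runs ending in $C_{n'}$ whose productions differ in length by at least $\four\Bound!$: the maximal advance is measured relative to the longest common prefix \emph{inside} $C_{n'}$, so a single state $q$ with a large $\nb{\Cf}(q)$, a long $\last(q)$, or a long advance relative to the bigger set $\Cf$ proves nothing --- all productions of runs ending in $C_{n'}$ could be long and nearly equal. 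Your cascade argument only tracks the overflowing state $q$ and then asserts that ``$\advm{J',C_{n'}}{x[1{:}i]}$ dominates the contribution of $q$'', which is precisely the unjustified step; it also silently replaces the advance with respect to $C_{n'}$ by the advance with respect to $\Cf$, and the identification of $|\last(q)|$ with an advance (up to $|\lagm|$) only holds at the moment the separable mode is entered, not after the updates of Subsubsections~\ref{sssec:up-step} and~\ref{sssec:prepro} that may precede a later call to the toolbox.

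The paper's proof supplies the missing short witness. If $\pi$ stopped being close, then $\tnorm{\textbf{down}}(\pi)$ performed a push for some $q\in C_n$, so by item~\ref{inv:algo:down} one has $\nb{\pi'}(q)=\four$ for every $\pi'\pref\pi$ and moreover $\nb{\pi\{q\}}(q)\ge 1$; on the other hand, step~1 subtracts the minimum, so there is $q'\in C_n$ with $\nb{\pi}(q')=0$, and since $\pi$ was close its descendants' counters for $q'$ are still $0$. Plugging both states into the representation of the productions given by invariant~\ref{inv:past} combined with items~\ref{inv:algo:lagging} and~\ref{inv:algo:past}, the common factors $\outi{\pi_j}$ cancel and the length difference is at least $\Bound!\,(\four-\nb{\pi}(q'))=\four\Bound!$ (the remaining terms $|\lagm|-|\lag(q')|$ and $|\last(q)|-|\last(q')|+\Bound!\,\nb{\pi\{q\}}(q)$ being nonnegative). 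Since $q,q'\in C_n\subseteq C_{n'}$, this difference lower-bounds $|\advm{J',C_{n'}}{x[1{:}i]}|$ because the common prefix inside $C_{n'}$ cancels when taking differences of lengths. Your observation that $J',x[1{:}i],C_{n'}$ is an initial step (restriction of $\Jf,x[1{:}i],\Cf$ along $\pre{\Jf,\Cf}{x[1{:}i]}$) is correct, but without the pair $(q,q')$ and the count-to-length translation your sketch cannot reach the $\four\Bound!$ bound.
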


\begin{proof} Invariant~\ref{inv:algo:four} is clear since
step $2$ forces $\nb{\pi} = \four$ if  $\nb{\pi} > \four$.
Invariants~\ref{inv:algo:untouched},
\ref{inv:algo:lagging}  and~\ref{inv:algo:past}
result from a simple but bureaucratic verification. For invariant~\ref{inv:algo:down},
note that if $\nb{\pi C''}(q) \becomes \nb{\pi C''}(q) + (\nb{\pi}(q){-}\four)$ is 
performed, then we had $\nb{\pi}(q) > \four$ (so now $\nb{\pi}(q) = \four$)
before executing $\tnorm{\textbf{down}}(\Cf)$.
Therefore using invariants~\ref{inv:algo:untouched}
and~\ref{inv:algo:down}, we conclude that we had
$\nb{\pi'}(q) = 4$ for all $\pi' \pref \pi$ before
lauching  $\tnorm{\textbf{down}}(\pi)$.

Let us show that invariant~\ref{inv:algo:close}
is preserved. First, if $\pi' \prefneq \pi$ is not close
before $\tnorm{\textbf{down}} (\pi)$ makes its recursive calls,
then it was not close before launching $\tnorm{\textbf{down}} (\pi)$.
Assume now that $\pi$ is not close,
but was close before launching $\tnorm{\textbf{down}} (\Cf)$.
By invariant~\ref{inv:algo:untouched} we conclude
that  $\pi$ was close before launching
$\tnorm{\textbf{down}} (\pi)$.
Hence it means that $\tnorm{\textbf{down}} (\pi)$
performed a $\nb{\pi C'}(q) \becomes \nb{\pi C'}(q) + (\nb{\pi}(q){-}\four)$.
Hence for all $\pi' \pref \pi$ we have ${\nb{\pi'}} (q) = \four$
by invariant~\ref{inv:algo:down}.
Before executing the first
``for'' loop of $\tnorm{\textbf{down}} (\pi)$, we had
${\nb{\pi}} (q') = 0$ for some $q' \in C'$,
and thus  ${\nb{\pi \{q'\}}} (q') = {\nb{\pi}} (q') = 0$ after this loop
(by construction and since $\pi$ was close).
Let $\pi_i \defined C_1 \cdots C_i$ for $1 \le i \le n$.
From invariant~\ref{inv:past}
of $\strans$, and  invariants~\ref{inv:algo:lagging} and~\ref{inv:algo:past},
we get:
\begin{equation*}
\val{\Jf,\Cf}{x[1{:}i]}(q) = \out~\lagm ~\theta^{\nb{\pi_1}(q)}
\left(\prod_{i=2}^n \outi{\pi_i} \theta^{\nb{\pi_i}(q)} \right) \theta^{\nb{\pi\{q\}}(q)} \last(q)
\end{equation*}
and similarly for $q'$ (which may be lagging):
\begin{equation*}
\val{\Jf,\Cf}{x[1{:}i]}(q') = \out~\lag(q) ~\theta^{\nb{\pi_1}(q')}
\left(\prod_{i=2}^n \outi{\pi_i} \theta^{\nb{\pi_i}(q')} \right) \theta^{\nb{\pi\{q'\}}(q')} \last(q')
\end{equation*}
Therefore we get:
\begin{equation*}
\begin{aligned}
|\advm{\Jf,\Cf}{x[1{:}i]}|
& \ge   \left| |\val{\Jf,\Cf}{x[1{:}i]}(q)| - |\val{\Jf,\Cf}{x[1{:}i]}(q')| \right|\\
&= \Big| \underbrace{|\lagm| - |\lag(q')|}_{\ge 0} + \underbrace{|\last(q)| - |\last(q')| + \Bound! |\nb{\pi\{q\}}(q)|}_{\ge 0}\\
&+ \Bound ! \times \sum_{i=1}^n (\four - \nb{\pi_i}(q'))\Big| \ge \Bound! (\four - \nb{\pi}(q')) \ge \four  \Bound! \\
\end{aligned}
\end{equation*}
which concludes the proof.
\end{proof}

Let us conclude that the invariants of $\strans$ hold 
after applying this algorithm. Invariants~\ref{inv:lag}
and~\ref{inv:last} hold because we did not modify the $\lag(q)$
and $\last(q)$. Invariant~\ref{inv:out} is clearly preserved along
Algorithm~\ref{algo:down}. Invariant~\ref{inv:lagging}
follows from invariant~\ref{inv:algo:lagging}.
Invariant~\ref{inv:past}
follows from invariant~\ref{inv:algo:past}
and invariant~\ref{inv:lagging}. Invariant~\ref{inv:future}
is obvious since $\Jf,x[1{:}i],\Cf$ is unchanged.
Finally, for invariant~\ref{inv:close},
we use invariant~\ref{inv:algo:close}
to deal with the paths $\pi \in \tree{\Cf}$
which became close when applying Algorithm~\ref{algo:down}.

Furthermore, $\nb{\pi}: C_n \fonc [0{:}\four]$
for all $\pi = C_1 \cdots C_n \in \tree{\Cf}$
which conclude the proof.

\subsection{Proof of Lemma~\ref{lem:sim:update}}

Since $\Cf$ was separable, then so is $C^x_{i+1}$
(indeed, if the lengths of the runs which end in $\Cf$ can differ
significantly, so are those obtained by adding an $a$-transition
to $C^x_{i+1}$). Furthermore, it is clear that invariants~\ref{inv:C}
and~\ref{inv:step} hold after this step.
We now use $\Jf$, $\lag$, etc. to denote the values
stored before the operation, and $\new{\Jf}$, $\new{\lag}$, etc. (with a bar)
after the operation.

\subparagraph*{Invariants~\ref{inv:lag}, \ref{inv:out} and \ref{inv:last}.}
To show invariant~\ref{inv:lag}, we note that
 $k_q \neq 0$ if and only if $\pre{\Cf,  C^{x}_{i+1}}{a}(q)$ was lagging.
\old{Thus we use invariants~\ref{inv:lag},~\ref{inv:past} and Lemma~\ref{lem:mutual} to
show that the $\lag$ are still mutual prefixes,
and furthermore $\new{\lagm} = c^{-1} \lagm$.}
\cor{Thus we use invariants~\ref{inv:lag},~\ref{inv:past}
and~\ref{inv:future} to
show that the $\new{\lag}$ are prefixes of $\new{\lagm}$.}
For invariant~\ref{inv:out}, it is sufficient to see that
the $\new{\outi{\pi}}$ are obtained from the  $\outi{\pi}$.
We just have to note that $\outi{\Cf}$ is not
used anywhere except for $\new{\outi{\new{\Cf}}}$.
Indeed (with the notations of Subsubsection~\ref{sssec:up-step}),
if $\rho = \new{\Cf}$ and $i_m = n$,
then $i_m = i_1 = 1 = n$ and by definition of $\tree{\Cf}$
we have $\pi = \Cf$.
For invariant~\ref{inv:last}, we use invariants~\ref{inv:past} 
and~\ref{inv:future} which imply that for all $q \in \new{\Cf}$:
\begin{itemize}
\item if $\pre{\Cf,  C^{x}_{i+1}}{a}(q)$ was lagging then
$\val{\Cf, \new{\Cf}}{a}(q) \pref (\lag(\pre{\Cf,  C^{x}_{i+1}}{a}(q)))^{-1} \lagm \theta^\omega$;
\item otherwise, $\val{\Cf, \new{\Cf}}{a}(q) \pref (\last(\pre{\Cf,  C^{x}_{i+1}}{a}(q)))^{-1} \theta^\omega$.
\end{itemize}
Thus $\val{\Cf,  C^{x}_{i+1}}{a}(q) [k_q{+}1{:}] \pref (\last(\pre{\Cf,  C^{x}_{i+1}}{a}(q)))^{-1} \theta^\omega$
by invariant~\ref{inv:lagging}, and so $\new{\last(q)} \pref \theta^\omega$.

\subparagraph*{Invariant~\ref{inv:lagging}.}
For invariant~\ref{inv:lagging}, let us consider
a lagging state $q \in \new{\Cf}$. Then
$\pre{\Cf,  C^{x}_{i+1}}{a}(q)$ was also lagging
and $|\val{\Cf,  C^{x}_{i+1}}{a}(q)| < k_q$.
Thus $\new{\last(q)} = \movi$. Now, let $\pi  = D_1 \cdots D_n
\in \tree{C^x_{i+1}}$ be such that $q \in D_n$,
with the notations of Subsubsection~\ref{sssec:up-step} we have:
\begin{itemize}
\item either $i_m < n$, then $\new{\nb{\pi}} = 0$ and $\new{\outi{\pi}} = \movi$;
\item or $i_m = n$. Since $C_{i_m} = C_n =\pre{\Cf,  C^{x}_{i+1}}{a}(D_n)$,
we get $\pre{\Cf,  C^{x}_{i+1}}{a}(q) \in C_{i_m}$
and so $\new{\nb{\pi}}  = 0$.
Furthermore, if $D_n \neq \new{\Cf}$,
then (see before) $C_{i_m} = C_n \neq \Cf$ thus $\new{\outi{\pi}}  = \movi$.
\end{itemize}

\subparagraph*{Invariant~\ref{inv:past}.} The proof is similar
to that of invariant~\ref{inv:last}. The important result is
that given $\pi = D_1 \cdots D_n \in \tree{\Cf}$ such that 
$D_n = \{q\}$, then (with the notations of Subsubsection~\ref{sssec:up-step})
if $\pi_i \defined D_1 \cdots D_i$ for $1 \le i \le n$
and $\rho_j \defined C_{i_1} \cdots C_{i_j}$ for $1 \le j \le m$, we get:
\begin{itemize}
\item $\new{\outi{\pi_i}} = \movi$ and $\new{\nb{\pi_i}} = 0$ if
$i \not \in \{i_1, \dots, i_m\}$;
\item $\new{\outi{\pi_{i_j}}} = \outi{\rho_j}$ for $1 < j \le m$
and $\new{\nb{\pi_{i_j}}} = {\nb{\rho_j}}  \circ \pre{\Cf, \new{\Cf}}{a} $
if $1 \le j \le m$
\end{itemize}
Hence $\theta^{\new{\nb{\pi_1}( q)}} \prod_{i=2}^n \new{\outi{\pi_i} }
\theta^{\new{\nb{\pi_i}(q)}}
= \theta^{\nb{\rho_1}(\pre{\Cf, \new{\Cf}}{a}(q))} \prod_{j=2}^m
\outi{\rho_j} \theta^{\nb{\rho_j}(\pre{\Cf, \new{\Cf}}{a}(q))}$.

\subparagraph*{Invariant~\ref{inv:future}.} Let $\new{\Cf},u, D$ be a step,
then $\Cf, au, D$ is also a step. Therefore we apply
invariant~\ref{inv:future} at the previous stage and
the result follows since $\out~\lagm~\theta^\omega = \new{\out}
\new{\lagm}~\theta^\omega$.

\subparagraph*{Invariant~\ref{inv:close}.}
Let $\pi = D_1 \cdots D_n \in \tree{\new{\Cf}}$ be not close.
With the notations of Subsubsection~\ref{sssec:up-step},
we get $\rho = C_{i_1} \cdots C_{i_{m}} \in \tree{\Cf}$.
Since $C_{i_{m}}, a, C_n$ is a step, it is sufficient to show
that $\rho$ was not close. Let $\pi' = \pi~D_{n+1} \cdots D_{n'} $
be such that $n' > n$ and $\new{\outi{\pi'}} \neq \movi$
or  $\new{\nb{\pi'}} \neq 0$. Then 
we have (with the notations of Subsubsection~\ref{sssec:up-step})
$\rho' = \rho~C_{i_{m+1}} \cdots C_{i_{m'}}$
and necessarily $i_{m'} = n'$ (which implies $m' > m$).
Hence $\outi{\rho'} = \new{\outi{\pi'}}$ and $\nb{\rho'} = \new{\nb{\pi'}}$,
showing that $\rho$ was not close.

\subsection{Proof of Lemma~\ref{lem:sim:prune}}

Let us show that invariants~\ref{inv:step} and~\ref{inv:sep}
hold after this operation. We use $\Jf$, $\out$, etc. to denote the values
stored before the operation, and $\new{\Jf}$, $\new{\out}$, etc. (with a bar)
to denote them after. The case of invariant~\ref{inv:step}
is trivial because $\new{\Cf} \in \Comp$ and $\new{\Jf} = \pre{x[1{:}i]}{\Jf, \Cf}(\new{\Cf})$.
Furthermore, $\new{\Cf}$ is separable by invariant~\ref{inv:close}
and Lemma~\ref{lem:carac-separ}.

\subparagraph*{Invariants~\ref{inv:lag} to~\ref{inv:lagging}.} \old{The $\new{\lag(q)}$ for $q\in \Cf'$ are mutual prefixes,
and furthermore $\new{\lag(q)} = \movi$ for some $q \in \new{\Cf}$, because
of the definition of $c$.}\cor{
The $\new{\lag(q)}$ for $q\in \Cf'$
are prefixes of $\new{\lagm}$
and furthermore $\new{\lag(q)} = \movi$ for some $q \in \new{\Cf}$, because
of the definition of $c$.} Thus invariant~\ref{inv:lag} holds.
Invariant~\ref{inv:out} holds because the $\new{\outi{\pi}}$
for $\pi \neq \new{\Cf}$ are obtained from the $\outi{\pi}$
for $\pi \neq \Cf$. Invariant~\ref{inv:last} is also trivial.

\old{Since $C' = \new{\Cf}$ was not close, there
was some  $p \in \new{\Cf}$ which was not lagging,
and so $\new{\lagm} = c^{-1}\lagm$.}
\cor{Since $C' = \new{\Cf}$ was not close, there
was some  $p \in \new{\Cf}$ which was not lagging.}
Therefore some $q \in \new{\Cf}$ is now lagging
if and only if it was lagging before the operation.
Let us consider such a $q \in \new{\Cf}$ if it exists,
and let $\pi = C_1 \cdots C_n \in \tree{\new{\Cf}}$ be such that $q \in C_n$.
The update gives $\new{\nb{\pi}} = \nb{\Cf \pi} $
which must be null since $q$ was lagging. The cases of
 $\outi{\pi}$ (for $\pi\neq \Cf$) and  $\last(q)$ are similar. Thus invariant~\ref{inv:lagging} holds.

\subparagraph*{Invariant~\ref{inv:past}.}
Let us now show invariant~\ref{inv:past}. Two cases occur given $q \in \new{\Cf}$
\begin{itemize}
\item either $q \in \new{\Cf}$ is lagging, then it was lagging
before and so we had $\val{\Jf, \Cf}{x[1{:}i]}(q) = \out~\lag(q)$.
Furthermore, since $q \in \new{\Cf}$ was lagging we necessarily had $\outi{\Cf \new{\Cf}} = \movi$
by invariant~\ref{inv:lagging}. Since the update gives
$\new{\out} = \out c \outi{\Cf \new{\Cf}}$
and $\new{\lag(q)} = c^{-1}\lag(q)$, we conclude that 
 $ \out~\lag(q)= \new{\out} \new{\lag(q)}$;
 \item or $q \in \new{\Cf}$ is not lagging, let
$\new{\pi} = C_1 \cdots C_n \in \tree{\new{\Cf}}$ be such that
$C_n = \{q\}$. Then let $\pi \defined \Cf C_1 \cdots C_n \in \tree{\Cf}$,
if $\pi_i \defined \Cf C_1 \cdots C_i$ we had:
\begin{equation*}
\begin{aligned}
\val{x[1{:}i]}{\Jf,\Cf}(q) = \out~\lagm ~\theta^{\nb{\Cf}(q)}
\left(\prod_{i=1}^n \outi{\pi_i} \theta^{\nb{\pi_i}(q)} \right) \last(q).
\end{aligned}
\end{equation*}
Now if $\new{\pi_i} \defined C_1 \cdots C_i$, the update gives
$\new{\outi{\new{\pi_i}}} = \outi{\pi_i} $ and $\new{\nb{\new{\pi_i}}} = \nb{\pi_i} $ if $i > 1$;
and $\new{\out} \becomes \out \outi{\pi_1}$ and $\new{\nb{\new{\pi_i}}} \becomes \nb{\pi_i} $ for $i = 1$.
Therefore we have:
\begin{equation*}
\begin{aligned}
\val{x[1{:}i]}{\Jf,\Cf}(q) = \out~\lagm ~
\left(\prod_{i=2}^n \new{\outi{\pi_i}} \theta^{\new{\nb{\new{\pi_i}}(q)}} \right) \theta^{\nb{\Cf}(q)} \last(q).
\end{aligned}
\end{equation*}
The result follows by update of $\new{\last(q)}$ and since $\new{\lagm} = c^{-1}\lagm$.
\end{itemize}

\subparagraph*{Invariant~\ref{inv:future}.} 
Let $\new{\Cf}, u, D$ be a step,
we show that for all $q \in D$:
$\val{\new{\Jf},D}{x[1{:}i]u}(q)
\pref \new{\out}~\new{\lagm}~\theta^\omega$.
By invariant~\ref{inv:close}, we can decompose the step
$\new{\Jf}, x[1{:}i], \new{\Cf}$ in an initial step $\new{\Jf}, x[1{:}j], E$
and a step $E, x[j{+}1{:}i], \new{\Cf}$ such that
$|\advm{\new{\Jf}, E}{x[1{:}j]}| \ge \four \Bound!$.
\old{By applying Lemma~\ref{lem:sep-theta}
to  $\new{\Jf}, x[1{:}j], E$, it follows that there exists
$\phi, \tau, \gamma$ with $|\phi| = \Bound!$
and $|\tau| \le 3 \Bound$ such that
$\advm{\Jf,E}{x[1{:}i]} = \tau \gamma$ (thus $|\gamma|  \ge \Bound!$),
$\gamma \pref \phi^\omega$ and for all step $\new{\Cf},u,D$ and $q \in D$:}
\cor{By applying Lemma~\ref{lem:sep-theta}
to  $\new{\Jf}, x[1{:}j], E$, it follows that there exists
$\phi, \tau, \gamma$ with $|\phi| = \Bound!$
and $|\tau| \le \Bound!$ such that
$\advm{\Jf,E}{x[1{:}j]} = \tau \gamma$ (thus $|\gamma|  = |\advm{\new{\Jf}, E}{x[1{:}j]}|-|\tau|  \ge \four\Bound! - \Bound!
\ge \Bound!$),
$\gamma \pref \phi^\omega$ and for all step $\new{\Cf},u,D$ and $q \in D$,
we obtain a step $E, x[j{+}1{:}i]u, D$ such that:}
\begin{equation*}
\begin{aligned}
\val{E,D}{x[j{+}1{:}i]u}(q) \pref  (\adv{\new{\Jf},E}{x[1{:}j]}(\pre{E,D}{x[j{+}1{:}i]u}(q)))^{-1} \tau  \phi^\omega.
\end{aligned}
\end{equation*}
and therefore by adding $\val{\new{\Jf},E}{x[1{:}j]}(\pre{E,D}{x[j{+}1{:}i]u}(q))$ on both sides:
\begin{equation}
\begin{aligned}
\label{eq:fff}
\val{\new{\Jf},D}{x[1{:}i]u}(q) \pref  \com{\new{\Jf},D}{x[1{:}j]} \tau  \phi^\omega.
\end{aligned}
\end{equation}
To conclude, it is thus sufficient to show the following result:
\begin{sublemma} $  \com{\new{\Jf},D}{x[1{:}j]} \tau  \phi^\omega
= \new{\out}~\new{\lagm}~\theta^\omega$.
\end{sublemma}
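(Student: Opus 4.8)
The plan is to show that the two sides of the identity are \emph{both} equal to the single $\omega$-word $W \defined \val{\new{\Jf},\new{\Cf}}{x[1{:}i]}(q)\,\en{\new{\Cf}}(q)$, which by \cref{lem:ends} applied to the initial step $\new{\Jf},x[1{:}i],\new{\Cf}$ does not depend on the choice of $q \in \new{\Cf}$.

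For the left-hand side, recall that invariant~\ref{inv:close} at stage $i$ (which is available, since we are proving that it is \emph{preserved} and not yet using the updated value) provides the decomposition already used above: $\new{\Jf},x[1{:}j],E$ is an initial step, $E,x[j{+}1{:}i],\new{\Cf}$ is a step, and $|\advm{\new{\Jf},E}{x[1{:}j]}| \ge \four\Bound!$. Hence $E$ is separable and $\phi,\tau$ are what \cref{lem:sep-theta2} (equivalently \cref{lem:sep-theta}) produces on $\new{\Jf},x[1{:}j],E$: $|\phi| = \Bound!$, $|\tau| \le \Bound!$, $\advm{\new{\Jf},E}{x[1{:}j]} = \tau\gamma$ with $\gamma \pref \phi^\omega$ and $|\gamma| \ge \four\Bound! - \Bound! \ge \Bound!$. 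Feeding the step $E,x[j{+}1{:}i],\new{\Cf}$ into the second clause of \cref{lem:sep-theta2} gives, for $q \in \new{\Cf}$ and $p \defined \pre{E,\new{\Cf}}{x[j{+}1{:}i]}(q)$, the identity $\val{E,\new{\Cf}}{x[j{+}1{:}i]}(q)\,\en{\new{\Cf}}(q) = (\adv{\new{\Jf},E}{x[1{:}j]}(p))^{-1}\,\tau\,\phi^\omega$; prepending $\val{\new{\Jf},E}{x[1{:}j]}(p) = \com{\new{\Jf},E}{x[1{:}j]}\,\adv{\new{\Jf},E}{x[1{:}j]}(p)$ on both sides turns it into $W = \com{\new{\Jf},E}{x[1{:}j]}\,\tau\,\phi^\omega$, which is precisely the left-hand side of the sublemma (the subscript $D$ written there and in \eqref{eq:fff} stands for $E$).

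The right-hand side is the substantive part. For every $q \in \new{\Cf}$ one has $\val{\new{\Jf},\new{\Cf}}{x[1{:}i]}(q) \pref \new{\out}\,\new{\lagm}\,\theta^\omega$: this follows from the already-verified invariant~\ref{inv:past}, using invariant~\ref{inv:lag} when $q$ is lagging and invariants~\ref{inv:out} and~\ref{inv:last} (which keep the trailing register contents $\new{\outi{\cdot}}$ and $\new{\last(q)}$ inside $\theta^\omega$) when $q$ is not lagging. Therefore $W$ and $\new{\out}\,\new{\lagm}\,\theta^\omega$ agree on the prefix $\val{\new{\Jf},\new{\Cf}}{x[1{:}i]}(q)$, and — choosing $q$ so that $\pre{E,\new{\Cf}}{x[j{+}1{:}i]}(q)$ realizes the maximal advance of $\new{\Jf},x[1{:}j],E$ (surjectivity of the step) — this shared prefix extends $\com{\new{\Jf},E}{x[1{:}j]}\,\tau$ by at least $\four\Bound! - |\tau| \ge \Bound! = |\phi|$ letters, while also extending $\new{\out}\,\new{\lagm}$ (taking in addition $q$ not lagging, which exists since $\new{\Cf}$ is not close). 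Since $W$ and $\new{\out}\,\new{\lagm}\,\theta^\omega$ are both eventually periodic of period $\Bound!$, equality then reduces to a short case analysis — using $|\tau|, |\new{\lagm}| \le \Bound!$ and $|\gamma| \ge \Bound!$ — of where $\com{\new{\Jf},E}{x[1{:}j]}\,\tau$ sits relative to $\new{\out}\,\new{\lagm}$: stripping whichever of these two finite words is the longer prefix of $W$ leaves one side equal to a shift of $\phi^\omega$ and the other equal to $\new{\lagm}\,\theta^\omega$ read past the stripped part, and these agree over a full period $\Bound!$, hence coincide; folding the stripped parts back gives $W = \new{\out}\,\new{\lagm}\,\theta^\omega$.

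The main obstacle is exactly this last step: the \emph{stored} looping word $\theta$ was chosen for $\Cf$ (when it became separable), whereas $\phi$ is \emph{freshly} chosen for the sub-step $E$, and these are a priori unrelated words of length $\Bound!$; the content of the sublemma is that $\theta$ remains a valid looping value after restricting to the subset $\new{\Cf}\subseteq\Cf$, the phase discrepancy being absorbed into the updated value $\new{\lagm}$. Routing both sides through the canonical end-completion $W$ of \cref{lem:ends} turns the claim into matching two descriptions of one $\omega$-word rather than a brute-force rewriting, and the only genuine labour is the bookkeeping of that phase, controlled by the length estimates of \cref{lem:carac-separ} and \cref{lem:sep-theta2}.
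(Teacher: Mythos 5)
Your overall route is essentially the paper's: you use the decomposition supplied by invariant~\ref{inv:close} ($\new{\Jf},x[1{:}j],E$ initial, $E,x[j{+}1{:}i],\new{\Cf}$ a step, $|\advm{\new{\Jf},E}{x[1{:}j]}|\ge \four\Bound!$), apply \cref{lem:sep-theta2}, and finish by a periodicity/case analysis; routing the left-hand side through the end-completed word $W=\val{\new{\Jf},\new{\Cf}}{x[1{:}i]}(q)\,\en{\new{\Cf}}(q)$ to get an exact equality (rather than the prefix relation of Equation~\ref{eq:fff}) is a clean variant, and your observation that every $\val{\new{\Jf},\new{\Cf}}{x[1{:}i]}(q)$ is a common prefix of $W$ and of $\new{\out}~\new{\lagm}~\theta^\omega$ is correct given that invariants~\ref{inv:lag}--\ref{inv:past} are already established for the new configuration.

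The gap is in the last step. To conclude that $\com{\new{\Jf},E}{x[1{:}j]}\tau\phi^\omega=\new{\out}~\new{\lagm}~\theta^\omega$ you need the known common prefix to reach at least $\Bound!$ letters \emph{beyond the longer} of the two anchors $\com{\new{\Jf},E}{x[1{:}j]}\tau$ and $\new{\out}~\new{\lagm}$; sharing the shorter anchor plus a full period while merely ``also extending'' the longer one is not enough (with period $2$: $(ab)^\omega$ and $aba(bb)^\omega$ share the prefix $abab$ yet differ). Your argument secures $\ge\Bound!$ letters beyond $\com{\new{\Jf},E}{x[1{:}j]}\tau$, via the state whose preimage realizes $\advm{\new{\Jf},E}{x[1{:}j]}$ and $|\gamma|\ge\Bound!$, but beyond $\new{\out}~\new{\lagm}$ you only invoke ``some non-lagging $q$, which exists since $\new{\Cf}$ is not close''. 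A non-lagging $q$ only contributes the extension $\theta^{n(q)}\new{\last(q)}$, which can have length $<\Bound!$ when $n(q)=0$ (recall $|\new{\last(q)}|$ may be $<\Bound!$), so nothing guarantees agreement ``over a full period'' past $\new{\out}~\new{\lagm}$ in the case where that anchor is the longer one and also longer than the production of your chosen max-advance state (which is then lagging, so such a case is not excluded). What is needed — and what the paper extracts — is the finer consequence of non-closeness of the path $\Cf\new{\Cf}$: some strict extension carries $\nb{}\neq 0$ or $\outi{}\neq\movi$, invariant~\ref{inv:lagging} forces the corresponding state $q$ to be non-lagging, and invariant~\ref{inv:past} together with $\outi{\pi}\in\theta^*$ (invariant~\ref{inv:out}) and the updates of $\new{\last}$, $\new{\nb{}}$ yields $\val{\new{\Jf},\new{\Cf}}{x[1{:}i]}(q)=\new{\out}~\new{\lagm}~\theta^{n(q)}\new{\last(q)}$ with $n(q)\ge 1$, i.e.\ a full period beyond $\new{\out}~\new{\lagm}$ inside the common prefix; this is exactly what drives the case $k\ge 0$ of the paper's proof and is missing from yours. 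A minor additional point: a single $q$ that is simultaneously non-lagging and whose preimage realizes the maximal advance at $j$ need not exist; this is harmless — since every production is a prefix of both sides you may use two different states, as the paper implicitly does by taking the $q$ maximizing $(n(q),|\last(q)|)$ — but the quantitative $n(q)\ge 1$ argument above still has to be supplied.
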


\begin{proof}

Since $\Cf \new{\Cf} \in \tree{\Cf}$ was not close,
there was some $\pi \neq \movi$ such that $\Cf \new{\Cf} \pi \in \tree{C}$
and $\nb{\pi} \neq 0$ or $\outi{\pi} \neq \movi$.
Therefore there is $q \in \new{\Cf}$ which was not lagging
and such that:
\begin{equation*}
\begin{aligned}
\val{\Jf,\Cf}{x[1{:}i]}(q) = \val{\new{\Jf},\new{\Cf}}{x[1{:}i]}(q) = \new{\out}~\new{\lagm} ~
\theta^{n(q)} \new{\last(q)}.
\end{aligned}
\end{equation*}
for some $n(q) > 0$. Let us consider the $q\in \new{C}$ such that $(n(q), |\last(q)|)$
is maximal. Then for all $p \in \new{C}$ we get
$\val{\new{\Jf},\new{\Cf}}{x[1{:}i]}(p) \pref \val{\new{\Jf},\new{\Cf}}{x[1{:}i]}(q)$,
thus (use Equation~\ref{eq:fff} for the right handside):
\begin{equation}
\label{eq:phi-theta}
\begin{aligned}
\com{\new{\Jf},E}{x[1{:}j]} \tau \gamma =  \com{\new{\Jf},E}{x[1{:}j]} \advm{\Jf',E}{x[1{:}j]}  \pref \new{\out}~\new{\lagm} ~
\theta^{n(q)} \new{\last(q)} \pref \com{\new{\Jf},D}{x[1{:}j]} \tau  \phi^\omega.
\end{aligned}
\end{equation}
Two cases can occur depending on the sign of $k \defined | \new{\out}~\new{\lagm} | - | \com{\Jf',D}{x[1{:}j]} \tau|  $
\begin{itemize}
\item either $k \ge 0$, then $ \new{\out}~\new{\lagm}  = \com{\Jf',D}{x[1{:}j]} \tau (\phi^\omega [1{:}k]) $
and $\theta  = \phi^\omega[k{+}1:k{+}1 {+}\Bound!]$;
\item or $k < 0$, then $ \new{\out}~\new{\lagm} (( \theta^{n(q)}  \new{\last(q)})[|k|{:}])
= \com{\Jf',D}{x[1{:}j]} \tau$
and $\gamma \pref \theta^{n(q)} \new{\last(q)}[|k|{+}1{:}]$.
But since $|\gamma| \ge \Bound!$ and $\gamma \pref \phi^\omega$,
we conclude that $\phi = \theta^\omega[|k|{+}1:|k|{+}1 {+}\Bound! ]$.
\end{itemize}
In both cases, we conclude that $  \com{\new{\Jf},D}{x[1{:}j]} \tau  \phi^\omega
= \new{\out}~\new{\lagm}~\theta^\omega$.
\end{proof}

\subparagraph*{Invariant~\ref{inv:close}.} 
Let us consider $\new{\pi} = C_1 \cdots C_n \in \tree{\new{\Cf}}$,
$\new{\pi}$ not close after the operation.
Then there exists $ \new{\pi}  \prefneq \new{\pi'}  =  C_1 \cdots C_{n'} \in \tree{\new{\Cf}}$ such that
$\new{\nb{\new{\pi'}}} \neq 0$ or $\new{\outi{\new{\pi'}}} \neq \movi$. 
Let $\pi' \defined \Cf C_1 \cdots C_{n'} $, then
then $\Cf \new{\Cf} \prefneq \pi'$, and by the updates
we get $\nb{\pi'} = \new{\nb{\new{\pi'}}}$
and $\outi{\pi'} = \new{\outi{\new{\pi'}}}$.
Finally, since invariant~\ref{inv:close} held
before the operation, it still holds.

\section{Proofs of section~\ref{sec:bound}: boundedness and productivity of $\strans$}

For $x \in A^\omega$ and $i \ge 0$, we denote by $\cro{\outi{\pi}}^x_{i}$, etc.
the values of the registers of $\strans$ after reading $C^x_0 \cdots x[i] C^x_i$
(when defined).

\subsection{Proof of Lemma~\ref{lem:strans-bound}: $1$-boundedness of $\strans$}

Given $\pi \in \tree{C^x_i}$, we say that $\cro{\outi{\pi}}^x_{i}$
is \emph{$1$-bounded} if for all $0 \le i' \le i$ and $\pi' \in \tree{C^x_{i'}}$,
$\outi{\pi'}$ is occurs at most once
in $\sigma(\outi{\pi})$, where $\sigma$ is the substitution
applied by $\strans$  when reading $x[i'{+}1]C^x_{i'+1} \cdots C^x_i$.
Given $\pi, \rho \in \tree{C^x_i}$, we say that
$\cro{\outi{\pi}}^x_{i}$ and $\cro{\outi{\rho}}^x_{i}$ have \emph{no shared memory} if
for all $0 \le i' \le i$ and $\pi' \in \tree{C^x_{i'}}$,
$\outi{\pi'}$ does not occur both in $\sigma(\outi{\pi})$ and $\sigma(\outi{\rho})$.
Lemma~\ref{lem:strans-bound} immediately follows from Sublemma~\ref{slem:indu-bound}.

\begin{sublemma} \label{slem:indu-bound}
For all $\pi \in  \tree{C^x_i}$, $\cro{\outi{\pi}}^x_{i}$ is $1$-bounded.
Furthermore, if $\rho \in  \tree{C^x_i}$ is such that $\pi \prefneq \rho$,
then $\cro{\outi{\pi}}^x_{i}$ and $\cro{\outi{\rho}}^x_{i}$
have no shared memory.
\end{sublemma}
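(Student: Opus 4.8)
Sublemma~\ref{slem:indu-bound} is the final statement: for all $\pi \in \tree{C^x_i}$, the register value $\cro{\outi{\pi}}^x_{i}$ is $1$-bounded, and for $\pi \prefneq \rho$ in $\tree{C^x_i}$, the values $\cro{\outi{\pi}}^x_{i}$ and $\cro{\outi{\rho}}^x_{i}$ have no shared memory.

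The plan is to prove Sublemma~\ref{slem:indu-bound} by induction on $i \ge 0$, the case $i=0$ being immediate: after the initialization of $\strans$ every register $\outi{\pi}$ with $\pi \neq \Cf$ holds $\movi$ and $\out$ is empty, so both statements hold vacuously. For the inductive step, I would fix $x \in \Dom{f}$, assume the claim at position $i$, and analyse the substitution $\sigma$ that $\strans$ applies when reading $x[i{+}1]\,C^x_{i+1}$, that is, one of the elementary updates of Subsections~\ref{ssec:origin-nosep}, \ref{sssec:up-step} or~\ref{sssec:prepro}, possibly post-composed with the toolbox operation (Algorithm~\ref{algo:down}) of Subsection~\ref{ssec:tool}. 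Note that, apart from $\out$ and the registers $\outi{\pi}$, everything else maintained by $\strans$ ($\pre{}{}$, $\lag$, $\lagm$, $\last$, $\theta$, $\nb{\pi}$) is bounded and stored in the state, hence is not a register; so it suffices to control the action of $\sigma$ on the finite family $\{\outi{\pi} : \pi \in \tree{C^x_i}\}$.

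The core of the argument is to establish, for each elementary update, two combinatorial properties of $\sigma$: (i) every register $\outi{\psi}$ of the time-$i$ configuration occurs at most once in the multiset $\{\,\sigma(\outi{\pi}) : \pi \in \tree{C^x_{i+1}}\,\}$; and (ii) for every new $\pi$, the set of registers occurring in $\sigma(\outi{\pi})$ is a $\pref$-chain inside $\tree{C^x_i}$, and whenever $\pi \prefneq \rho$ in $\tree{C^x_{i+1}}$, every register occurring in $\sigma(\outi{\pi})$ indexes a strict $\pref$-prefix of every register occurring in $\sigma(\outi{\rho})$ (so these two sets are in particular disjoint). I would check (i)--(ii) by direct inspection of the construction: in the step update of Subsubsection~\ref{sssec:up-step} each $\outi{\pi}$ receives either $\movi$ or a single old register (the transport of $\pi$ along $\pre{\Cf,C^x_{i+1}}{a}$, collapsed), and those $\pi$ whose transported pre-image sequence collapses are precisely the ones that get $\movi$; in the two cases of Subsubsection~\ref{sssec:prepro} each $\outi{\pi}$ with $\pi$ not the new root is either reset to $\movi$ or receives $\outi{\Cf\pi}$, while the new $\out$ is appended with at most the two nested old registers $\outi{\Cf}$ and $\outi{\Cf C'}$; in Subsection~\ref{ssec:origin-nosep} the new $\out$ is just $\out$ appended with a state constant and the other $\outi{\pi}$ are reset to $\movi$; and Algorithm~\ref{algo:down} only appends the state constant $\theta^m$ to each $\outi{\pi}$. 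The single point needing care for (i) in Subsubsection~\ref{sssec:up-step} is the injectivity of the transport map on the $\pi$ that receive a genuine old register, which is exactly the observation already used in the proof of Lemma~\ref{lem:sim:update} (if $\rho=\new{\Cf}$ and $i_m=n$ then $\pi=\Cf$). Since both (i) and (ii) are preserved under composition of substitutions, they also hold for the composite $\sigma$ when Algorithm~\ref{algo:down} is applied afterwards.

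Granting (i) and (ii), the inductive step closes. Write $\sigma_{i'}$ for the composite substitution of $\strans$ from position $i'$ to position $i$, so that the corresponding substitution from $i'$ to $i{+}1$ is $\sigma\circ\sigma_{i'}$. Fix $\pi \in \tree{C^x_{i+1}}$; by (i) each old register occurs at most once in $\sigma(\outi{\pi})$, and by (ii) the old registers occurring there form a $\pref$-chain $\psi_1 \prefneq \cdots \prefneq \psi_k$, so by the induction hypothesis (no shared memory between strictly nested registers, and $1$-boundedness of each) the words $\sigma_{i'}(\outi{\psi_1}), \dots, \sigma_{i'}(\outi{\psi_k})$ have pairwise disjoint sets of time-$i'$ registers, each occurring with multiplicity at most $1$; substituting shows $\cro{\outi{\pi}}^x_{i+1}$ is $1$-bounded. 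For the second statement, take $\pi \prefneq \rho$ in $\tree{C^x_{i+1}}$; by (ii) every old register used by $\sigma(\outi{\pi})$ strictly precedes every old register used by $\sigma(\outi{\rho})$, hence by the induction hypothesis their $\sigma_{i'}$-expansions use disjoint time-$i'$ registers, and substituting shows $\cro{\outi{\pi}}^x_{i+1}$ and $\cro{\outi{\rho}}^x_{i+1}$ have no shared memory. The main obstacle is the bookkeeping behind properties (i)--(ii) in Subsubsection~\ref{sssec:up-step}, where $\tree{\Cf}$ is transported with possible collapses; everything else is routine. Finally, Lemma~\ref{lem:strans-bound} is immediate, since the only unbounded registers of $\strans$ are $\out$ and the $\outi{\pi}$, and the $1$-boundedness part of Sublemma~\ref{slem:indu-bound} is precisely the $1$-boundedness of every composite register substitution of $\strans$.
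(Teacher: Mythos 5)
Your inductive skeleton (induction on $i$, case inspection of the updates of Subsection~\ref{ssec:origin-nosep}, Subsubsections~\ref{sssec:up-step} and~\ref{sssec:prepro}, composed with the toolbox of Subsection~\ref{ssec:tool}, then combined with the induction hypothesis through the nestedness of paths) is the same as the paper's, but your property~(i) is false, and so is the injectivity claim you invoke to justify it. In the step update of Subsubsection~\ref{sssec:up-step} the map $\pre{\Cf,C^x_{i+1}}{a}$ need not be injective, so two \emph{incomparable} paths of $\tree{C^x_{i+1}}$ can be transported to the same collapsed chain $\rho \in \tree{\Cf}$ and then both receive the old register $\outi{\rho}$: for instance if $1,2 \in C^x_{i+1}$ have the same pre-image $p$, the paths $\pi = C^x_{i+1}\{1\}$ and $\pi' = C^x_{i+1}\{2\}$ (both in $\tree{C^x_{i+1}}$, since singletons are compatible when $\trans$ is trim) are both assigned $\outi{\Cf\{p\}}$, provided $\{p\} \subsetneq \Cf$. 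The observation you import from the proof of Lemma~\ref{lem:sim:update} only rules this out for the root register $\out = \outi{\Cf}$, not for the other registers. This re-use of a value across incomparable branches is precisely the reason why $\strans$ is only $1$-bounded and not copyless, so no amount of inspection can establish~(i) as you state it.

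Fortunately, your final assembly never uses the across-images part of~(i): for $1$-boundedness you only need that each old register occurs at most once \emph{within each single image} $\sigma(\outi{\pi})$, and for the no-shared-memory claim (which Sublemma~\ref{slem:indu-bound} only asserts for \emph{nested} pairs) you only need your chain/nesting property~(ii) together with the induction hypothesis. Both of these are true and are essentially what the paper verifies: in the step update, for nested $\pi \prefneq \rho$ that both receive genuine registers, the collapsed chains are strictly nested (the paper's ``$m' < m$ because $n' < n$''), so the induction hypothesis applies; the prepro and non-separable cases behave as you describe; and Algorithm~\ref{algo:down} only appends constants. (One small correction on the way: in Subsubsection~\ref{sssec:up-step} the register $\outi{\pi}$ is reset to $\movi$ only when the collapse occurs at the \emph{last} position, i.e. $i_m < n$; a collapse strictly inside the transported chain still yields a genuine old register.) So if you weaken~(i) to the per-image statement and drop the injectivity argument, your proof becomes correct and coincides in substance with the paper's.
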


The rest of this subsection is devoted to the proof of
Sublemma~\ref{slem:indu-bound} by induction on $i \ge 0$.
The result is obvious for $i=0$. Assume now by induction that it holds
for some $i \ge 0$.

First note that Subsection~\ref{ssec:tool} only adds constant values
in the register, hence its applications will always preserve our property.
Now, if $C^x_i$ was separable, the transition of $\strans$ uses Subsection~\ref{ssec:origin-nosep}.
The value $\cro{\out}^x_{i+1}$ is obtained using  $\cro{\out}^x_{i}$ once,
plus constant values. Furthermore, if  $C^x_{i+1}$ is not separable,
then each value $\cro{\outi{\pi}}^x_{i+1}$ is built from constant
values, hence they are $1$-bounded and they share no memory.
The result holds in $i{+}1$.

Now if $C^x_{i+1}$ is not separable, the transition may
first apply Subsubsection~\ref{sssec:prepro}.
If $\pi$ is close, then the situation is similar to that of
 Subsection~\ref{ssec:origin-nosep}, except that we may
use $\outi{\Cf C'}$ to update $\out$. However, since
$\cro{\outi{\Cf C'}}^x_{i}$ and $\cro{\out}^x_i = \cro{\outi{\Cf}}^x_i$
are $1$-bounded and have no shared memory
by induction hypothesis, then the resulting value of $\out$
is $1$-bounded. Now if $\pi$ is not close, the argument for
$\out$ is similar. Furthermore, we update $\outi{C' \pi} \becomes \outi{\Cf C C' \pi}$
for $\pi \neq \movi$, which clearly preserves
the fact that these registers are $1$-bounded and have no shared memory.
Furthermore, they also have no shared memory with $\cro{\out}^x_{i+1}$.

Let us finally consider the application of Subsubsection~\ref{sssec:up-step}.
The updates clearly preserve $1$-boundedness since
there are no concatenations.
Let $\pi_n \defined D_1 \cdots D_n \in \tree{C^x_{i+1}}$, we show that
if $\pi_{n'} \defined D_1 \cdots D_{n'}$ for $n' < n$, then
$\cro{\outi{\pi_n}}^x_{i+1}$ and $\cro{\outi{\pi_{n'}}}^x_{i+1}$
have no shared memory. Let $1 = i_1 < \cdots < i_m \le n$
be given by Subsubsection~\ref{sssec:up-step} for $\pi_n$.
If $i_m < n$ the result is clear since $\outi{\pi_{n}} \becomes \movi$
(and we may finally add constant values in it by Subsection~\ref{ssec:tool}).
Otherwise $\outi{\pi_{n}} \becomes \outi{C_{i_1} \cdots C_{i_m}}$.
If $n' \neq i_{m'}$ for some $1 \le m' \le m$, then the result is clear since
$\outi{\pi_{n'}} \becomes \movi$. Otherwise
$\outi{\pi_{n'}} \becomes \outi{C_{i_1} \cdots C_{i_{m'}}}$.
But necessarily $m' < m$ (because $n' < n'$) and so by induction hypothesis
 the former values  of $\outi{C_{i_1} \cdots C_{i_m'}}$
and  $\outi{C_{i_1} \cdots C_{i_m'}}$ shared
no memory. The result follows since Subsection~\ref{ssec:tool} only adds constant values.

\subsection{Proof of Lemma~\ref{lem:infini}: productivity of $\strans$}

Let us fix a word $x \in \Dom{f}$, we want
to show that when $\strans$
reads $g(x) = C^x_0 x[1] \cdots$, we have $|\cro{\out}^x_i| \fonc \infty$. Let us first suppose
that the transitions of $\strans$ on $g(x)$ use Subsection~\ref{ssec:origin-nosep}
or the ``close'' paragraph of Subsubsection~\ref{sssec:up-step}
infinitely often. Then for infinitely many $i \ge 0$
we have $\cro{\out}^x_i = \com{\Jf, \Cf}{x[1{:}i]}$,
and the $\alpha_q = \adv{\Jf, \Cf}{x[1{:}i]}(q) $
have a size bounded by $\Bound + 2 \times \four \Bound!$.
Since $\val{\Jf,\Cf}{x[1{:}i]}(q^x_i)$ tends to $f(x)$,
we conclude that $\out$ also tends to an infinite word.

Now assume that there exists $N \ge 0$ such
that when reading the suffix of its input
$C^x_N x[N] C^x_{N+1} x[N{+}1]  \cdots$,
$\strans$ only uses  Subsubsection~\ref{sssec:prepro}
in the ``non-close'' case, and Subsubsection~\ref{sssec:up-step},
when doing its transitions.
The rest of the proof is done by contradiction: we assume
that $\strans$ only adds empty words in $\out$ when performing these
transitions.

A key ingredient
to reach a contradiction will be the fact that $\trans$ is \emph{productive}.

\begin{sublemma}[Productivity.]
\label{slem:prod}
Let $J, u, S$ be an initial step and
$S, u', S$ be a step such that  $u' \neq \movi$, $\pre{S,S}{u'}: S \fonc S$
is the identity function, and some $q'_1 \in S$ is accepting.
Then $\val{S,S}{u'}(q) \neq \movi$
for all $q \in S$.
\end{sublemma}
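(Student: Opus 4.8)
The plan is to exploit the \emph{productivity} of $\trans$, recalling that $\trans$ being productive means exactly that in \cref{lem:continuity-loops} the hypotheses force $\alpha'_2 \neq \movi$; that is, there is no ``constant'' state reachable along an initial run that sits on an $\movi$-loop while a parallel accepting run exists. I would argue by contradiction: suppose some $q \in S$ satisfies $\val{S,S}{u'}(q) = \movi$. Since $\pre{S,S}{u'}$ is the identity, iterating the step gives an $\movi$-cycle $q \runs{u'^n \mid \movi} q$ for all $n \ge 1$, with $u' \neq \movi$. Now combine the initial step $J, u, S$ with this loop: writing $q_2 \defined \pre{J,S}{u}(q)$ and $\alpha_2 \defined \val{J,S}{u}(q)$, we have an initial run $q_2 \runs{u \mid \alpha_2} q \runs{u' \mid \movi} q$ with $q_2 \in J \subseteq I$.

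Next I would produce the ``witness'' accepting run required by \cref{lem:continuity-loops}. Since $q'_1 \in S$ is accepting (this is part of the hypothesis of \cref{slem:prod}, inherited from $S = C^x_i$ being compatible), $q'_1 \in F$; and because $J,u,S$ is an initial step, there is a unique $q_1 \defined \pre{J,S}{u}(q'_1) \in J \subseteq I$ with $q_1 \runs{u \mid \alpha_1} q'_1$ for the appropriate $\alpha_1 = \val{J,S}{u}(q'_1)$. Moreover, since $S,u',S$ is a step and $\pre{S,S}{u'}$ is the identity, $q'_1 \runs{u' \mid \alpha'_1} q'_1$ is an $u'$-cycle on $q'_1$, and as $\trans$ is clean (we may assume the standing hypotheses of the section, where $\trans$ is unambiguous, productive and trim), $\alpha'_1 \neq \movi$. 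Thus the tuple $q_1, q_2 \in I$, $q'_1 \in F$, $q'_2 \defined q \in Q$, $u$, $u'$, and the outputs $\alpha_1, \alpha'_1, \alpha_2, \alpha'_2 \defined \movi$ satisfies precisely the hypotheses of \cref{lem:continuity-loops} with $\alpha'_2 = \movi$. But productivity of $\trans$ says this is impossible: a productive clean transducer admits no such configuration with $\alpha'_2 = \movi$. This contradiction establishes $\val{S,S}{u'}(q) \neq \movi$ for every $q \in S$.

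The main obstacle, and the only genuinely delicate point, is making sure the productivity contradiction applies verbatim — i.e. checking that all the side conditions of the ``productive'' notion are met: that $q$ really is a non-final state (if $q \in F$ we would instead invoke cleanliness directly, since a clean transducer has no $\movi$-loop on a final state by \cref{lem:infinite-accepting}, so either way we get a contradiction), and that $u' \in A^+$ as required (which is given). A minor bookkeeping step is to note that the initial step $J,u,S$ gives genuine states $q_1, q_2$ in $I$ with the stated runs; this is immediate from the definitions of pre-step and step. Everything else is a direct unfolding of definitions, so I would not expect to ``grind'' through anything: the proof is a short reduction of \cref{slem:prod} to the definition of productivity via \cref{lem:continuity-loops}, with a case split on whether the offending state $q$ is final.
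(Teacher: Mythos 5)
Your proof is correct and follows essentially the same route as the paper: split on whether the offending state is final (cleanliness handles that case), and otherwise feed the runs obtained from the initial step $J,u,S$ and the $u'$-loop into \cref{lem:continuity-loops}, so that productivity forbids $\alpha'_2 = \movi$. Phrasing it as a contradiction rather than directly is only a cosmetic difference.
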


\begin{proof} Let $q'_2 \in S$. If it is accepting the result follows
since a productive \oNT{} is clean. Otherwise, by definition of
steps there exists (uniques) $q_1, q_2 \in I$,
$\alpha_1, \alpha_2 \in B^*$
such that $q_i \runs{u | \alpha_i} q'_i \runs{u' | \val{S,S}{u'}(q'_i)} q'_i$
for $i \in \{1,2\}$. These are the conditions of Lemma~\ref{lem:continuity-loops},
hence since $\trans$ is productive we get $\val{S,S}{u'}(q'_2) \neq \movi$.
\end{proof}

\begin{definition}\label{def:lim}
For all $i \ge 0$,  let
$S^x_i \defined \bigcap_{i' \ge i} \pre{C^x_i, C^x_{i'}}{x[i{+}1{:}i']}(C^x_{i'})$.
\end{definition}

\begin{sublemma} For all $i \ge 0$, $q^x_i \in S^x_i$ and
$S^x_i, x[i{+}1], S^x_{i+1}$ is a step.
\end{sublemma}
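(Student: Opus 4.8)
The statement claims two things: $q^x_i \in S^x_i$ for all $i \ge 0$, and $S^x_i, x[i{+}1], S^x_{i+1}$ is a step. I would start with the membership claim. By \cref{lem:pre-compat} we know $q^x_i \in C^x_i$, and since $q^x_0 \runs{x[1]} q^x_1 \runs{x[2]} \cdots$ is the accepting run of $\trans$, for every $i' \ge i$ the state $q^x_i$ is the unique pre-image in $C^x_i$ of $q^x_{i'}$ along the pre-step chain $C^x_i, x[i{+}1{:}i'], C^x_{i'}$ (uniqueness of pre-images is part of the definition of pre-step, and $q^x_{i'} \in C^x_{i'}$). Hence $q^x_i \in \pre{C^x_i, C^x_{i'}}{x[i{+}1{:}i']}(C^x_{i'})$ for every $i' \ge i$, so $q^x_i \in S^x_i$ by \cref{def:lim}. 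In particular $S^x_i \neq \vide$.

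\textbf{From membership to the step property.} For the second claim I first observe that each set $S^x_i$ is compatible: it is a subset of $C^x_i$ that still contains $q^x_i$ (which lies on a final run), and each of its states $q$ is a pre-image of some state reachable along $x[i{+}1{:}]$ for arbitrarily long prefixes; a König's-lemma / compactness argument (exactly in the spirit of \cref{lem:compat-fini} and \cref{lem:carac-compat}) produces a common infinite continuation labelled by $x[i{+}1{:}]$, one of whose runs (the one through $q^x_i$) is final. So $S^x_i \in \Comp$, and $S^x_i, x[i{+}1], S^x_{i+1}$ is a pre-step: indeed $C^x_i, x[i{+}1], C^x_{i+1}$ is a pre-step by \cref{lem:pre-compat}, and restricting $\pre{C^x_i, C^x_{i+1}}{x[i{+}1]}$ to $S^x_{i+1}$ maps into $S^x_i$ because the intersections defining $S^x_i$ and $S^x_{i+1}$ are compatible: if $q \in S^x_{i+1}$ then for every $i' \ge i{+}1$, $\pre{C^x_i, C^x_{i'}}{x[i{+}1{:}i']}\big(\pre{C^x_{i+1},C^x_{i'}}{x[i{+}2{:}i']}^{-1}(q)\big)$ chases down to the predecessor of $q$ under $x[i{+}1]$, which therefore lies in every set in the intersection defining $S^x_i$.

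\textbf{Surjectivity — the main obstacle.} The real content is showing that this pre-step is a \emph{step}, i.e. that $\pre{S^x_i, S^x_{i+1}}{x[i{+}1]}$ is surjective onto $S^x_i$. Equivalently: every $p \in S^x_i$ has a successor under $x[i{+}1]$ that lies in $S^x_{i+1}$, i.e. that survives \emph{all} the later pre-image restrictions. Fix $p \in S^x_i$. By definition of $S^x_i$, for each $i' \ge i$ there is some $q_{i'} \in C^x_{i'}$ with $p = \pre{C^x_i,C^x_{i'}}{x[i{+}1{:}i']}(q_{i'})$, and uniqueness of pre-images forces $q_{i'}$ to be the \emph{unique} state of $C^x_{i'}$ sitting above $p$. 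Let $p'$ be the successor of $p$ under $x[i{+}1]$ in $C^x_{i+1}$ (it exists and is unique, as $p = \pre{}{x[i{+}1]}(\cdot)$ of some state, namely the successor witnessing $q_{i+1}$ above $p$). Then for every $i' \ge i{+}1$, $q_{i'}$ lies above $p$, hence its predecessor under $x[i{+}1]$ is $p$ and therefore equals $p'$ by uniqueness; so $p' = \pre{C^x_{i+1},C^x_{i'}}{x[i{+}2{:}i']}(q_{i'})$, which shows $p' \in \pre{C^x_{i+1},C^x_{i'}}{x[i{+}2{:}i']}(C^x_{i'})$ for all $i' \ge i{+}1$, i.e. $p' \in S^x_{i+1}$. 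This is the crux: the \emph{uniqueness} of pre-images in a pre-step is what forces the various witnesses $q_{i'}$ to be coherent, so that the single successor $p'$ of $p$ automatically survives every later restriction. I would present this coherence argument carefully, as it is where the intersection in \cref{def:lim} interacts nontrivially with the pre-step structure; everything else (compatibility of $S^x_i$, the pre-step inclusion) is routine given \cref{lem:carac-compat} and \cref{lem:pre-compat}.
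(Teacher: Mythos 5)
Your membership claim, the compatibility of $S^x_i$, and the fact that $\pre{C^x_i,C^x_{i+1}}{x[i{+}1]}$ restricted to $S^x_{i+1}$ maps into $S^x_i$ are fine and follow the paper's line. The surjectivity argument — the part you yourself call the crux — has a genuine gap. You assert that the witness $q_{i'}$ is \emph{the unique} state of $C^x_{i'}$ sitting above $p$, and that $p$ has a \emph{unique} successor $p'$ in $C^x_{i+1}$. But the uniqueness in the definition of a pre-step goes the other way: each state of $C^x_{i'}$ has a unique pre-image in $C^x_i$, while a state of $C^x_i$ may lie below several states of $C^x_{i+1}$, since $\pre{C^x_i,C^x_{i+1}}{x[i{+}1]}$ is not injective in general (in Figure~\ref{fig:double}, the step $\{q_0\},0,\{q_1,q_2\}$ sends both $q_1$ and $q_2$ to $q_0$, and indeed $S^x_0=\{q_0\}$, $S^x_1=\{q_1,q_2\}$ on $x=0^\omega$). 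Consequently the states $r_{i'} \defined \pre{C^x_{i+1},C^x_{i'}}{x[i{+}2{:}i']}(q_{i'})$, which all lie above $p$, need not coincide for different $i'$: your argument does not exhibit a single successor of $p$ surviving every later restriction, which is exactly what surjectivity requires.

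The statement is nonetheless true, and the repair is the paper's argument. Either note that the sets $T_{i'} \defined \{r \in C^x_{i+1} : \pre{C^x_i,C^x_{i+1}}{x[i{+}1]}(r) = p \text{ and } r \in \pre{C^x_{i+1},C^x_{i'}}{x[i{+}2{:}i']}(C^x_{i'})\}$ are non-empty, finite and decreasing in $i'$ (the last point because $C^x_{i'},x[i'{+}1],C^x_{i'+1}$ is a pre-step), so their intersection contains some $p'$, which then lies in $S^x_{i+1}$ and maps to $p$; or, as the paper does, prove by König's lemma the characterization that $p \in S^x_i$ if and only if there is an infinite chain $(p_{i'})_{i' \ge i}$ with $p_i = p$, $p_{i'} \in C^x_{i'}$ and $p_{i'} = \pre{C^x_{i'},C^x_{i'+1}}{x[i'{+}1]}(p_{i'+1})$. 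From that characterization both the compatibility of $S^x_i$ and the equality $\pre{C^x_i,C^x_{i+1}}{x[i{+}1]}(S^x_{i+1}) = S^x_i$ — hence the step property — follow at once. In short, you need the same compactness argument you already invoked for compatibility, applied to chains of witnesses inside the sets $C^x_{i'}$, rather than an upward uniqueness that is unavailable.
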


\begin{proof}
For all $i' \ge i$, we
have $q^x_i = \pre{C^x_i, C^x_{i'}}{x[i{+}1{:}i']}(q^x_{i'})$
and $q^x_{i'} \in C^x_{i'}$. Hence $q^x_i \in S^x_i$.
Let us now show that $p \in S^x_i$
if and only if there exists a sequence
$(p_{i'})_{i' \ge i}$ such that $p_{i} = p$,
$p_{i'} \in C^x_{i'}$ and $p_{i'} = \pre{C^x_{i'}, C^x_{i'+1}}{x[i'+1]}(p_{i'+1})$
for all $i' \ge i$.  The ``if'' direction is obvious.
Conversely, if $p \in S^x_i$, then
for all $n \ge i$, there exists a finite
sequence $(p_{i'})_{i \le i' \le n}$
such that $p_{i} = p$,
$p_{i'} \in C^x_{i'}$ and
$p_{i'} = \pre{C^x_{i'}, C^x_{i'+1}}{x[i'+1]}(p_{i'+1})$.
By König's lemma (see $\pre{}{}$ as the ancestor relation
in a tree), we can build an
infinite sequence $(p_{i'})_{i' \ge i}$.
From this characterization, it follows that $S^x_i \in \Comp$,
and $\pre{C^x_i, C^x_{i+1}}{x[i{+}1]}(S^x_{i+1}) = S^x_i$,
which implies that $S^x_i,x[i{+}1], S^x_i$ is a step.
\end{proof}
We claim that the $S^x_i$ completely ``cover'' the set $C^x_{i'}$ at some
point in the future.

\begin{sublemma} \label{slem:cover-Sx} For all $i \ge 0$, there exists $i' \ge i$ such
that $\pre{C^x_{i}, C^x_{i'}}{x[i{+}1{:}i']}(C^x_{i'}) = S^x_i$. 
\end{sublemma}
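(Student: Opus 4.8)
The plan is to prove Sublemma~\ref{slem:cover-Sx} by contradiction, using a pigeonhole/compactness argument together with the definition of $S^x_i$ as a decreasing intersection. First I would fix $i \ge 0$ and observe that, by Definition~\ref{def:lim}, the sets $T_{i'} \defined \pre{C^x_i, C^x_{i'}}{x[i{+}1{:}i']}(C^x_{i'})$ for $i' \ge i$ form a \emph{decreasing} sequence of subsets of $C^x_i$: indeed $\pre{C^x_i, C^x_{i'+1}}{x[i{+}1{:}i'+1]} = \pre{C^x_i, C^x_{i'}}{x[i{+}1{:}i']} \circ \pre{C^x_{i'}, C^x_{i'+1}}{x[i'+1]}$, and applying a function to a subset can only shrink it, so $T_{i'+1} \subseteq T_{i'}$. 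Since $C^x_i$ is finite, this decreasing sequence stabilizes: there exists $i_0 \ge i$ such that $T_{i'} = T_{i_0}$ for all $i' \ge i_0$. By definition, $S^x_i = \bigcap_{i' \ge i} T_{i'} = T_{i_0}$.

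Now I would argue that $i' = i_0$ witnesses the conclusion. We must check $\pre{C^x_i, C^x_{i_0}}{x[i{+}1{:}i_0]}(C^x_{i_0}) = S^x_i$; the left-hand side is exactly $T_{i_0}$ by definition, and we have just shown $T_{i_0} = S^x_i$. So the statement follows directly once stabilization is established. The slightly more careful version, to make the argument fully self-contained, is to note that $T_{i_0} \supseteq S^x_i$ always (since $S^x_i$ is an intersection that includes $T_{i_0}$), and $T_{i_0} \subseteq S^x_i$ because for $i' < i_0$ we have $T_{i'} \supseteq T_{i_0}$ while for $i' \ge i_0$ we have $T_{i'} = T_{i_0}$, so $T_{i_0}$ is contained in every term of the intersection.

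The main subtlety — and the only place where something could go wrong — is making sure the intersection defining $S^x_i$ genuinely equals $T_{i_0}$ rather than something smaller: this relies crucially on monotonicity of the $T_{i'}$, which in turn relies on the fact that we are pushing forward along \emph{pre-steps} so that $\pre{}{}$ composes the way one expects (this was already recorded in the remark following the definition of pre-step, and used in the preceding sublemma to show $S^x_i, x[i{+}1], S^x_{i+1}$ is a step). I do not anticipate a serious obstacle here; the proof is a short finiteness argument.

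\begin{proof}
For $i' \ge i$, write $T_{i'} \defined \pre{C^x_i, C^x_{i'}}{x[i{+}1{:}i']}(C^x_{i'}) \subseteq C^x_i$.
Since $C^x_{i'}, x[i'{+}1], C^x_{i'+1}$ is a pre-step, we have
$\pre{C^x_i, C^x_{i'+1}}{x[i{+}1{:}i'+1]} = \pre{C^x_i, C^x_{i'}}{x[i{+}1{:}i']} \circ \pre{C^x_{i'}, C^x_{i'+1}}{x[i'+1]}$,
and as $\pre{C^x_{i'}, C^x_{i'+1}}{x[i'+1]}(C^x_{i'+1}) \subseteq C^x_{i'}$, we get
$T_{i'+1} \subseteq T_{i'}$. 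Hence $(T_{i'})_{i' \ge i}$ is a decreasing sequence of subsets
of the finite set $C^x_i$, so there exists $i_0 \ge i$ with $T_{i'} = T_{i_0}$ for all $i' \ge i_0$.
Then $T_{i_0}$ is contained in $T_{i'}$ for every $i' \ge i$ (for $i' \le i_0$ by monotonicity,
for $i' \ge i_0$ by equality), so $T_{i_0} \subseteq \bigcap_{i' \ge i} T_{i'} = S^x_i$; the
reverse inclusion is immediate since $T_{i_0}$ is one of the terms of the intersection. Thus
$\pre{C^x_{i}, C^x_{i_0}}{x[i{+}1{:}i_0]}(C^x_{i_0}) = T_{i_0} = S^x_i$, and $i' \defined i_0$
satisfies the claim.
\end{proof}
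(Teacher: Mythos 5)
Your proof is correct and follows essentially the same route as the paper: monotonicity of the sets $\pre{C^x_i, C^x_{i'}}{x[i{+}1{:}i']}(C^x_{i'})$ via composition of pre-steps, stabilization by finiteness of $C^x_i$, and identification of the stabilized value with the intersection defining $S^x_i$. The paper's proof is just a terser rendering of exactly this argument.
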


\begin{proof} Since $C^x_{i'}, x[i'{+}1], C^x_{i'+1}$ is a
pre-step, $\pre{C^x_i, C^x_{i'}}{x[i{+}1{:}i']}(C^x_{i'}) \supseteq
\pre{C^x_i, C^x_{i'}}{x[i{+}1{:}i'{+}1]}(C^x_{i'{+}1}) $.
Hence $(\pre{C^x_i, C^x_{i'}}{x[i{+}1{:}i']}(C^x_{i'}))_{i' \ge i}$
is ultimately constant, and by Definition~\ref{def:lim} its limit is $S^x_i$.
\end{proof}
Finally, let us extract a sequence of positions where $S^x_i$ is constant.

\begin{sublemma} \label{slem:sequence} There exists a sequence $N \le \ell_1 < \ell_2 < \cdots $
such that $S^x_{\ell_1} = S^x_{\ell_2} = \cdots =: S$, $q^x_{\ell_1} = q^x_{\ell_2}  = \cdots \in F$
and $\pre{S,S}{ x[\ell_j{+}1{:}\ell_{j'}]}$ is the identity function
for all $j \le j'$.
\end{sublemma}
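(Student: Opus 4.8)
The plan is to extract the desired sequence by a pigeonhole argument on the finitely many possible values of the triple $(S^x_i, q^x_i, \pre{S^x_i,S^x_{i+1}}{x[i+1]})$, combined with the fact that $q^x_i \in F$ infinitely often (since the accepting run of $\trans$ visits $F$ infinitely often by the Büchi condition), and then to post-process so that the connecting step $\pre{S,S}{x[\ell_j+1{:}\ell_{j'}]}$ is the identity. First I would observe that there are only finitely many subsets $S \subseteq Q$ and states $q \in Q$, so the pair $(S^x_i, q^x_i)$ takes finitely many values; restricting to $i \ge N$ and to those infinitely many $i$ with $q^x_i \in F$ (which exist by the Büchi acceptance of the accepting run of $\trans$, since $x \in \Dom{f}$), we find a subset $S$ and a state $q \in F$ and an infinite set $P = \{m_1 < m_2 < \cdots\}$, all $\ge N$, with $S^x_{m_k} = S$ and $q^x_{m_k} = q$ for all $k$.

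Next I would handle the identity condition on the connecting maps. For $m_j < m_{j'}$ in $P$, the composite $\pre{S,S}{x[m_j+1{:}m_{j'}]} \colon S \fonc S$ is a self-map of the finite set $S$ (this uses that $S^x_i, x[i+1], S^x_{i+1}$ is a step for every $i$, so the composites along consecutive steps are again steps from $S$ to $S$, hence surjective, hence — $S$ being finite — bijective). Thus each such composite is a permutation of $S$, and the permutation group of $S$ is finite. By a further pigeonhole / Ramsey-type argument on the composites: consider the permutations $\pi_{j} := \pre{S,S}{x[m_1+1{:}m_j]}$; since there are finitely many permutations of $S$, some permutation $\sigma$ occurs as $\pi_j$ for infinitely many $j$, say $j \in \{k_1 < k_2 < \cdots\}$. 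For such indices, $\pre{S,S}{x[m_{k_a}+1{:}m_{k_b}]} = \pi_{k_b}\circ \pi_{k_a}^{-1} = \sigma \sigma^{-1} = \mathrm{id}_S$ for all $a < b$. Setting $\ell_a := m_{k_a}$ gives $N \le \ell_1 < \ell_2 < \cdots$ with $S^x_{\ell_a} = S$, $q^x_{\ell_a} = q \in F$, and $\pre{S,S}{x[\ell_a+1{:}\ell_b]}$ the identity for all $a \le b$, as required.

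The main obstacle I expect is making precise and correct the claim that the connecting maps $\pre{S^x_i,S^x_{i+1}}{x[i+1]}$ (and their composites) are \emph{bijections} of $S$ rather than merely surjections: this is exactly where the earlier fact that $S^x_i, x[i+1], S^x_{i+1}$ is a step — established in the preceding sublemma — is used, together with $|S| < \infty$ to upgrade surjectivity to bijectivity; one must also check that the composite $\pre{}{}$ over several consecutive steps is computed by function composition (which follows from the definition of pre-step and the update $\pre{}{} \becomes \pre{}{} \circ \pre{}{}$ used throughout). Once that is in place, the two nested pigeonhole arguments (first on $(S^x_i, q^x_i)$, then on the permutation $\pi_j$) are routine, and the conclusion $S^x_{\ell_1} = S^x_{\ell_2} = \cdots$, $q^x_{\ell_1} = q^x_{\ell_2} = \cdots \in F$, identity connecting maps, follows immediately.
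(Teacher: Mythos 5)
Your proof is correct, and its skeleton is the one the paper uses: extract positions where the run visits $F$ (Büchi condition), pigeonhole on the finitely many values of $(S^x_i,q^x_i)$, and then force the connecting maps to be the identity, using in both cases that $S^x_i,x[i{+}1],S^x_{i+1}$ is a step (previous sublemma), so that composites are surjective self-maps of the finite set $S$, hence permutations. The one place where you genuinely diverge is the last extraction: the paper invokes Ramsey's theorem for pairs, colouring each pair $j\le j'$ by the permutation $\pre{S,S}{x[\ell_j{+}1{:}\ell_{j'}]}$ and then observing that the monochromatic colour, being an idempotent permutation, is the identity; you instead pigeonhole on the ``prefix'' permutations $\pi_j=\pre{S,S}{x[m_1{+}1{:}m_j]}$ and cancel in the permutation group, $\pre{S,S}{x[m_{k_a}{+}1{:}m_{k_b}]}=\pi_{k_a}^{-1}\circ\pi_{k_b}=\mathrm{id}$. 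Your variant is slightly more elementary (plain pigeonhole, no Ramsey for pairs), at the price of leaning explicitly on invertibility — which the Ramsey route also needs at its final step, so nothing is lost. Two small remarks: you wrote the cancellation as $\pi_{k_b}\circ\pi_{k_a}^{-1}$, whereas the composition identity for pre-maps gives $\pi_{k_b}=\pi_{k_a}\circ\pre{S,S}{x[m_{k_a}{+}1{:}m_{k_b}]}$, hence $\pi_{k_a}^{-1}\circ\pi_{k_b}$; since both factors equal $\sigma$ the conclusion is unaffected. And your reliance on the composite of consecutive steps from $S$ to $S$ being a (pre-)step, so that the connecting map is well defined and composes functorially, is exactly the level of detail the paper itself assumes, so there is no gap relative to the paper's argument.
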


\begin{proof}
Since $(q^x_i)_{i \ge 0}$ is accepting,
one can extract an infinite sequence $N \le \ell_1 < \ell_2 < \cdots$
such that $q^x_{\ell_1} = q^x_{\ell_2} = \cdots \in F $.
Up to extracting a subsequence with Ramsey's theorem
for singletons (i.e. the pigeonhole principle), we can assume
that $S^x_{\ell_1} = S^x_{\ell_2} = \cdots = S$.
Up to extracting a subsequence using Ramsey's theorem for pairs
(color a pair $j \le j'$ by $\pre{S,S}{ x[\ell_j{+}1{:}\ell_{j'}]}$,
which is a permutation of $S$ since $S, x[\ell_j{+}1{:}\ell_{j'}], S$ is a step),
we can assume that  $\pre{S,S}{ x[\ell_j{+}1{:}\ell_{j'}]}$
for $j \le j'$ is the identity function.
\end{proof}

By sublemmas~\ref{slem:prod} and~\ref{slem:sequence}, we get
$ \val{C^x_{\ell_j}, C^x_{\ell_{j+1}}}{ x[\ell_j{+}1{:}\ell_{j+1}]}(q)
= \val{S,S}{ x[\ell_j{+}1{:}\ell_{j+1}]}(q) \neq \movi$
for all $j \ge 1$ and all $q \in S$.
Therefore $| \val{C^x_{\ell_1}, C^x_{\ell_{1+K}}}{ x[\ell_1{+}1{:}\ell_{1+K}]}(q)| \ge K$ for all 
$K \ge 1$ and $q \in S$.

Let us now fix $K \defined \four \Bound!$, $i \defined \ell_1$ and $i' \defined \ell_{1+K}$.
By Sublemma~\ref{slem:cover-Sx}, there exists $i'' \ge i$ such that
$\pre{C^x_{i'}, C^x_{i''}}{x[i'{+}1{:}i'']}(C^x_{i''}) = S^x_{i'} = S$.
Hence for all $q \in C^x_{i''}$, if $q' \defined \pre{C^x_{i'}, C^x_{i''}}{x[i'{+}1{:}i'']}(C^x_{i''})$ we get
$
|\val{C^x_{i}, C^x_{i''}}{x[i{+}1{:}i'']}(q)| 
= |\val{S, S}{x[i{+}1{:}i']}(q')  \val{C^x_{i'}, C^x_{i''}}{x[i'{+}1{:}i'']}(q)| \ge \four\Bound!
$.

The last operation which was applied by
$\strans$ when reading from $x[i''] C^x_{i''}$
is Algorithm~\ref{algo:down}. By definition of $m$
in $\tnorm{\textbf{down}}(C^x_{i''})$, there exists $q \in C^x_{i''}$ 
such that $\cro{\nb{C^x_{i''}}(q)}_{i''}^x = 0$.
Thus we can apply Sublemma~\ref{slem:infini}
with $i_1 = i$ and $i_2 = i''$, and Remark~\ref{rem:clefs}
yields a contradiction.

\begin{sublemma} \label{slem:infini} Let $N \le i_1 \le i_2$.
Let $q_2 \in C^x_{i_2}$ be such that $\cro{\nb{C^x_{i_2}}(q_2)}^x_{i_2} = 0$.
Let us define $q_1 \defined \pre{C^x_{i_1},C^x_{i_2}}{x[i_1{+}1{:}i_2]}(q_2)$,
then  $\cro{\nb{C^x_{i_1}}}^x_{i_1}(q_1) = 0$ and:

$
|\val{C^x_{i_1},C^x_{i_2}}{x[i_1{+}1{:}i_2]}(q_2)| = |\cro{\last}^x_{i_2}(q_2)| - |\cro{\last}^x_{i_1}(q_1)|
+ |\cro{\lag}^x_{i_2}(q_2)| - |\cro{\lag}^x_{i_1}(q_1)|.
$
\end{sublemma}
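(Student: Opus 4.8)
The plan is to prove Sublemma~\ref{slem:infini} by induction on $i_2 - i_1 \ge 0$, carefully tracking how each of the update operations of Section~\ref{sec:invariants} modifies the relevant pieces of the invariant~\ref{inv:past} decomposition when the state $q$ under consideration keeps its buffer $\nb{}(q)$ equal to $0$. The base case $i_1 = i_2$ is trivial: then $q_1 = q_2$, both sides of the length equation are $0$, and $\cro{\nb{C^x_{i_1}}}^x_{i_1}(q_1) = 0$ by hypothesis. For the inductive step, write $q_1' \defined \pre{C^x_{i_1},C^x_{i_1+1}}{x[i_1{+}1]}(q_1)$ (wait --- more convenient to peel off at the top), so set $q_{i_1+1} \defined \pre{C^x_{i_1+1},C^x_{i_2}}{x[i_1{+}2{:}i_2]}(q_2)$ and apply the induction hypothesis between $i_1+1$ and $i_2$ to get $\cro{\nb{C^x_{i_1+1}}}^x_{i_1+1}(q_{i_1+1}) = 0$ together with the length identity at level $i_1+1$. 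Then I analyze the single transition $C^x_{i_1} x[i_1{+}1] C^x_{i_1+1}$: since $i_1 \ge N$, by assumption this transition uses either Subsubsection~\ref{sssec:prepro} in the ``non-close'' case, or Subsubsection~\ref{sssec:up-step}, possibly followed by an application of Subsection~\ref{ssec:tool}, and in all of these cases (by assumption towards the contradiction) the value appended to $\out$ is $\movi$.

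The heart of the argument is a local computation for each of the possible operations. First I would observe that $\cro{\nb{C^x_{i_1}}}^x_{i_1}(q_1) = 0$: the operations of Subsubsection~\ref{sssec:up-step} set $\new{\nb{\pi}} \becomes \nb{\rho}\circ \pre{\Cf,C^x_{i_1+1}}{a}$, so $\cro{\nb{C^x_{i_1+1}}}^x_{i_1+1}(q_{i_1+1})$ equals $\cro{\nb{C^x_{i_1}}}^x_{i_1}(q_1)$ (for the root path), hence the latter is $0$; in the preprocessing case $C' = C^x_{i_1+1}$, the update keeps $\lag$, $\last$ but incorporates $\theta^{\nb{\Cf}(q)}$ into $\last$, and again the new root-level $\nb{}$ is $0$, forcing the old one to be $0$ as well (otherwise $\theta^{\nb{\Cf}(q_1)}$ with $\nb{\Cf}(q_1)>0$ would have been pushed into $\out$, contradicting that we only added $\movi$); and Subsection~\ref{ssec:tool} can only increase $\nb{\Cf}$, never turn a positive value into $0$ without emitting a $\theta$. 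Once $\cro{\nb{C^x_{i_1}}}^x_{i_1}(q_1) = 0$ is established, invariant~\ref{inv:past} at $i_1$ and at $i_1+1$ (applied to the singleton path ending in $\{q_1\}$, resp. $\{q_{i_1+1}\}$) gives two expressions
\begin{align*}
\val{\Jf,C^x_{i_1}}{x[1{:}i_1]}(q_1) &= \out_{i_1}\; \lagm_{i_1}\; \big(\textstyle\prod\cdots\big)\; \cro{\last}^x_{i_1}(q_1),\\
\val{\Jf,C^x_{i_1+1}}{x[1{:}i_1{+}1]}(q_{i_1+1}) &= \out_{i_1+1}\; \lagm_{i_1+1}\; \big(\textstyle\prod\cdots\big)\; \cro{\last}^x_{i_1+1}(q_{i_1+1}),
\end{align*}
and by concatenativity of productions along a pre-step, $\val{\Jf,C^x_{i_1+1}}{x[1{:}i_1{+}1]}(q_{i_1+1}) = \val{\Jf,C^x_{i_1}}{x[1{:}i_1]}(q_1)\,\val{C^x_{i_1},C^x_{i_1+1}}{x[i_1{+}1]}(q_{i_1+1})$. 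Since the $\out$ component is unchanged (empty word appended) and the intermediate products $\outi{\pi_i}\theta^{\nb{\pi_i}}$ are preserved under the relabelling of $\tree{\Cf}$ (this is precisely the content of the ``Invariant~\ref{inv:past}'' paragraphs in the proofs of lemmas~\ref{lem:sim:update} and~\ref{lem:sim:prune}), subtracting lengths collapses everything except the $\lag$ and $\last$ contributions, yielding the claimed length identity. Iterating this one-step identity along $i_1 < i_1 + 1 < \dots < i_2$ (equivalently, combining the one-step statement with the induction hypothesis) gives the full telescoped formula.

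The main obstacle I anticipate is bookkeeping the relabelling of $\tree{\Cf}$ under the step/pre-step operations: the path $\pi$ ending in $\{q_1\}$ in $\tree{C^x_{i_1}}$ is not literally a prefix-image of the path ending in $\{q_{i_1+1}\}$ in $\tree{C^x_{i_1+1}}$, because of the collapsing of equal consecutive compatible sets (the $1 = i_1 < \cdots < i_m$ indices in Subsubsection~\ref{sssec:up-step}), so I must be careful that the intermediate product $\prod_{i=2}^n \outi{\pi_i}\theta^{\nb{\pi_i}(q)}$ is genuinely transported unchanged --- this is exactly what is checked in the invariant~\ref{inv:past} paragraph of Lemma~\ref{lem:sim:update}, which I can cite. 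A secondary subtlety is that Subsection~\ref{ssec:tool} (Algorithm~\ref{algo:down}) shuffles values between $\nb{\pi}$ and $\outi{\pi}$ down the tree; but since we condition on $\cro{\nb{C^x_{i_2}}(q_2)}^x_{i_2} = 0$ and the algorithm only moves the \emph{common} part $\theta^{m}$ with $m \le \nb{}(q)$, a state whose buffer is $0$ is never touched by the push-down (and in particular $\last(q)$ with $|\last(q)| < \Bound!$ is unaffected), so the $\last$ and $\lag$ components appearing in the length formula are exactly the ones stored in the configuration --- which is what the statement asserts.
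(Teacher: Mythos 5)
Your skeleton matches the paper's: peel off the single transition $C^x_{i_1}\,x[i_1{+}1]\,C^x_{i_1+1}$ (the paper phrases this as a decreasing induction on $i_1$), propagate the zero root buffer backwards through that transition, and observe that since no output is produced ($c=\movi$, $m=0$) the one-step production is absorbed entirely by the increments of $\lag$ and $\last$; going through invariant~\ref{inv:past} at the two endpoints instead of the explicit update formulas is a legitimate variant. The genuine gap is in the justification of the key backward-propagation step in the preprocessing case. A positive old root buffer $\cro{\nb{C^x_{i_1}}}^x_{i_1}(q_1)>0$ is \emph{not} ``pushed into $\out$'': the non-close prune performs $\last(q)\becomes \theta^{\nb{\Cf}(q)}\last(q)$, i.e.\ the buffer goes into the bounded component $\last$, and a single state's positive buffer never forces an output, since Algorithm~\ref{algo:down} only emits $\theta^{m}$ with $m=\min_{q}\nb{\Cf}(q)$. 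So the contradiction you invoke does not exist as stated. The correct argument is indirect: a positive root buffer makes $|\last|\ge\Bound!$ after the prune, hence the resize of Subsection~\ref{ssec:tool} moves $n(q)\ge 1$ copies of $\theta$ back into the (new) root buffer; with no output ($m=0$ in every later call of Algorithm~\ref{algo:down}) and since the cap at $\four$ never empties a buffer, this positivity survives the up-step relabelling and its toolbox, contradicting $\cro{\nb{C^x_{i_1+1}}}^x_{i_1+1}(q_{i_1+1})=0$. Relatedly, after the prune the new root buffer equals the old \emph{depth-two} buffer $\nb{C^x_{i_1}C'}$, not $\nb{C^x_{i_1}}$ (and $C'=\pre{C^x_{i_1},C^x_{i_1+1}}{x[i_1{+}1]}(C^x_{i_1+1})\subsetneq C^x_{i_1}$, not $C^x_{i_1+1}$), so your identity ``new root buffer $=$ old root buffer via the relabelling'' fails precisely in that case.

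The same conflation resurfaces when you claim $\last(q)$ is ``unaffected'' by Subsection~\ref{ssec:tool}: what can truncate $\last(q)$ is not the push-down of Algorithm~\ref{algo:down} (governed by $m$) but the preliminary resize (governed by $n(q)$), and $n(q)=0$ is not automatic from the hypothesis at $i_2$ --- it must be \emph{derived}, as the paper does, from the zero value of the buffer at the later stage together with $m=0$, at each toolbox application along the backward trace. Once you establish, operation by operation, that $n(q)=0$, that the root buffer before the toolbox is $0$, and (in the prune case) that the old root buffer must vanish because otherwise $|\last|\ge\Bound!$, the remaining length computation via invariant~\ref{inv:past} and the preservation of the intermediate products does go through; but as written, the crucial steps rest on a mechanism that is not the one implemented by $\strans$.
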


\begin{remark} \label{rem:clefs}
In particular, we get $\val{C^x_{i_1},C^x_{i_2}}{x[i_1{+}1{:}i_2]}(q_2)| < \four \Bound!$%
\cor{because the size of $\lag$ is bounded
by $\Bound!$ and the size of $\last$ is (strictly) bounded by $\Bound!$}.
\end{remark}

\begin{proof}
The proof consists in a decreasing induction on
$i_1 \le i_2$.
The base case being trivial, let us show
it for $i_0 \defined i_1{-}1$.
Let $q_2 \in C^x_{i_2}$ be such that $\cro{\nb{C^x_{i_2}}(q_2)}^x_{i_2} = 0$
and $q_1 \defined \pre{C^x_{i_1},C^x_{i_2}}{x[i_1{+}1{:}i_2]}(q_2)$.
By induction hypothesis, Sublemma~\ref{slem:infini} holds.
Now let us consider the transition of $\strans$
from $C^x_{i_0}$ to $C^x_{i_1}$. It obtained by possibly
applying Subsubsection~\ref{sssec:prepro}
(in the ``non-close'' case)
and then Subsubsection~\ref{sssec:up-step}.
We study the preservation of our property along
these operations, starting from the last one (we backtrack
on the computation).

\subparagraph*{Last operation: applying Subsection~\ref{ssec:tool} in Subsubsection~\ref{sssec:up-step}.}
Let $\nb{\pi}', \outi{\pi}'$, etc.
denote the configuration of $\strans$ right after Lemma~\ref{lem:sim:prune}
in Subsubsection~\ref{sssec:up-step}.
Since Algorithm~\ref{algo:down} has not modified $\out = \outi{C^x_{i_1}}$,
we had $m=0$ in $\tnorm{\textbf{down}}(C^x_{i_1})$. 
Thus we had $\nb{C^x_{i_1}}'(q_1) +n(q_1)= 0$
(because if $m=0$ then $\cro{\nb{C^x_{i_1}}}^x_{i_1}(q_1) = \min (\nb{C^x_{i_1}}'(q_1) +n(q_1), \four)$).
Therefore $\nb{C^x_{i_1}}'(q_1) = 0$. Furthermore
$n(q_1) = 0$ and so $\last'(q_1) = \cro{\last(q_1)}^x_{i_1}$.
Furthermore, $\lag'(q_1) = \cro{\lag(q_1)}^x_{i_1}$.

\subparagraph*{Previous operation: beginning of Subsubsection~\ref{sssec:up-step}.}
Now let $\nb{\pi}'', \outi{\pi}''$, etc.
 denote the configuration of $\strans$ before applying the whole
 Subsubsection~\ref{sssec:up-step}. Since by construction
$\Cf =  \pre{C^x_{i_0}, C^x_{i_1}}{x[i_1]}(C^x_{i_1})$, then
$\nb{C^x_{i_1}}' = \nb{\Cf}'' \circ \pre{\Cf, C^x_{i_1}}{x[i_1]}$.
Thus $ \nb{\Cf}''(q_0) = 0$ if $q_0 \defined \pre{\Cf, C^x_{i_1}}{x[i_1]}(q_1)$.

Finally, we note that $c = \movi$ since there is no output.
As a consequence, it is quite easy to see that
$|\val{C^x_{i_0},C^x_{i_1}}{x[i_1]}(q_1)| 
= |\val{\Cf,C^x_{i_1}}{x[i_1]}(q_1)| = |\last'(q_1)| - |\last''(q_0)|
+|\lag'(q_1)| - |\lag''(q_0)|$.

\subparagraph*{Previous operation: Subsubsection~\ref{sssec:prepro}.}
If Subsubsection~\ref{sssec:prepro} was not used,
the proof is completed.  Otherwise,  let $\nb{\pi}''', \outi{\pi}'''$, etc.
denote the information of $\strans$ right after Lemma~\ref{lem:sim:prune}).
By an analysis of Algorithm~\ref{algo:down} (similar to what we did above),
we see that $\nb{\Cf}'''(q_0) =0$, $\last'''(q_0) = \last''(q_0)$
and $\lag'''(q_0) = \lag''(q_0)$. Furthermore
$n(q_0) = 0$ thus $|\last'''(q_0)| < \Bound!$.

Let us finally consider the rest of Subsubsection~\ref{sssec:prepro}
in the ``non-close'' case.
Since there is no output, $c = \movi$ thus $\cro{\lag}^x_{i_0}(q_0) = \lag'''(q_0)$.
Since $|\last'''(q_0)| <  \Bound!$, then 
$\theta^{\cro{\nb{C^x_{i_0}}}^x_{i_0}(q_0)} = \movi$
thus $\cro{\nb{C^x_{i_0}}}^x_{i_0}(q_0) = 0$.
Furthermore  $\cro{\last}^x_{i_0}(q_0) = \last'''(q_0)$.
\end{proof}

\end{document}